\begin{document}

\title{%
  Definition and Implementation \\
  of a Points-To Analysis \\
    for C-like Languages
}

\begin{abstract}
The points-to problem is the problem of determining the possible
run-time targets of pointer variables and is usually considered part of the more
general aliasing problem, which consists in establishing whether and
when different expressions can refer to the same memory address.  Aliasing
information is essential to every tool that needs to reason about the
semantics of programs.  However, due to well-known undecidability
results, for all interesting languages that admit aliasing, the exact
solution of nontrivial aliasing problems is not generally computable.  This
work focuses on approximated solutions to this problem by
presenting a store-based, flow-sensitive points-to analysis,
for applications in the field of automated software verification.
In contrast to software testing procedures, which
heuristically check the program against a finite set of executions, the
methods considered in this work are static analyses, where
the computed results are valid for all the possible executions of the
analyzed program.  We present a simplified programming language and its
execution model; then an approximated execution model is developed
using the ideas of abstract interpretation theory.  Finally, the
soundness of the approximation is formally proved.  The aim of developing a
realistic points-to analysis is pursued by presenting some extensions to
the initial simplified model and discussing the correctness of their
formulation.  This work contains original contributions to the issue of
points-to analysis, as it provides a formulation of a filter operation on
the points-to abstract domain and a formal proof of the soundness of the
defined abstract operations: these, as far as we now, are lacking from the
previous literature.

\end{abstract}

\author{%
  STEFANO SOFFIA \\
  Applied Formal Methods Laboratory \\
  Department of Mathematics, University of Parma, Italy
}

\markboth{%
S.~Soffia
}%
{Definition and Implementation of a Points-To Analysis}

\category{F3.1}{Logics and Meanings of Programs}{Specifying and Verifying and Reasoning about Programs.}
\terms{Languages, Static Analysis.}
\keywords{Points-To Analysis, Alias Analysis.}

\begin{bottomstuff}
This work has been developed in partial fulfillment of the requirements
for the first level degree in Computer Science.
\end{bottomstuff}

\maketitle

\tableofcontents

\section{Introduction}
\label{section:Introduction}
\subsection{The Aliasing Problem}
In imperative programming languages the concept of \emph{memory location}
is of main importance;  it
refers to an entity able to keep a finite quantity of information across
the subsequent steps of the computation. The concept of \emph{variable} is then
developed as a way to refer to memory locations.
In the different languages, different constructs allow for the
composition of variable names so as to form \emph{expressions}
(\refcodesnippet{composition of names}).  From the use of these constructs
comes the possibility to refer to the same memory location with
different expressions.
In the literature,
two expressions referring to the same memory
location are said to be \emph{aliases};
the set of pairs of alias expressions is
commonly referred to as \emph{alias information}
and the \emph{aliasing problem} is known as the problem of
analyzing the alias information of a program.
Due to the many mechanism that can lead to the
generation of aliases,
the aliasing problem is
complex even to characterize.
The following paragraphs
show how the different constructs of the C~language can
affect the alias information.

\codecaption
{different constructs of the
 C~language can be used to compose variables into
 expressions.  Note at \refline{7} the use of the \emph{dereference
 operator},
 of the \emph{index}
 and \emph{field selectors} in the same expression. Many are the available
 constructs and complex is the problem of analysing all their possible
 interactions.}
{composition of names}
\begin{codesnippet}
struct S {
  struct S *l, *r;
  int key;
} a[10];
int i;
...
a[i].l->key = ...
\end{codesnippet}

\subsubsection{Aliasing From the Use of Arrays}
The example presented in \refcodesnippet{aliasing from arrays} shows how,
through the use of the array's indexing mechanism, the aliasing problem is
influenced by the value of integer variables.  As shown by
\refcodesnippet{difference of pointers}, also the converse holds --- the
value of pointer variables, typically considered a alias-related issue, can
influence the value of integer variables.

\codecaption
{this example shows how the use of arrays may produce aliasing.
 At \refline{5} the variables \Variable{i} and \Variable{j} hold
 the same value;  then the expressions \QuotedCode{a[i]} and
 \QuotedCode{a[j]} denote the same memory location, i.e., they are
 \emph{aliases}.}
{aliasing from arrays}
\begin{codesnippet}
int a[10], i, j;
...
if (i == j) {
  ...
  a[i] = a[j];
  ...
}
\end{codesnippet}

\codecaption
{the value assigned to the variable \Variable{dist} at \refline{3} depends
on the distance between the elements referred to by the pointers
\Variable{p} and \Variable{q}.}
{difference of pointers}
\begin{codesnippet}
int a[10], *p, *q, dist;
...
dist = q - p;
...
\end{codesnippet}

\subsubsection{Aliasing From the Use of Pointers}
The simple example in \refcodesnippet{aliasing from pointers} shows how the
use of pointers can produce aliasing. In the C~language the support of
pointers is particularly flexible and powerful. For instance,
multiple levels of indirections are allowed (\refcodesnippet{multiple
levels of indirection}). These characteristics make the development of
alias analyses for the C language a challenging problem.  The study of the
aliasing problem requires also to cover recursive data structures; the use
of these can produce particularly complex alias relations
(\refcodesnippet{recursive data structures}).

\codecaption
{after the execution of \refline{2}, the pointer
 variable \Variable{p} contains the address of
 the variable \Variable{a};
 then the expressions \Variable{*p} and
 \Variable{a} are \emph{aliases}.}
{aliasing from pointers}
\begin{codesnippet}
int a, *p;
p = &a;
...
\end{codesnippet}

\codecaption
{at \refline{2} the address of \Variable{a} is assigned
to \Variable{p}; as a consequence, \QuotedCode{*p} and \QuotedCode{a}
become aliases.  At \refline{3} the address of \Variable{p}
is assigned to \Variable{pp}; as a consequence, \QuotedCode{*pp} and
\QuotedCode{p} become aliases.  Hence, also the expressions \QuotedCode{**pp}
and \QuotedCode{*p} are aliases.  Finally, by applying the transitive
property, it is possible to conclude
that \Variable{**pp} and \Variable{a} are aliases too.}
{multiple levels of indirection}
\begin{codesnippet}
int a, *p, **pp;
p = &a;
pp = &p;
...
\end{codesnippet}

\codecaption {
this example shows how recursive data structures can affect the
aliasing problem.  After the assignment at \refline{7}, the expressions
\QuotedCode{head.next->key}, \QuotedCode{head.next->next->key} ---and more
generally each expression of the form `\(
\Code{head.}(\Code{next->})^n\Code{key} \)' with \( n \in \naturals \)---
are all aliases of \QuotedCode{head.key}.  Even a simple
example can produce an infinite set of alias pairs.}
{recursive data structures}
\begin{codesnippet}
struct List {
  struct List *next;
  int key;
};
...
struct List head;
head.next = &head;
\end{codesnippet}

\subsubsection{Aliasing Subproblems}
Due to the many aspects that must be taken into account in order to provide
a complete coverage of the \emph{aliasing problem}, different area of
research have been developed; as a result, in the literature a
wide range of analyses is available, which encompasses all the alias subproblems
--- while a \emph{pointer analysis} attempts to
determine the possible run-time values of pointer variables,
a \emph{shape analysis} focuses on the precise approximation
of the aliasing relations produced by
recursive data structures; whereas a
\emph{numerical analysis} is required to track the value of array's indices.

\subsection{A Static Analysis}
The goal of this work is to present an automated method able to prove
certain alias properties of programs given in input.  In the following we
use the term alias \emph{analysis} to refer to the general and theoretical
ideas to approach the alias problem; whereas we use the term alias
\emph{analyzer} to stress the focus on the
implementation of an automated analysis. We are
interested in defining a \emph{static analysis}. Commonly, in the
context of \emph{software analysis}, the adjective \emph{static} referred
to the term \emph{analysis} designates a class of methods that
avoid the actual execution of the examined program.  In other words,
a \emph{static analysis} can be described as the process of extracting
semantic information about a program at compile time.  Static analysis
techniques are
necessary to any software tool that
requires compile-time information about the semantics of programs.
Consider indeed the following points.
\begin{itemize}
\item
  The termination problem is undecidable; as a consequence any method that
  requires the execution of the program is not guaranteed to terminate.
\item
  If the execution of the program is performed then the computational
  complexity of the analysis is bounded from below by the computational
  complexity of the analyzed program.
\item
  \emph{Testing} a program on some executions can prove the \emph{presence}
  of errors;  however, unless all of the possible executions are tried, it
  cannot prove the \emph{absence} of errors.  More generally, since a program
  can have an unbounded number of distinct executions, \emph{testing} can
  only prove that a \emph{property} holds on some executions, but it cannot
  prove that it holds always.
\end{itemize}
Hence, the existence of analysis methods that avoid the actual execution of
the program is motivated by the presence of constraints on the costs of the
analysis, the need of predictability of these or the need to verify a
property against all of the possible executions.
Usually, the results of an alias analysis are only an intermediate step of
the computation of a complete static analysis tool; this means that
an alias analysis
is commonly intended to answer to questions formulated by other automatic
analyses.
For instance, \emph{compilers} are the most common tools
that exploit the alias information --- almost all of the modern compilers
include some kind of alias analysis. From the practical perspective,
the kind of queries that are posed to the alias analyzer is greatly
influenced by the final application; whereas from the theoretical point of view it
is useful to assume that the questions posed to the alias analysis are always of
the form: does the property \(P\) hold on all/some executions of the
program?

\subsubsection{One Program, Many Executions}
\label{section:One program many executions}
Generally, the flow of the execution depends not only on the program's
source code but also on external sources of information, e.g., the user's
input or a random number generator;
when many
executions paths are possible, a property may hold on some but not on all
the possible executions
(\refcodesnippet{external_information_from_rand}).  In the following we
refer to a function declared as \QuotedCode{int rand()} as a source of
non-determinism;  we assume that this function always halts, that it can
return zero and not-zero values and that it has no side-effects on the
caller.

\codecaption
{at \refline{2} the address of \Variable{a} is assigned to
the variable \Variable{p};
then at \refline
3, during all executions, \QuotedCode{*p} is an alias of
\Variable{a}.  At \refline{4} the address of \Variable{b} is assigned to
\Variable{p}, but this statement is executed only when at \refline{3} the
call to \Code{rand()} returns a non-zero value.  Therefore, it is
possible to prove that there exists at least one execution path that
reaches \refline{5} in a state where \Variable{*p} is an alias of
\Variable{a} and also there exists at least one execution that reaches
\refline{5} in a state where the same property is false.}
{external_information_from_rand}
\begin{codesnippet}
int a, b, *p;
p = &a;
if (rand())
  p = &b;
...
\end{codesnippet}

\subsubsection{The Aliasing Problem Is Undecidable}
\label{section:aliasing_is_undecidable}
The problem of determining the alias properties
of a program is undecidable; it is indeed possible to reduce  a
problem that is well known to be undecidable, the halting problem,
to the aliasing problem.  In the sequel, we refer to a function
declared as \QuotedCode{int turing(int n)};  we assume that
(1)
  this function is defined somewhere in the source code and it emulates
  the execution on the input \( n \) of some
  Turing machine;
(2)
  the result of the execution of the emulated Turing machine is returned
  to the caller as the return value of the function;
(3)
  calling this function has no side effects on the caller environment.
\refcodesnippet{undecidability} highlights how the
aliasing problem is influenced by the halting problem.  For
this reason the aliasing problem is formulated assuming the
reachability as \emph{hypothesis}.  This assumption is not always valid but it is
\emph{safe}, or \emph{conservative}.  In \refcodesnippet{undecidability} it
is not possible to tell if \refline{5} will ever be reached; however,
\emph{in that case} what would happen?\footnote{The idea and the
motivations behind this approach are similar to those that drive the
development of Hoare's logic for \emph{partial correctness} specification, opposed
to the \emph{total correctness} specification, both introduced in \cite{Hoare03}.
The concept of \emph{Hoare's triple for partial correctness} is introduced
--- it is a triple \( \{ P \}\, \Code{C} \, \{ Q \}\) where \Code{C} is a command of
a given programming language and \(P\) and \(Q\) are two propositions
expressed in some fixed first order logic language. Informally, in Hoare's
logic the triple \( \{ P \}\, \Code{C} \, \{ Q \}\) is said to be true if
\emph{whenever}
\Code{C} is executed in a state satisfying \(P\) and the execution of
\Code{C} \emph{terminates} then the resulting output state satisfies
\(Q\).}
More generally the
question is --- if the execution reaches the program point \(p\) does the
property \(P\) hold at \(p\)?  The results of the analysis are
then expressed as an implication of the kind --- if \( p \) is
reached then \( P \) holds.  However, even in this weaker
form, the aliasing problem is still undecidable.
Consider for instance \refcodesnippet{undecidability_2},
where \refline{7} is reached if and only if
the call \QuotedCode{turing(K)} at \refline{3} halts;  in this case the
value of \Variable{p} is determined by the return value of
\QuotedCode{turing(K)}.  As a consequence of Rice's theorem
\cite{HopcroftMRU00}, also \emph{assuming} that \QuotedCode{turing(K)}
halts,
there exist no algorithms able to tell for every
\QuotedCode{K} if the execution reaches \refline{7} in a state
where \Variable{p} points to \Variable{a}.

\codecaption{%
at \refline{4} the call to the function
\QuotedCode{turing} starts the computation of the Turing machine.
Suppose that the call halts; in this case the
execution reaches \refline{5} causing an error due to a dereferenced null
pointer.  However, the problem of telling whether the execution of a Turing
machine will ever halt is undecidable --- there exists
no algorithm able to tell for each possible value of \Variable{K} if
\refline{5} will ever be reached by the execution; thus if there exists an
execution path where a null pointer is dereferenced.
}{undecidability}
\begin{codesnippet}
int K = ...;
int *p;
p = 0;
turing(K);
*p = 1;
\end{codesnippet}

\codecaption
{an example of the possible interactions of the aliasing problem and other
undecidable problems. There exist no algorithm able to tell for every
\QuotedCode{K} if there exist an execution that reaches \refline{7} in a
state such  that \QuotedCode{p} points to \QuotedCode{a}.}
{undecidability_2}
\begin{codesnippet}
int K = ...;
int *p, a, b;
if (turing(K))
  p = &a;
else
  p = &b;
...
\end{codesnippet}

\subsubsection{Summing Up}
\label{section:summing_up}
This section summarizes the various possibilities just presented.
Let  \( P \) be an alias property and \(\neg P\) its negation.
There exist four possible cases.
\begin{enumerate}
\item
  The property \(P\) holds on all of the possible executions or
  equivalently, \(\neg P\) never holds
  (\refcodesnippet{external_information_from_rand}).
\item
  The property \(P\) holds on some but not on all of the possible
  executions;  that is, there exists at least one execution in which \(P\)
  holds and also there exists at least one execution in which \(\neg P\)
  holds (\refcodesnippet{external_information_from_rand}).
\item
  The property \(P\) holds on some executions but it is not known if it holds
  always;  that is there exists at least one execution in which \(P\) holds
  but it is unknown whether there exists an execution in which \(\neg P\)
  holds (\refcodesnippet{nondeterminism_plus_undecidability}).
\item
  It is not known if there exists an execution in which \(P\) holds and also
  it is unknown whether there exists an execution in which \(\neg P\) holds
  (\refcodesnippet{undecidability_2}).
\end{enumerate}
For instance, suppose that \( P \) expresses the absence of some kind of
error. The first of the listed cases is the optimal case: it has been
proved that no errors are possible.  The second case is as much positive:
it has been proved that there exists at least one erroneous execution, that is
the program contains a bug.  In the third and the fourth case it is
\emph{unknown}, i.e., the absence of errors cannot be proved.  However,
assuming the reachability as hypothesis, alias analyses cannot prove the
result described in the second case. In other words, every static analysis
that assumes the reachability as hypothesis can only prove that \( P \)
holds always.
In this sense, testing procedures
are complementary to static analyses techniques.

\codecaption
{\refline{5} is reached only when the return value of the
call to \QuotedCode{rand()} evaluates to true. Thus, \refline{6} is reached
only when the execution reaches \refline{5} and the call
\QuotedCode{turing(K)} halts and the return value evaluates true.
Certainly there exists executions that reach \refline{7} in a state where
\Variable{p} points to \Variable{a}. However, also assuming that
\QuotedCode{turing(K)} halts, there exists no algorithm able to tell for
every \Code{K} if there exist an execution path that reaches \refline{7}
with \Variable{p} equal to null.}
{nondeterminism_plus_undecidability}
\begin{codesnippet}
int K = ...;
int *p, a;
p = &a;
if (rand())
  if (turing(K))
    p = 0;
*p = 1;
\end{codesnippet}

\subsection{Applications}
\codecaption
{analyzing this fragment of code without any aliasing information
would require a worst-case assumption about the locations pointed by
\QuotedCode{p}, that is, all the possible targets of an \QuotedCode{int*}
can be modified by the assignment at \refline{3}.}
{applications}
\begin{codesnippet}
void f(int *p) {
  ...
  *p = 0;
  ...
}
\end{codesnippet}
The alias information is required by many static analyses;
this is due to the following fact:
analyzing an indirect assignment, and generally an indirect memory
reference,
without knowing alias information requires to
assume that the assignment may modify almost anything
and, under these hypotheses, it is unlikely that the client analysis will
be able to deduce any useful result
(\refcodesnippet{applications}). For what concerns the final application,
there are two main areas where the
aliasing information is commonly used.
\begin{itemize}
\item Optimization and parallelization; used in compilers and interpreters.
\item Programs semantics understanding and verification;
      used in debugging/verifier tools.
\end{itemize}
These two uses have vastly different requirements on alias analyses. For
compiler oriented applications
there exist some upper bound on how much precision is useful.
There are various studies \cite{HindP00,HindP01} that state that
this upper bound is reached by
the current state of the art.  For the use in
program understanding/verification the picture is different; in this case
there is instead a lower bound on precision, below which, alias
information is pretty useless. It is commonly believed that the spectrum of
techniques currently available does not fully covers the requirements of
this kind of use: more research work is necessary.

\subsubsection{Client Analyses}
This section presents a brief list of the most common static analyses that
require the aliasing information.
\begin{description}
\item[Mod/Ref analysis]
This analysis determines what variables may be
modified/referenced\footnote{Here the term `referenced' means that the
value of the object is read.} at each program point. This information is
subsequently used by other analyses, such as \emph{reaching definitions}
and \emph{live variable analysis}. Each dereference in the program
generates a query of the alias information to determine the referenced
objects that are thus classified as \emph{modified} or \emph{referenced}
depending on the context in which the dereference operator occurs. For
example, in assignment statements, the objects referred by the last
dereference of the lhs are marked as \emph{modified}, all other objects
referred in the evaluation of the rhs and the lhs are instead marked as
\emph{read}.
\item[Live variable analysis]
It is common to many imperative languages that the life of a local variable
starts at the point of definition and ends at the end of the scope that
contains the definition.  At the extent of minimizing the memory usage of
the compiled program, while keeping unchanged its semantics,
it is possible to defer the creation to the point where the variable is
first assigned and anticipate its destruction to the last point where its value
is used.  The \emph{live variable analysis} tries to compute this information
that is useful to compilers for \emph{register allocation}, detecting
the use \emph{uninitialized variables} and finding \emph{dead assignments}.
\item[Reaching definitions analysis]
This analysis determines what variables may reach (in an execution sense) a
program point. This informations is useful in computing \emph{data
dependence} among
statements, which is an important step for the process of \emph{code-motion}
and parallelization.
\item[Interprocedural constant propagation]
This analysis tracks the value of constants all over the program and uses
this information to statically evaluate conditionals with the goal of
detecting if a branch is unreachable; thus allowing the detection of
unreachable code.
\end{description}

\subsection{Background}
Probably due to the different areas of application, historically this
field of research has treated as separate two fundamental
\emph{aliasing}-related problems: the \emph{may alias} and the \emph{must
alias} problem.  If the general interest of aliasing-related
static analyses is the study of how different expressions lead to the same
memory location, these two specializations can be characterized as follows.
\begin{description}
  \item[May alias]
    It tries to find the aliases that occur during \emph{some}
    execution of the program.
  \item[Must alias]
    Find the aliases that occur on \emph{all} the executions of the program.
\end{description}
Results exist that confirm that the former problem is not
\emph{recursive}\footnote{A problem \( P \) is said to be \emph{recursive}, or
\emph{decidable}, if there exists an algorithm that terminates after a finite
amount of time and correctly decides whether or not a given input belongs
to the set of the solutions of \( P \).}
while
the latter is not \emph{recursively enumerable}\footnote{A
\emph{recursively enumerable} problem \( P \) is a problem for which there exist
an algorithm \( A \) that halts on a given input \( n \)
if and only if \( n \) is a solution of \( P \).}
\cite{Landi92}.
In recent developments the same concepts are also expressed in terms of
\emph{possible} and \emph{definite} alias properties.  The term
\emph{definite alias property} is used to designate an alias property that
holds on every possible execution; whereas a \emph{possible alias property}
\( P \) is such that both \( P \) and \( \neg P \) cannot be proved to be
\emph{definite}.
Unfortunately,
the mismatch between the naming and the notation used in the published
works is not limited to the case just described.
For instance, in the literature the names
\emph{pointer} analysis, \emph{alias} analysis and \emph{points-to}
analysis are often uses interchangeably.  As suggested by \cite{Hind01},
we prefer to consider the \emph{points-to analyses} as a proper
subset of the \emph{alias analyses}.  An \emph{alias analysis} attempts to
determine when two expressions refer to the same memory location; whereas a
\emph{points-to analysis} \cite{Andersen94th,EmamiGH94,HindBCC99}
is focused in determining what memory locations a pointer can point to.
Points-to methods are also characterized by
the same \emph{representation} of the aliasing information.
As described in
\cite{Hind01}, the representation of the alias information is only one of
the several parameters that can be used to categorize
alias analyses.
\begin{description}
\item[Representation]
For the representation of alias information various options are possible.
\begin{description}
\item[Complete alias pairs] With this representation all the
alias pairs produced by the analysis are stored explicitly.
\item[Compact alias pairs] Only a subset of alias pairs is kept
explicitly. The complete relation can be derived applying the
\emph{dereference} operator, the transitivity and symmetry properties to
the pairs explicitly stored.
\item[Points-to pairs] This representation tracks only the
relations between the pointers and the pointed objects. The complete alias relation
can be derived from the points-to information in a way similar to what
done for the compact alias pair representation. This process is informally
described in \cite{Emami93th}.
\end{description}

\codecaption
{a program that exposes a simple alias relation.}
{simple alias relation}
\begin{codesnippet}
int i, *p, **q, *r;
p = &i;
q = &p;
r = p;
\end{codesnippet}

For instance, the alias relation generated by the sequence of assignments in
\refcodesnippet{simple alias relation} can be represented using the
points-to form as
\[
  \bigl\{
    \langle \Code{p}, \Code{i} \rangle,
    \langle \Code{q}, \Code{p} \rangle,
    \langle \Code{r}, \Code{i} \rangle
  \bigr\}.
\]
This corresponds to the complete alias pair set\footnote{In this case we
have omitted to explicitly write the alias pairs that can be obtained by
symmetrically closing this relation.}
\[
  \bigl\{
    \langle \Code{*p}, \Code{i} \rangle,
    \langle \Code{*q}, \Code{p} \rangle,
    \langle \Code{**q}, \Code{i} \rangle,
    \langle \Code{*r}, \Code{i} \rangle,
    \langle \Code{r}, \Code{p} \rangle,
    \langle \Code{*r}, \Code{*p} \rangle,
    \langle \Code{r}, \Code{*q} \rangle,
    \langle \Code{*r}, \Code{**q} \rangle
  \bigr\}.
\]
Note that these representations ---\emph{complete}, \emph{compact} and
\emph{points-to}--- are listed in order of decreasing expressive power ---
the rules of deduction used to infer the complete alias relation from the
compact and the points-to format impose a precise structure on the relation.
In the next we presents some examples to show how the points-to representation can
be less precise than the alias representation (\refsection{completeness}).
On the other hand
these deduction rules allow to reduce the set of pairs that have to be
explicitly represented thus decreasing the cost of the analysis.  Note also
that, due to recursive data structures, the complete
alias relation may contain an infinite number of
pairs.  If one of the possibilities to overcome this problem is to adopt a
compact or a points-to representation, other solutions, specialized in the
handling of recursive data structures, exist.
These methods use quite
different formalism from the ones presented here and they
have generated a quite independent field of research that is named
\emph{shape-analysis}.
An example of these alternative representations is briefly described in
\refsection{deutsch a notable example}.
\item[Flow-sensitivity]
The question is whether the \emph{control-flow}
information of the program is used by the analysis. By not considering
control-flow information ---therefore computing only a conservative
summary of it--- flow-\emph{insensitive} analyses compute one solution for
either the whole program or for each function
\cite{Andersen94th,Steensgaard96,HindBCC99}, whereas a flow-\emph{sensitive} analysis computes
a solution for each program point \cite{EmamiGH94,HindBCC99}. Therefore,
flow-insensitive methods are generally more efficient but less precise
than flow-sensitive ones.
\item[Context-sensitivity]
The point is if there is a distinction between the different
callers of a function, that is if the caller-context information is used when
analyzing a function.  If this is not the case, the information
can flow from one call site
(say caller A) through the called function (the callee) and then back to a
different call site (say caller B) thus generating a spurious data flow in the
computed solution on the code of the caller B.  Whenever a static analysis
combines information that reaches a particular program point via different
paths some accuracy may be lost. An analysis is \emph{context-sensitive} to the
extent that it separates information originating from different paths of
execution. Because programs generally have an unbounded number of potential
paths, a static analysis must combine information from different
paths --- in this sense, the \emph{context sensitivity}
is not a dichotomy but rather a matter of degree.
\item[Heap modeling]
The analysis of heap-allocated objects requires different
strategies from that of stack-allocated and global memory objects.  First
because heap objects have a different life-cycle with respect to
automatic and globals variables;
second, the term \emph{heap modelling}, is commonly but improperly used to
refer to the modelling of recursive data structures as these are usually
allocated on the heap.
Various trade-offs between the precision and the efficiency exist also for
this problem.
\begin{itemize}
\item
  The simpler solution consists in creating a single
  abstract memory location to model the whole heap \cite{EmamiGH94}.
\item
  Another solution distinguishes between heap allocated objects on the basis
  of the program point in which they are created, that is objects are
  \emph{named} by the creating statement (context-insensitive naming.)
\item
  A more precise solution names the objects not only by the program
  point of the creating statement but with the whole call path
  (context-sensitive naming.) For example, this means that if
  the program contains a user defined
  function for memory allocations (e.g., a wrapper of the
  \QuotedCode{malloc} function) then the analysis is able to discern
  objects created by different calls of the allocation routine.
\item
  Shape analysis methods adopts a quite different approach to the
  problem of naming locations, which is based on the expression used to
  refer to the memory location.
\end{itemize}
\item[Whole program]
Does the analysis method require the whole program or can a sound solution
be obtained by analyzing only its components?  In the current
panorama of software development, \emph{component programming} and the use
of libraries are becoming more and more popular. This trend requires
the capability to analyzing fragments of code as the whole program may not be
available \cite{LLVM}.
\item[Language type model]
In strongly typed languages, the type information ---that can be
easily extracted from the source code using common compiler techniques---
can be used by the alias analysis to deduce affordable informations
about the layout of pointers. This information, joined with other
assumptions on the memory model that usually accompany this kind of
languages, can greatly simplify the formulation of the alias analysis.
However, as noted in \cite{WilsonL95}, a pointer analysis algorithm cannot
safely rely on high-level type information for C programs. Because of
arbitrary type casts and union types, the defined types can always be
overridden. This means that type information cannot be used to determine
which memory locations may contain pointers.  To be safe, an analysis must
assume that any memory location could potentially contain a pointer to any
other location. Similarly, any assignment could modify pointers, even if
it is defined to operate on non-pointer types.
\item[Aggregate modeling]
This point regards how aggregate types
are treated: the main question is whether the subelements
are distinguished or collapsed into one object.
The choice of the analyzed language is of main relevance:
this task results particularly complex to address in
weakly-typed languages such as C/C++; in these languages the same memory
area can be read using different types. An analysis that aims to precisely
track pointers to fields must then consider the possible overlapping
between the memory layouts of the different types.  In strongly typed
languages like Java this difficulty does not exist, as these languages do
not allow for reading the memory with a type different from that used for
the allocation.
\end{description}

\subsection{The State of the Art}
Static analysis originally concentrated on Fortran and it was predominately
confined to a single procedure (\emph{intra-procedural} analysis).  Since
the emergence of the C language, static analysis of programs with dynamic
storage and recursive data structures has become a field of active
research producing methods of ever increasing sophistication.
In \cite{Hind01} it is noted that, during the past two decades, over
seventy-five papers and nine Ph.D.\ theses have been published on alias
analysis, leading the author to the question --- given the tomes of work on
this topic, haven't we solved this problem yet?  The answer is that though
many interesting results have been obtained, still many ``open
questions'' remain.
As shown in the introduction, also limited to the analysis of pointers, the
aliasing problem is still undecidable \cite{Landi92};
therefore, the main question that arise approaching it is about
the desired trade-off between the efficiency of the algorithm and the precision of the
approximated solution computed.
A wide range of \emph{worst-case} time complexities is available:
from almost linear \cite{Steensgaard96} to exponential \cite{Deutsch94}.
The current research effort is proceeding in at least two distinct
directions: improving the efficiency of the analyses while keeping the
actual precision and increasing the precision of the approximation while
keeping a reasonable computational costs.

\subsubsection{Improving the Efficiency}
Again in \cite{Hind01}, the problem of \emph{scalability} is
listed among the ``open questions''. About this topic two distinct efforts are currently
active and both proceed toward the goal analysing programs of
ever increasing size.  Today, flow-insensitive analyses
\cite{Steensgaard96,LLVM} can quickly analyze million-line programs.
It is commonly believed that the precision
provided by these fast methods is sufficient to satisfy
ordinary compiler-oriented client analyses \cite{Hind01}; but definitely they do not suffice
for verifier-oriented applications \cite{OrlovichR06,WangMD08}.  On the other side
various works \cite{HindBCC99} have increased the efficiency of the more
precise but slower flow-sensitive methods with respect to the initially
proposed methods \cite{EmamiGH94}.
It must be noted
that some studies \cite{EmamiGH94,HindP00,HindP01} show that client analyses
improved in efficiency as the pointer information was made more precise
because the input size to the client analysis becomes smaller; on average,
this reduction outweighed the initial cost of the pointer analysis.
However, these
studies focused on typical compiler oriented analyses --- no data
is available for the field of program understanding/verification.

\subsubsection{Improving Precision}
Another goal of the current research effort is to improve the precision
without sacrificing the scalability.  As for the scalability issue,
nowadays there are two main
directions in which researchers are investigating to improve the current
state of the art.
The first area of investigation tries to reconsider the notion of safety by
loosening the soundness constraints on the analysis.  The other
direction of investigation tries to recognize the areas of the source code
that needs to be analyzed with greater accuracy;  the idea is to
perform a quick alias analysis on the whole program and then refine the
first results only in those regions of the code where more precision is needed. In
other fields of the static analysis research this idea has yield to the
formalization of the concept of \emph{demand-driven} analysis
\cite{OrlovichR06,WangMD08}. Demand-driven methods can avoid the costly
computation of exhaustive solutions: given an initial query, the
analysis contains the logic to detect what other information are needed to
answer it and then it proceeds by recursively formulating a new
set of queries.  It is still an open question whether
the precise alias analyses currently available
--that is flow- and context-sensitive analyses and shape analyses--
can be reformulated in a demand-driven fashion \cite{Hind01}.

\subsubsection{Different Notions of Safety}
A reading of the literature available for
the field reveals that there exist two slightly different notions of
\emph{safety}, which are determined by the different areas of application.
Compiler targeted analyses
are required to produce a \emph{safe} approximation of the
alias information for \emph{every} standard-compliant program,
allowing thereby the analyzer to \emph{assume} that the analyzed program is
standard-compliant.\footnote{For some notion of standard-compliant;
there exists different possible language standards, hence
different notions of standard-compliance.} From
\cite{WilsonL95}
\begin{quote}
The possibility of non-pointer values [stored inside pointer variables]
is not always important. For example,
when a location is dereferenced, we can assume that it always contains a
pointer value, since otherwise the program would be erroneous.
\end{quote}
On the other hand, for software verification tools, the conformance
of the analyzed program to the standard is not an hypothesis but one of the theses
that need to be proved.  For example, a desirable feature for a verifier
tool would be to signal if a dereferenced pointer may hold an
undefined or a null value.
For analyses that cannot simply ignore the possibility of errors,
the approach called \emph{$\theta$-soundness} is usually applied
\cite{ConwayDNB08}:
when the analysis detects the possibility of an error, then the program
point is marked with a warning and the analysis proceeds assuming that the
condition that led to the error is not verified.
For example, if we have that to the pointer \QuotedCode{p}
corresponds the points-to set \( \{\, \Code{a}, \Code{null} \,\} \) ---i.e,
\QuotedCode{p} may point to the variable \QuotedCode{a} or be null--- then
the analysis of the statement \QuotedCode{*p} would produce a warning for a
possible dereferenced null pointer and
the execution will continue assuming that \QuotedCode{p} points only to \QuotedCode{a}.
Verifier targeted analyses are not allowed to assume
the absence of errors; in this sense,
the notion of safety required by compiler targeted analyses is weaker.
However, practical considerations softens the requirements on verifier's
analyses.  If compilers are required to expose a well-defined behaviour on
all conforming programs, verification tools often assume stricter rules
than those dictated by the standard of the programming language with the
result of restricting the class of analyzable programs to a set of
well-behaved ones.  For instance, assuming the absence of some kind of
\emph{casts} \cite{Acton06}, it is possible to simplify the analysis and
also improve its precision.  For those programs that do not belong
to this restricted set, the analysis produce some \emph{false
positives}\footnote{A \emph{false positive} is an error reported by the
analyzer which however cannot occur in any of the possible execution
paths.} and the process of $\theta$-soundness will erroneously remove from
the abstraction some of the possible executions yielding to a non-safe result.
As noted in \cite{Hind01} this can be acceptable in many areas:
\begin{quote}
  I was told the users actually liked the false-positives in my
  analysis because they claimed when my analysis got confused it was a good
  indication that the code was poorly written and likely to have other
  problems. This came as a complete surprise. While additional study is
  needed to claim these observations to be valid in a broader sense, they
  lead me to conclude that the notion of safety should be reconsidered for
  many applications of static analysis.
\end{quote}

\subsubsection{Measuring the Alias Analyses}
It is a quite accepted fact that in the alias analysis field, the
independent verification of the published results is a considerably
difficult task.  The first consequence of this is the absence of a clear
and complete comparison between the existing methods.  The difficulty of
reproducing the publicly available results can be explained by the
intrinsic difficulty of defining a valuable metric for the problem as a
great number of parameters must be taken into account: as the chosen
intermediate representation, the benchmark suite used for the testing phase
and, more generally, all the details of the infrastructure where the
analysis is put to work.
For instance, some analyses \cite{EmamiGH94,HindP00} work on an
intermediate representation of the code that results from a simplification
phase, which reduces all expressions to a normal form with
the goal of limiting the complexity of the implementation as less cases
need to be considered; however, it also introduces temporary variables and
intermediate assignments to emulate step by step the evaluation of the
original expressions.  Since many of the used metrics depend on the number
of variables, this transformation makes harder, if not impossible at all,
any comparison between these methods with other methods that do not perform
the simplification.
Moreover, alias information is not useful on its own, but it is needed by
other client analyses.  Thus, the definition of what is a good trade-off
between the cost of the analysis and the precision of the computed solution
inevitably depends on the client applications; it is indeed a common opinion
among the researchers that each area of application requires an ad-hoc
method or an adaptation of one described in the literature.  The result is
that a single metric that gives an absolute measure of the value of a
method does not exist.  However, to help implementors of aliasing analyses to
determine which pointer analysis is appropriate for their application and to
help researchers to identify which algorithms should be used as basis for
future advances, some partial metrics have been proposed \cite{HindP01}; the
idea is that since all these metrics have their strengths and
weaknesses, a combination should be used.  A first popular metric records
for each pointer variable the number  of pointed objects; the idea is that
a lower number of referenced objects would mean a more precise alias
information. Although this metric is quite simple to measure, it
presents some flaws.
\begin{itemize}
\item
Due to local variables
in recursive functions and the possibility of dynamically allocating memory
(heap-allocated objects), an alias analysis should be able to model
an unbounded number of objects.  To have a finite representation of the set
of the possible memory objects, each method defines a finitely representable
approximation.  For example in \cite{EmamiGH94} the whole heap is modeled
as a single object; in this case the metric will count only one for all
the referenced heap-allocated objects with the effect of incorrectly
suggesting a precise analysis.
\item
As anticipated, alias information is used by other client analyses, then
its real effectiveness can only be measured on the results of whole
process. But there are no straightforward relations between the results of
this metric and the precision of the client analyses; For example, the
removal of a single alias pair would allow for the client analysis to prove
the absence of a run-time error otherwise not provable.
\end{itemize}
The above metric is usually named \emph{direct} as it refers to a quantity
that is a direct result of the analysis.  To address the flaws just
highlighted, some \emph{indirect} metrics have been developed.
\begin{enumerate}
\item
A first kind of indirect metric measures
the relative improvement to the precision of the aliasing information with
respect to the worst-case assumption. This kind
of metric is reported to be particularly useful on
strongly-typed languages where the worst-case assumptions are not as bad as
in other weakly-typed languages like C \cite{Hind01}.
\item
A second kind of indirect metric requires to implement a client of the
alias information and then it measures the variation of the precision of
the results of the client analysis at the varying of the precision of the
supplied aliasing information.  The main weakness of this metric is that
its results cannot be generalized to other client analyses.
\end{enumerate}
Comparisons are difficult also for what concerns performances.
The careful
engineering of a points to analysis, particularly for flow-sensitive
analyses,\footnote{This is probably due to the greater complexity of
flow-sensitive analyses with respect to a flow-insensitive one. In a more
complex method there are more opportunities to improve.} can dramatically
improve its performance \cite{Hind01}.
The
worst-case complexities often do not reflect the mean cost of the
algorithm, which is greatly influenced by heuristics developed over the
default algorithm, which however require a great effort of fine tuning for the
specific target application.
However, as criticized in \cite{Hind01}, even today most published papers
about new analysis methods seldom present a complete quantitative
evaluation using these guidelines; also, for those works that provide
experimental data, too often the independent verification is missing and
the acceptance of the proposed results becomes
a matter of faith.

\subsubsection{Notes on the Analysis of the Java Language}
The Java language has emerged as a popular alternative to other mainstream
languages languages in many areas.  Java presents a clean and simple memory
model where conceptually all objects are allocated in a garbage-collected heap.
While useful to the programmer, this model comes with a cost. In many cases
it would be more efficient to allocate objects on the stack, eliminating
the dynamic memory management overhead for that object.  Aliasing analysis
allows to detect those cases in which it is possible to perform this
simplification.  Another characteristic of the Java language is the
availability of \emph{synchronized methods} that ensure that the body of
the function is executed
atomically by acquiring and releasing a lock in the receiver object. But
the lock overhead is wasted when only one thread can access the object; the
lock is required only when there is multiple threads
may attempt to access the same object simultaneously.  Also in this case, alias
analysis allows to detect which threads can access an object and thus
possibly allowing the removal of the code for the locking.  Studies have
shown \cite{WhaleyR99} that it is possible to eliminate a significant
number of heap allocations (in the tests between 22\% and 95\%) and
synchronization operations (in the tests between 24\% and 64\%).
For what concerns the realization of alias analyses, the Java language
---while adding new features like virtual functions and exception
handling--- may still be much easier to analyze than the C
language \cite{WilsonL95} because of its strong type system:\footnote{The
same consideration holds for all other strongly typed languages.} without type casts and
pointer arithmetics, the type information given by the static type system
of the language can be used to deduce affordable alias information.
Another feature of Java simplifies the analysis
algorithm: it does not support pointers into the middle of an object --- an
object reference in Java can point only to the beginning of an object.
This means that two pointers may either point to exactly the same location
or not; they cannot point to different offsets within one allocated block 
of memory as it is
possible in the C language.

\subsection{Organization}
Starting from \refsection{Preliminaries}, the paper provides a general
description of the instruments commonly used to approach the points-to and
the alias problems.

Starting from \refsection{our method}, a simplified language and a
simplified execution model are introduced; the execution model comprehends
the \emph{memory model} and the operations that acts on it.  Subsequently,
an \emph{approximated memory model} and the approximated operations are
presented. Following the methodology of the abstract interpretation theory,
the soundness of the approximated execution model is proved. Finally, some
informal considerations about the precision of the abstraction are
presented.

Starting from \refsection{extensions}, in order to present a realistic
points-to analysis, some extensions to the model introduced in the previous
sections are presented and a possible implementation of the approximated
memory model is described.

Finally, \refsection{conclusion} draws the conclusions of the work
and it discusses some of the possible future developments of the present
work.

\subsection{Purpose of the Work}
The presented method is targeted for application in the context of
\emph{software verification}.  Compiler-targeted applications require
relatively imprecise alias information, thus they can rely on \emph{fast}
algorithms for its computation.  However, as empirical studies have
evidenced \cite{HindP01,Hind01}, for software verification there is a
lower bound of precision below which the points-to information is pretty
useless.  For these reasons, our aim is to develop a points-to analysis that,
though less efficient than other methods based on the same representation,
computes a \emph{more precise}
approximation of stack-allocated objects and that is also suitable for
integration with the precise \emph{inter-procedural} techniques already
present in the literature \cite{Emami93th,WilsonL95}.

\subsection{Contributions}
The present work describes a \emph{store-based},
\emph{flow-sensitive} and intra-procedural points-to analysis
working on a relatively high-level intermediate representation of the
source code, which also makes no assumptions about the
inter-procedural analysis model.  In particular, beyond the
\emph{assignment} operation ---which is the most essential operation of a
points-to analysis and thus it is omnipresent in all the papers on the
topic--- we describe a \emph{filter} operation that enables the analysis to
increase the precision of the computed solution by exploiting the
expressions used in branching statements.  Moreover, a
formal proof of the soundness of the presented operations is developed.

\section{Preliminaries}
\label{section:Preliminaries}
This section presents informally the approach used for the
definition of the points-to analysis.

\subsection{Notation}
Before proceeding, some clarifications about the used notation are
necessary.
Let \( A \) and \( B \) be two sets. We write `\( A \to B \)' to denote a
\emph{total} function from the set \( A \) to the set \( B \); we write `\(
A \rightarrowtail B \)' to denote a \emph{partial} function from \( A \) to
\( B \).  We use `\( S \defeq A \to B \)' to denote the \emph{set} of all
(total) functions from \( A \) to \( B \); whereas we write `\(
\functiondef{ f }{ A }{ B } \)' to mean that \( f \) \emph{is} a (total)
function from \( A \) to \( B \).

We denote as `\( \Bool \)' the set \( \{ 0, 1 \} \)
and as `\( \absBool \)' the set \( \partsof\bigl( \{ 0, 1\} \bigr) \);
for convenience of notation we use `\( \bot \)' to refer to the empty element and
`\( \top \)' for the \( \{ 0, 1 \} \) element. We refer to the \emph{complete
lattice
associated to the set \( \absBool \)} as the structure \( \bigl\langle  \absBool,
\subseteq, \union, \intersection, \{ 0, 1 \}, \emptyset \bigr\rangle \).

Let \( n, m \in \naturals \), where \( n < m \), we write `\( \{ n, \cdots
, m \} \)' to denote the set of the naturals from \( n \) to \( m \), i.e.,
\( \{ i \in \naturals \mid n \leq i \leq m \} \).

Let \( A, B, C \) be \emph{finite} sets. We write `\( \cardinality A \)' to mean the
\emph{cardinality} of the set \( A \).
Let \( \functiondef{ f }{ A }{ \partsof(B) } \) and \( \functiondef{ g }{ B
}{ C } \) and \( a \in A \) be such that \( \cardinality f(a) = 1 \);
for convenience of notation we write `\( g\bigl(f(a)\bigr) \)'
to mean \( g(b) \) where \( f(a) = \{ b \} \).

\subsection{The Execution Model and Its Operations}
\label{section:the execution model and its operations}
Though our work is ideally targeted for the C language, we need to
introduce some kind of formal \emph{execution model}.
The standard of the C language has indeed
many \emph{implementation defined} issues that every execution model is
required to specify in order to provide a working environment for the
execution of programs.  The literature provides several of such
formalizations \cite{BagnaraHZ08}; however, for this presentation many of
the details would be useless. With the aim of keeping a simple notation, we
introduce the following concepts.  We denote with `\( \expressions
\)' the set of the expressions of the language.  With \emph{execution
model} we mean a formally specified computing device able to execute
programs written in the analyzed language.
With \emph{memory description}, or simply \emph{memory}, we mean a
description of the state of the execution model at some step of the
computation.\footnote{In the formalization of Turing
machines \cite{HopcroftMRU00}, an \emph{instantaneous description} is a
complete description of the computing device at one of the steps of the
computation; here, with \emph{memory description} we mean an instantaneous
description of the chosen execution model.} Fixed the execution model, we
denote with `\( \memories \)' the set of the memory descriptions.
We make few assumptions about the
structure of the memory model;
we assume that a memory is composed by a set of \emph{memory
locations},\footnote{Now we use the term \emph{memory location} a synonym of
\emph{memory address}. Basically, with location we mean a tag that can be
used to identify the information stored in the memory description.}
denoted as `\( \Locations \)'. Given a memory description \( m \in
\memories \) and a location \( l \in \Locations \), we denote with `\( m[l]
\)' the information that \( m \) stores at the location \( l \).  We also
assume the existence of a partial \emph{evaluation function}
\[
  \parfunctiondef
    { \eval }
    { \memories \times \expressions }
    { \Locations }.
\]
In the real world, the execution of a program acts in different ways on the
memory structure of the computing machine.  With the aim of formalizing these
interactions, we introduce the concept of \emph{operation};
an \emph{operation} is defined as a partial function
\[
  \parfunctiondef
    { \operation }
    { \memories \times \externalDomain }
    { \memories }
\]
where `\( \externalDomain \)' is an unspecified set
that formalizes the use of external information. Note that we have specified \(
\operation \) as a partial function --- this is needed to model the fact
that the possible actions that can be performed on the memory structure are
not defined on all of the possible states. For example, to process
the return statement of a function, the stack of the memory must contain at
least one activation frame. By aiming to perform a \emph{static} analysis, we are
interested in determining all the possible memory descriptions that can
be generated at a specified program point. To express the transition from a
set of memory descriptions to another
as a consequence of an operation, we extend the definition of
the operation \( \operation \) to sets. Let
\[
  \functiondef
    { \operation }
    { \partsof(\memories) \times \externalDomain }
    { \partsof(\memories) }
\]
be defined as follows.  Let \( M \subseteq \memories \)
and \(e \in \externalDomain\), then
\[
  \operation(M, e)
  \defeq
    \bigl\{\, \operation(m, e)
    \bigm|
        m \in M \land
        \operation(m, e) \text{ is defined}
    \,\bigr\}.
\]
\begin{example}
Consider the modifications to the memory triggered by the declaration of a
local variable. To formalize this event we introduce an operation \(
\acsnewvar \), which takes the memory description of the execution prior to
the declaration, plus some information about the declaration.  In this
case, the set `\( \externalDomain \)' represents the type of the declared
variable and, if present, the expression used as initializer.  The returned
memory describes the properly updated execution state.  Now suppose that
the set \( M \subseteq \memories \) represents the possible memory
configurations at a given program point \( p \), which is immediately
followed by a local variable declaration. Let \( e \in \externalDomain \)
be the information associated to the declaration; then we express
the set of all possible memory configurations resulting from the
declaration as \( \acsnewvar(M, e) \).
\end{example}

\subsection{The Abstract Interpretation Approach}
\label{section:abstract interpretation approach}
As shown in the introduction, the aliasing problem
is undecidable. Following the approach proposed by the \emph{abstract
interpretation theory} \cite{CousotC77,CousotC79,CousotC92},
to overcome this limitation we proceed by developing
a computable approximation of the execution model and its operations.

\begin{definition}
\summary{Concrete domain of the aliasing problem}
We define the \emph{concrete domain of the aliasing problem} as the complete
lattice generated by the powerset of \( \memories \)
\[
  \bigl<
    \partsof(\memories), \subseteq, \union, \intersection, \emptyset,
    \memories
  \bigr>
\]
\end{definition}
Then we need to develop an abstract counterpart of the chosen execution
model --- an \emph{abstract domain} \( \absmemories \) that provides an approximation of the
concrete domain \( \partsof(\memories) \). We formalize \( \absmemories \) as a
complete lattice
\[
  \bigl<
    \absmemories, \sqsubseteq, \sqcup, \sqcap, \bot, \top
  \bigr>.
\]
To formally express the \emph{semantics} of the approximation we provide a
\emph{concretization function}
\[
  \functiondef
    { \concretization }
    { \absmemories }
    { \partsof(\memories) }.
\]
We say that a memory description \( m \in \memories \) is
approximated, or \emph{abstracted}, by an element of the abstract domain \(
m^\sharp \in \absmemories \) when \( m \in \concretization(m^\sharp) \).
The formalism also requires the definition of an abstract counterpart \(
\absoperation \) of the concrete operations \( \operation \)
\[
  \functiondef
    { \absoperation }
    { \absmemories \times \externalDomain }
    { \absmemories }.
\]
To prove the \emph{soundness} of the proposed abstract model by
it is necessary to show that for all \( m^\sharp \in \absmemories \) holds that
\[
  \operation \bigl( \concretization(m^\sharp), e \bigr) \subseteq
  \concretization \bigl( \absoperation(m^\sharp, e) \bigr);
\]
that is, the approximation provided by the abstract operation \(
\absoperation \) is \emph{safe} with respect to the concrete operation \(
\operation \).
Beyond the operations already defined on the concrete execution
model\footnote{ Such as the \( \acsnewvar \), the assignment and all other
operations required to define the behaviour of the concrete execution model.} \(
\memories \), the formalization of the abstraction requires the definition
of other operations
that can be described as
\[
  \functiondef
    { \absoperation }
    { (\absmemories)^n \times \externalDomain }
    { \absmemories };
\]
along with the corresponding concrete counterpart,
\[
  \functiondef
    { \operation }
    { \partsof(\memories)^n \times \externalDomain }
    { \partsof(\memories) }.
\]
The soundness of these operations is expressed in the same way, that is
for all \( m^\sharp_i \in \absmemories \)
\[
  \operation \bigl(
    \concretization(m^\sharp_1),
    \ldots,
    \concretization(m^\sharp_n),
    e \bigr) \subseteq
  \concretization \bigl( \absoperation(
    m^\sharp_1,
    \ldots,
    m^\sharp_n,
  e) \bigr).
\]
These additional operations include
for instance, the `meet' and `join' operations of the domain.
With a slight change of notation, this definition can be accommodated to
describe also the requirement of correctness on the partial order `\(
\sqsubseteq \)', i.e.,
\[
  \concretization(m^\sharp_0) \subseteq \concretization(m^\sharp_1)
  \iff
  m^\sharp_0 \sqsubseteq m^\sharp_1.
\]

\subsection{Queries}
\label{section:using queries}
This section introduces the concept of \emph{query} on a domain.  A
query defines an interface on the domain, it helps to isolating the relevant
information from other uninteresting details. When the analysis
process is composed by more abstract domains,
the use of queries is useful to
formalize the interactions between them.
More details on this approach can be found in \cite{CortesiLCVH94}.  In the
following we show how queries can also be used to formalize the semantics
of the abstraction, that is how the concretization function \(
\concretization \) can be expressed in terms of queries.
Fixed the number of arguments \( n \), we denote with `\( \queryDomain \)'
the space of the concrete query functions and with `\( \absQueryDomain \)'
the space of the abstract query functions,
\begin{gather*}
  \queryDomain    \defeq ( \expressions )^n \to \Bool; \\
  \absQueryDomain \defeq ( \expressions )^n \to \absBool.
\end{gather*}
The \emph{concrete query domain} is then defined as the complete lattice
generated by the powerset of `\( \queryDomain \)'
\[
  \bigl<
    \partsof(\queryDomain), \subseteq, \cap, \cup, \emptyset, \queryDomain
  \bigr>;
\]
whereas the \emph{abstract query domain} is defined as a complete lattice on the set \(
\absQueryDomain \),
\[
  \bigl<
    \absQueryDomain, \sqsubseteq, \sqcap, \sqcup, \bot, \top
  \bigr>,
\]
where `\(\mathord{\sqsubseteq}\)' is the point-wise extension of the
ordering of \(\absBool\); \(\bot\) and \(\top\) are the minimum and maximum
elements of \(\absQueryDomain\) with respect to this ordering,
respectively; `\(\mathord{\sqcap}\)' and `\(\mathord{\sqcup}\)' are the
obvious point-wise extensions of \(\absBool\)'s operations.
Note that `\( \queryDomain \)' can be seen as
subset of `\( \absQueryDomain \)';\footnote{Consider indeed the injection
\( \functiondef{ f }{ \queryDomain }{ \absQueryDomain } \) that maps every
\( \query \in \queryDomain \) to a \( \absquery \in \absQueryDomain \) such
that, for all \( e \in \expressions^n \),
\( \absquery(e) = \bigl\{ \query(e) \bigr\}.  \)}
from this fact, the concretization function
\[
  \functiondef
    { \concretization }
    { \absQueryDomain }
    { \partsof(\queryDomain) },
\]
is defined as, for all \( \query \in \queryDomain \) and
\( \absquery \in \absQueryDomain \),
\[
  { \mathord{\query} \in \concretization( \mathord{\absquery} ) }
  \quad \defiff \quad
  { \mathord{\query} \sqsubseteq \mathord{\absquery} }.
\]
In order to define the semantics of
\( \absmemories \) in terms of queries,
it is necessary to describe other two
steps of the concretization.
First we have to define how the query has to be performed on the concrete
domain, that is how to extract the relevant information from a concrete
memory. In symbol,
\[
  \functiondef
    { \concretization }
    { \queryDomain }
    { \partsof(\memories) }.
\]
Also, we have to define how the query has to be performed on the abstract
domain, i.e,
\[
  \functiondef
    { \concretization }
    { \absmemories }
    { \partsof(\absQueryDomain) }.
\]
The semantics of the abstraction \( \absmemories \) is then
defined as the composition of these three steps
(\reffigure{semantics with queries}.)

\begin{figure}
\begin{center}
\input{figures/common_style}
\tikzstyle{domain style}=[
    node distance=6em,
    circle,
    join=round,
    line width=0.3pt,
    inner sep=0pt,
    outer sep=1pt,
    minimum size=2.5em]

\tikzstyle{abstract arrow style}=[
    shorten >=1pt,
    ->,
    auto,
    node distance=0.5cm]

\begin{tikzpicture}
  \def \domainNode{\node[domain style]}
  \def \abstracts(#1,#2)
    {\path (#2) edge [abstract arrow style] node { \( \gamma \) } (#1);}

  \domainNode (exec)                          {\( \memories \)};
  \domainNode (query)     [below of=exec]     {\( \queryDomain \)};
  \domainNode (absQuery)  [right of=query]    {\( \absQueryDomain \)};
  \domainNode (absExec)   [above of=absQuery] {\( \absmemories \)};
  \abstracts(exec, query)
  \abstracts(query, absQuery)
  \abstracts(absQuery, absExec)
  \figureBackground
      (query.south -| query.west)
      (absExec.north -| absExec.east)
\end{tikzpicture}
\caption
  {A representation of the three steps required to define the semantics of an
  abstract domain using queries.}
\label{figure:semantics with queries}
\end{center}
\end{figure}

\subsubsection{The Alias Query}
\label{section:concrete aliasing query}
The following definitions present the formal meaning of the statement --- \( e_0 \) and \( e_1
\) are aliases in \( m \in \memories \).  Basically, two expressions are
considered aliases in a memory description when they evaluate to the same
memory location.

\begin{definition}
\definitionsummary{Concrete alias query domain}
Let
\[
  \aliasQueryDomain \defeq
    ( \expressions \times \expressions ) \to { \Bool }.
\]
We define the \emph{concrete alias query domain} as the complete lattice
generated by the powerset of \( \aliasQueryDomain \)
\[
  \bigl<
    \partsof(\aliasQueryDomain), \subseteq, \union, \intersection, \emptyset,
    \aliasQueryDomain
  \bigr>.
\]
\end{definition}
\begin{definition}
\definitionsummary{Concrete alias query semantics}
Let
\[
  \functiondef
    { \concretization }
    { \aliasQueryDomain }
    { \partsof(\memories) }
\]
be defined as follows. Let \( \mathord{\aliasquery} \in \aliasQueryDomain \)
and \( m \in \memories \); then we define
\( m \in \concretization(\mathord{\aliasquery}) \) when, for all \( e, f
\in \expressions \) holds that
\[
  \aliasquery(e, f) =
  \begin{cases}
    1, & \text{if } \eval(m, e) = \eval(m, f); \\
    0, & \text{otherwise}.
  \end{cases}
\]
\end{definition}
Given a concrete memory description \( m \in \memories \), we denote as \(
\mathord{\aliasquery_m} \) the concrete alias relation that abstracts \( m \); also
we call \( \mathord{\aliasquery_m} \) the \emph{alias
information} of the memory \( m \).
As anticipated in \refsection{using queries}, the alias query \(
\mathord{\aliasquery} \)
acts as an interface onto \( m \in \memories \) selecting the interesting
details;
this idea is shown in \refcodesnippet{induced alias relation}.

\codecaption
{in this example the execution can reach \refline{4} in many different
states due to the different values that the variable \QuotedCode{a} can
assume. However, considering a \emph{type based} alias analysis ---that is assuming that
the analysis tracks only the value of pointer variables---
each of the possible \( m \in \memories \) carries
the same aliasing information.}
{induced alias relation}
\begin{codesnippet}
int a, *p;
p = &a;
a = rand();
...
\end{codesnippet}

\begin{definition}
\definitionsummary{Abstract alias query domain}
Let
\[
    \absAliasQueryDomain \defeq
    ( \expressions \times \expressions ) \to { \absBool }.
\]
We define the \emph{abstract alias query domain} as the complete lattice generated
by the powerset of \( \absAliasQueryDomain \)
\[
  \bigl<
    \absAliasQueryDomain, \sqsubseteq, \sqcup, \sqcap, \bot, \top
  \bigr>.
\]
\end{definition}
The semantics of the abstract alias query domain
\[
  \functiondef
    { \concretization }
    { \absAliasQueryDomain }
    { \partsof( \aliasQueryDomain ) }
\]
as already specified in \refsection{using queries}, is defined as
\[
  { \aliasquery \in \concretization(\absaliasquery) }
  \quad \defiff \quad
  { \aliasquery \sqsubseteq \absaliasquery }.
\]
The last step required in order to complete the definition of the semantics
of the abstraction, that is from \( \absmemories \) to \( \partsof(\absaliasquery)
\) (\reffigure{semantics with queries}), depends on the details of
the chosen approximation method \( \absmemories \). The next
section presents some of the available approaches.

\subsection{Representation of the Abstract Alias Domain}
By looking forward to the realization of an alias analyzer, another problem
arises. A realistic implementation cannot aim to \emph{directly} represent
abstract alias queries (\refdefinition{Abstract alias query domain}).  As
demonstrated in \refcodesnippet{recursive data structures}, there can be an
infinite number of aliasing pairs making impossible a direct
representation. In this sense, the domain \( \absmemories \) introduces
an additional layer of
abstraction providing a representation suitable for the implementation.

\subsubsection{Techniques For Approximating the Alias Information}
In the literature, different classes of methods exist.  One of these is the
class of \emph{access-path} based methods. A brief description of an access-path
based method is reported below.
Another class is identified by the name of \emph{store based methods};
more details on these are presented in \refsection{general store based
methods}.

\begin{figure}
\begin{center}
\input{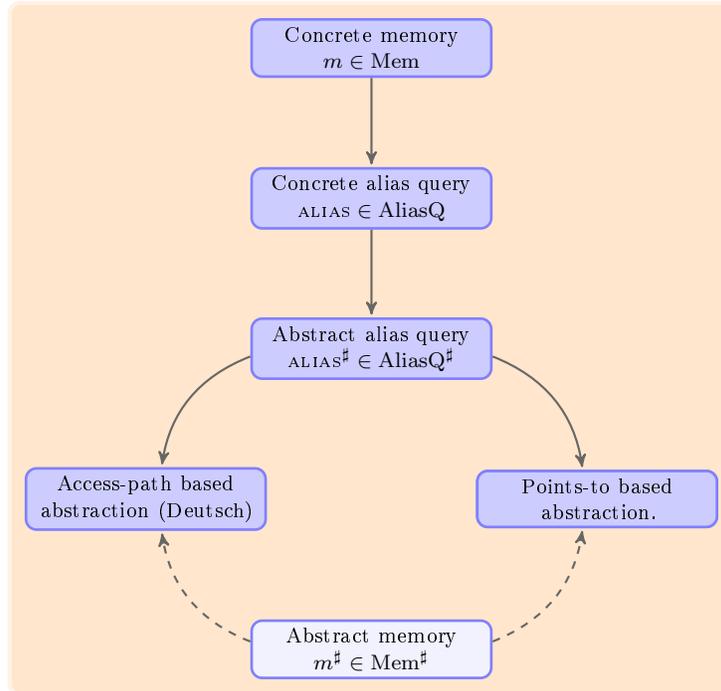}
\tikzstyle{textBlock}=[
    rectangle,
    join=round,
    rounded corners,
    text badly centered,
    draw=LocationBorderColor,
    line width=1pt,
    fill=LocationFillColor]
\begin{tikzpicture}
  \node[textBlock, text width=10em] (execution context) at (3, 8)
  {
    Concrete memory \( m \in \memories \)
  };
  \node[textBlock, text width=10em] (concrete alias query) at (3, 6)
  {
    Concrete alias query
    \( \aliasquery \in \aliasQueryDomain \)
  };
  \node[textBlock, text width=10em] (abstract alias query) at (3, 4)
  {
    Abstract alias query
    \( \absaliasquery \in \absAliasQueryDomain \)
  };
  \node[textBlock, text width=10em] (access path abstraction) at (0, 2)
  {
    Access-path based abstraction (Deutsch)
  };
  \node[textBlock, text width=10em] (pointsTo based abstraction) at (6, 2)
  {
    Points-to based abstraction.
  };
  \node[textBlock, text width=10em] (pointsTo based abstraction) at (6, 2)
  {
    Points-to based abstraction.
  };
  \node[textBlock, text width=10em, fill=blue!5] (abs mem desc) at (3, 0)
  {
    Abstract memory \( m^\sharp \in \absmemories \)
  };
  \path
    (execution context) edge [pointsTo] (concrete alias query)
    (concrete alias query) edge [pointsTo] (abstract alias query)
    (abstract alias query) edge [pointsTo, bend right] (access path abstraction)
    (abstract alias query) edge [pointsTo, bend left] (pointsTo based abstraction)
    (abs mem desc) edge [pointsTo, bend left, dashed] (access path abstraction)
    (abs mem desc) edge [pointsTo, bend right, dashed] (pointsTo based abstraction);
  \figureBackground
    (abs mem desc.south -| access path abstraction.west)
    (execution context.north -| pointsTo based abstraction.east)
\end{tikzpicture}
\caption
  {A representation of the abstraction relations under discussion.  Arrows should
  be read as `is abstracted by.'}
\label{figure:abstraction schema}
\end{center}
\end{figure}

\codecaption
{in this code two recursive structures, \Code{List} and
\Code{Tree}, are defined.}
{deutsch example}
\begin{codesnippet}
struct List {
  struct List *n;
  int key;
} *x;

struct Tree {
  struct Tree *l, *r;
  int key;
} *y;

...
\end{codesnippet}

\subsubsection{A Notable Example of Access-Path Based Approximation}
\label{section:deutsch a notable example}
In the literature, the term \emph{access-path} is used to design a
simplified form of language expressions.
A notable example of access-path based method for the approximation
of
the abstract alias query domain (\refdefinition{Abstract alias query
domain}) is presented in the Ph.D.\ dissertation of A.\ Deutsch
\cite{Deutsch94}.  In this proposal the elements of the abstract domain \(
\absmemories \) are formalized as pairs \( m^\sharp = \langle P, C \rangle
\) where \( P \) is a set of pairs of \emph{symbolic access paths} and \( C
\) is a set of constraints on \( P \).  A symbolic access path
is an approximation of a set of expressions;\footnote{The term \emph{symbolic
access path} comes from the original paper \cite{Deutsch94} and it
actually means an abstraction of the concept of \emph{expression}. With
our notation, the term \emph{abstract expression} would be probably used
instead.}
the concretization of a symbolic access path is defined using
a mechanism similar to regular expressions.
Consider for instance the code presented in \refcodesnippet{deutsch example},
and let \( \langle P, C \rangle \in \absmemories \)
be an abstract memory description of the program at
\refline{11}, such that \( C = \{ i = j \} \).
The set of constraints \( C \) has the
set of solutions
\[
   \bigl\{ \langle n, n \rangle \bigr\}_{ n \in \naturals }
\]
in the variables \( \langle i, j \rangle \).
Let \( p \in P \) be the pair of symbolic access paths
\[
   p \defeq
    \bigl<
      \Code{x}(\Code{->n})^i\Code{->key},
      \Code{y}(\Code{->l}, \Code{->r})^j\Code{->key}
    \bigr>.
\]
The semantics of \( p \) is a set of pairs of concrete expressions and it
can be computed by replacing the occurrences of the variables \( i \) and \( j
\) found in the symbolic access paths of \( p \), with the values given by the
solutions of \( C \).  For instance, by replacing
the occurrences of the index `\(j\)' with the integer \(2\)
in the symbolic access path
\[
  \Code{y}(\Code{->l}, \Code{->r})^j\Code{->key},
\]
we obtain the regular expression
\[
  \Code{y}(\Code{->l},\Code{->r})^2\Code{->key},
\]
that can be finally translated into the following set of expressions
\[
  \{
    \Code{y->l->l->key}, \Code{y->r->l->key},
    \Code{y->l->r->key}, \Code{y->r->r->key}
  \}.
\]
Depending on the considered solution of \( C \),
the pair \( p \) approximates different sets of alias pairs.
For instance, using the solution \( \langle 0, 0 \rangle \) we have
\[
  \bigl<
    \Code{x}(\Code{->n})^0\Code{->key},
    \Code{y}(\Code{->l},\Code{->r})^0\Code{->key}
   \bigr>
  = \bigl\{ \langle \Code{x->key}, \Code{y->key} \rangle \bigr\}.
\]
With the solution \( \langle 1, 1 \rangle \) we have
\begin{gather*}
  \bigl<
    \Code{x}(\Code{->n})^1\Code{->key},
    \Code{y}(\Code{->l},\Code{->r})^1\Code{->key}
  \bigr> \\
  \qquad =
  \bigl\{
    \langle \Code{x->n->key}, \Code{y->l->key} \rangle,
    \langle \Code{x->n->key}, \Code{y->r->key} \rangle
  \bigr\}.
\end{gather*}
Using the solution \( \langle 2, 2 \rangle \) we obtain
\begin{gather*}
  \bigl<
    \Code{x}(\Code{->n})^2\Code{->key},
    \Code{y}(\Code{->l},\Code{->r})^2\Code{->key}
  \bigr> \\
    \begin{aligned}
      \qquad =
      \bigl\{
        &
         \langle \Code{x->n->n->key}, \Code{y->l->l->key} \rangle,
         \langle \Code{x->n->n->key}, \Code{y->r->l->key} \rangle, \\
        &
         \langle \Code{x->n->n->key}, \Code{y->l->r->key} \rangle,
         \langle \Code{x->n->n->key}, \Code{y->r->r->key} \rangle
      \bigr\}.
    \end{aligned}
\end{gather*}
Generally, \( C \) is a set of constraints on a tuple of indices
\(I = \{i_1, \ldots, i_n\}\).  The indices of \(I\) also occur in the
symbolic access paths of \( P \).  To each solution \( \functiondef{
S }{ I }{ \naturals } \) of \( C \) corresponds a different alias query
expressed as a set of pairs of (concrete) expressions.  Given a solution \(
S \) to \( C \), the corresponding alias query, say \( P(S)
\), can be obtained from \( P \) by replacing every occurrence of the
index \( i_k \) in \( P \) with the solution \( S(i_k) \), for each index
\( i_k \) of \( I \).  As shown above, this replacement yields a set of
pairs of no-longer-symbolic access paths. Seen as regular expressions,
these no-longer-symbolic access paths are transformed in a set of pairs of
expressions.
The semantics of \( \absmemories \) can be finally expressed in
terms of queries as follows. Let \( \absaliasquery \in
\absAliasQueryDomain \) and \( \langle P, C \rangle \in \absmemories \). We
say that \( \absaliasquery \in \concretization\bigl( \langle P, S \rangle
\bigr) \) when
\[
  \exists S \text{ solution of } C \suchthat
  \forall e, f \in \expressions \itc
    \absaliasquery(e, f) \in \{ \top, 1 \} \implies
    \langle e, f \rangle \in P(S).
\]
Note that this has two main consequences.
\begin{itemize}
\item
  This formulation is
  unable to represent \emph{definite alias properties}, that in terms of
  abstract alias queries correspond to the answer `\( 1 \)';  the
  approximation provided by this method is indeed also called \emph{may-alias
  information}.
  For example, at \refline 3 of
  \refcodesnippet{external_information_from_rand}, in all of the possible
  executions, the expression \QuotedCode{*p}
  is an alias of \QuotedCode{a}. However, this method is only able to
  tell that \QuotedCode{*p} is \emph{possibly} an alias of \QuotedCode{a},
  that in terms of abstract alias query corresponds to the outcome \(
  \top \).
\item
  Every solution of \( C \) corresponds to a different abstract alias
  query, whereas, as we will show in \refsection{completeness},
  the concretization of
  a \emph{points-to abstraction} consists of only one abstract alias query.
  As a consequence, this representation of the alias information is able
  capture relational information, whereas points-to methods cannot.
\end{itemize}
To represent the set of
integer constraints \(C\) different options exist.  The literature
on this field provides a wide choice of numeric lattices offering
different trade-off between accuracy and efficiency; from non
relational domains ---like arithmetic intervals and arithmetic
congruences---
up to relational domains \cite{BagnaraHZ08}.  The alias analysis
just described is completely parametric with respect to the chosen numeric
domain and ---due to the large availability of numeric domains--- this is a
point of strength of the method.

\subsection{The Store Based Approach}
\label{section:general store based methods}
This section introduces some concepts that are useful to understand
the approach of \emph{store based methods}.  Points-to analyses are special
cases of stored-based methods. The idea common to all store
based methods is the explicit introduction of formal entities to represent
memory locations.  As in the concrete situation we use the notation `\( \Locations
\)' to represent the set of the memory locations; now we introduce
the notation \( \absLocations \) to denote the set of the \emph{abstract
locations}.  Store based information usually consist of some sort of
compact representation of a binary relation `\( P \)' on the set of the
abstract locations.
To bind the concept of location to the concept of expression an
\emph{environment function} is provided. Basically, the environment function
is needed to resolve \emph{identifiers} into abstract locations.  Denoting
with `\( \identifiers \)' the set of identifiers, an environment function
can be described as
\[
  \functiondef
    { \textsc{id} }
    { \identifiers }
    { \absLocations }
\]
Since \emph{identifiers} are the base case for the definition of the \(
\expressions \) set, from the elements \( \langle P,
\textsc{id} \rangle \) it is possible to build the \emph{abstract evaluation function}
\[
  \functiondef
    { \eval }
    { ( \absmemories \times \expressions ) }
    { \partsof( \absLocations ) }.
\]
The `\(\mathord{\eval}\)' function is defined inductively following the
inductive definition of the `\( \expressions \)' set. The details
depend on the chosen language and intermediate representation;
a complete definition is presented in \refsection{our method}.
The `\( \eval \)' function is then used to define the semantics of
\( \absmemories \) in terms of abstract alias queries;
for instance, a possible definition is the following. Let \( m^\sharp \in
\absmemories \) and \( \absaliasquery \in \absAliasQueryDomain \), we say
that \( \absaliasquery \in \concretization(m^\sharp) \)  when, for all \(
e, f \in \expressions \) holds that
\[
  \absaliasquery(e, f) =
  \begin{cases}
    0, & \text{if }
      \eval(m^\sharp, e) \intersection \eval(m^\sharp, f) = \emptyset; \\
    \top, & \text{otherwise.}
  \end{cases}
\]
This is an oversimplified definition, presented only to give an idea of how
a store based approximation can answer to alias queries; note indeed that
we have omitted to consider \emph{definite} alias informations.
Due to the introduction of the set of abstract locations \( \absLocations
\), the semantics of the abstract domain \( \absmemories \) can also be
expressed in terms of the value of locations:
we have an abstraction function
\[
  \functiondef
    { \abstraction }
    { \Locations }
    { \absLocations };
\]
where, for each \( l \in \Locations \), \( \abstraction(l) \)
denotes the abstract location that approximates \( l \).
Given \( l \in \absLocations \) and an abstraction \( m^\sharp \in
\absmemories \), we denote as \( m^\sharp[l] \) the value
of the abstract location \( l \) in the abstract memory description \(
m^\sharp \). Now, let \( m \in \memories \) and \( m^\sharp \in
\absmemories \); then we have
\[
  m \in \concretization(m^\sharp)
  \quad \defiff \quad
  \biggl(
  \forall l \in \locations \itc
  m[l] \text{ is defined} \implies m[l]
  \in \concretization\Bigl( m^\sharp\bigl[\abstraction(l)\bigr] \Bigr)
  \biggl).
\]
This formulation of the semantics of \( \absmemories \) can be applied to
points-to methods only, but it has the advantage that it can be generalized
to the case where the points-to domain is coupled with some other abstract
domain, provided that its semantics can be expressed in the same way.
Moreover, the concretization function expressed in terms of locations is
more similar to the algorithms actually implemented as client analyses are
more likely to reason in terms of ``pointed locations'' than in terms of
``aliased expressions''.

\codecaption
{in this code the assignments at \refline{7} and \refline{8} contain
expressions the dereference operator occurs more than once.}
{pristine_code}
\begin{codesnippet}
int **pp, *p, a;
struct List {
  struct List *n;
  int key;
} *h;
...
**pp = *p;
h->key = h->n->n->key;
\end{codesnippet}

\codecaption
{this is the simplified version of \refline{7} and \refline{8} from
\refcodesnippet{pristine_code}. Note the use of the additional
variables \Code{tmp0}, \Code{tmp1} and \Code{tmp2}. Note that
all expressions contain at most one occurrence of the dereference operator.}
{simplified_code}
\begin{codesnippet}
...
int *tmp0 = *pp;
*tmp = *p;
struct List *tmp1 = h->n;
struct List *tmp2 = tmp1->n;
h->key = tmp2->key;
\end{codesnippet}

\subsubsection{Practical Considerations on Store Based Methods}
Despite the
commonalities of store based methods described in the previous section,
from the
implementation perspective many different options exist.  For example Emami
et al.\ \cite{Emami93th,EmamiGH94} and also \cite{Ghiya95}
do not define a complete
abstract evaluation function \(\evaluation\).  Instead, they prefer to
work on a simplified version of the code. To accomplish this they
introduce a \emph{simplification phase} to be performed before
the actual analysis.  Basically, this phase breaks the occurrences of
``complex'' expressions into a simpler form by means of the introduction of
auxiliar variables and assignments.  For example, in the simplified code all the expressions
contain at most one occurrence of the dereference operator.
\refcodesnippet{simplified_code} presents the result of the simplification
phase applied to the code in \refcodesnippet{pristine_code}.
Having reduced all the expressions to a base form, the definition of the
evaluation function \(\evaluation\) is greatly simplified.  However, the
simplification phase has also other side effects.  First, assuming to have
already proved the correctness of the analysis, its results
are valid on the code resulting from the simplification phase; to obtain
any formal result on the original code it must be proved that the applied
simplification does not change the semantics of the code.
From the point of view of
the efficiency, it is unclear whether or not a simpler evaluation function
\( \evaluation \) allows a more efficient analysis.  In both cases the same steps
of evaluation must be made; the difference is that in one case temporaries
are made explicit.  In our approach we have chosen to avoid the
simplification phase as we believe that enabling the analyzer to see
complete expressions can improve the precision.

\codecaption
{an example of `complex' expression occurring in the condition of an
\Code{if} statement.}
{simplification and filter}
\begin{codesnippet}
int a, b, *p, *q, **pp;
...
if (**pp == &a) {
  ...
}
\end{codesnippet}

\codecaption
{the result of the simplification  of \refcodesnippet{simplification and
filter}.}
{simplification and filter 2}
\begin{codesnippet}
int a, b, *p, *q, **pp;
...
int *temp = *pp;
if (*temp == &a) {
  ...
}
\end{codesnippet}

\begin{example}
Assume that at \refline 3 of \refcodesnippet{simplification and filter}
holds the following points-to information:
\begin{gather*}
  P(\Code{pp}) = \{ \Code q, \Code p \},           \\
  P(\Code{p})  = \{ \Code a \},           \\
  P(\Code{q})  = \{ \Code b \}.
\end{gather*}
Looking at the condition of  the \Code{if} statement at \refline 3,
it is possible to refine
the points to information of \refline 4; that is, inside the `then'
branch, \QuotedCode{pp} points only to \QuotedCode{p}.
However, on the simplified code (\refcodesnippet{simplification and filter
2}), looking only at the simplified condition of the \Code{if} statement, it
is not possible to infer any useful information about \QuotedCode{pp}, as
it occurs no more in the expression.
It is possible to prove that \QuotedCode{temp} points only to
\QuotedCode{a}, but this information is useless as \QuotedCode{temp} is a
auxiliar variable introduced by the simplification phase and thus it is not used
elsewhere.
\end{example}

\subsection{Precision Limits of the Alias Query Representation}
\label{section:Precision limits of the alias query representation}
This section presents an example that highlights the limitations of
the alias query representation;
alias queries (\refsection{concrete aliasing query})
fail to represent relational information.
For instance, the code presented in Listings~\ref{codesnippet:alias query non relational 1}
and \ref{codesnippet:alias query non relational 2} induce the same abstract
alias query;  in particular in \refcodesnippet{alias query non relational 1}
the alias representation is unable to express that, at \refline 4, if
\QuotedCode{p} points to \QuotedCode{a} then \QuotedCode{q} points
to \QuotedCode{c}. This situation is illustrated in
Figures~\ref{figure:alias query non relational} and
\ref{figure:alias query non relational b}.

\codecaption
{in this code only two possible executions exist.  At \refline{4}, knowing
the value of one of the two pointers \Variable{p} and \Variable{q}, it is
possible to determine the value of the other.}
{alias query non relational 1}
\begin{codesnippet}
int a, b, c, d, *p, *q;
if (rand()) { p = &a; q = &c; }
else        { p = &b; q = &d; }
...
\end{codesnippet}

\codecaption
{in this code four executions are possible;
at \refline{7}, also knowing
the value of one of the two pointers \Variable{p} and \Variable{q}, it is
\emph{not} possible to determine the value of the other.}
{alias query non relational 2}
\begin{codesnippet}
int a, b, c, d, *p, *q;
if (rand()) p = &a;
else        p = &b;

if (rand()) q = &c;
else        q = &d;
...
\end{codesnippet}

\begin{figure}
\begin{center}
\input{figures/common_style}
\begin{tikzpicture}
  \def\drawThePicture[#1,#2,#3,#4]{
    \abstractLocation a (0, 1)
    \abstractLocation b (1, 1)
    \abstractLocation c (2.5, 1)
    \abstractLocation d (3.5, 1)
    \abstractLocation p (0.5, 0)
    \abstractLocation q (3, 0)
    \node[anchor=north] at (1.75, -0.75) (table) {
      \begin{tabular}{*{5}{>{\(}c<{\)}}}
        \hline \hline
        #1 & \Code{a} & \Code{b} & \Code{c} & \Code{d} \\
        \hline
        \Code{*p}  & #2 \\
        \Code{*q}  & #3 \\
        \hline \hline
      \end{tabular}
    };
    \figureBackground
       (table.south -| table.west)
       (d.north -| table.east)
    \node [description label] at (table.north east) {#4};
  }
  \begin{scope}
    \drawThePicture[
      \aliasquery_{\executionEnvironment_0},
      1 & 0 & 0 & 0,
      0 & 0 & 1 & 0,
      {
        Below an extract of the concrete alias query \(
        \aliasquery_{m_0} \) induced by the concrete memory description \(
        m_0 \in \memories \).  Above a graphical representation of the
        points-to information associated to the same memory.
      }]
    \path \edgePointsTo(p,a)
          \edgePointsTo(q,c);
  \end{scope}
  \begin{scope}[yshift=-5cm]
    \drawThePicture[
      \aliasquery_{\executionEnvironment_1},
      0 & 1 & 0 & 0,
      0 & 0 & 0 & 1,
      {
        As above, on the concrete memory description \(
        \executionEnvironment_1 \).
      }]
    \path \edgePointsTo(p,b)
          \edgePointsTo(q,d);
  \end{scope}
  \begin{scope}[yshift=-10cm]
    \drawThePicture[
      \absaliasquery{},
      \top  & \top  & 0     & 0,
      0     & 0     & \top  & \top,
      {
        The abstract alias query
        \[
           \absaliasquery \in \absAliasQueryDomain
        \] is defined as
        \( \aliasquery_{\executionEnvironment_0}
            \sqcup \aliasquery_{\executionEnvironment_1}
        \) and it is the most precise abstract alias query that abstracts the
        set of concrete alias queries
        \[
          \{ \aliasquery_{\executionEnvironment_0},
            \aliasquery_{\executionEnvironment_1} \}.
        \]
      }]
    \path \edgePointsTo(p,a)
          \edgePointsTo(p,b)
          \edgePointsTo(q,c)
          \edgePointsTo(q,d);
  \end{scope}
\end{tikzpicture}
\caption
  {a representation of the alias query induced by the code in
  \refcodesnippet{alias query non relational 1}.}
\label{figure:alias query non relational}
\end{center}
\end{figure}

\begin{figure}
\begin{center}
\input{figures/common_style}
\begin{tikzpicture}
  \def\drawThePicture[#1,#2,#3,#4]{
    \abstractLocation a (0, 1)
    \abstractLocation b (1, 1)
    \abstractLocation c (2.5, 1)
    \abstractLocation d (3.5, 1)
    \abstractLocation p (0.5, 0)
    \abstractLocation q (3, 0)
    \node[anchor=north] at (1.75, -0.75) (table) {
      \begin{tabular}{*{5}{>{\(}c<{\)}}}
        \hline \hline
        #1 & \Code{a} & \Code{b} & \Code{c} & \Code{d} \\
        \hline
        \Code{*p}  & #2 \\
        \Code{*q}  & #3 \\
        \hline \hline
      \end{tabular}
    };
    \figureBackground
       (table.south -| table.west)
       (d.north -| table.east)
    \node [description label] at (table.north east) {#4};
  }
  \begin{scope}
    \drawThePicture[
      \aliasquery_{ \executionEnvironment_2 },
      1 & 0 & 0 & 0,
      0 & 0 & 0 & 1,
      {
        An example of a spurious element of the concretization of \(
        \absaliasquery \). We have that
        \[
          \aliasquery_{ \executionEnvironment_2 } \in
          \concretization( \absaliasquery ).
        \]
        However, \( \aliasquery_{ \executionEnvironment_2
        } \) can not be generated by the program.
      }]
    \path \edgePointsTo(p,a)
          \edgePointsTo(q,d);
  \end{scope}
  \begin{scope}[yshift=-5cm]
    \drawThePicture[
      \aliasquery_{ \executionEnvironment_3 },
      0 & 1 & 0 & 0,
      0 & 0 & 1 & 0,
      {
        Another spurious element of the concretization of
        \( \absaliasquery \). Note that:
        \begin{align*}
          \concretization(\absaliasquery )
             & = \concretization(
                  \aliasquery_{ \executionEnvironment_0 }
                  \sqcup
                  \aliasquery_{ \executionEnvironment_1 }) \\
             &
              \begin{aligned}
                = \{
                & \aliasquery_{\executionEnvironment_0 },
                \aliasquery_{\executionEnvironment_1 }, \\
                & \aliasquery_{\executionEnvironment_2 },
                  \aliasquery_{\executionEnvironment_3 }
               \}.
               \end{aligned}
        \end{align*}
      }]
    \path \edgePointsTo(p,b)
          \edgePointsTo(q,c);
  \end{scope}
\end{tikzpicture}
\caption
  {continuation of \reffigure{alias query non relational}.}
\label{figure:alias query non relational b}
\end{center}
\end{figure}

\section{The Analysis Method}
\label{section:our method}
This part of the work is meant to be as much self-contained as possible.
The aim of this sections is to present few simple but formal definitions of
a simplified but general memory model and, on these, build the algorithms
and prove their correctness.

\subsection{The Domain}
Let \( \locations \) be a given set that we call the \emph{locations set}
and whose elements are called \emph{locations}.

\begin{definition}
\definitionsummary{Abstract and concrete domains}
We call \emph{support set of the concrete domain} the set \(
\concretedomain \) of the total functions from \( \locations \) to \(
\locations \); we call \emph{support set of the abstract domain} the set
\( \abstractdomain \) of the binary relations on the set \( \locations
\)
\begin{gather*}
  \concretedomain \defeq \locations \to \locations; \\
  \abstractdomain \defeq \partsof(\locations \times \locations).
\end{gather*}
We define the \emph{concrete domain} as the complete lattice generated
by the powerset of \( \concretedomain \)
\[
  \big< \partsof(\concretedomain), \subseteq, \union, \intersection,
        \emptyset, \concretedomain \big>.
\]
We define the \emph{abstract domain} as the complete lattice
\[
  \big< \abstractdomain, \subseteq, \union, \intersection,
        \emptyset, \locations \times \locations \big>.
\]
\end{definition}

Note that from the above definition we have that \( \concretedomain
\subseteq \abstractdomain \).  Though we use the same notation for the
operations of the two lattices they obviously have different definitions.
For the abstract domain the partial order `\( \subseteq \)',
the operations `\( \union \)' and `\( \intersection \)' are referred to
sets of pairs of locations; whereas for the concrete domain they are
referred to sets of functions \( \locations \to \locations \).
The semantics of the abstract domain is defined using the fact that \(
\concretedomain \subseteq \abstractdomain \) and the partial order
`\(\subseteq\)' on sets of pairs of locations.

\begin{definition}
\summary{Concretization function}
\label{definition:concretization function}
Let
\[
  \functiondef
    { \concretization }
    { \abstractdomain }
    { \partsof(\concretedomain) }
\]
be defined, for all \( A \in \abstractdomain \), as
\[
    \concretization(A) \defeq
      \bigl\{\, C \in \concretedomain \bigm| C \subseteq A \,\bigr\}.
\]
\end{definition}

Now we present some definitions useful to define how we navigate the
poinst-to graph.

\begin{definition}
\summary{The prev and post functions}
\label{definition:prev function}
\label{definition:post function}
Let
\[
  \functiondef
    { \prev, \post }
    { \abstractdomain \times \locations }
    { \partsof(\locations) }
\]
be defined, for all \( A \in \abstractdomain \) and \( l \in \locations \), as
\begin{gather*}
  \prev(A, l)
    \defeq \bigl\{\, m \in \locations \bigm| (m, l) \in A \,\bigr\};
    \\
  \post(A, l)
    \defeq \bigl\{\, m \in \locations \bigm| (l, m) \in A \,\bigr\}.
\end{gather*}
\end{definition}

For convenience we generalize the definition of the \( \post \) and
\( \prev \) functions to sets of locations.

\begin{definition}
\summary{Extended prev and post functions}
\label{definition:extended prev and post functions}
Let
\[
  \functiondef
    { \prev, \post }
    { \abstractdomain \times \partsof(\locations) }
    { \partsof(\locations) }
\]
be defined, for all \( A \in \abstractdomain \) and \( L \subseteq
\locations \), as
\begin{gather*}
  \prev(A, L)
    \defeq \bigunion \bigl\{\, \prev(A, l) \bigm| l \in A \,\bigr\}; \\
  \post(A, L)
    \defeq \bigunion \bigl\{\, \post(A, l) \bigm| l \in A \,\bigr\}.
\end{gather*}
\end{definition}

\subsection{The Language}
In this section we present a simple language to model the points-to
problem.

\begin{definition}
\summary{Expressions}
\label{definition:expressions}
We define the set \( \expressions \) as the language generated by the
grammar
\[
  e ::= l \vbar \indirection e
\]
where \( l \in \locations \) and \( \mathord{\indirection} \not \in
\locations \) is a terminal symbol.
\end{definition}

\begin{definition}
\summary{Evaluation of expressions}
\label{definition:eval function}
Let
\[
  \functiondef
    { \eval }
    { \abstractdomain \times \expressions }
    { \partsof(\locations) }
\]
be defined inductively on \( \expressions \)
(\refdefinition{expressions}). Let \( A \in \abstractdomain \), \( l \in
\locations \) and \( e \in \expressions \); then we define
\begin{gather*}
    \eval(A, l) \defeq \{ l \}; \\
    \eval(A, \indirection e) \defeq \post\bigl(A, \eval(A, e)\bigr).
\end{gather*}
\end{definition}

Not necessary for the goal of this section, for completeness we report the
concretization of the points-to abstract domain in terms of \emph{abstract
alias queries}.

\begin{definition}
\definitionsummary{Induced alias relation}
We define
\[
  \functiondef
    { \concretization }
    { \abstractdomain }
    { \absAliasQueryDomain }
\]
as follows. Let \( A \in \abstractdomain \), then let
\(
  \concretization(A) \defeq \mathord{\absaliasquery},
\)
where, forall \( e, f \in \expressions \), we have
\begin{gather*}
  E \defeq \eval(A, e); \\
  F \defeq \eval(A, f); \\
    \absaliasquery(e, f) \defeq
      \begin{cases}
        0,  & \text{if } E \intersection F = \emptyset; \\
        1,  & \text{if } E = F \land \cardinality E = 1; \\
        \top, & \text{ otherwise.}
      \end{cases}
\end{gather*}
\end{definition}

\begin{definition}
\definitionsummary{Conditions}
We define the set of \emph{conditions} as the set
\[
  \conditions \defeq
  \{ \equality, \inequality \} \times \expressions \times \expressions
\]
\end{definition}

\begin{definition}
\definitionsummary{Value of conditions}
Let
\[
  \trueconditions \subseteq \concretedomain \times \conditions
\]
be a set defined, for all \( C \in \concretedomain \) and \( e, f \in
\expressions \), as
\begin{gather*}
    \bigl(C, (\equality, e, f)\bigr) \in \trueconditions
    \quad \defiff \quad
    \eval(C, e) = \eval(C, f); \\
    \bigl(C, (\inequality, e, f)\bigr) \in \trueconditions
    \quad \defiff \quad
    \bigl(C, (\equality, e, f)\bigr) \not \in \trueconditions.
\end{gather*}
Let \( C \in \concretedomain \) and let \( c \in \conditions \), for
convenience of notation we write \( C \models c \) when \( (C, c) \in
\trueconditions \).  We also introduce the function
\[
  \functiondef
    { \modelset }
    { \conditions }
    { \partsof(\concretedomain) },
\]
defined, for all \( c \in \conditions \), as
\[
  \modelset(c) \defeq
    \bigl\{\, C \in \concretedomain \bigm| C \models c \,\bigr\}.
\]
In other words, `\( \modelset(c) \)' is the set of the concrete
memory descriptions where the condition \( c \) is true.
\end{definition}

\subsubsection{Assignment}
\begin{definition}
\definitionsummary{Assignment evaluation}
We define the set of \emph{assignments} as
\[
  \assignments \defeq \expressions \times \expressions
\]
Let
\[
  \functiondef
    { \assign }
    { \abstractdomain \times \assignments }
    { \abstractdomain }
\]
be defined as follows. For all \( A \in \abstractdomain \) and \( e, f \in
\expressions \),
let
\[
    \assign\bigl(A, (e, f)\bigr) \defeq
    \eval(A, e) \times \eval(A, f) \union ( A \setminus K )
\]
where, the \emph{kill set} \( K \) is defined as
\[
  K \defeq
    \begin{cases}
      \eval(A, e) \times \locations,
      & \text{if } \cardinality \eval(A, e) = 1; \\
      \emptyset,
      & \text{otherwise}.
    \end{cases}
\]
\end{definition}

The following lemma shows that
the `\( \mathord{\assign} \)' function just described,
defines also the concrete semantics of the assign operation, i.e.'
performing an assignment on an element of
the concrete domain yields another element of the concrete domain.

\begin{lemma}
\lemmasummary{Restriction of the assignment to the concrete domain}
The set \( \concretedomain \) is closed with respect to the function \(
\assign \), that is, for all \( C \in \concretedomain \) and \( a \in
\assignments \) holds that
\(
    \assign(C, a) \in \concretedomain.
\)
\end{lemma}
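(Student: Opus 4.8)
The plan is to reduce everything to one observation: on a \emph{functional} element of the domain, every expression evaluates to a single location. Once this is established, the definition of the kill set always takes its first branch, so $\assign$ performs a ``strong update'', and the resulting set of pairs is visibly again (the graph of) a total function.

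First I would prove, by structural induction on $e \in \expressions$ (\refdefinition{expressions}), the auxiliary claim that $\cardinality \eval(C, e) = 1$ for every $C \in \concretedomain$ and every $e \in \expressions$. The base case $e = l$ is immediate, since $\eval(C, l) = \{l\}$. For the inductive step $e = \indirection e'$, the induction hypothesis gives $\eval(C, e') = \{m\}$ for a unique $m \in \locations$, and then $\eval(C, \indirection e') = \post\bigl(C, \{m\}\bigr) = \{\, n \in \locations \mid (m, n) \in C \,\}$; because $C$ is a total function this set equals $\{C(m)\}$, again a singleton. With the auxiliary claim in hand, fix $a = (e, f) \in \assignments$ and write $\eval(C, e) = \{e_0\}$ and $\eval(C, f) = \{f_0\}$. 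Since $\cardinality \eval(C, e) = 1$, the kill set is $K = \{e_0\} \times \locations$, so
\[
  \assign\bigl(C, (e, f)\bigr)
    = \{(e_0, f_0)\} \union \bigl( C \setminus (\{e_0\} \times \locations) \bigr)
    = \{(e_0, f_0)\} \union \{\, (l, l') \in C \mid l \neq e_0 \,\}.
\]

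It then remains to check that this set of pairs is the graph of a total function $\locations \to \locations$, i.e.\ that for each $l \in \locations$ there is exactly one $l'$ with $(l, l')$ in the set. For existence: if $l = e_0$ the pair $(e_0, f_0)$ is present, while if $l \neq e_0$ the pair $(l, C(l))$ survives the removal of $K$. For uniqueness: every pair with first component $e_0$ must be $(e_0, f_0)$, since $C \setminus K$ contains no pair with first component $e_0$; and for $l \neq e_0$ any pair $(l, l')$ in the set lies in $C \setminus K \subseteq C$, so $l'$ is determined by functionality of $C$. Hence $\assign(C, a) \in \concretedomain$, as required. I do not expect a genuine obstacle here; the only delicate point is the auxiliary singleton claim, since it is exactly what rules out the empty-kill-set branch — were $K = \emptyset$, the union $\eval(C, e) \times \eval(C, f) \union C$ could contain two distinct pairs sharing a first component and thus fail to be a function.
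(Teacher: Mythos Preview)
Your proposal is correct and follows essentially the same approach as the paper: both first establish the auxiliary singleton claim (the paper states it separately as \reflemma{Eval cardinality on the concrete domain}), then use it to force the kill set into its first branch and obtain the explicit ``strong update'' form of $\assign(C,a)$, from which closure under $\concretedomain$ is immediate. The only cosmetic difference is that the paper factors the intermediate rewriting into its own lemma (\reflemma{Assignment on the concrete domain}), expressing the result as $\bigl(C \setminus \{(l,n)\}\bigr) \cup \{(l,m)\}$ with $n$ the unique element of $\post(C,l)$, and then declares the closure a simple corollary; your direct existence/uniqueness check accomplishes the same thing.
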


Therefore, the function \( \assign \) restricted to \( \concretedomain \) can be
written as
\[
  \functiondef
    { \assign }
    { \concretedomain \times \assignments }
    { \concretedomain }.
\]
In other words, the formalization of the assignment operation given in
\refdefinition{Assignment evaluation} is a generalization of the concrete
assignment behaviour.  At this point, we define the concrete semantics of the
assignment.

\begin{definition}
\definitionsummary{Concrete assignment operation}
Let
\[
  \functiondef
    { \assign }
    { \partsof( \concretedomain ) \times \assignments }
    { \partsof( \concretedomain ) }
\]
defined, forall \( D \subseteq \concretedomain \) and
\( a \in \assignments \), as
\[
    \assign(D, a) \defeq \bigr\{\, \assign(C, a) \bigm| C \in D  \,\bigl\}.
\]
\end{definition}

\subsubsection{Filter}
\begin{definition}
\definitionsummary{Concrete filter semantics}
Let
\[
  \functiondef
    { \filter }
    { \partsof(\concretedomain) \times \conditions }
    { \partsof(\concretedomain) }
\]
be defined, for all \( D \subseteq \concretedomain \) and \( c \in
\conditions \), as
\[
    \filter(D, c) \defeq \modelset(c) \intersection D.
\]
\end{definition}

In other words, given a set \( D \) of concrete memory descriptions and a
boolean condition \( c \) we denote with \( \filter(D, c) \) the subset of
\( D \) of the elements in which the condition \( c \) is true.  We
proceed in the definition of the abstract filter operation.  Since we want
to track step by step the evaluation of expressions, we extend the
definition of the \( \eval \) function to allow this.

\begin{definition}
\summary{Extended eval function}
\label{definition:extended eval function}
Let
\[
  \functiondef
    { \eval }
    { \abstractdomain \times \expressions \times \naturals }
    { \partsof(\locations) }
\]
be inductively defined as follows.
Let \( A \in \abstractdomain \), \( l \in \locations \), \( e \in
\expressions \) and \( i \in \naturals \); then we define
\begin{gather*}
    \eval(A, e, 0)                  \defeq \eval(A, e); \\
    \eval(A, l, i + 1)              \defeq \emptyset; \\
    \eval(A, \indirection e, i + 1) \defeq \eval(A, e, i).
\end{gather*}
\end{definition}

\begin{definition}
\definitionsummary{Target function}
We define the function
\[
  \functiondef
    { \target }
    { \abstractdomain
        \times \partsof(\locations)
        \times \expressions
        \times \naturals }
    { \partsof(\locations) }
\]
inductively as follows. Let \( A \in \abstractdomain \), \(M \subseteq
\locations \), \( e \in \expressions \) and \( i \in \naturals \); then we
define
\begin{gather*}
  \target(A, M, e, 0)
    \defeq \eval(A, e) \intersection M; \\
  \target(A, M, e, i + 1)
    \defeq
    \eval(A, e, i + 1) \intersection \prev\bigl(A, \target(A, e, i)\bigr).
\end{gather*}
\end{definition}

\begin{definition}
\definitionsummary{Filter 1}
Let
\[
  \functiondef
    { \filter }
    { \abstractdomain
        \times \partsof(\locations)
        \times \expressions
        \times \naturals }
    { \abstractdomain }
\]
be defined as follows. Let \( A \in \abstractdomain \),
\( M \subseteq \locations \), \( e \in \expressions \) and
\( i \in \naturals \). For convenience of notation let \( x = \langle A, M,
e \rangle \); then we define
\begin{gather*}
    \filter(x, 0) \defeq A; \\
    T \defeq \target(x, i + 1); \\
    \filter(x, i + 1) \defeq
    \filter(x, i)
    \setminus
    \begin{cases}
      T \times
        \bigl( \locations \setminus \target(x, i) \bigr),
        & \text{if } \cardinality T = 1; \\
      \emptyset,
        & \text{otherwise}.
    \end{cases}
\end{gather*}
\end{definition}

\begin{definition}
\definitionsummary{Filter 2}
Let
\[
  \functiondef
    { \filter }
    { \abstractdomain \times \partsof(\locations) \times \expressions }
    { \abstractdomain }
\]
be defined, for all \( A \in \abstractdomain \), \( M \subseteq \locations
\) and \( l \in \locations \), as
\begin{gather*}
  \filter(A, M, l) \defeq
    \begin{cases}
      A,  & \text{if } l \in M; \\
      \bot, & \text{otherwise;}
    \end{cases}
  \\
  \filter(A, M, \indirection e) \defeq
    \bigintersection_{ i \in \naturals } \filter(A, M, \indirection e, i).
\end{gather*}
\end{definition}

\begin{definition}
\definitionsummary{Filter 3}
Let
\[
  \functiondef
    { \filter }
    { \abstractdomain \times \conditions }
    { \abstractdomain }
\]
be defined as follows. Let \( e, f \in \expressions \), and let
\begin{gather*}
  I \defeq \eval(A, e) \intersection \eval(A, f); \\
  E \defeq \eval(A, e) \setminus \eval(A, f); \\
  F \defeq \eval(A, f) \setminus \eval(A, e).
\end{gather*}
Then, for all \( A \in \abstractdomain \), we define
\begin{gather*}
  \filter\bigl(A, (\equality, e, f)\bigr) \defeq
    \filter(A, I, e) \intersection \filter(A, I, f),
  \\
  \filter\bigl( A, (\inequality, e, f) \bigr) \defeq
    \begin{cases}
      \filter(A, E, e) \union \filter(A, F, f),
        & \text{if } \cardinality I = 1; \\
      A,
        & \text{otherwise}.
      \end{cases}
\end{gather*}
\end{definition}

\subsection{Examples}
This section presents some examples to illustrated how the model just
presented works.

\codecaption
{an example of application of the assignment operation.}
{assignment example 01}
\begin{codesnippet}
int **pp, *q, *p, *r, a, b, c;

if (...) pp = &p;
else     pp = &q;
// \( \eval(\indirection pp) = \{ p, q \} \)

if (...) r = &a;
else     r = &c;
// \( \eval(\indirection r) = \{ a, c \} \)

p = &a;
// \( \eval(\indirection p) = \{ a \} \)
q = &b;
// \( \eval(\indirection q) = \{ b \} \)

*pp = r;
// \( \eval(\indirection \indirection pp) =  \eval(\indirection q) = \{ a, b, c \} \)
// \( \eval(\indirection p) = \{ a, c \} \)
\end{codesnippet}

\begin{figure}
\begin{center}
\input{figures/assignment_example_01}
\caption
  {a representation of the points-to information of the program
  in \refcodesnippet{assignment example 01} before and after the assignment
  at \refline{16}.}
\label{figure:assignment example 01}
\end{center}
\end{figure}

\begin{example}
This example is about the abstract assignment
operation. Consider the code in \refcodesnippet{assignment example 01}.
Note that the C assignment \QuotedCode{*pp = r} in our simplified language
is expressed as the pair \( (\indirection pp, \indirection r) \).
Assume to reach \refline{15} with the approximated points-to information
\( A \in \abstractdomain \)
\begin{gather*}
  \eval(A, \indirection pp) = \{ p, q \}, \\
  \eval(A, \indirection p)  = \{ a \}, \\
  \eval(A, \indirection q)  = \{ b \}, \\
  \eval(A, \indirection r)  = \{ a, c \};
\end{gather*}
then
\[
  \eval(A, \indirection pp) \times \eval(A, \indirection r) =
    \bigl\{ (p, a), (p, c), (q, a), (q, c) \bigr\}.
\]
The result of the evaluation of the rhs of the
assignment, \( \indirection pp \), contains more that one locations, \( p \) and \( q \); then
from the definition of the assignment operation (\refdefinition{Assignment
evaluation}) we have that the \emph{kill set} \( K \) is empty, then the
result of the assignment can be expressed as
\begin{align*}
  \assign\bigl(A, (\indirection pp, \indirection r) \bigr)
    & = A \union \eval(A, \indirection pp) \times \eval(A, \indirection r) \\
    & = A \union \bigl\{ (p, a), (p, b), (q, a), (q, b) \bigr\}.
\end{align*}
Note that after the execution of the assignment
(\reffigure{assignment example 01}), the old values of the variables
\QuotedCode{p} and \QuotedCode{q} are not overwritten, i.e.,
\[
  \bigl\{ (p, a), (q, b) \bigr\} \subseteq
    \assign\bigl(A, (\indirection pp, \indirection r) \bigl).
\]
\end{example}


\codecaption
{another example of application of the assignment operation.}
{assignment example 02}
\begin{codesnippet}
int **pp, *p, *r, a, b, c;

pp = &p;
p = &c;
// \( \eval(\indirection \indirection pp) = \{ c \} \)
if (...) r = &a;
else     r = &b;
// \( \eval(\indirection r) = \{ a, b \} \)

*pp = r;
// \( \eval(\indirection pp) = \{ a, b \} \)
\end{codesnippet}

\begin{figure}
\begin{center}
\input{figures/assignment_example_02}
\caption
  {a representation of points-to information before and after the
   execution of the assignment operation at \refline{10}
  of \refcodesnippet{assignment example 02}.}
\label{figure:assignment example 02}
\end{center}
\end{figure}

\begin{example}
This is another example of the application of the abstract assignment
operation. Consider the code in \refcodesnippet{assignment example 02}.
Again, the C assignment \QuotedCode{*pp = r} in our simplified language
is expressed as the pair \( (\indirection pp, \indirection r) \).
Assume to reach \refline{9} with the approximated points-to information
\( A \in \abstractdomain \) such that
\begin{gather*}
  \eval(A, \indirection pp) = \{ p \}, \\
  \eval(A, \indirection p)  = \{ c \}, \\
  \eval(A, \indirection r)  = \{ a, b \};
\end{gather*}
then
\[
  \eval(A, \indirection pp) \times \eval(A, \indirection r) =
    \bigl\{ (p, a), (p, b) \bigr\}.
\]
But this time
the evaluation of the rhs of the
assignment, \( \indirection pp \),  contains only one location, \( p \). From
(\refdefinition{Assignment evaluation}) we have
\[
  K = \eval(A, \indirection pp) \times \locations = \bigl\{ (p, c) \bigr\},
\]
and then (\reffigure{assignment example 01})
\begin{align*}
  \assign\bigl(A, (\indirection pp, \indirection r) \bigr)
    & = (A \setminus K) \union
        \eval(A, \indirection pp) \times \eval(A, \indirection r) \\
    & = \Bigl(A \setminus \bigl\{ (p, c) \bigr\} \Bigr)
     \union \bigl\{ (p, a), (p, b) \bigr\}.
\end{align*}
Note that, in this case, the assignment deletes the old value of the
variable \QuotedCode{p}, i.e.,
\[
  (p, c) \not \in \assign\bigl(A, (\indirection pp, \indirection r) \bigr).
\]
\end{example}


\codecaption
{an example of application of the filter operation.}
{filter example 01}
\begin{codesnippet}
int *p, *q, a, b, c;

if (...) p = &a;
else     p = &b;
// \( \eval(\indirection p) = \{ a, b \} \)

if (...) q = &b;
else     q = &c;
// \( \eval(\indirection q) = \{ b, c \} \)

if (p == q) {
  // \( \eval(\indirection p) = \eval(\indirection q) = \{ b \} \)
}
\end{codesnippet}

\begin{figure}
\begin{center}
\input{figures/filter_example_01}
\caption
  {a representation of the points-to information before and after the
  execution of the filter operation on the condition of the \Code{if}
  statement at \refline{11} of \refcodesnippet{filter example 01}.}
\label{figure:filter example 01}
\end{center}
\end{figure}

\begin{figure}
\begin{center}
\input{figures/filter_example_01_b}
\caption
  {a representation of computation of the filter operation for the example
  in \refcodesnippet{filter example 01}.}
\label{figure:filter example 01 b}
\end{center}
\end{figure}

\begin{example}
Consider the example program in \refcodesnippet{filter example 01}.
As anticipated in the annotations of the
presented code, the filter operation, acting on the condition \QuotedCode{p
== q}, is able to detect that inside the body of the \Code{if} statement
at \refline{12} both \QuotedCode{p} and \QuotedCode{q} definitely point to
\QuotedCode{b}.
Now we want to show step by step how this result is obtained from the given
definitions. Since the situation for \QuotedCode{p} and \QuotedCode{q} is
symmetrical, we show only how it can be derived that \QuotedCode{p}
definitely points to \QuotedCode{b}.
Recall that the boolean expression of the C
language \QuotedCode{q == p} corresponds, in our simplified language, to the triple
\( (\equality, \indirection p, \indirection q) \).
Assume now that \refline{10} is reached with the following approximated
points-to information \( A \in \abstractdomain \) (\reffigure{filter
example 01})
\begin{gather*}
  \eval(A, \indirection p) = \{ a, b \}, \\
  \eval(A, \indirection q) = \{ b, c \}.
\end{gather*}
From the definition of the abstract filter operation (\refdefinition{Filter
3}) we have
\begin{gather*}
  I = \eval(A, \indirection p) \intersection \eval(A, \indirection q),
  \\
  \filter\bigl(A, (\equality, \indirection p, \indirection q)\bigr) =
    \filter(A, I, \indirection p) \intersection
    \filter(A, I, \indirection q).
\end{gather*}
The evaluation of the expressions is illustrated by the following table.
\begin{filtertable}
  \( i \) &
  \( \eval(A, \indirection p, i) \) &
  \( \eval(A, \indirection q, i) \) \\
\hline
  1 & \( \{ p \} \) & \( \{ q \} \) \\
  0 & \( \{ a, b \} \) & \( \{ b, c \} \) \\
\end{filtertable}
For \( i = 0 \), the target set of the filter (\refdefinition{Filter 2}) is
then defined as
\begin{align*}
  I
  & = \eval(A, \indirection p) \intersection \eval(A, \indirection q) \\
  & = \eval(A, \indirection p, 0) \intersection \eval(A, \indirection q, 0) \\
  & = \{ a, b \} \intersection \{ b, c \} = \{ b \}.
\end{align*}
Then, recalling from \refdefinition{Filter 1} that
\[
  \target(A, I, e, i + 1) = \eval(A, e, i+1) \intersection
    \prev\bigl(A, \target(A, I, e, i) \bigr),
\]
we compute backward the sequence of target sets
for the expression \( \indirection p \) as
\begin{filtertable}
  \( i \) &
  \( \target\bigl(A, \{ b \}, \indirection p, i\bigr) \) &
  Removed arcs \\
\hline
  1 & \( \{ p \} \) & \( \bigl\{ (p, a) \bigr\} \)  \\
  0 & \( \{ b \} \) & \( \emptyset \) \\
\end{filtertable}
Since the target set for \( i = 1 \) consists of the only element \( p \) and the
node \( a \) is not part of the target set for \( i = 0 \), then the filter
removes the arc \( (p, a) \) from the points-to information.
See \reffigure{filter example 01 b} for a graphical representation of the
described situation.
\end{example}


\codecaption
{an example of application of the filter operation.}
{filter example 02}
\begin{codesnippet}
int *p, *q, a, b, c, d;

if (...)
  if (...) p = &a;
  else     p = &b;
else
  if (...) p = &c;
  else     p = &d;
// \( \eval(\indirection p) = \{ a, b, c, d \} \)

if (...) q = &b;
else     q = &c;
// \( \eval(\indirection q) = \{ b, c \} \)

if (p == q) {
  // \( \eval(\indirection p) = \eval(\indirection q) = \{ b, c \} \)
}
\end{codesnippet}

\begin{figure}
\begin{center}
\input{figures/filter_example_02}
\caption
  {a representation of the points-to information before and after the
  execution of the filter operation on the condition of the \Code{if}
  statement at \refline{15} of \refcodesnippet{filter example 02}.}
\label{figure:filter example 02}
\end{center}
\end{figure}

\begin{figure}
\begin{center}
\input{figures/filter_example_02_b}
\caption
  {a representation of the computation of the filter operation for the example
  in \refcodesnippet{filter example 02}.}
\label{figure:filter example 02 b}
\end{center}
\end{figure}

\begin{example}
Now we present a similar situation to show that when the abstract
filter operation cuts some arcs (\refdefinition{Filter 1}) what matters is
the cardinality of the set of the ``pointers'' and not carditality of
the set of the ``pointed'' objects.
Consider the code in \refcodesnippet{filter example 02};
the points-to information \( A \in \abstractdomain \) at \refline{14},
is presented in \reffigure{filter example 02}.
In this case the evaluation of the two expressions \( \indirection p \) and
\( \indirection q \) proceeds as follows
\begin{filtertable}
  \( i \) &
  \( \eval(A, \indirection p, i) \) &
  \( \eval(A, \indirection q, i) \) \\
\hline
  1 & \( \{ p \} \) & \( \{ q \} \) \\
  0 & \( \{ a, b, c, d \} \) & \( \{ b, c \} \) \\
\end{filtertable}
For \( i = 0 \), we have the target set
\begin{align*}
  \eval(A, \indirection p)  \intersection \eval(A, \indirection q)
    & = \eval(A, \indirection p, 0) \intersection \eval(A, \indirection q, 0) \\
    & = \{ a, b, c, d \} \intersection \{ b, c \} = \{ b, c \}.
\end{align*}
The computation of the filter on the expression \QuotedCode{p} proceeds as
follows (\reffigure{filter example 02 b})
\begin{filtertable}
  \( i \) &
  \( \target\bigl(A, \{ b, c \}, \indirection p, i\bigr) \) &
  Removed arcs \\
\hline
  1 & \( \{ p \} \)     & \( \bigl\{ (p, a), (p, d) \bigr\} \) \\
  0 & \( \{ b, c \} \)  & \( \emptyset \) \\
\end{filtertable}
\end{example}


\begin{figure}
\begin{center}
\input{figures/filter_example_03}
\caption
  {a representation of the points-to information
  \( A \in \abstractdomain \) before and after the
  execution of the filter operation on the condition \( (\equality,
  \indirection \indirection pp, a) \).}
\label{figure:filter example 03}
\end{center}
\end{figure}

\begin{figure}
\begin{center}
\input{figures/filter_example_03_b}
\caption
  {a representation of computation of the filter operation for the example
  in \reffigure{filter example 03}.}
\label{figure:filter example 03 b}
\end{center}
\end{figure}

\begin{example}
Consider the points-to approximation \( A \in \abstractdomain \)
described in \reffigure{filter example 03}.
In this case there are two levels of indirection.
Assume to filter the points-to approximation \( A \)
with respect to
the condition \( (\equality, \indirection \indirection pp, a) \).
The evaluation of the lhs and the rhs of the condition
proceeds as follows
\begin{filtertable}
  \( i \)
  & \( \eval(A, \indirection \indirection pp, i) \)
  & \( \eval(A, a, i) \) \\
\hline
  2 & \( \{ pp \} \)      & \( \emptyset \) \\
  1 & \( \{ p, q \} \)    & \( \emptyset \) \\
  0 & \( \{ a, b, c \} \) & \( \{ a \} \) \\
\end{filtertable}
Then, for \( i = 0 \), we have the target set
\begin{align*}
  \eval(A, \indirection \indirection p) \intersection \eval(A, a)
    & = \eval(A, \indirection \indirection p, 0) \intersection \eval(A, a, 0) \\
    & = \{ a, b, c \} \intersection \{ a \} = \{ a \}.
\end{align*}
The computation of the filter on the lhs proceeds as
\begin{filtertable}
  \( i \) &
  \( \target\bigl(A, \{ a \}, \indirection \indirection p, i\bigr) \) &
  Removed arcs \\
\hline
  2 & \( \{ pp \} \)  & \( \bigl\{ (pp, q) \bigr\} \) \\
  1 & \( \{ p \} \)   & \( \bigl\{ (p, b) \bigr\} \) \\
  0 & \( \{ a \} \)  & \( \emptyset \) \\
\end{filtertable}
\reffigure{filter example 03 b} depicts the computation just
described.
\end{example}


\begin{figure}
\begin{center}
\input{figures/common_style}
\def \drawThePicture[#1]{
  \abstractLocation pp (0, 1)
  \abstractLocation p  (1, 2)
  \abstractLocation q  (1, 1)
  \abstractLocation r  (1, 0)
  \abstractLocation a  (2, 1)
  \abstractLocation b  (2, 0)
  \figureBackground
    (r.south -| pp.west)
    (p.north -| a.east)

  \node at (1,-1) {#1};
  \path
    \edgePointsToL(p,a)
    \edgePointsTo(q,a)
    \edgePointsTo(r,b)
    ;
}
\begin{tikzpicture}
  \begin{scope}
    \drawThePicture[Original.]
    \path
      \edgePointsToL(pp,p)
      \edgePointsTo(pp,q)
      \edgePointsToR(pp,r)
      ;
  \end{scope}
  \begin{scope}[xshift=4cm]
    \drawThePicture[Equality.]
    \path
      \edgePointsToL(pp,p)
      \edgePointsTo(pp,q)
      ;
  \end{scope}
  \begin{scope}[xshift=8cm]
    \drawThePicture[Inequality.]
    \path
      \edgePointsToR(pp,r)
      ;
  \end{scope}
\end{tikzpicture}
\caption
  {on the left a representation of the initial points-to information \( A
  \in \abstractdomain \), in the middle
  the information resulting by filtering the initial
  information \( A \) on the condition \( (\equality, \indirection \indirection pp,
  a) \); finally, on the right, the points-to information resulting from
  filtering the approximation \( A \) on the condition \( (\inequality, \indirection
  \indirection pp, a) \).}
\label{figure:filter example 04}
\end{center}
\end{figure}

\begin{figure}
\begin{center}
\input{figures/filter_example_04_b}
\caption
  {a representation of computation of the filter operation for the example
  in \reffigure{filter example 04} on the condition
    \( (\equality, \indirection \indirection pp, a) \).}
\label{figure:filter example 04 b}
\end{center}
\end{figure}

\begin{figure}
\begin{center}
\input{figures/filter_example_04_c}
\caption
  {a representation of computation of the filter operation for the example
  in \reffigure{filter example 04} on the condition
    \( (\inequality, \indirection \indirection pp, a) \).}
\label{figure:filter example 04 c}
\end{center}
\end{figure}

\begin{example}
Consider the points-to approximation \( A \in \abstractdomain \)
described in \reffigure{filter example 04}.
The evaluation of the the expression
\( \indirection \indirection pp \) follows the steps
\begin{filtertable}
  \( i \) &
  \( \eval(A, \indirection \indirection p, i) \) \\
\hline
  2 & \( \{ pp \} \) \\
  1 & \( \{ p, q, r\} \) \\
  0 & \( \{ a, b \} \) \\
\end{filtertable}
Assume to filter the points-to approximation \( A \) on
the condition \( (\equality, \indirection \indirection pp, a) \)
and also on the opposite condition
\( (\inequality, \indirection \indirection pp, a) \).
For \( i = 0 \), for the equality and the inequality conditions we have the
target sets
\begin{gather*}
  \eval(A, \indirection \indirection pp) \intersection
    \eval(A, a) = \{ a, b \} \intersection \{ a \} = \{ a \}, \\
  \eval(A, \indirection \indirection pp) \setminus
    \eval(A, a) = \{ a, b \} \setminus \{ a \} = \{ b \},
\end{gather*}
respectively.  The computation of the filter on the lhs proceeds as
\begin{filtertable}
  \( i \)
  & \( \target\bigl(A, \{ a \}, \indirection \indirection pp, i\bigr) \)
  & Removed arcs \\
\hline
  2 & \( \{ pp \} \)  & \( \bigl\{ (pp, r) \bigr\} \) \\
  1 & \( \{ p, q \} \)   & \( \emptyset \) \\
  0 & \( \{ a \} \)  & \( \emptyset \) \\
\hline \hline
  \( i \)
  & \( \target\bigl(A, \{ b \}, \indirection \indirection pp, i\bigr) \)
  & Removed arcs \\
\hline
  2 & \( \{ pp \} \)  & \( \bigl\{ (pp, p), (pp, q) \bigr\} \) \\
  1 & \( \{ r \} \)   & \( \emptyset \) \\
  0 & \( \{ b \} \)  & \( \emptyset \) \\
\end{filtertable}
\reffigure{filter example 04 b} and \reffigure{filter example 04 c}
depict the filter computation just described.
\end{example}


\begin{figure}
\begin{center}
\input{figures/filter_example_05}
\caption
  {on the left a representation of the points-to approximation \( A \in
  \abstractdomain \),
  on the right a representation of the approximation resulting from the
  application of the filter on the condition
  \( (\equality, \indirection \indirection \indirection ppp, a) \). The
  arcs \( \bigl\{ (ppp, rr), (p, b) \bigr\} \) have
  been removed.}
\label{figure:filter example 05}
\end{center}
\end{figure}

\begin{figure}
\begin{center}
\input{figures/filter_example_05_b}
\caption
  {a representation of computation of the filter operation for the example
  in \reffigure{filter example 05} on the condition
    \( (\equality, \indirection \indirection \indirection ppp, a) \).}
\label{figure:filter example 05 b}
\end{center}
\end{figure}

\begin{example}
Now consider the points-to approximation \( A \in \abstractdomain \)
described in \reffigure{filter example 05}.
The evaluation of the the expression
\( \indirection \indirection \indirection ppp \) follows the steps
\begin{filtertable}
  \( i \) &
  \( \eval(A, \indirection \indirection p, i) \) \\
\hline
  3 & \( \{ ppp \} \) \\
  2 & \( \{ pp, qq, rr \} \) \\
  1 & \( \{ p, r\} \) \\
  0 & \( \{ a, b, c \} \) \\
\end{filtertable}
Assume to filter the points-to approximation \( A \) on
the condition \( (\equality, \indirection \indirection \indirection ppp, a) \).
For \( i = 0 \), for the equality condition we have the
target set
\[
  \eval(A, \indirection \indirection \indirection ppp) \intersection
    \eval(A, a) = \{ a, b, c \} \intersection \{ a \} = \{ a \}.
\]
The computation of the filter on the lhs proceeds as follows
\begin{filtertable}
  \( i \)
  & \( \target\bigl(A, \{ a \}, \indirection \indirection \indirection ppp, i\bigr) \)
  & Removed arcs \\
\hline
  3 & \( \{ ppp \} \)  & \( \bigl\{ (ppp, rr) \bigr\} \) \\
  2 & \( \{ pp, qq \} \)  & \( \emptyset \) \\
  1 & \( \{ p \} \)   & \( \bigl\{ (p, b) \bigr\} \) \\
  0 & \( \{ a \} \)  & \( \emptyset \) \\
\end{filtertable}
\reffigure{filter example 05 b}
depicts the filter computation just described.
\end{example}

\subsection{Results}
\subsubsection{Notation}
The proofs are organized as sequences of deductions, for convenience of
notation presented inside tables.  Each table is organized in three
columns: the first column contains the \emph{tag} used to name the step;
the second column contains the statement and the third column contains a
list of tags that represents the list of statements used to infer the
current row.  There are three kind of tags.  The first kind of tag, denoted
as `TS', is used to mark the \emph{thesis}, which, if explicitly presented,
occurs always in the top row. The second kind of tag is used to
describe the \emph{hypotheses}, marked as `H0', \ldots, `H\(n\)'. Among the
hypotheses we improperly list the lemmas used in the proof.
The third kind of tag is used to describe
\emph{deductions}, displayed as `D0', \ldots, `D\(n\)', with the exception
of the last deduction step, which is tagged with the symbol `\(\proofleaf\)'.
Within the table, the hypotheses are displayed below the thesis and
deductions below the hypotheses.  To stress the separation of the thesis
from the hypotheses and of the hypotheses from the deductions horizontal
line are used.  Deductions, between themselves, are sorted in topological
order, such that, if the deduction `D\(m\)' \emph{requires} the deduction
`D\(n\)', then \( m > n\).  When the proof consists of more \emph{cases},
then multiple tables are used; in this case, an initial table containing
the hypotheses common to all cases may be present.  Cases are marked as
`C1', \ldots, `C\(n\)'; if an hypothesis comes from considering the case
`C\(n\)', then the tag  `C\(n\)' is also reported in the third column of
the corresponding row. In \emph{inductive} proofs, the \emph{inductive}
hypothesis is marked with a `(ind.\ hyp.)'.

\subsubsection{Concrete Assignment}
We start by showing that the assignment operation is
closed with respect to the set of the concrete memory descriptions \(
\concretedomain \).

\begin{lemma}
\lemmasummary{Eval cardinality on the concrete domain}
Let \( C \in \concretedomain \) and \( e \in \expressions \), then
\( \cardinality \eval(C, e) = 1.  \)
\end{lemma}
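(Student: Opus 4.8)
The plan is to prove the claim by structural induction on $e$, following the grammar of $\expressions$ (\refdefinition{expressions}) and the inductive clauses of $\eval$ (\refdefinition{eval function}). The fact that drives the whole argument is that an element $C \in \concretedomain$, seen inside $\abstractdomain$ through the inclusion $\concretedomain \subseteq \abstractdomain$, is the graph of a total function $\functiondef{C}{\locations}{\locations}$: for every $l \in \locations$ there is exactly one $m \in \locations$ with $(l, m) \in C$, namely $m = C(l)$. I would therefore first record the auxiliary observation that, for every location $l$,
\[
  \post(C, l) = \bigl\{\, m \in \locations \bigm| (l, m) \in C \,\bigr\} = \{ C(l) \},
\]
a singleton --- totality of $C$ excludes the empty set, single-valuedness excludes anything larger.

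For the base case $e = l$ with $l \in \locations$, the definition gives $\eval(C, l) = \{ l \}$, so $\cardinality \eval(C, l) = 1$. For the inductive step, let $e = \indirection e'$ and assume the inductive hypothesis $\cardinality \eval(C, e') = 1$, say $\eval(C, e') = \{ l \}$; then, unfolding the clause for $\indirection e'$ and the extended definition of $\post$ on a singleton (\refdefinition{extended prev and post functions}),
\[
  \eval(C, \indirection e')
  = \post\bigl( C, \{ l \} \bigr)
  = \bigunion \bigl\{\, \post(C, l') \bigm| l' \in \{ l \} \,\bigr\}
  = \post(C, l)
  = \{ C(l) \},
\]
and hence $\cardinality \eval(C, \indirection e') = 1$, which closes the induction.

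I do not expect a genuine obstacle: the proof is short and the reasoning routine. The only mildly delicate point is the passage through the set-valued version of $\post$ --- one should check that $\post$ evaluated on a singleton set reduces to $\post$ evaluated on its unique element, and that it is precisely the \emph{functional} character of $C$ (totality together with single-valuedness) that forces the result to have cardinality one rather than zero or more. Everything else is bookkeeping over the inductive definitions, and this lemma then feeds directly into the proof that $\concretedomain$ is closed under $\assign$.
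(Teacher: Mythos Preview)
Your proposal is correct and follows essentially the same approach as the paper's proof: structural induction on $e$, with the base case given directly by $\eval(C,l)=\{l\}$ and the inductive step driven by the observation that $C$, viewed as a total function $\locations\to\locations$, makes $\post(C,l)$ a singleton. The paper presents the same argument in its tabular proof format, and your explicit unfolding of $\post$ on a singleton set is a faithful rendering of the same reasoning.
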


\begin{proof}
Let \( C \in \concretedomain \).
We proceed by induction on the definition of the set \( \expressions \)
(\refdefinition{expressions}).
\begin{prooftable}
TS  & \( \cardinality \eval(C, e) = 1 \) \\
\hline
H0  & \refdefinition{eval function}, the eval function. \\
\end{prooftable}
For the base case let \( e = l \in \locations \).
\begin{prooftable}
TS  & \( \cardinality \eval(C, l) = 1 \) \\
\hline
D0  & \( \eval(C, l) = \{ l \} \) & (H0) \\
\proofleaf
    & \( \cardinality \eval(C, l) = 1 \) & (D0) \\
\end{prooftable}
For the inductive case let \( e \in \expressions \).
\begin{prooftable}
TS  & \( \cardinality \eval(C, \indirection e) = 1 \) \\
\hline
H1  & \( \cardinality \eval(C, e) = 1 \) &  (ind.\ hyp.) \\
H2  & \refdefinition{Abstract and concrete domains}, the concrete domain. \\
H3  & \refdefinition{post function}, the post function. \\
\hline
D0  & \( \eval(C, \indirection e) = \post\bigl( C, \eval(C, e) \bigr) \) & (H0) \\
D1  & \( \forall l \in \locations \itc
         \cardinality \bigl\{ (l, m) \in C \bigr\} = 1 \)
    & (H2) \\
D2  & \( \cardinality \bigl\{\, (l, m) \in C \bigm| l \in \eval(C, e) \,\bigr\} = 1 \)
    & (H1, D1) \\
D3  & \( \cardinality \post\bigl( C, \eval(C, e) \bigr) = 1 \) & (D2, H3) \\
\proofleaf
    & \( \cardinality \eval(C, \indirection e) = 1 \) & (D3, D0) \\
\end{prooftable}
\end{proof}

\begin{lemma}
\lemmasummary{Assignment on the concrete domain}
Let \( C \in \concretedomain \) and \( e, f \in \expressions \).
For convenience of notation, let \( a \in \assignments \) such that \( a = (e, f) \).
Let
\begin{gather*}
  \eval(C, e) = \{ l \}; \\
  \post(C, l) = \{ n \}; \\
  \eval(C, f) = \{ m \}; \\
\intertext{then}
  \assign(C, a) =
  \Bigr( C \setminus \bigl\{ (l, n) \bigr\} \Bigr)
  \union \bigl\{ (l, m) \bigr\}.
\end{gather*}
\end{lemma}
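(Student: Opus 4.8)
The plan is to unfold the definition of $\assign$ (\refdefinition{Assignment evaluation}) applied to the triple $\bigl(C, (e, f)\bigr)$ and then simplify each piece using the three hypotheses. First I would note that $\eval(C, e) = \{l\}$ gives $\cardinality \eval(C, e) = 1$ (this also follows from the \emph{Eval cardinality on the concrete domain} lemma), so that the relevant branch of the kill set is $K = \eval(C, e) \times \locations = \{l\} \times \locations$. Substituting into the definition of $\assign$ and using $\eval(C, f) = \{m\}$ yields
\[
  \assign\bigl(C, (e, f)\bigr)
    = \bigl( \{l\} \times \{m\} \bigr) \union \bigl( C \setminus (\{l\} \times \locations) \bigr);
\]
since $\{l\} \times \{m\} = \bigl\{ (l, m) \bigr\}$, it only remains to show that $C \setminus (\{l\} \times \locations) = C \setminus \bigl\{ (l, n) \bigr\}$.

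The key step is therefore the identity $C \intersection (\{l\} \times \locations) = \bigl\{ (l, n) \bigr\}$. Here I would use that $C$ is a \emph{total} function $\locations \to \locations$: viewed as a binary relation, $C$ contains exactly one pair with first component $l$, namely $\bigl( l, C(l) \bigr)$. By \refdefinition{post function} we have $\post(C, l) = \bigl\{ C(l) \bigr\}$, so the hypothesis $\post(C, l) = \{n\}$ forces $C(l) = n$, whence $C \intersection (\{l\} \times \locations) = \bigl\{ (l, n) \bigr\}$. Since $C \setminus S = C \setminus (C \intersection S)$ for any set $S$, we get $C \setminus (\{l\} \times \locations) = C \setminus \bigl\{ (l, n) \bigr\}$, and combining this with the previous paragraph gives exactly $\assign(C, a) = \bigl( C \setminus \bigl\{ (l, n) \bigr\} \bigr) \union \bigl\{ (l, m) \bigr\}$.

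I expect the only real obstacle to be making the functionality argument above fully rigorous --- in particular being explicit that $\post(C, l)$ is a singleton \emph{because} $C$ is total and single-valued, and that its unique element is precisely the pair that the kill set removes; everything else is routine set algebra on the definition of $\assign$. It is worth observing that this lemma, together with the closure lemma \emph{Restriction of the assignment to the concrete domain}, gives a concrete pointwise reading of an assignment on a concrete memory: the single outgoing arc $(l, n)$ from the location $l$ denoted by the left-hand side is deleted and replaced by the arc $(l, m)$ to the location $m$ denoted by the right-hand side --- which is exactly the expected behaviour of a simple pointer assignment.
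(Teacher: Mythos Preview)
Your proposal is correct and follows essentially the same approach as the paper: unfold the definition of $\assign$, use $\cardinality \eval(C, e) = 1$ to select the non-empty kill set $K = \{l\} \times \locations$, simplify $\eval(C,e) \times \eval(C,f)$ to $\{(l,m)\}$, and establish the key identity $C \intersection (\{l\} \times \locations) = \{(l,n)\}$ from $\post(C,l) = \{n\}$ to rewrite the set difference. Your justification of this last identity via the functionality of $C$ is in fact slightly more explicit than the paper's, which simply cites the hypothesis $\post(C,l) = \{n\}$.
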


\begin{proof}
Let \( C \in \concretedomain \).
First note that from the definition of the concrete domain
(\refdefinition{Abstract and concrete domains}) and
\reflemma{Eval cardinality on the concrete domain},
\( \cardinality \eval(C, e) = \cardinality \eval(C, f) = 1 \).
From the definion of the post function (\refdefinition{post function})
also \( \cardinality \post(C, l) = 1 \). Thus the above statement is well
formed.
\begin{prooftable}
TS  & \( \assign(C, a) = \Bigr( A \setminus \bigl\{ (l, n) \bigr\} \Bigr)
         \union \bigl\{ (l, m) \bigr\} \) \\
\hline
H0  & \( \{ l \} = \eval(C, e) \) \\
H1  & \( \{ n \} = \post(C, l) \) \\
H2  & \( \{ m \} = \eval(C, f) \) \\
H3  & \refdefinition{Assignment evaluation},
      the assignment evaluation. \\
\hline
D0  & \( \cardinality \eval(C, e) = 1 \) & (H0) \\
D1  & \( \assign(C, a) = \eval(C, e) \times \eval(C, f) \) \\
    & \( \qquad \union \, \bigl( C \setminus \eval(C, e) \times \locations \bigr) \)
    & (D0, H3) \\
D2  & \( \eval(C, e) \times \eval(C, f) = \bigl\{ (l, m) \bigr\} \)
    & (H0, H2) \\
D3  & \( C \intersection \eval(C, e) \times \locations =
         \bigl\{ (l, n) \bigr\} \)
    & (H1) \\
D4  & \( C \setminus \eval(C, e) \times \locations =
         C \setminus \bigl\{ (l, n) \bigr\} \)
    & (D3) \\
\proofleaf
    & \( \assign(C, a) = \Bigr( C \setminus \bigl\{ (l, n) \bigr\} \Bigr)
         \union \bigl\{ (l, m) \bigr\} \)
    & (D4, D2, D1) \\
\end{prooftable}
\end{proof}

\begin{proof}
\summary{Restriction of the assignment to the concrete, \reflemma{Restriction of the assignment to the concrete domain}}
This result is a simple corollary of \reflemma{Assignment on the concrete
domain}.
\end{proof}

\subsubsection{Observations on the Domain}
First we present the following simple result about the monotonicity of the
concretization function.

\begin{lemma}
\lemmasummary{Monotonicity of the concretization function}
Let \( A, B \in \abstractdomain \), then
\[
  A \subseteq B \implies \concretization(A) \subseteq \concretization(B).
\]
\end{lemma}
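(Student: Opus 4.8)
The plan is to unfold the definition of $\concretization$ (\refdefinition{concretization function}) on both sides and reduce the claim to transitivity of set inclusion. Recall that for any $X \in \abstractdomain$ we have
\[
  \concretization(X) = \bigl\{\, C \in \concretedomain \bigm| C \subseteq X \,\bigr\},
\]
where the inclusion $C \subseteq X$ is inclusion between subsets of $\locations \times \locations$ (using the fact that $\concretedomain \subseteq \abstractdomain$, so a function $C \colon \locations \to \locations$ is viewed as its graph, a set of pairs).

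First I would fix $A, B \in \abstractdomain$ and assume the hypothesis $A \subseteq B$. Then I would take an arbitrary $C \in \concretization(A)$ and aim to show $C \in \concretization(B)$. By definition of $\concretization(A)$, the element $C$ is a concrete domain element satisfying $C \subseteq A$. Combining this with the hypothesis $A \subseteq B$ and the transitivity of the subset relation on $\partsof(\locations \times \locations)$, I get $C \subseteq B$. Since $C$ is already known to lie in $\concretedomain$, the two facts $C \in \concretedomain$ and $C \subseteq B$ together give exactly $C \in \concretization(B)$ by the definition of $\concretization(B)$. As $C$ was arbitrary, this establishes $\concretization(A) \subseteq \concretization(B)$, which is the thesis.

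There is essentially no obstacle here: the only subtlety worth a sentence is the identification of concrete domain elements (total functions $\locations \to \locations$) with their graphs, so that the same symbol $\subseteq$ used in the definition of $\concretization$ makes sense and transitivity applies; this is exactly the identification already made explicit after \refdefinition{Abstract and concrete domains} (``$\concretedomain \subseteq \abstractdomain$''). In the table format used elsewhere in the paper, the proof would consist of the hypothesis $A \subseteq B$, the hypothesis recording \refdefinition{concretization function}, one deduction step introducing an arbitrary $C \in \concretization(A)$ and extracting $C \subseteq A$, one deduction step applying transitivity to obtain $C \subseteq B$, and the final $\proofleaf$ step concluding $C \in \concretization(B)$ and hence the inclusion of concretizations.
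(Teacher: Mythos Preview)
Your proposal is correct and matches the paper's own proof essentially step for step: unfold \refdefinition{concretization function}, pick an arbitrary $C \in \concretization(A)$, extract $C \in \concretedomain$ and $C \subseteq A$, then use transitivity with $A \subseteq B$ to get $C \subseteq B$ and conclude $C \in \concretization(B)$. The only cosmetic difference is that the paper separates out the trivial case $\concretization(A) = \emptyset$ explicitly before picking $C$, whereas you handle it implicitly via the universal quantifier; this is not a substantive divergence.
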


\begin{proof}
Let \( A, B \in \abstractdomain \). If
\( \concretization(A) = \emptyset \) then the thesis is trivially verified.
Otherwise let \( C \in \concretization(A) \), we have to show that \( C \in
\concretization(B) \) too.
\begin{prooftable}
TH  & \( C \in \concretization(B) \) \\
\hline
H0  & \( C \in \concretization(A) \) \\
H1  & \refdefinition{concretization function}, the concretization function.  \\
H2  & \( A \subseteq B \) \\
\hline
D0  & \( C \subseteq A \) & (H0, H1) \\
D1  & \( C \in \concretedomain \) & (H0, H1) \\
D2  & \( C \subseteq B \) & (D0, H2) \\
\proofleaf
    & \( C \in \concretization(B) \) & (D2, D1, H1) \\
\end{prooftable}
\end{proof}

From the definition of the concrete and of the abstract domain
(\refdefinition{Abstract and concrete domains}) and the definition of the
concretization
function (\refdefinition{concretization function}) we complete the
description of the abstraction by presenting
the \emph{abstraction function}.

\begin{definition}
\definitionsummary{Abstraction function}
Let
\[
  \functiondef
    { \abstraction }
    { \partsof(\concretedomain) }
    { \abstractdomain }
\]
be defined, for all \( C \subseteq \concretedomain \), as
\[
  \abstraction(C) \defeq \bigunion_{ D \in C } D.
\]
\end{definition}

It is possible to show that
\(
  \bigl< \partsof(\concretedomain), \abstraction, \abstractdomain,
  \concretization \bigr>
\)
is a \emph{Galois connection}, that is, for all \( C \in \concretedomain \)
and \( A \in \abstractdomain \), holds that:
\[
  \abstraction(C) \subseteq A \iff C \subseteq \concretization(A).
\]
Indeed, given \( C \subseteq \concretedomain \)
and \( A \in \abstractdomain \) the following steps are
all equivalent
\begin{gather*}
  \abstraction(C) \subseteq A, \\
  \bigunion_{ D \in C } D \subseteq A, \\
  \forall D \in C \itc D \subseteq A, \\
  \forall D \in C \itc D \in \concretization(A), \\
  C \subseteq \concretization(A).
\end{gather*}
On the presented abstraction holds also the following result.
The following lemma shows that given a non-bottom abstraction \( a \in
\abstractdomain \), then for each arc \( (l, m) \in A \) there is a
concrete memory \( C \) abstracted by \( A \) that contains the arc \( (l,
m) \).

\begin{lemma}
\lemmasummary{Concrete coverage}
Let \( A \in \abstractdomain \), then
\[
    \concretization(A) \neq \emptyset
    \implies
    \forall (l, m) \in A \itc
    \exists C \in \concretization(A) \suchthat
    (l, m) \in C.
\]
\end{lemma}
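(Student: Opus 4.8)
The plan is to exploit the fact that the support set of the concrete domain, $\concretedomain$, consists of \emph{total} functions $\locations \to \locations$, so that a concrete element --- viewed, as in \refdefinition{concretization function}, as the graph of a function --- assigns exactly one image to every location. The crucial observation is that such a function can be freely ``patched'' at a single location without destroying totality, so that membership in $\concretedomain$ is preserved; and that $\concretization(A)\neq\emptyset$ supplies one already-admissible choice of image for all the other locations.

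Concretely, I would first use the hypothesis $\concretization(A)\neq\emptyset$ to fix a witness $C_0\in\concretization(A)$; by \refdefinition{concretization function} this means $C_0\in\concretedomain$ and $C_0\subseteq A$, i.e.\ $C_0$ is a total function whose graph is contained in the relation $A$. Next, given an arbitrary arc $(l,m)\in A$, I would define $C\colon\locations\to\locations$ by $C(l)\defeq m$ and $C(n)\defeq C_0(n)$ for every $n\in\locations$ with $n\neq l$; this $C$ is again a total function, hence $C\in\concretedomain$. Finally I would verify $C\subseteq A$: for $n=l$ the pair $(l,C(l))=(l,m)$ lies in $A$ by assumption, and for $n\neq l$ the pair $(n,C(n))=(n,C_0(n))$ lies in $C_0\subseteq A$. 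Therefore $C\in\concretization(A)$, and by construction $(l,m)\in C$, which is the thesis.

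There is no genuine obstacle here once the patching observation is made; the only point worth stressing is \emph{why} the nonemptiness hypothesis cannot be dropped. A bare relation $A$ need not contain any total function at all --- some location could have no outgoing arc in $A$ --- so without a witness $C_0$ there would be no admissible image to assign to the locations other than $l$. In the paper I would render this as a short deduction table in the established format, taking $C_0\subseteq A$, $C_0\in\concretedomain$, and $(l,m)\in A$ as hypotheses, the definition of $C$ as the single construction step, and the case split $n=l$ versus $n\neq l$ as the verification that $C\subseteq A$.
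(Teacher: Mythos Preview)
Your proposal is correct and follows essentially the same approach as the paper: pick a witness $C_0\in\concretization(A)$, redirect its value at $l$ to $m$, and verify the result is still a total function contained in $A$. The paper phrases the same construction in relational notation as $\bigl(C\setminus\{(l,n)\}\bigr)\cup\{(l,m)\}$ with $\{n\}=\post(C,l)$, whereas you phrase it as function patching, but the argument is identical.
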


\begin{proof}
Let \( A \in \abstractdomain \) such that
\( \concretization(A) \neq \emptyset \) and let
\( (l, m) \in A \).
Let \( C \in \concretization(A) \), let \( n \in \locations \).
\begin{prooftable}
TS  & \( \exists D \in \concretization(A) \suchthat (l, m) \in D \) \\
\hline
H0  & \( (l, m) \in A \) \\
H1  & \( C \in \concretization(A) \) \\
H2  & \( \{ n \} \in \post(C, l) \) \\
H3  & \refdefinition{concretization function},
      concretization function. \\
H4  & \refdefinition{post function}, the post function. \\
\hline
D0  & \( C \subseteq A \) & (H1, H3) \\
D1  & \( C \setminus \bigl\{ (l, n) \bigr\} \subseteq A \) & (D0) \\
D2  & \( (l, n) \in C \) & (H2, H4) \\
D3  & \( \Bigr( C \setminus \bigl\{ (l, n) \bigr\} \Bigr)
         \union \bigl\{ (l, m) \bigr\} \in \concretedomain \)
    & (D2) \\
D4  & \( \Bigr( C \setminus \bigl\{ (l, n) \bigr\} \Bigr)
         \union \bigl\{ (l, m) \bigr\} \subseteq A \)
    & (D1, H0) \\
D5  & \( \Bigr( C \setminus \bigl\{ (l, n) \bigr\} \Bigr)
         \union \bigl\{ (l, m) \bigr\} \in \concretization(A) \)
    & (D3, D4, H3) \\
\proofleaf
    & \( \exists D \in \concretization(A) \suchthat (l, m) \in C \)
    & (D5) \\
\end{prooftable}
\end{proof}

\begin{lemma}
\lemmasummary{Abstraction effect}
Let \( A \in \abstractdomain \), then
\[
    \abstraction\bigl( \concretization( A ) \bigr) \subseteq A;
\]
moreover
\[
    \concretization(A) \neq \emptyset \implies
    \abstraction\bigl( \concretization(A) \bigr) = A.
\]
\end{lemma}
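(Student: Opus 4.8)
The plan is to unfold the two definitions involved and then read off both inclusions, using \reflemma{Concrete coverage} for the harder direction. Writing the relevant set out explicitly,
$\abstraction\bigl(\concretization(A)\bigr) = \bigunion \bigl\{\, D \in \concretedomain \bigm| D \subseteq A \,\bigr\}$
by \refdefinition{Abstraction function} and \refdefinition{concretization function}. The inclusion $\abstraction\bigl(\concretization(A)\bigr) \subseteq A$ is then immediate: every $D$ occurring in this union satisfies $D \subseteq A$, hence so does their union. Note this first part uses no hypothesis on $\concretization(A)$; in particular it also covers the degenerate case $\concretization(A) = \emptyset$, where the union is empty and the inclusion holds trivially.

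For the \emph{moreover} part, assume $\concretization(A) \neq \emptyset$. Since the inclusion $\abstraction\bigl(\concretization(A)\bigr) \subseteq A$ has already been established, it suffices to prove the reverse inclusion $A \subseteq \abstraction\bigl(\concretization(A)\bigr)$. Fix an arc $(l, m) \in A$. By \reflemma{Concrete coverage}, from $\concretization(A) \neq \emptyset$ and $(l, m) \in A$ we obtain a concrete element $C \in \concretization(A)$ with $(l, m) \in C$. Then $(l, m) \in \bigunion \bigl\{\, D \bigm| D \in \concretization(A) \,\bigr\} = \abstraction\bigl(\concretization(A)\bigr)$. As $(l, m)$ was an arbitrary arc of $A$, this yields $A \subseteq \abstraction\bigl(\concretization(A)\bigr)$, and combined with the first inclusion, equality.

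I expect no real obstacle here: the substantive content of the \emph{moreover} direction --- that every arc of a non-bottom abstraction is realized in some concrete total function abstracted by $A$ --- has already been isolated and discharged in \reflemma{Concrete coverage}, so this lemma is essentially a corollary. The only point to watch is to phrase the first inclusion so that it does not implicitly presuppose $\concretization(A) \neq \emptyset$; writing the union explicitly as $\bigunion \bigl\{\, D \bigm| D \subseteq A \,\bigr\}$ makes the unconditional claim transparent, after which the case split on whether $\concretization(A)$ is empty is handled uniformly.
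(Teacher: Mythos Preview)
Your proof is correct and follows essentially the same route as the paper: the first inclusion is read off directly from the definitions of $\abstraction$ and $\concretization$, and the reverse inclusion under the non-emptiness hypothesis is obtained arc-by-arc via \reflemma{Concrete coverage}. The only cosmetic difference is that you unfold $\concretization(A)$ into $\{\,D \in \concretedomain \mid D \subseteq A\,\}$ before taking the union, whereas the paper keeps the notation $\bigunion_{C \in \concretization(A)} C$ throughout.
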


\begin{proof}
Let \( A \in \abstractdomain \).
Consider that
\begin{prooftable}
H0  & \refdefinition{Abstraction function}, the abstraction function. \\
\hline
D0  & \( \abstraction\bigl( \concretization( A ) \bigr) =
         \bigunion_{C \in \concretization(A) } C \) & (H0) \\
\end{prooftable}
We proceed by showing the two inclusions separately.
For the first inclusion let \( (l, m) \in \abstraction\bigl(
\concretization( A ) \bigr) \); then we have
\begin{prooftable}
TS  & \( (l, m) \in A \) \\
\hline
H1  & \( (l, m) \in \abstraction\bigl( \concretization( A ) \bigr) \) \\
H2  & \refdefinition{concretization function}, the concretization function. \\
\hline
D1  & \( \exists C \in \concretization(A) \suchthat (l, m) \in C \)
    & (D0, H1) \\
D2  & \( \forall C \in \concretization(A) \itc C \subseteq A \) & (H2) \\
\proofleaf
    & \( (l, m) \in A \) & (D1, D2) \\
\end{prooftable}
For the second inclusion assume that \( \concretization(A) \neq
\emptyset \) and let \( (l, m) \in A \); then we have
\begin{prooftable}
TS  & \( (l, m) \in \abstraction\bigl( \concretization( A ) \bigr) \) \\
\hline
H1  & \( \concretization(A) \neq \emptyset \) \\
H2  & \( (l, m) \in A \) \\
H3  & \reflemma{Concrete coverage}, concrete coverage. \\
\hline
D1  & \( \exists C \in \concretization(A) \suchthat
         (l, m) \in C \) & (H2, H3, H1) \\
\proofleaf
    & \( (l, m) \in \abstraction\bigl( \concretization( A ) \bigr) \) 
    & (D0, D1) \\
\end{prooftable}
\end{proof}

\subsubsection{Results of Correctness}
We formalize the requirement of correctness of the abstract operations
presented ---the expression evaluation, the assignment and the filter
operations--- with the following theorems.

\begin{theorem}
\theoremsummary{Correctness of expression evaluation}
Let \( A \in \abstractdomain \) and \( e \in \expressions \); then
\[
    \bigunion_{ C \in \concretization(A) } \eval(C, e)
    \subseteq \eval(A, e).
\]
\end{theorem}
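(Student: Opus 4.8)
The plan is to prove the inclusion by structural induction on \( e \in \expressions \), following \refdefinition{expressions}. The only auxiliary fact I would use is an elementary monotonicity property of \( \post \): if \( C \subseteq A \) in \( \abstractdomain \) and \( L \subseteq L' \) in \( \partsof(\locations) \), then \( \post(C, L) \subseteq \post(A, L') \). This is immediate from \refdefinition{post function} and \refdefinition{extended prev and post functions}, since \( (l, m) \in C \) and \( l \in L \) force \( (l, m) \in A \) and \( l \in L' \). Recall also that \( C \in \concretization(A) \) means precisely \( C \subseteq A \), by \refdefinition{concretization function}.

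For the base case \( e = l \in \locations \), \refdefinition{eval function} gives \( \eval(C, l) = \{ l \} = \eval(A, l) \) for every \( C \in \concretization(A) \), so the left-hand union is contained in \( \{ l \} \) and the inclusion holds (vacuously when \( \concretization(A) = \emptyset \)). For the inductive step take \( e = \indirection e' \) and assume the statement for \( e' \), that is \( \bigunion_{C \in \concretization(A)} \eval(C, e') \subseteq \eval(A, e') \). Let \( m \) belong to \( \bigunion_{C \in \concretization(A)} \eval(C, \indirection e') \); then there is a witnessing \( C \in \concretization(A) \) with \( m \in \eval(C, \indirection e') = \post\bigl( C, \eval(C, e') \bigr) \), so some \( l \in \eval(C, e') \) satisfies \( (l, m) \in C \). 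Since \( C \subseteq A \) we get \( (l, m) \in A \), and since \( l \in \eval(C, e') \) lies in the union over \( \concretization(A) \) of \( \eval(\cdot, e') \), the inductive hypothesis yields \( l \in \eval(A, e') \). Applying \( \post \)-monotonicity (or the definitions directly) with \( (l, m) \in A \) and \( l \in \eval(A, e') \) gives \( m \in \post\bigl( A, \eval(A, e') \bigr) = \eval(A, \indirection e') \), which is what we need.

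There is no real obstacle here; the one point requiring care is that, for a fixed \( m \), the location \( l \) with \( (l, m) \) in the relation and the membership \( l \in \eval(\cdot, e') \) must be taken with respect to the \emph{same} concrete map \( C \), so that \( (l, m) \in C \subseteq A \) — keeping the existential quantifier over \( C \) in front of the choice of \( l \) is essentially the whole content of the argument. I would present the proof in the tabular deduction format used elsewhere in the paper, with the base case and the inductive case in separate tables and the inductive hypothesis flagged as `(ind.\ hyp.)'.
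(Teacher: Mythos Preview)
Your proof is correct and in essence identical to the paper's argument: both rest on the fact that \( C \in \concretization(A) \) means \( C \subseteq A \), together with an induction on \( e \) using the monotonicity of \( \post \). The only organizational difference is that the paper factors the induction out as a separate lemma, \reflemma{Monotonicity of the eval function} (\( A \subseteq B \implies \eval(A,e) \subseteq \eval(B,e) \)), and then the theorem is a one-line application of that lemma to each \( C \subseteq A \); this buys reusability, since the monotonicity lemma is invoked again in the correctness proofs for the assignment and the filter.
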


\begin{theorem}
\theoremsummary{Correctness of the assignment}
Let \( A \in \abstractdomain \) and \( a \in \assignments \); then
\[
    \assign\bigl( \concretization(C), a\bigr) \subseteq
    \concretization\bigl( \assign(A, a) \bigr).
\]
\end{theorem}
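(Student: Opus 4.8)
The plan is to peel off the set-level definition of $\assign$ and reduce to the pointwise claim that for every $C \in \concretization(A)$ one has $\assign(C, a) \in \concretization\bigl(\assign(A, a)\bigr)$; by \refdefinition{concretization function} this amounts to two things, namely that $\assign(C,a) \in \concretedomain$ and that $\assign(C,a) \subseteq \assign(A,a)$. The first is precisely \reflemma{Restriction of the assignment to the concrete domain}, so the content of the proof is the inclusion. If $\concretization(A) = \emptyset$ the statement holds vacuously, so I would fix $C \in \concretization(A)$ and write $a = (e,f)$.

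First I would rewrite the left-hand side in closed form. By \reflemma{Eval cardinality on the concrete domain} the sets $\eval(C,e)$ and $\eval(C,f)$ are singletons, say $\eval(C,e) = \{l\}$ and $\eval(C,f) = \{m\}$, and since $C$ is a function $\post(C,l)$ is a singleton $\{n\}$; then \reflemma{Assignment on the concrete domain} gives $\assign(C,a) = \bigl(C \setminus \{(l,n)\}\bigr) \cup \{(l,m)\}$. Using $C \subseteq A$ together with correctness of expression evaluation (the theorem just stated), I would also record $l \in \eval(A,e)$ and $m \in \eval(A,f)$, so that $(l,m) \in \eval(A,e) \times \eval(A,f) \subseteq \assign(A,a)$; this settles the ``new'' arc.

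It then remains to show $C \setminus \{(l,n)\} \subseteq \assign(A,a) = \bigl(\eval(A,e) \times \eval(A,f)\bigr) \cup (A \setminus K)$, and here I would split on the definition of the kill set $K$. If $\cardinality \eval(A,e) \ne 1$ then $K = \emptyset$ and the inclusion is immediate from $C \subseteq A$. If $\cardinality \eval(A,e) = 1$, then from $\{l\} = \eval(C,e) \subseteq \eval(A,e)$ we get $\eval(A,e) = \{l\}$ and hence $K = \{l\} \times \locations$; now I would invoke the fact that $C$ is a total function whose only arc out of $l$ is $(l,n)$, so that $C \setminus \{(l,n)\}$ contains no pair with first component $l$, whence $C \setminus \{(l,n)\} \subseteq A \setminus (\{l\} \times \locations) = A \setminus K$. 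In either case $\assign(C,a) \subseteq \assign(A,a)$; combined with $\assign(C,a) \in \concretedomain$ this gives $\assign(C,a) \in \concretization\bigl(\assign(A,a)\bigr)$, and since $C$ was arbitrary the theorem follows.

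The delicate point --- the only step that is more than bookkeeping --- is the kill-set case: one has to see that deleting the single arc $(l,n)$ from $C$ already makes $C$ disjoint from the whole ``row'' $\{l\} \times \locations$ that is subtracted on the abstract side. This is exactly where the hypothesis $C \in \concretedomain$, i.e.\ that $C$ is a \emph{total function} and not merely a relation contained in $A$, is indispensable; the abstract kill set discards an entire row at once precisely because on the concrete side every location has exactly one successor.
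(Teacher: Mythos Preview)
Your proof is correct. It differs from the paper's in one structural respect worth noting.

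The paper factors the work through \reflemma{monotonicity of the assignment}: since $C \in \concretization(A)$ gives $C \subseteq A$ and $\concretization(C) \neq \emptyset$, that lemma yields $\assign(C,a) \subseteq \assign(A,a)$ in one stroke, and the theorem follows. All the case analysis on the kill set is thus pushed into the (rather long) proof of that monotonicity lemma, which is stated in full generality for arbitrary $A \subseteq B$ in $\abstractdomain$.

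You instead bypass the monotonicity lemma entirely and argue the inclusion $\assign(C,a) \subseteq \assign(A,a)$ directly, exploiting that the smaller element $C$ is \emph{concrete}. This lets you use the closed form from \reflemma{Assignment on the concrete domain}, so the case split on $\cardinality \eval(A,e)$ is short: either $K=\emptyset$ and $C \subseteq A$ suffices, or $\eval(A,e)=\{l\}$ and the functionality of $C$ kills the whole $l$-row. This is more elementary and self-contained, and it makes transparent exactly where the hypothesis $C \in \concretedomain$ is used --- the point you flag in your final paragraph. What you lose is reusability: the paper's monotonicity lemma is a standalone result about the abstract operator that could in principle be invoked elsewhere, whereas your argument is tailored to the concrete-versus-abstract situation of this theorem.
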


\begin{theorem}
\theoremsummary{Correctness of the filter}
Let \( A \in \abstractdomain \) and \( c \in \conditions \); then
\[
    \filter\bigl( \concretization(A), c \bigr)
    \subseteq
    \concretization\bigl( \filter(A, c) \bigr).
\]
\end{theorem}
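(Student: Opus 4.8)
The plan is to reduce the theorem to a single soundness lemma about the auxiliary filter of \refdefinition{Filter 2} and then to prove that lemma by analysing the backward propagation performed by the \( \target \) function. First I would unfold the goal: fix \( C \in \filter\bigl( \concretization(A), c \bigr) \), which by \refdefinition{Concrete filter semantics} means \( C \in \concretization(A) \) and \( C \models c \). Since \( C \in \concretization(A) \) already supplies \( C \in \concretedomain \), by \refdefinition{concretization function} it suffices to prove the single set inclusion \( C \subseteq \filter(A, c) \). The whole difficulty is thus concentrated in the abstract filter defined in \refdefinition{Filter 1}, \refdefinition{Filter 2} and \refdefinition{Filter 3}.

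The central step is the lemma: \emph{if \( C \in \concretization(A) \), \( M \subseteq \locations \), \( e \in \expressions \) and \( \eval(C, e) \subseteq M \), then \( C \subseteq \filter(A, M, e) \)}, proved by structural induction on \( e \) (\refdefinition{expressions}). The base case \( e = l \) is immediate, since \( \eval(C, l) = \{ l \} \subseteq M \) forces \( l \in M \) and hence \( \filter(A, M, l) = A \supseteq C \). For \( e = \indirection e' \), because \( \filter(A, M, \indirection e') = \bigintersection_{i \in \naturals} \filter(A, M, \indirection e', i) \), it suffices to show \( C \subseteq \filter(x, i) \) for every \( i \), where \( x = \langle A, M, \indirection e' \rangle \); this I would prove by induction on \( i \), and its engine is the auxiliary claim that \( \eval(C, \indirection e', i) \subseteq \target(x, i) \) for all \( i \), itself proved by induction on \( i \). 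The base case combines \( \eval(C, e) \subseteq M \) with the correctness of expression evaluation (\( \eval(C, e) \subseteq \eval(A, e) \)) to land in \( \eval(A, e) \intersection M = \target(x, 0) \); the inductive step uses the evident monotonicity of the extended \( \eval \) in its points-to argument, giving \( \eval(C, \indirection e', i+1) \subseteq \eval(A, \indirection e', i+1) \), together with the observation that for a concrete \( C \) one has \( \eval(C, \indirection e', i) = \post\bigl( C, \eval(C, \indirection e', i+1) \bigr) \) whenever the latter set is non-empty, which with \( C \subseteq A \) yields \( \eval(C, \indirection e', i+1) \subseteq \prev\bigl( A, \target(x, i) \bigr) \). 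Here \reflemma{Eval cardinality on the concrete domain} is invoked, via its consequence that every extended concrete evaluation \( \eval(C, \indirection e', i) \) is empty or a singleton, so these relations are well defined.

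With the claim in hand, the induction \( C \subseteq \filter(x, i) \) closes: at step \( i+1 \), if \( \cardinality \target(x, i+1) \neq 1 \) nothing is removed; if \( \target(x, i+1) = \{ t \} \), then by the claim \( \eval(C, \indirection e', i+1) \) is empty or \( \{ t \} \), and in the latter case \( \post(C, \{ t \}) = \eval(C, \indirection e', i) \subseteq \target(x, i) \), so \( C \) contains no arc \( (t, z) \) with \( z \notin \target(x, i) \), which is exactly the set subtracted from \( \filter(x, i) \); moreover once \( i+1 \) exceeds the nesting depth of \( \indirection e' \) the extended \( \eval \), hence \( \target(x, i+1) \), is empty, so the singleton case never arises there. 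The theorem then follows by a case split on \( c \) using \refdefinition{Value of conditions} and \refdefinition{Filter 3}. For \( c = (\equality, e, f) \), \( C \models c \) gives \( \eval(C, e) = \eval(C, f) = \{ l^\ast \} \) (a singleton by \reflemma{Eval cardinality on the concrete domain}), and correctness of evaluation places \( l^\ast \) in \( \eval(A,e) \intersection \eval(A,f) = I \), so both evaluations lie in \( I \) and the lemma yields \( C \subseteq \filter(A,I,e) \intersection \filter(A,I,f) = \filter(A,c) \). For \( c = (\inequality, e, f) \) with \( \cardinality I = 1 \), write \( \eval(C,e) = \{ l_1 \} \) and \( \eval(C,f) = \{ l_2 \} \) with \( l_1 \neq l_2 \): if both \( l_1 \notin E \) and \( l_2 \notin F \) then \( l_1, l_2 \in I \), forcing \( l_1 = l_2 \), a contradiction, so \( \eval(C,e) \subseteq E \) or \( \eval(C,f) \subseteq F \), whence \( C \subseteq \filter(A,E,e) \) or \( C \subseteq \filter(A,F,f) \) and therefore \( C \subseteq \filter(A,c) \); and if \( \cardinality I \neq 1 \) then \( \filter(A,c) = A \supseteq C \) trivially.

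I expect the main obstacle to be the \( \target \) claim and its interplay with the cardinality guard of \refdefinition{Filter 1}: one must see precisely why it is sound to delete \emph{all} outgoing arcs of a node \( t \) that leave \( \target(x,i) \), and this hinges on \( t \) being pinned down as the single concrete value of the matching sub-evaluation of \( \indirection e' \), which ultimately rests on \( C \) being concrete and on \reflemma{Eval cardinality on the concrete domain}. The handling of the extended \( \eval \) returning \( \emptyset \) beyond the expression's depth — which is exactly what makes the dangerous singleton case vacuous there — is the easy-to-miss but otherwise routine detail.
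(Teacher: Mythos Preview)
Your proposal is correct and follows essentially the same architecture as the paper's proof: a key soundness lemma for the auxiliary filter of \refdefinition{Filter 2} (the paper's \reflemma{Correctness of filter 2}), driven by the backward-propagation claim \( \eval(C, e, i) \subseteq \target(A, M, e, i) \) (the paper's \reflemma{Monotonicity of target}), then a case split on \( c \) using the target sets \( I, E, F \) (the paper's \reflemma{Equality target} and \reflemma{Inequality target}). The only noteworthy difference is organisational: the paper proves the auxiliary lemma in a more general form (\reflemma{Generalized correctness of filter 2}, for arbitrary \( A \subseteq B \) with \( \concretization(A) \neq \emptyset \)) and then specializes, whereas you work directly with a concrete \( C \) throughout and inline the target-set lemmas; your route is slightly leaner, the paper's is slightly more reusable, but the mathematical content is the same.
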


\subsubsection{Proofs}
We present some technical lemmas that will lead to the proof of the
correctness theorems.

\begin{lemma}
\summary{Monotonicity of post}
\label{lemma:monotonicity of post}
Let \( A, B \in \abstractdomain \) and \( l \in \locations \); then
\[
    A \subseteq B \implies \post(A, l) \subseteq \post(B, l)
\]
\end{lemma}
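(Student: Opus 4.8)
The plan is to unfold the definition of $\post$ (\refdefinition{post function}) and close the argument by a routine element-chasing step. Assume $A \subseteq B$. If $\post(A, l) = \emptyset$ the inclusion holds trivially, so I would pick an arbitrary $m \in \post(A, l)$ and show that $m \in \post(B, l)$.

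By \refdefinition{post function}, $m \in \post(A, l)$ means precisely that $(l, m) \in A$. Since in the abstract domain the ordering `$\subseteq$' is — per \refdefinition{Abstract and concrete domains} — ordinary set inclusion on $\partsof(\locations \times \locations)$, from $A \subseteq B$ we obtain $(l, m) \in B$. Applying the definition of $\post$ once more gives $m \in \post(B, l)$. As $m$ was arbitrary, $\post(A, l) \subseteq \post(B, l)$, which is the thesis.

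There is no real obstacle here: the statement is an immediate consequence of the fact that $\post(\cdot, l)$ is just the section at $l$ of a binary relation, and taking sections is monotone with respect to set inclusion. The only point worth a moment's care is to keep in mind that the abstract-domain ordering is set inclusion, so that no subtlety is hidden in the symbol `$\subseteq$'. The analogous monotonicity statement for $\prev$ would follow by the very same argument, merely exchanging the roles of the two components of the pairs.
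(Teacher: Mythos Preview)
Your proof is correct and follows essentially the same approach as the paper's own proof: pick an arbitrary $m \in \post(A, l)$, unfold the definition to get $(l, m) \in A$, use $A \subseteq B$ to obtain $(l, m) \in B$, and conclude $m \in \post(B, l)$. The paper presents this same element-chasing argument in its tabular proof format, and your additional remark about $\prev$ anticipates exactly what the paper does next via duality.
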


\begin{proof}
Let \( A, B \in \abstractdomain \) such that \( A \subseteq B \). Let \(
l, m \in \locations \).
\begin{prooftable}
TS  & \( m \in \post(B, l) \) \\
\hline
H0  & \( m \in \post(A, l) \) \\
H1  & \( A \subseteq B \) \\
H2  & \refdefinition{post function}, the post function. \\
\hline
D0  & \( (l, m) \in A \) & (H0, H2) \\
D1  & \( (l, m) \in B \) & (D0, H1) \\
\proofleaf
    & \( m \in \post(B, l) \) & (D1, H2) \\
\end{prooftable}
\end{proof}

\begin{lemma}
\summary{Monotonicity of the extended post function 1}
\label{lemma:monotonicity of the extended post 1} \\
Let \( A, B \in \abstractdomain \) and \( L \subseteq \locations \); then
\[
    A \subseteq B \implies \post(A, L) \subseteq \post(B, L).
\]
\end{lemma}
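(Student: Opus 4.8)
The plan is to reduce this statement to the pointwise monotonicity of the elementary \post\ operator (\reflemma{Monotonicity of post}), exploiting the fact that the extended \post\ is, by \refdefinition{Extended prev and post functions}, nothing but a union of the sets \( \post(A, l) \) as \( l \) ranges over the second argument. Since the only content is the commutation of a monotone operator with an indexed union, the argument is short and essentially mechanical.

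First I would fix \( A, B \in \abstractdomain \) with \( A \subseteq B \), and \( L \subseteq \locations \). If \( \post(A, L) = \emptyset \) the inclusion is trivial, so I may assume it is non-empty and pick an arbitrary \( m \in \post(A, L) \). By the definition of the extended post function, \( \post(A, L) = \bigunion \{ \post(A, l) \mid l \in L \} \), so there is some witnessing \( l \in L \) with \( m \in \post(A, l) \). Then, since \( A \subseteq B \), \reflemma{Monotonicity of post} yields \( \post(A, l) \subseteq \post(B, l) \), hence \( m \in \post(B, l) \). As the very same \( l \) still lies in \( L \), this gives \( m \in \bigunion \{ \post(B, l') \mid l' \in L \} = \post(B, L) \). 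Because \( m \) was arbitrary, \( \post(A, L) \subseteq \post(B, L) \), which is the thesis.

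I do not expect any genuine obstacle here: the only point requiring a little care is the bookkeeping of the index \( l \) — one must observe that it is the \emph{same} element of \( L \) that witnesses membership in both unions — and this is immediate. The corresponding statement for \prev\ would be proved in exactly the same way, invoking the monotonicity lemma for \prev\ in place of \reflemma{Monotonicity of post}.
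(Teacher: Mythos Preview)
Your proof is correct and follows essentially the same approach as the paper: pick an arbitrary \( m \in \post(A, L) \), unfold the extended \post\ to obtain a witness \( l \in L \) with \( m \in \post(A, l) \), apply \reflemma{monotonicity of post} to get \( m \in \post(B, l) \), and fold back into \( \post(B, L) \). The paper omits the explicit empty-set case and the remark about reusing the same \( l \), but the argument is otherwise step-for-step identical.
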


\begin{proof}
Let \( A, B \in \abstractdomain \) such that \( A \subseteq B \) and let
\( L \subseteq \locations \). Let \( m \in \locations \).
\begin{prooftable}
TS  & \( m \in \post(B, L) \) \\
\hline
H0  & \( m \in \post(A, L) \) \\
H1  & \( A \subseteq B \) \\
H2  & \refdefinition{extended prev and post functions},
      the extended post function. \\
H3  & \reflemma{monotonicity of post},
      monotonicity of the post function.  \\
\hline
D0  & \( \exists l \in L \suchthat m \in \post(A, l) \) & (H0, H2) \\
D1  & \( \exists l \in L \suchthat m \in \post(B, l) \) & (D0, H1, H3) \\
\proofleaf
    & \( m \in \post(B, L) \) & (D1, H2) \\
\end{prooftable}
\end{proof}

\begin{lemma}
\summary{Monotonicity of the extended post function 2}
\label{lemma:monotonicity of the extended post 2}
Let \( A \in \abstractdomain \) and \( L, M \subseteq \locations \); then
\[
    L \subseteq M \implies \post(A, L) \subseteq \post(A, M).
\]
\end{lemma}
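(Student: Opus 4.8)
The plan is to argue directly by chasing a single element through the definition of the extended \(\post\) function (\refdefinition{extended prev and post functions}), exactly in the style of the two preceding lemmas. Fix \(A \in \abstractdomain\) and \(L, M \subseteq \locations\) with \(L \subseteq M\); if \(\post(A, L) = \emptyset\) the inclusion is immediate, so assume otherwise and pick an arbitrary \(m \in \post(A, L)\). The goal will be to show \(m \in \post(A, M)\).

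First I would unfold the hypothesis: since \(\post(A, L) = \bigunion \bigl\{\, \post(A, l) \bigm| l \in L \,\bigr\}\), from \(m \in \post(A, L)\) there exists \(l \in L\) such that \(m \in \post(A, l)\). Next, because \(L \subseteq M\), that same \(l\) lies in \(M\). Finally, folding the definition of \(\post(A, M)\) back up, the witness \(l \in M\) together with \(m \in \post(A, l)\) gives \(m \in \bigunion \bigl\{\, \post(A, l') \bigm| l' \in M \,\bigr\} = \post(A, M)\), which is the thesis. Since \(m\) was arbitrary, \(\post(A, L) \subseteq \post(A, M)\).

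This is a routine one-step element chase, so there is no real obstacle; the only thing worth noting is that, unlike \reflemma{monotonicity of the extended post 1}, this variant does not need \reflemma{monotonicity of post} at all, because the relation \(A\) is held fixed and only the index set varies — monotonicity of the union operator over a larger index family does all the work. I would present it as a short \texttt{prooftable} with hypotheses \(m \in \post(A, L)\), \(L \subseteq M\), and \refdefinition{extended prev and post functions}, followed by the two deduction steps (extract the witness \(l\in L\); promote it to \(l \in M\)) and the closing \(\proofleaf\) line concluding \(m \in \post(A, M)\).
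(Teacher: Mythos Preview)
Your proposal is correct and essentially identical to the paper's own proof: the paper also picks an arbitrary \(m \in \post(A, L)\), extracts a witness \(l \in L\) with \(m \in \post(A, l)\) via the definition, promotes \(l\) to \(M\) using \(L \subseteq M\), and folds back to conclude \(m \in \post(A, M)\). Your observation that \reflemma{monotonicity of post} is not needed here is also exactly right and matches the paper, which invokes only \refdefinition{extended prev and post functions} and the hypothesis \(L \subseteq M\).
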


\begin{proof}
Let \( A \in \abstractdomain \) an let \( L \subseteq M \subseteq
\locations \). Let \( m \in \locations \).
\begin{prooftable}
TS  & \( m \in \post(A, M) \) \\
\hline
H0  & \( m \in \post(A, L) \) \\
H1  & \( L \subseteq M \) \\
H2  & \refdefinition{extended prev and post functions},
      the extended post function. \\
\hline
D0  & \( \exists l \in L \suchthat m \in \post(A, l) \) & (H0, H2) \\
D1  & \( \exists l \in M \suchthat m \in \post(A, l) \) & (D0, H1) \\
\proofleaf
    & \( m \in \post(A, M) \) & (D1, H2) \\
\end{prooftable}
\end{proof}

\begin{lemma}
\lemmasummary{Monotonicity of the eval function}
Let \( A, B \in \abstractdomain \) and \( e \in \expressions \); then
\[
    A \subseteq B \implies \eval(A, e) \subseteq \eval(B, e).
\]
\end{lemma}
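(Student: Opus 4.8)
The plan is to prove this by structural induction on \( e \in \expressions \), following the inductive definition of the expression language (\refdefinition{expressions}) and mirroring the recursive definition of \( \eval \) (\refdefinition{eval function}). Fix \( A, B \in \abstractdomain \) with \( A \subseteq B \); the goal is \( \eval(A, e) \subseteq \eval(B, e) \) for every \( e \).

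For the base case, suppose \( e = l \in \locations \). Then by \refdefinition{eval function} we have \( \eval(A, l) = \{ l \} = \eval(B, l) \), so the inclusion holds trivially (indeed with equality). For the inductive step, suppose \( e = \indirection e' \) with \( e' \in \expressions \), and assume as inductive hypothesis that \( \eval(A, e') \subseteq \eval(B, e') \). By \refdefinition{eval function}, \( \eval(A, \indirection e') = \post\bigl(A, \eval(A, e')\bigr) \) and \( \eval(B, \indirection e') = \post\bigl(B, \eval(B, e')\bigr) \). The plan is to bridge these two expressions through the intermediate term \( \post\bigl(A, \eval(B, e')\bigr) \) using the two monotonicity results already proved for the extended \( \post \) function: first apply \reflemma{monotonicity of the extended post 2} with the inductive hypothesis \( \eval(A, e') \subseteq \eval(B, e') \) to get \( \post\bigl(A, \eval(A, e')\bigr) \subseteq \post\bigl(A, \eval(B, e')\bigr) \); then apply \reflemma{monotonicity of the extended post 1} with \( A \subseteq B \) to get \( \post\bigl(A, \eval(B, e')\bigr) \subseteq \post\bigl(B, \eval(B, e')\bigr) \). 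Chaining the two inclusions yields \( \eval(A, \indirection e') \subseteq \eval(B, \indirection e') \), completing the induction.

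There is no real obstacle here: the argument is a routine structural induction whose only content is invoking the correct monotonicity lemma for \( \post \) in each of its two arguments, in the right order, and composing the resulting inclusions. The one point that requires a moment's care is recognizing that monotonicity in \emph{both} arguments of \( \post \) is needed — monotonicity in the relation argument (\reflemma{monotonicity of the extended post 1}) to exploit \( A \subseteq B \), and monotonicity in the location-set argument (\reflemma{monotonicity of the extended post 2}) to exploit the inductive hypothesis — and that neither alone suffices. I would present the proof in the paper's three-column deduction-table format, with one table for the base case and one for the inductive case, the latter listing the inductive hypothesis as \texttt{H}, the relevant definitions and the two \( \post \)-monotonicity lemmas as further hypotheses, and three deduction rows performing the two inclusions and their composition.
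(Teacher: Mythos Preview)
Your proposal is correct and follows essentially the same approach as the paper: structural induction on \( e \), with the inductive step bridging through the intermediate term \( \post\bigl(A, \eval(B, e')\bigr) \) via \reflemma{monotonicity of the extended post 2} (for the inductive hypothesis) and then \reflemma{monotonicity of the extended post 1} (for \( A \subseteq B \)). The paper's deduction table carries out exactly these steps in the same order.
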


\begin{proof}
Let \( A, B \in \abstractdomain \) such that \( A \subseteq B \).
We proceed inductively on the definition of the \( \eval \)
function.
For the base case let \( l \in \locations \).
\begin{prooftable}
TS  & \( \eval(A, l) \subseteq \eval(B, l) \) \\
\hline
H0  & \refdefinition{eval function}, the eval function. \\
\hline
D0  & \( \eval(A, l) = \{ l \} \) & (H0) \\
D1  & \( \eval(B, l) = \{ l \} \) & (H0) \\
\proofleaf
    & \( \eval(A, l) \subseteq \eval(B, l) \) & (D1, D0) \\
\end{prooftable}
For the inductive case let \( e \in \expressions \).
\begin{prooftable}
TS  & \( \eval(A, \indirection e) \subseteq \eval(B, \indirection e) \) \\
\hline
H0  & \( A \subseteq B \) \\
H1  & \refdefinition{eval function}, the eval function. \\
H2  & \reflemma{monotonicity of the extended post 1},
      monotonicity of the ext.\ post 1. \\
H3  & \reflemma{monotonicity of the extended post 2},
      monotonicity of the ext.\ post 2. \\
H4  & \( \eval(A, e) \subseteq \eval(B, e) \) & (ind.\ hyp.) \\
\hline
D0  & \( \eval(A, \indirection e) = \post\bigl(A, \eval(A, e) \bigr) \) & (H1) \\
D1  & \( \eval(B, \indirection e) = \post\bigl(B, \eval(B, e) \bigr) \) & (H1) \\
D2  & \( \post\bigl(A, \eval(A, e) \bigr) \subseteq
         \post\bigl(A, \eval(B, e) \bigr) \) & (H4, H3) \\
D3  & \( \post\bigl(A, \eval(B, e) \bigr) \subseteq
         \post\bigl(B, \eval(B, e) \bigr) \) & (H0, H2) \\
D4  & \( \post\bigl(A, \eval(A, e) \bigr) \subseteq
         \post\bigl(B, \eval(B, e) \bigr) \) & (D2, D3) \\
\proofleaf
    & \( \eval(A, \indirection e) \subseteq \eval(B, \indirection e) \) & (D4, D1, D0) \\
\end{prooftable}
\end{proof}

\begin{lemma}
\lemmasummary{Monotonicity of the extended eval 1}
Let \( A, B \in \abstractdomain \), \( e \in \expressions \) and \( i \in
\naturals \); then
\[
    A \subseteq B \implies \eval(A, e, i) \subseteq \eval(B, e, i).
\]
\end{lemma}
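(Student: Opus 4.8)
The plan is to prove the claim by induction on the natural number $i$, using the structure of $e$ (\refdefinition{expressions}) for a case split inside the inductive step. Throughout, fix $A, B \in \abstractdomain$ with $A \subseteq B$; the inductive statement to be maintained is that $\eval(A, e, i) \subseteq \eval(B, e, i)$ holds for \emph{every} $e \in \expressions$, the universal quantifier over expressions being essential for the step below.

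For the base case $i = 0$, the extended evaluation function reduces to the ordinary one: by \refdefinition{extended eval function} we have $\eval(A, e, 0) = \eval(A, e)$ and $\eval(B, e, 0) = \eval(B, e)$. Hence the inclusion follows at once from \reflemma{Monotonicity of the eval function}, which already gives $\eval(A, e) \subseteq \eval(B, e)$ under the hypothesis $A \subseteq B$.

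For the inductive step, assume the statement holds for $i$ and every expression, and consider index $i + 1$. If $e = l \in \locations$, then by \refdefinition{extended eval function} both $\eval(A, l, i+1)$ and $\eval(B, l, i+1)$ equal $\emptyset$, so the inclusion is immediate. If instead $e = \indirection e'$ for some $e' \in \expressions$, then $\eval(A, \indirection e', i+1) = \eval(A, e', i)$ and $\eval(B, \indirection e', i+1) = \eval(B, e', i)$; applying the inductive hypothesis to the subexpression $e'$ yields $\eval(A, e', i) \subseteq \eval(B, e', i)$, which is exactly the required inclusion.

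There is no substantive obstacle here: the only external ingredient is the already-established \reflemma{Monotonicity of the eval function}, needed for the base case, and the recursion in \refdefinition{extended eval function} always strictly decreases the index while leaving the abstract-domain argument untouched, so the induction on $i$ closes cleanly. The single point demanding care is, as noted, to formulate the inductive hypothesis with the quantifier ranging over all expressions, since the $\indirection e'$ case invokes it on $e'$ rather than on $e$.
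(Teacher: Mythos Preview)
Your proof is correct and follows essentially the same approach as the paper: both arguments unwind \refdefinition{extended eval function}, handling the $i=0$ case via \reflemma{Monotonicity of the eval function}, the location case via $\emptyset \subseteq \emptyset$, and the $\indirection e'$ case by appealing to the result at index $i$ on the subexpression. The only cosmetic difference is that the paper phrases this as structural induction on the three defining clauses of the extended eval, whereas you organize it as induction on $i$ with an explicit universal quantifier over $e$ in the hypothesis; the latter is arguably cleaner and makes the dependency on $e'$ in the inductive step more transparent.
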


\begin{proof}
Let \( A, B \in \abstractdomain \) such that \( A \subseteq B \).
We proceed inductively on the definition of the extended eval function.
\begin{prooftable}
H0  & \( A \subseteq B \) \\
H1  & \refdefinition{extended eval function},
      the extended eval function. \\
\end{prooftable}
For the first base case let \( l \in \locations \) and let
\( i \in \naturals \).
\begin{prooftable}
TS  & \( \eval(A, l, i + 1) \subseteq \eval(B, l, i + 1) \) \\
\hline
D0  & \( \eval(A, l, i + 1) = \emptyset \) & (H1) \\
D1  & \( \eval(B, l, i + 1) = \emptyset \) & (H1) \\
\proofleaf
    & \( \eval(A, l, i + 1) \subseteq \eval(B, l, i + 1) \) & (D0, D1) \\
\end{prooftable}
For the second base case let \( e \in expressions \).
\begin{prooftable}
TS  & \( \eval(A, e, 0) \subseteq \eval(B, e, 0) \) \\
\hline
H2  & \reflemma{Monotonicity of the eval function}, the monotonicity of eval. \\
\hline
D0  & \( \eval(A, e, 0) = \eval(A, e) \) & (H1) \\
D1  & \( \eval(B, e, 0) = \eval(B, e) \) & (H1) \\
D2  & \( \eval(A, e) \subseteq \eval(B, e) \) & (H0, H2) \\
\proofleaf
    & \( \eval(A, e, 0) \subseteq \eval(B, e, 0) \) & (D2, D1, D0) \\
\end{prooftable}
For the inductive step let \( e \in \expressions \) and let
\( i \in \naturals \).
\begin{prooftable}
TS  & \( \eval(A, \indirection e, i + 1) \subseteq \eval(B, \indirection e, i + 1) \) \\
\hline
H2  & \( \eval(A, e, i) \subseteq \eval(B, e, i) \)
    & (ind.\ hyp.) \\
\hline
D0  & \( \eval(A, \indirection e, i + 1) = \eval(A, e, i) \) & (H1) \\
D1  & \( \eval(B, \indirection e, i + 1) = \eval(B, e, i) \) & (H1) \\
D2  & \( \eval(A, e, i) \subseteq \eval(B, e, i) \) & (H2) \\
\proofleaf
    & \( \eval(A, \indirection e, i + 1) \subseteq \eval(B, \indirection e, i + 1) \)
    & (D2, D1, D0) \\
\end{prooftable}
\end{proof}

\begin{lemma}
\lemmasummary{Eval cardinality on the abstract domain}
Let \( A \in \abstractdomain \) and \( e \in \expressions \); then
\[
  \concretization(A) \neq \emptyset \implies \cardinality \eval(A, e) > 0.
\]
\end{lemma}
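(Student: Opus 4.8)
The plan is to obtain this as an immediate corollary of the two facts already available about the interaction between the concrete and the abstract domain, rather than by a direct structural induction on \( e \). A direct induction would in fact be awkward: from the inductive hypothesis \( \eval(A, e) \neq \emptyset \) one cannot conclude \( \post\bigl(A, \eval(A, e)\bigr) \neq \emptyset \), since a location in \( \eval(A, e) \) need not carry any outgoing arc in \( A \). The hypothesis \( \concretization(A) \neq \emptyset \) is precisely what lets us bypass this: it furnishes a concrete witness whose evaluation is known to be a singleton and whose arcs are contained in \( A \).

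Concretely, I would first use \( \concretization(A) \neq \emptyset \) to fix some \( C \in \concretization(A) \); by \refdefinition{concretization function} this yields \( C \in \concretedomain \) together with \( C \subseteq A \). Next, by \reflemma{Eval cardinality on the concrete domain} applied to \( C \) and \( e \), we have \( \cardinality \eval(C, e) = 1 \), and in particular \( \eval(C, e) \neq \emptyset \).

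Finally, from \( C \subseteq A \) and \reflemma{Monotonicity of the eval function} (instantiated with the lemma's two relations as \( C \) and \( A \)) we get \( \eval(C, e) \subseteq \eval(A, e) \). Combining the last two steps, \( \eval(A, e) \) contains the nonempty set \( \eval(C, e) \), hence \( \cardinality \eval(A, e) \geq \cardinality \eval(C, e) = 1 > 0 \), which is the thesis. There is no genuinely hard step here; the only point worth emphasizing in the write-up is the decision to route the argument through an element of \( \concretization(A) \), which is exactly the role played by the hypothesis.
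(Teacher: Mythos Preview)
Your proposal is correct and follows essentially the same approach as the paper: pick a concrete witness \(C \in \concretization(A)\), use the concretization definition to get \(C \subseteq A\), apply eval cardinality on the concrete domain to obtain \(\cardinality \eval(C,e) = 1\), and then monotonicity of eval to conclude \(\eval(C,e) \subseteq \eval(A,e)\) and hence \(\cardinality \eval(A,e) \geq 1\). Your added remark on why a direct induction on \(e\) would fail is a nice touch, but the core argument is the same as the paper's.
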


\begin{proof}
Let \( A \in \abstractdomain \) such that \( \concretization(A) \neq
\emptyset \). Let \( e \in \expressions \).
\begin{prooftable}
TS  & \( \cardinality \eval(A, e) > 0 \) \\
\hline
H0  & \( \concretization(A) \neq \emptyset \) \\
H1  & \refdefinition{concretization function},
      the concretization function. \\
H2  & \reflemma{Eval cardinality on the concrete domain},
      eval cardinality on the concrete domain. \\
H3  & \reflemma{Monotonicity of the eval function},
      monotonicity of the eval function. \\
\hline
D0  & \( \exists C \in \concretedomain \suchthat
          C \in \concretization(A) \) & (H0) \\
D1  & \( \exists C \in \concretedomain \suchthat
          C \subseteq A \) & (D0, H1) \\
D2  & \( \exists C \in \concretedomain \suchthat
         \eval(C, e) \subseteq \eval(A, e) \) & (D1, H3) \\
D3  & \( \exists C \in \concretedomain \suchthat
         \cardinality \eval(C, e) \leq \cardinality \eval(A, e) \) & (D2) \\
D4  & \( 1 \leq \cardinality \eval(A, e) \) & (D3, H2) \\
\proofleaf
    & \( \cardinality \eval(A, e) > 0 \) & (D4) \\
\end{prooftable}
\end{proof}

\begin{lemma}
\lemmasummary{Extended eval cardinality on the abstract domain}
Let \( A, B \in \abstractdomain \), \( e \in \expressions \) and \( i \in
\naturals \); then
\[
    \bigl(
      \concretization(A) \neq \emptyset \land
      A \subseteq B \land
      \cardinality \eval(B, e, i) > 0
    \bigr)
    \implies
      \cardinality \eval(A, e, i) > 0.
\]
\end{lemma}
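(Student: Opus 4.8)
The plan is to argue by induction on the definition of the extended \( \eval \) function (\refdefinition{extended eval function}), following the same case analysis used in the proof of \reflemma{Monotonicity of the extended eval 1}. That definition splits into two base clauses --- \( \eval(A, l, i+1) = \emptyset \) for \( l \in \locations \), and \( \eval(A, e, 0) = \eval(A, e) \) --- together with the recursive clause \( \eval(A, \indirection e, i+1) = \eval(A, e, i) \); so the proof will consist of a short common preamble recording this definition and \reflemma{Eval cardinality on the abstract domain}, followed by one proof-table per clause.

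First I would dispose of the clause \( \eval(A, l, i+1) \): here \( \eval(B, l, i+1) = \emptyset \) by definition, hence \( \cardinality \eval(B, l, i+1) = 0 \), which contradicts the hypothesis \( \cardinality \eval(B, e, i) > 0 \); the case is therefore vacuous. For the base clause \( \eval(A, e, 0) \) I would use that \( \eval(A, e, 0) = \eval(A, e) \) and that, since \( \concretization(A) \neq \emptyset \), \reflemma{Eval cardinality on the abstract domain} gives \( \cardinality \eval(A, e) > 0 \) directly --- note that neither \( A \subseteq B \) nor the hypothesis on \( B \) is used here.

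For the inductive step \( \eval(A, \indirection e, i+1) \) I would observe that \( \eval(B, \indirection e, i+1) = \eval(B, e, i) \), so the hypothesis \( \cardinality \eval(B, \indirection e, i+1) > 0 \) becomes \( \cardinality \eval(B, e, i) > 0 \); then the inductive hypothesis, applied with the same \( A \) and \( B \) (still \( \concretization(A) \neq \emptyset \) and \( A \subseteq B \)) and with the subexpression \( e \) and the natural \( i \), yields \( \cardinality \eval(A, e, i) > 0 \), and since \( \eval(A, \indirection e, i+1) = \eval(A, e, i) \) we conclude \( \cardinality \eval(A, \indirection e, i+1) > 0 \).

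I do not expect any real obstacle. The only point needing a moment's attention is recognising that the first base clause cannot actually arise under the stated hypotheses, so that the seemingly unused hypothesis \( A \subseteq B \) is merely threaded through to the recursive invocation, and that the base clause \( \eval(A, e, 0) \) is discharged entirely by the earlier \reflemma{Eval cardinality on the abstract domain}, which requires only \( \concretization(A) \neq \emptyset \).
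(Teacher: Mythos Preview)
Your proposal is correct and follows essentially the same approach as the paper: both proceed by induction on the definition of the extended \( \eval \) function, handling the vacuous case \( e = l \in \locations \) via \( \eval(B, l, i+1) = \emptyset \), the base case \( i = 0 \) via \reflemma{Eval cardinality on the abstract domain}, and the inductive step by unfolding \( \eval(\cdot, \indirection e, i+1) = \eval(\cdot, e, i) \) on both \( A \) and \( B \). Your remark that \( A \subseteq B \) is never actually used in any deduction step and is merely threaded through is also accurate --- the paper lists it as a hypothesis but does not invoke it.
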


\begin{proof}
Let \( A, B \in \abstractdomain \), let \( e \in \expressions \) and let
\( i \in \naturals \).
\begin{prooftable}
TS  & \( \cardinality \eval(B, e, i) > 0 \implies
         \cardinality \eval(A, e, i) > 0 \) \\
\hline
H0  & \( A \subseteq B \) \\
H1  & \( \concretization(A) \neq \emptyset \) \\
H2  & \refdefinition{extended eval function},
      the extended eval function. \\
\end{prooftable}
We proceed by induction on \( i \) and on \( e \)
(\refdefinition{expressions}).
For the base case let \( i = 0 \).
\begin{prooftable}
TS  & \( \cardinality \eval(B, e, 0) > 0 \implies
         \cardinality \eval(A, e, 0) > 0 \) \\
\hline
H3  & \reflemma{Eval cardinality on the abstract domain},
      eval cardinality on the abstract domain. \\
\hline
D0  & \( \eval(A, e, 0) = \eval(A, e) \) & (H2) \\
D1  & \( \cardinality \eval(A, e) > 0 \) & (H1, H3) \\
\proofleaf
    & \( \cardinality \eval(B, e, 0) > 0 \implies
         \cardinality \eval(A, e, 0) > 0 \) & (D1, D0) \\
\end{prooftable}
Let \( i > 0 \).
For the second base case let \( e = l \in \locations \).
\begin{prooftable}
TS  & \( \cardinality \eval(B, l, i + 1) > 0 \implies
         \cardinality \eval(A, l, i + 1) > 0 \) \\
\hline
D0  & \( \eval(B, l, i + 1) = \emptyset \) & (H2) \\
D1  & \( \cardinality \eval(B, l, i + 1) = 0 \) & (D0) \\
\proofleaf
    & \( \cardinality \eval(B, l, i + 1) > 0 \implies
         \cardinality \eval(A, l, i + 1) > 0 \) \\
\end{prooftable}
For the inductive case let \( e = \indirection f \) where \( f \in \expressions \).
\begin{prooftable}
TS  & \( \cardinality \eval(B, \indirection f, i + 1) > 0 \implies
         \cardinality \eval(A, \indirection f, i + 1) > 0 \) \\
\hline
H3  & \( \cardinality \eval(B, f, i) > 0 \implies
         \cardinality \eval(A, f, i) > 0 \) 
    & (ind.\ hyp.) \\
\hline
D0  & \( \eval(B, \indirection f, i + 1) = \eval(B, f, i) \) & (H2) \\
D1  & \( \eval(A, \indirection f, i + 1) = \eval(A, f, i) \) & (H2) \\
\proofleaf
    & \( \cardinality \eval(B, \indirection f, i + 1) > 0 \implies
         \cardinality \eval(A, \indirection f, i + 1) > 0 \)
    & (H3, D0, D1) \\
\end{prooftable}
\end{proof}

\begin{proof}
\summary{Correctness of the expression evaluation, \reftheorem{Correctness
of expression evaluation}}
Let \( A \in \abstractdomain \) and let \( e \in \expressions \).
We distinguish two cases.
First case: \( \concretization(A) = \emptyset \) then the thesis is
trivially verified.
For the second case, \( \concretization(A) \neq \emptyset \), let
\( C \in \concretization(A) \). Then we have
\begin{prooftable}
TS  & \( \eval(C, e) \subseteq \eval(A, e) \) \\
\hline
H0  & \( C \in \concretization(A) \) \\
H1  & \refdefinition{concretization function},
      the concretization function. \\
H2  & \reflemma{Monotonicity of the eval function},
      monotonicity of the eval function. \\
\hline
D0  & \( C \subseteq A \) & (H0, H1) \\
\proofleaf
    & \( \eval(C, e) \subseteq \eval(A, e) \)
    & (D0, H2) \\
\end{prooftable}
\end{proof}

\begin{lemma}
\lemmasummary{Effects of the assignment}
Let \( A \in \abstractdomain \), \( l \in \locations \) and \( e, f \in
\expressions \); for convenience of notation let \( a \in \assignments \)
be such that \( a = (e, f) \) and \( E = \eval(A, e) \).
Then
\[
    \post\bigl( \assign(A, a), l \bigr) \defeq
    \begin{cases}
      \post(A, l), & \text{if } l \not \in E; \\
      \eval(A, f), & \text{if } E = \{ l \}; \\
      \post(A, l) \union \eval(A, f),
        & \text{if } l \in E \land \cardinality E > 1.
    \end{cases}
\]
\end{lemma}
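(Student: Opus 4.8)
The plan is to prove the identity by straightforward unfolding of the definitions of \( \assign \) (\refdefinition{Assignment evaluation}) and of \( \post \) (\refdefinition{post function}), followed by a case analysis on the position of \( l \) relative to \( E \) and on \( \cardinality E \). Write \( F \defeq \eval(A, f) \) and let \( K \) be the kill set of the assignment \( a = (e, f) \), so that by definition \( \assign(A, a) = (E \times F) \union (A \setminus K) \), where \( E \times F \), \( E \times \locations \) and \( K \) are all subsets of \( \locations \times \locations \), hence elements of \( \abstractdomain \). Since \( \post(\cdot, l) \) merely selects the \( l \)-row of a relation, it commutes with \( \union \) and with set difference, so I would first record the decomposition
\[
  \post\bigl(\assign(A, a), l\bigr)
  = \post(E \times F, l) \union \bigl( \post(A, l) \setminus \post(K, l) \bigr),
\]
together with the elementary facts \( \post(E \times F, l) = F \) if \( l \in E \) and \( \emptyset \) otherwise, \( \post(E \times \locations, l) = \locations \) if \( l \in E \) and \( \emptyset \) otherwise, and \( \post(\emptyset, l) = \emptyset \).

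Next I would split into the three branches of the statement, which are pairwise disjoint and jointly exhaustive, since \( \cardinality E = 1 \) amounts to \( E = \{ l \} \) exactly when \( l \in E \). In the branch \( l \not\in E \) the term \( \post(E \times F, l) \) vanishes, and whichever of its two possible values the kill set takes, its \( l \)-row \( \post(K, l) \) is empty because \( l \not\in E \); hence the right-hand side reduces to \( \post(A, l) \). In the branch \( E = \{ l \} \) we have \( \cardinality E = 1 \), so \( K = E \times \locations = \{ l \} \times \locations \) and therefore \( \post(K, l) = \locations \), which deletes \( \post(A, l) \) entirely; since \( l \in E \) the surviving term is \( \post(E \times F, l) = F = \eval(A, f) \). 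In the branch \( l \in E \) with \( \cardinality E > 1 \) we have \( \cardinality E \neq 1 \), so \( K = \emptyset \), \( \post(K, l) = \emptyset \), and the right-hand side becomes \( \post(A, l) \union \post(E \times F, l) = \post(A, l) \union \eval(A, f) \).

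I do not expect any genuine difficulty here: every step is a direct consequence of the definitions. The only place that calls for a word of explanation is the first branch, where one must observe that the kill set, even in its non-trivial form \( E \times \locations \), contributes nothing to the \( l \)-row precisely because the hypothesis \( l \not\in E \) excludes \( l \) from the first component; this is exactly why the value of \( \post \) at such an \( l \) is left untouched by the assignment. Unlike several earlier lemmas in this section, no non-emptiness hypothesis on \( \concretization(A) \) is required, so the three-way split is carried out unconditionally over arbitrary \( A \in \abstractdomain \).
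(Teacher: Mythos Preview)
Your proposal is correct. Both your argument and the paper's proceed by the same three-way case split on \( l \) relative to \( E \), but you organise the computation differently: you first record the algebraic identity \( \post\bigl(\assign(A,a),l\bigr)=\post(E\times F,l)\cup\bigl(\post(A,l)\setminus\post(K,l)\bigr) \), relying on the fact that taking the \( l \)-row commutes with union and set difference, and then each case becomes a one-line specialisation. The paper instead treats each case by explicit element-chasing, proving the two inclusions separately and, for the cases \( l\in E \), factoring out the common intermediate result \( \eval(A,f)\subseteq\post\bigl(\assign(A,a),l\bigr) \). Your decomposition is the cleaner route and makes the case \( l\notin E \) particularly transparent (the kill set contributes nothing regardless of its form, exactly as you note); the paper's approach is more verbose but entirely self-contained, never appealing to any commutation property of \( \post \).
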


\begin{proof}
Let \( A \in \abstractdomain \), let \( (e, f) = a \in \assignments \) and
let \( l \in \locations \).
We proceed case by case.
These are our initial hypotheses.
\begin{prooftable}
H0  & \refdefinition{Assignment evaluation}, assignment definition. \\
H1  & \refdefinition{post function}, the post function. \\
\end{prooftable}
We consider separately the three cases of the lemma
\begin{gather*}
  l \not \in \eval(A, e);
    \tag{C1} \\
  \{ l \} = \eval(A, e);
    \tag{C2} \\
  l \in \eval(A, e) \land \cardinality \eval(A, e) > 1.
    \tag{C3}
\end{gather*}
First case.
\begin{prooftable}
TS  & \( \post\bigl( \assign(A, a), l \bigr) = \post(A, l) \) \\
\hline
H2  & \( l \not \in \eval(A, e) \) &  (C1) \\
\end{prooftable}
To prove TS we prove the two inclusions.
\begin{gather*}
  \post(A, l) \subseteq \post\bigl( \assign(A, a), l \bigr);
    \tag{C1.1} \\
  \post\bigl( \assign(A, a), l \bigr) \subseteq \post(A, l).
    \tag{C1.2}
\end{gather*}
Let \( m \in \locations \). For the first sub-case we have
\begin{prooftable}
TS  & \( m \in \post\bigl( \assign(A, a), l \bigr) \) \\
\hline
H3  & \( m \in \post(A, l)  \) & (C1.1) \\
\hline
D0  & \( (l, m) \in A \) & (H3, H1) \\
D1  & \( (l, m) \not \in \eval(A, e) \times \locations \) & (H2) \\
D3  & \( (l, m) \in \assign(A, a) \) & (D0, D1, H0) \\
\proofleaf
    & \( m \in \post\bigl( \assign(A, a), l \bigr) \) & (D3, H1) \\
\end{prooftable}
For the second sub-case:
\begin{prooftable}
TS  & \( m \in \post(A, l)  \) \\
\hline
H3  & \( m \in \post\bigl( \assign(A, a), l \bigr) \) & (C1.2) \\
\hline
D0  & \( (l, m) \in \assign(A, a)  \) & (H3, H1) \\
D1  & \( (l, m) \not \in \eval(A, e) \times \eval(A, f) \) & (H2) \\
D2  & \( (l, m) \in A \) & (D0, D1, H0) \\
\proofleaf
    & \( m \in \post(A, l) \) & (D2, H1) \\
\end{prooftable}
For the second and third cases we have to prove an intermediate result.
\[
  l \in \eval(A, e)
  \implies
  \eval(A, f) \subseteq \post\bigl( \assign(A, a), l \bigr).
\]
Let \( m \in \locations \).
\begin{prooftable}
TS  & \( m \in \post\bigl( \assign(A, a), l \bigr) \) \\
\hline
H2  & \( l \in \eval(A, e) \) & (C2) \\
H3  & \( m \in \eval(A, f) \) & (C2) \\
\hline
D0  & \( \eval(A, e) \times \eval(A, f) \subseteq \assign(A, a) \) & (H0) \\
D1  & \( (l, m) \in \eval(A, e) \times \eval(A, f) \) & (H2, H3) \\
D2  & \( (l, m) \in \assign(A, a) \) & (D1, D0) \\
\proofleaf
    & \( m \in \post\bigl( \assign(A, a), l \bigr) \) & (D2, H1) \\
\end{prooftable}
Note that for both the second and third case we assume that \( l \in
\eval(A, e) \) thus \( \cardinality \eval(A, e) > 0 \) so we will check
only the cases \( \cardinality \eval(A, e) = 1 \) (2nd case) and \(
\cardinality \eval(A, e) > 1 \) (3rd case).
Now the second case.
\begin{prooftable}
TS  & \( \post\bigl( \assign(A, a), l \bigr) = \eval(A, f) \) \\
\hline
H2  & \( l \in \eval(A, e) \) \\
H3  & \( \cardinality \eval(A, e) = 1 \) \\
\hline
D0  & \( \assign(A, a) = \{ l \} \times \eval(A, f)
         \union \Bigl( A \setminus
          \bigl( \{ l \} \times \locations \bigr) \Bigr) \)
    & (H3, H2, H0) \\
\end{prooftable}
Also in this case, to prove TS we prove the two inclusions
\begin{gather*}
  \post\bigl( \assign(A, a), l \bigr) \subseteq \eval(A, f);
    \tag{C2.1} \\
  \post\bigl( \assign(A, a), l \bigr) \supseteq \eval(A, f).
    \tag{C2.2}
\end{gather*}
One inclusion (C2.2) comes by modus ponens by applying the hypothesis H2 to
the previous intermediate result.
For the other inclusion (C2.1), let \( m \in \locations \); then we have
\begin{prooftable}
TS  & \( m \in \eval(A, f) \) \\
\hline
H4  & \( m \in \post\bigl( \assign(A, a), l \bigr) \) & (C2.1) \\
\hline
D1  & \( (l, m) \in \assign(A, a) \) & (H4, H1) \\
D2  & \( (l, m) \not \in A \setminus
      \bigl( \{ l \} \times \locations \bigr) \) \\
D3  & \( (l, m) \in \{ l \} \times \eval(A, f) \)
    & (D2, D1, D0) \\
\proofleaf
    & \( m \in \eval(A, f) \) & (D3) \\
\end{prooftable}
Now the third case.
\begin{prooftable}
TS  & \( \post\bigl( \assign(A, a), l \bigr) = \post(A, l) \union \eval(A, f) \) \\
\hline
H2  & \( l \in \eval(A, e) \) & (C3) \\
H3  & \( \cardinality \eval(A, e) > 1 \) & (C3) \\
\hline
D0  & \( \assign(A, a) = \eval(A, e) \times \eval(A, f) \union A \)
    & (H3, H2, H0) \\
\end{prooftable}
Again, we prove separately the two inclusions.
\begin{gather*}
  \post\bigl( \assign(A, a), l \bigr) \subseteq \post(A, l) \union \eval(A, f);
    \tag{C3.1} \\
  \post\bigl( \assign(A, a), l \bigr) \supseteq \post(A, l) \union \eval(A, f).
    \tag{C3.2}
\end{gather*}
For the inclusion (C3.2),
applying the modus ponens to the hypothesis H2 and to the above intermediate
result we have that
\[ \eval(A, f) \subseteq \post\bigl( \assign(A, a), l \bigr). \]
For the other part
\begin{prooftable}
TS  & \( \post\bigl( \assign(A, a), l \bigr) \supseteq \post(A, l) \) \\
\hline
H4  & \reflemma{Monotonicity of the eval function},
    monotonicity of eval.  \\
\hline
D1  & \( \assign(A, a) \supseteq A \) & (D0) \\
\proofleaf
    & \( \post\bigl( \assign(A, a), l \bigr) \supseteq \post(A, l) \)
    & (D1, H4) \\
\end{prooftable}
For the remaining inclusion (C3.1),
let \( m \in \locations \) so that
\( m \in \post\bigl( \assign(A, a), l \bigr) \).
We need to show that \( m \in \eval(A, f) \union \post(A, l) \) too:
to do this we show that
\( m \not \in \eval(A, f) \implies m \in \post(A, l) \).
\begin{prooftable}
TS  & \( m \in \post(A, l) \) \\
\hline
H4  & \( m \not \in \eval(A, f) \) & (C3.1) \\
H5  & \( m \in \post\bigl( \assign(A, a), l \bigr) \) & (C3.1) \\
\hline
D1  & \( (l, m) \not \in \eval(A, e) \times \eval(A, f) \) & (H4) \\
D2  & \( (l, m) \in \assign(A, a) \) & (H5, H1) \\
D3  & \( (l, m) \in A \) & (D2, D1, D0) \\
\proofleaf
    & \( m \in \post(A, l) \) & (D3, H1) \\
\end{prooftable}
\end{proof}

\begin{lemma}
\summary{Monotonicity of the assignment}
\label{lemma:monotonicity of the assignment}
Let \( A, B \in \abstractdomain \) and \( a \in \assignments \); then
\[
    \bigl( A \subseteq B \land \concretization(A) \neq \emptyset \bigr)
    \implies \assign(A, a) \subseteq \assign(B, a).
\]
\end{lemma}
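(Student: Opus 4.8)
The plan is to unfold the definition of $\assign$ (\refdefinition{Assignment evaluation}) on both $A$ and $B$ and to compare the two summands --- the ``generated'' part $\eval(\cdot,e)\times\eval(\cdot,f)$ and the ``kept'' part $(\cdot)\setminus K$ --- separately, writing $a=(e,f)$ and denoting by $K_A$ and $K_B$ the kill sets computed from $A$ and $B$ respectively. For the generated part, two applications of \reflemma{Monotonicity of the eval function} give $\eval(A,e)\subseteq\eval(B,e)$ and $\eval(A,f)\subseteq\eval(B,f)$, whence $\eval(A,e)\times\eval(A,f)\subseteq\eval(B,e)\times\eval(B,f)\subseteq\assign(B,a)$. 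So the whole statement reduces to showing that the kept part $A\setminus K_A$ is contained in $\assign(B,a)$, for which it suffices to prove $A\setminus K_A\subseteq B\setminus K_B$.

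Here the hypothesis $\concretization(A)\neq\emptyset$ is used: by \reflemma{Eval cardinality on the abstract domain} it gives $\cardinality \eval(A,e)>0$. I would then case-split on $\cardinality \eval(A,e)$. If $\cardinality \eval(A,e)>1$, then monotonicity of eval forces $\cardinality \eval(B,e)>1$ as well, so $K_A=K_B=\emptyset$ and $A\setminus K_A=A\subseteq B=B\setminus K_B$. If $\cardinality \eval(A,e)=1$, say $\eval(A,e)=\{l\}$, then $l\in\eval(B,e)$ by monotonicity, and I would split once more: when $\cardinality \eval(B,e)=1$ we have $\eval(B,e)=\{l\}$, hence $K_A=\{l\}\times\locations=K_B$, and $A\subseteq B$ yields $A\setminus K_A\subseteq B\setminus K_B$; when $\cardinality \eval(B,e)>1$ we have $K_B=\emptyset$, so $A\setminus K_A\subseteq A\subseteq B=B\setminus K_B$. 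In every case the kept part of $\assign(A,a)$ lands inside $\assign(B,a)$, which together with the inclusion for the generated part closes the proof. This would be written in the usual tabular deduction format, citing \refdefinition{Assignment evaluation}, \reflemma{Monotonicity of the eval function} and \reflemma{Eval cardinality on the abstract domain}.

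The delicate point --- rather than any hard computation --- is to understand why $\concretization(A)\neq\emptyset$ is indispensable. The kill set does not behave monotonically: $A$ may kill strictly more than $B$, precisely when $\eval(A,e)$ is a singleton while $\eval(B,e)$ is not, and this is harmless because killing more only shrinks the left-hand side. The genuinely dangerous configuration is the opposite one, in which $A$ kills less than $B$; by the case analysis above this can occur only when $\eval(A,e)=\emptyset$ and $\cardinality \eval(B,e)=1$, and then $\assign(A,a)=A$ while $\assign(B,a)$ deletes from $B\supseteq A$ every arc out of that single location, including arcs of $A$ whose target is not recreated by the generated part --- so the inclusion would fail. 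The non-emptiness of $\concretization(A)$ rules out exactly this configuration, and once it is in force the remainder is routine set-theoretic bookkeeping.
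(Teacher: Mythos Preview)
Your argument is correct, and it is genuinely different from the paper's proof. The paper does not work with the global decomposition $\assign(\cdot,a)=\eval(\cdot,e)\times\eval(\cdot,f)\cup\bigl((\cdot)\setminus K\bigr)$; instead it fixes an arbitrary pair $(l,m)\in\assign(A,a)$ and invokes \reflemma{Effects of the assignment}, which characterises $\post\bigl(\assign(A,a),l\bigr)$ case by case according to whether $l\in\eval(A,e)$ and to the cardinality of $\eval(A,e)$. The paper's case split is therefore on the position of $l$ relative to $\eval(A,e)$ and $\eval(B,e)$ (four subcases in all), and the non-emptiness hypothesis enters in the subcase $l\notin\eval(A,e)$, $l\in\eval(B,e)$ to force $\cardinality\eval(B,e)>1$ and hence $B\subseteq\assign(B,a)$. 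Your route bypasses \reflemma{Effects of the assignment} entirely by comparing kill sets directly, and your case split is on the cardinalities of $\eval(A,e)$ and $\eval(B,e)$ rather than on membership of a fixed $l$; this is shorter and more elementary, at the cost of not reusing the structural lemma that the paper has already established. Both arguments locate the role of $\concretization(A)\neq\emptyset$ in the same place --- ruling out $\eval(A,e)=\emptyset$ while $\eval(B,e)$ is a singleton --- and your closing paragraph explains this point more explicitly than the paper does.
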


\begin{proof}
Let \( A, B \in \abstractdomain \) be such that \( A \subseteq B \) and
\( \concretization(A) \neq \emptyset \). Let
\( (e, f) = a \in \assignments \).
We have to prove that \( \assign(A, a) \subseteq \assign(B, a) \).
Then let \( l, m \in \locations \) be such that \( (l, m) \in \assign(A, a) \).
To prove this lemma we have to prove that \( (l, m) \in \assign(B, a) \)
too.
Thus we have
\begin{prooftable}
TS  & \( (l, m) \in \assign(B, a) \) \\
\hline
H0  & \( A \subseteq B \) \\
H1  & \( \concretization(A) \neq \emptyset \) \\
H2  & \( (l, m) \in \assign(A, a) \) \\
H3  & \reflemma{Effects of the assignment}, effects of the assignment. \\
H4  & \reflemma{Monotonicity of the eval function}, monotonicity of eval. \\
H5  & \refdefinition{post function}, the post function. \\
H6  & \reflemma{monotonicity of the extended post 1},
      monotonicity of the extended post 1. \\
\hline
D0  & \( \eval(A, e) \subseteq \eval(B, e) \) & (H0, H4) \\
D1  & \( \eval(A, f) \subseteq \eval(B, f) \) & (H0, H4) \\
D2  & \( \post(A, l) \subseteq \post(B, l) \) & (H0, H6) \\
\end{prooftable}
We distinguish two cases.
\begin{gather*}
  l \in \eval(A, e);
    \tag{C1} \\
  l \not \in \eval(A, e).
    \tag{C2}
\end{gather*}
For the first case.
\begin{prooftable}
H7  & \( l \in \eval(A, e) \) & (C1) \\
\hline
D3  & \( l \in \eval(B, e) \) & (H7, D0) \\
D4  & \( m \in \post\bigl( \assign(A, a), l \bigr)\) & (H2, H5) \\
D5  & \( \eval(B, f) \subseteq \post\bigl( \assign(B, a), l \bigr)\)
    & (D3, H3) \\
\end{prooftable}
Note that from H7 follows that
\(
  \cardinality \eval(A, e) \geq 1
\)
and now we distinguish the two sub-cases
\begin{gather*}
  \cardinality \eval(A, e) = 1;
    \tag{C1.1} \\
  \cardinality \eval(A, e) > 1;
    \tag{C1.2}
\end{gather*}
which cover all the possibilities.  Now the first sub-case.
\begin{prooftable}
H8  & \( \cardinality \eval(A, e) = 1 \) & (C1.1) \\
\hline
D6  & \( \post\bigl( \assign(A, a), l \bigr) = \eval(A, f) \) & (H7, H8, H3) \\
D7  & \( m \in \eval(A, f) \) & (D4, D6) \\
D8  & \( m \in \eval(B, f) \) & (D7, D1) \\
D9  & \( m \in \post\bigl( \assign(B, a), l \bigr) \) & (D8, D5) \\
\proofleaf & \( (l, m) \in \assign(B, a) \) & (D9, H5) \\
\end{prooftable}
For the other sub-case
\begin{prooftable}
H8  & \( \cardinality \eval(A, e) > 1 \) & (C1.2) \\
\hline
D6  & \( \post\bigl( \assign(A, a), l \bigr) = \eval(A, f) \cup \post(A, l) \)
    & (H7, H8, H3) \\
D7  & \( \cardinality \eval(B, a) > 1 \) & (H8, D0) \\
D8  & \( \post\bigl( \assign(B, a), l \bigr) = \eval(B, f) \cup \post(B, l) \)
    & (D3, D7, H3) \\
D9  & \( \post\bigl( \assign(A, a), l \bigr) \subseteq
         \post\bigl( \assign(B, a), l \bigr) \)
    & (D6, D8, D0, D2) \\
D10  & \( m \in \post\bigl( \assign(B, e), l \bigr) \) & (D4, D9) \\
\proofleaf & \( (l, m) \in \assign(B, a) \) & (D10, H5) \\
\end{prooftable}
This completes the first case (C1).  Now the second case (C2).
\begin{prooftable}
H7  & \( l \not \in \eval(A, e) \) & (C2) \\
\hline
D3  & \( \post\bigl( \assign(A, a), l \bigr) = \post(A, l) \) & (H7, H3) \\
D4  & \( m \in \post\bigl( \assign(A, a), l \bigr) \) & (H2, H5) \\
D5  & \( m \in \post(A, l) \) & (D4, D3) \\
D6  & \( m \in \post(B, l) \) & (D5, D2) \\
\end{prooftable}
Also in the second case we distinguish two sub-cases.
\begin{gather*}
  l \not \in \eval(B, e);
    \tag{C2.1} \\
  l \in \eval(B, e).
    \tag{C2.2}
\end{gather*}
Now the first sub-case.
\begin{prooftable}
H8  & \( l \not \in \eval(B, e) \) & (C2.1) \\
\hline
D7  & \( \post\bigl( \assign(B, a), l \bigr) = \post(B, l) \) & (H8, H3) \\
D8  & \( m \in \post\bigl( \assign(B, a), l \bigr) \) & (D6, D7) \\
\proofleaf & \( (l, m) \in \assign(B, a) \) & (D8, H5) \\
\end{prooftable}
Now the other second sub-case
\begin{prooftable}
H8  & \( l \in \eval(B, e) \) & (C2.2) \\
H9  & \reflemma{Eval cardinality on the abstract domain},
      eval cardinality on the abstract domain. \\
\hline
D7  & \( \cardinality \eval(A, e) > 0 \) & (H1, H9) \\
D8  & \( \cardinality \eval(A, e) \leq \cardinality \eval(B, e) \) & (D0) \\
D9  & \( \cardinality \eval(A, e) < \cardinality \eval(B, e) \)
    & (D8, H7, H8) \\
D10 & \( \cardinality \eval(B, e) > 1 \) & (D9, D7) \\
D11 & \( B \subseteq \assign(B, a) \) & (D10, H3) \\
D12 & \( (l, m) \in B \) & (D6, H5) \\
\proofleaf & \( (l, m) \in \assign(B, a) \) & (D12, D11) \\
\end{prooftable}
\end{proof}

It is worth stressing that \( \concretization(A) \neq \emptyset \) is a
necessary hypothesis of \reflemma{monotonicity of the assignment}.
Consider indeed the following example: \( \locations = \{ l, m, n \}
\) and \( A, B \in \abstractdomain \) such that
\( A = \bigl\{ (m, n) \bigr\} \) and
\( B = \bigl\{ (l, m), (m, n) \bigr\} \). We have obviously that \( A
\subseteq B \). Consider what happens to the arc \( (m, n) \) when the
assignment \( a = (\indirection l, l) \) is performed: \( \eval(A, \indirection l) = \emptyset \)
while \( \eval(B, \indirection l) = \{ m \} \) resulting in \( \assign(A, a) = A \) and
\( \assign(B, a) = \bigl\{ (l, m), (m, l) \bigr\} \). Thus
\( \assign(A, a) \not \subseteq \assign(B, a) \).
\begin{center}
\input{figures/common_style}
\def\drawThePicture[#1]{
  \abstractLocation l (0,0)
  \abstractLocation m (1,0)
  \abstractLocation n (2,0)
  \figureBackground
      (l.south west)
      (n.north east)
  \node [description label] at (2, 0) {#1};
}
\begin{tikzpicture}
  \begin{scope}
    \drawThePicture[{ The abstraction \( A = \assign(A, a) \). }]
    \path
      \edgePointsTo(m,n)
      ;
  \end{scope}
  \begin{scope}[yshift=-1.5cm]
    \drawThePicture[{ The abstraction \( B \). }]
    \path
      \edgePointsTo(l,m)
      \edgePointsTo(m,n)
      ;
  \end{scope}
  \begin{scope}[yshift=-3cm]
    \drawThePicture[{ The abstraction \( \assign(B, a) \). }]
    \path
      \edgePointsTo(l,m)
      \edgePointsToR(m,l)
      ;
  \end{scope}
\end{tikzpicture}
\end{center}

\begin{proof}
\summary{Correctness of the assignment, \reftheorem{Correctness of the
assignment}}
Let \( A \in \abstractdomain \), let \( C \in \concretization(A) \) and
let \( a \in \assignments \).
\begin{prooftable}
H0  & \( C \in \concretization(A) \) \\
H1  & \refdefinition{concretization function},
      the concretization function. \\
H2  & \reflemma{monotonicity of the assignment},
      monotonicity of the assignment. \\
\hline
D0  & \( C \subseteq A \) & (H0, H1) \\
D1  & \( C \in \concretization(C) \) & (H1) \\
D2  & \( \concretization(C) \neq \emptyset \) & (D1) \\
D3  & \( \assign(C, a) \subseteq \assign(A, a) \) & (D0, D2, H2) \\
D4  & \( \assign(C, a) \in \concretization\bigl( \assign(A, a) \bigr) \)
    & (D3) \\
\end{prooftable}
\end{proof}

To proceed in the proof af the correctness of the filter abstract
operation,
now we reformulate all the previous lemmas on the post function on the prev
function.

\begin{definition}
\definitionsummary{Transposed abstract domain}
Let
\[
  \functiondef
    { \transpose }
    { \abstractdomain }
    { \abstractdomain }
\]
be defined, for all \( A \in \abstractdomain \), as
\[
    \transpose(A) = \bigl\{\, (m, l) \bigm| (l, m) \in A \,\bigr\}.
\]
\end{definition}

\begin{lemma}
\lemmasummary{Transpose is idempotent}
Let \( A \in \abstractdomain \), then
\[
    \transpose\bigl(\transpose(A)\bigr) = A.
\]
\end{lemma}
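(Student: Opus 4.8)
The plan is to prove the equality $\transpose\bigl(\transpose(A)\bigr) = A$ by a direct double application of the definition of $\transpose$ (\refdefinition{Transposed abstract domain}). Since both sides are subsets of $\locations \times \locations$, it suffices to fix an arbitrary pair $(l, m)$ and show that $(l, m) \in \transpose\bigl(\transpose(A)\bigr)$ holds if and only if $(l, m) \in A$; the set equality then follows by extensionality. There is no need to split into two inclusions, because each unfolding step below is already a genuine equivalence.

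First I would unfold the outer transposition: by definition of $\transpose$, we have $(l, m) \in \transpose\bigl(\transpose(A)\bigr)$ exactly when $(m, l) \in \transpose(A)$. Next I would unfold the inner transposition in the same way: $(m, l) \in \transpose(A)$ exactly when $(l, m) \in A$. Chaining these two equivalences gives $(l, m) \in \transpose\bigl(\transpose(A)\bigr) \iff (l, m) \in A$, which is the claim. In the table-based format used throughout this section, this is a short derivation with the definition of $\transpose$ as the sole hypothesis, one deduction step per unfolding, and a final step concluding the equality of the two sets.

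The only ingredient is that swapping the two components of an ordered pair is an involution, so there is essentially no obstacle; the single point meriting a moment's care is keeping both directions of the biconditional explicit so that a set \emph{equality}, and not merely an inclusion, is obtained. This lemma is purely preparatory: its purpose, as announced just before its statement, is to let the results already established for $\post$ be carried over to $\prev$ (using relations such as $\prev(A, l) = \post\bigl(\transpose(A), l\bigr)$), so the proof can be kept as terse as the definition itself.
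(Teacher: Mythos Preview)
Your proposal is correct and follows essentially the same approach as the paper: the paper's proof is a single sentence stating that the result ``can be easily derived from the definition of the transpose function,'' and your double-unfolding of the definition is precisely that derivation made explicit. You give more detail than the paper does, but there is no substantive difference in method.
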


\begin{proof}
This result can be easily derived from the definition of the transpose
function (\refdefinition{Transposed abstract domain}).
\end{proof}

\begin{lemma}
\lemmasummary{Duality of prev and post}
Let \( A \in \abstractdomain \) and \( l \in \locations \); then
\begin{gather*}
    \post(A, l) = \prev\bigl( \transpose(A), l \bigr); \\
    \post\bigl( \transpose(A), l \bigr) = \prev(A, l).
\end{gather*}
\end{lemma}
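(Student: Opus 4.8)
The plan is to prove both identities by directly unfolding the definitions of $\post$ and $\prev$ (\refdefinition{prev function}, \refdefinition{post function}) and of $\transpose$ (\refdefinition{Transposed abstract domain}); this is essentially a bookkeeping argument, so no technical machinery is required.

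First I would establish the first identity, $\post(A, l) = \prev\bigl(\transpose(A), l\bigr)$, by a chain of equivalences valid for an arbitrary $m \in \locations$: we have $m \in \prev\bigl(\transpose(A), l\bigr)$ iff $(m, l) \in \transpose(A)$, iff $(l, m) \in A$ (which is precisely the defining property of $\transpose$), iff $m \in \post(A, l)$. Since $m$ was arbitrary, the two sets coincide. In the proof-table format used in the paper this is just three deduction rows, each citing one of the three definitions.

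For the second identity I would avoid repeating the argument and instead instantiate the first identity with $\transpose(A)$ in place of $A$ — legitimate because the first identity has been proved for every element of $\abstractdomain$ — obtaining $\post\bigl(\transpose(A), l\bigr) = \prev\bigl(\transpose(\transpose(A)), l\bigr)$, and then invoke \reflemma{Transpose is idempotent} to rewrite $\transpose(\transpose(A))$ as $A$, giving $\post\bigl(\transpose(A), l\bigr) = \prev(A, l)$. There is no genuine obstacle in this lemma; the only point demanding a little care is making the substitution in the second part explicit, so that it is clear we are applying an already-established universally quantified statement rather than re-deriving it.
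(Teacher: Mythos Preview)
Your proposal is correct and matches the paper's own proof almost exactly: the paper proves the first identity by the same element-chasing via the definitions of $\post$, $\prev$, and $\transpose$ (presented as two separate inclusions rather than a single biconditional chain, a purely cosmetic difference), and derives the second identity precisely as you do, by instantiating the first with $\transpose(A)$ and invoking \reflemma{Transpose is idempotent}.
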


\begin{proof}
Let \( A \in \abstractdomain \) and let \( l \in \locations \).
\begin{prooftable}
TS  & \( \post(A, l) = \prev\bigl( \transpose(A), l \bigr) \) \\
\hline
H0  & \refdefinition{prev function}, the prev function. \\
H1  & \refdefinition{post function}, the post function. \\
H2  & \refdefinition{Transposed abstract domain},
      transposed abstract domain. \\
\end{prooftable}
We proceed by prooving separately the two inclusions
\begin{gather*}
  \post(A, l) \subseteq \prev\bigl( \transpose(A), l \bigr);
    \tag{C1} \\
  \post(A, l) \supseteq \prev\bigl( \transpose(A), l \bigr).
    \tag{C2}
\end{gather*}
Let \( m \in \locations \).
For the first inclusion (C1)
\begin{prooftable}
TS  & \( m \in \prev\bigl( \transpose(A), l \bigr) \) \\
\hline
H3  & \( m \in \post(A, l) \) & (C1) \\
\hline
D0  & \( (l, m) \in A \) & (H3, H1) \\
D1  & \( (m, l) \in \transpose(A) \) & (D0, H2) \\
\proofleaf
    & \( m \in \prev\bigl( \transpose(A), l \bigr) \) & (D1, H0) \\
\end{prooftable}
For the second inclusion (C2)
\begin{prooftable}
TS  & \( m \in \post(A, l) \) \\
\hline
H3  & \( m \in \prev\bigl( \transpose(A), l \bigr) \) & (C2) \\
\hline
D0  & \( (m, l) \in \transpose(A) \) & (H3, H0) \\
D1  & \( (l, m) \in A \) & (D0, H2) \\
\proofleaf
    & \( m \in \post(A, l) \) & (D1, H1) \\
\end{prooftable}
The other half of this lemma can be proved observing that the transpose
function is idempotent and applying this result to the first part of the
lemma.
\begin{prooftable}
TS  & \( \post\bigl( \transpose(A), l \bigr) = \prev(A, l) \) \\
\hline
H0  & \( \post(A, l) = \prev\bigl( \transpose(A), l \bigr) \) \\
H1  & \reflemma{Transpose is idempotent}, transpose is idempotent. \\
\hline
D0  & \( \prev(A, l) =
         \prev\Bigr( \transpose\bigl( \transpose(A) \bigr), l \Bigr) \)
    & (H1) \\
D1  & \( \prev\Bigr( \transpose\bigl( \transpose(A) \bigr), l \Bigr) =
         \post\bigl( \transpose(A), l \bigr) \) & (H0) \\
\proofleaf
    & \( \post\bigl( \transpose(A), l \bigr) = \prev(A, l) \) & (D1, D0) \\
\end{prooftable}
\end{proof}

\begin{lemma}
\lemmasummary{Monotonicity of prev}
Let \( A, B \in \abstractdomain \) and \( l \in \locations \); then
\[
    A \subseteq B \implies \prev(A, l) \subseteq \prev(B, l).
\]
\end{lemma}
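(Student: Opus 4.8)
The plan is to obtain this as the transposed twin of \reflemma{monotonicity of post}. The most direct route is a one‑line unfolding of the definition of \( \prev \): fix \( m \in \locations \) and assume \( m \in \prev(A, l) \); by \refdefinition{prev function} this means \( (m, l) \in A \); since \( A \subseteq B \) we get \( (m, l) \in B \); applying \refdefinition{prev function} once more yields \( m \in \prev(B, l) \). As \( m \) was arbitrary, \( \prev(A, l) \subseteq \prev(B, l) \), which is the thesis. Presented in the table format used throughout this section, this is exactly the proof of \reflemma{monotonicity of post} with every occurrence of \( \post \) replaced by \( \prev \) and the ordered pair \( (l, m) \) replaced by \( (m, l) \).

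An alternative that fits the stated aim of this subsection — recasting the \( \post \)-lemmas in terms of \( \prev \) — is to reduce to \reflemma{monotonicity of post} via \reflemma{Duality of prev and post}. First note that \( \transpose \) is monotone: if \( A \subseteq B \) and \( (m, l) \in \transpose(A) \) then \( (l, m) \in A \subseteq B \), so \( (m, l) \in \transpose(B) \); hence \( \transpose(A) \subseteq \transpose(B) \). Then rewrite \( \prev(A, l) = \post\bigl(\transpose(A), l\bigr) \) and \( \prev(B, l) = \post\bigl(\transpose(B), l\bigr) \) using the second identity of \reflemma{Duality of prev and post}, and apply \reflemma{monotonicity of post} to the inclusion \( \transpose(A) \subseteq \transpose(B) \) to conclude.

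There is essentially no obstacle here: the statement is a routine consequence of the definition of \( \prev \) together with the monotonicity of set inclusion, and either route closes in a handful of deduction steps. The only point requiring care is purely bookkeeping — keeping the argument order of \( \prev \) straight (it tests membership of \( (m, l) \), not \( (l, m) \)) and not conflating \( \prev \) with \( \post \) — since this lemma and its siblings in this subsection are the transposed building blocks later invoked in the correctness proof of the \( \filter \) operation.
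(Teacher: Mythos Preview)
Your proposal is correct. The paper's proof takes your second route: it invokes \reflemma{Duality of prev and post} to rewrite \( \prev(A, l) = \post\bigl(\transpose(A), l\bigr) \) and \( \prev(B, l) = \post\bigl(\transpose(B), l\bigr) \), observes from \refdefinition{Transposed abstract domain} that \( A \subseteq B \) implies \( \transpose(A) \subseteq \transpose(B) \), and then applies \reflemma{monotonicity of post}. Your first route --- the direct unfolding of \refdefinition{prev function} --- is more elementary and self-contained, requiring neither the transpose machinery nor the duality lemma; the paper's choice, by contrast, fits its stated organizational goal of deriving the \( \prev \)-lemmas systematically from their \( \post \)-counterparts rather than reproving each from scratch. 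Either argument is entirely adequate here.
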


\begin{proof}
Let \( A, B \in \abstractdomain \) such that
\( A \subseteq B \) and let \( l \in \locations \).
\begin{prooftable}
TS  & \( \prev(A, l) \subseteq \prev(B, l) \) \\
\hline
H0  & \( A \subseteq B \) \\
H1  & \reflemma{Duality of prev and post}, the duality of prev and post. \\
H2  & \refdefinition{Transposed abstract domain}, transposed abstract domain. \\
H3  & \reflemma{monotonicity of post}, the monotonicity of post. \\
\hline
D0  & \( \prev(A, l) = \post\bigl( \transpose(A), l \bigr) \) & (H1) \\
D1  & \( \prev(B, l) = \post\bigl( \transpose(B), l \bigr) \) & (H1) \\
D2  & \( \transpose(A) \subseteq \transpose(B) \) & (H2, H0) \\
D3  & \( \post\bigl( \transpose(A), l \bigr) \subseteq
         \post\bigl( \transpose(B), l \bigr) \) & (D2, H3) \\
\proofleaf
    & \( \prev(A, l) \subseteq \prev(B, l) \) & (D3, D1, D0) \\
\end{prooftable}
\end{proof}

\begin{lemma}
\lemmasummary{Duality of extended prev and post}
Let \( A \in \abstractdomain \) and \( L \subseteq \locations \); then
\begin{gather*}
    \post(A, L) = \prev\bigl( \transpose(A), L \bigr); \\
    \post\bigl( \transpose(A), L \bigr) = \prev(A, L).
\end{gather*}
\end{lemma}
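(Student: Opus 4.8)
The plan is to reduce this lemma to its single-location counterpart, \reflemma{Duality of prev and post}, by unfolding the definition of the extended operators (\refdefinition{extended prev and post functions}) and pushing the transpose through the set union pointwise. I would present the argument in the paper's tabular deduction format, using one short table for each of the two equalities.

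First I would prove the equality $\post(A, L) = \prev\bigl(\transpose(A), L\bigr)$. Unfolding the left-hand side via \refdefinition{extended prev and post functions} gives $\post(A, L) = \bigunion \bigl\{\, \post(A, l) \bigm| l \in L \,\bigr\}$. For each $l \in L$, \reflemma{Duality of prev and post} yields $\post(A, l) = \prev\bigl(\transpose(A), l\bigr)$; since two indexed families that agree termwise have the same union, substituting inside the big union gives $\bigunion \bigl\{\, \prev\bigl(\transpose(A), l\bigr) \bigm| l \in L \,\bigr\}$, which by the definition of the extended $\prev$ applied to the relation $\transpose(A)$ is exactly $\prev\bigl(\transpose(A), L\bigr)$. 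That closes the first equality.

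For the second equality, $\post\bigl(\transpose(A), L\bigr) = \prev(A, L)$, I would mimic the trick used at the end of the proof of \reflemma{Duality of prev and post}: instantiate the first equality with $\transpose(A)$ in place of $A$ to obtain $\post\bigl(\transpose(A), L\bigr) = \prev\bigl(\transpose(\transpose(A)), L\bigr)$, and then invoke \reflemma{Transpose is idempotent} to rewrite $\transpose(\transpose(A))$ as $A$, giving $\post\bigl(\transpose(A), L\bigr) = \prev(A, L)$.

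There is essentially no obstacle here. The only point that deserves a moment of care is the step "substitute under the union": that $\bigunion\{\,f(l)\mid l\in L\,\} = \bigunion\{\,g(l)\mid l\in L\,\}$ whenever $f(l)=g(l)$ for every $l\in L$ — and this is immediate from the definition of union. So the proof is purely a matter of unfolding definitions, applying the already-established single-location duality, and using idempotence of transpose for the second half.
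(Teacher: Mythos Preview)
Your proposal is correct and takes essentially the same approach as the paper: the paper's own proof is a one-line remark that the result follows from \refdefinition{extended prev and post functions} together with \reflemma{Duality of prev and post}, and your plan simply spells out that unfolding in detail, including the idempotence-of-transpose trick for the second equality.
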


\begin{proof}
This result comes easily from the definition of the extended prev and post
functions (\refdefinition{extended prev and post functions})
applying the result of duality of prev and post
(\reflemma{Duality of prev and post}).
\end{proof}

\begin{lemma}
\lemmasummary{Monotonicity of the extended prev 1}
Let \( A, B \in \abstractdomain \) and \( L \subseteq \locations \); then
\[
    A \subseteq B \implies \prev(A, L) \subseteq \prev(B, L).
\]
\end{lemma}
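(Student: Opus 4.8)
The plan is to imitate the proof of \reflemma{monotonicity of the extended post 1}, since the extended \(\prev\) is just the extended \(\post\) read on the transposed relation. I expect no genuine obstacle here; the work is bookkeeping, and the only decision is which of two equivalent routes to present.

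The direct route: fix \( A, B \in \abstractdomain \) with \( A \subseteq B \) and \( L \subseteq \locations \), and take an arbitrary \( m \in \prev(A, L) \). By \refdefinition{extended prev and post functions} there is some \( l \in L \) with \( m \in \prev(A, l) \). Applying \reflemma{Monotonicity of prev} to \( A \subseteq B \) gives \( m \in \prev(B, l) \), and then \refdefinition{extended prev and post functions} again repackages this as \( m \in \prev(B, L) \). Since \( m \) was arbitrary, \( \prev(A, L) \subseteq \prev(B, L) \). I would lay this out in the paper's usual three-column deduction table, with hypotheses \( m \in \prev(A, L) \), \( A \subseteq B \), \refdefinition{extended prev and post functions}, and \reflemma{Monotonicity of prev}, and two deduction steps (extract the witness \( l \); push it through single-location monotonicity).

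Alternatively, one can go through duality: \( A \subseteq B \) immediately yields \( \transpose(A) \subseteq \transpose(B) \) by \refdefinition{Transposed abstract domain}; by \reflemma{Duality of extended prev and post} we have \( \prev(A, L) = \post\bigl(\transpose(A), L\bigr) \) and \( \prev(B, L) = \post\bigl(\transpose(B), L\bigr) \); and then \reflemma{monotonicity of the extended post 1} finishes it. I would present the direct route as the proof proper and, if desired, mention the dual argument as a remark, exactly as the earlier \(\prev\)-lemmas in this section have been handled.
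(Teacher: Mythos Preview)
Both routes you describe are correct, but the paper chooses the opposite emphasis: its proof is exactly your ``alternative'' duality argument. It invokes \reflemma{Duality of extended prev and post} to rewrite \(\prev(A,L)\) and \(\prev(B,L)\) as \(\post\bigl(\transpose(A),L\bigr)\) and \(\post\bigl(\transpose(B),L\bigr)\), observes from \refdefinition{Transposed abstract domain} that \(A\subseteq B\) gives \(\transpose(A)\subseteq\transpose(B)\), and then applies \reflemma{monotonicity of the extended post 1}. Your direct route via \refdefinition{extended prev and post functions} and \reflemma{Monotonicity of prev} is equally valid and mirrors the paper's proof of the extended-\(\post\) lemma, but your closing remark is slightly off: the earlier \(\prev\) lemmas in this section (e.g.\ \reflemma{Monotonicity of prev}) are themselves proved via duality, not directly, so the paper's choice here continues that pattern rather than deviating from it. The duality route avoids re-running the element-chasing argument; the direct route is more self-contained but duplicates work already packaged in the \(\post\) result.
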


\begin{proof}
Let \( A, B \in \abstractdomain \) such that
\( A \subseteq B \) and let \( L \subseteq \locations \).
\begin{prooftable}
TS  & \( \prev(A, L) \subseteq \prev(B, L) \) \\
\hline
H0  & \( A \subseteq B \) \\
H1  & \reflemma{Duality of extended prev and post},
      the duality of extended prev and post. \\
H2  & \refdefinition{Transposed abstract domain}, transposed abstract domain. \\
H3  & \reflemma{monotonicity of the extended post 1},
      the monotonicity of extended post 1. \\
\hline
D0  & \( \prev(A, L) = \post\bigl( \transpose(A), L \bigr) \) & (H1) \\
D1  & \( \prev(B, l) = \post\bigl( \transpose(B), L \bigr) \) & (H1) \\
D2  & \( \transpose(A) \subseteq \transpose(B) \) & (H2, H0) \\
D3  & \( \post\bigl( \transpose(A), L \bigr) \subseteq
         \post\bigl( \transpose(B), L \bigr) \) & (D2, H3) \\
\proofleaf
    & \( \prev(A, L) \subseteq \prev(B, L) \) & (D3, D1, D0) \\
\end{prooftable}
\end{proof}

\begin{lemma}
\lemmasummary{Monotonicity of the extended prev 2}
Let \( A \in \abstractdomain \) and \( L, M \subseteq \locations \); then
\[
    L \subseteq M \implies \prev(A, L) \subseteq \prev(A, M).
\]
\end{lemma}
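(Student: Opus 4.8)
The plan is to mirror exactly the argument already used for \reflemma{monotonicity of the extended post 2}, since the extended \( \prev \) function (\refdefinition{extended prev and post functions}) is defined in the same point-wise fashion as the extended \( \post \) function, and the ordering here only concerns the second argument, which does not involve \( A \). First I would fix \( A \in \abstractdomain \), subsets \( L \subseteq M \subseteq \locations \), and an arbitrary \( m \in \locations \) with \( m \in \prev(A, L) \), and aim to show \( m \in \prev(A, M) \).

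The key steps, in order: unfold the definition of the extended \( \prev \) function to obtain some \( l \in L \) with \( m \in \prev(A, l) \); then invoke the hypothesis \( L \subseteq M \) to conclude that there exists \( l \in M \) with \( m \in \prev(A, l) \); finally fold the definition of the extended \( \prev \) function back up to obtain \( m \in \prev(A, M) \). This is three deduction lines plus the thesis, completely parallel to the \( \post \) version, and it requires no other lemma.

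Alternatively --- and more in keeping with the surrounding development, which systematically transfers results from \( \post \) to \( \prev \) by transposition --- I could use \reflemma{Duality of extended prev and post} to rewrite \( \prev(A, L) = \post\bigl(\transpose(A), L\bigr) \) and \( \prev(A, M) = \post\bigl(\transpose(A), M\bigr) \), and then apply \reflemma{monotonicity of the extended post 2} directly to \( \transpose(A) \). I do not expect any genuine obstacle here: this lemma is a pure bookkeeping step whose only purpose is to be available later in the correctness proof of the filter operation.
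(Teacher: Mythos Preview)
Your proposal is correct, and the paper takes exactly your second route: it invokes the duality lemma to rewrite \( \prev(A,L) = \post\bigl(\transpose(A),L\bigr) \) and \( \prev(A,M) = \post\bigl(\transpose(A),M\bigr) \), then applies \reflemma{monotonicity of the extended post 2} to \( \transpose(A) \). Your first, direct-unfolding alternative would also work and is arguably more self-contained, but the paper consistently prefers to transfer results from \( \post \) to \( \prev \) via transposition rather than re-prove them.
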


\begin{proof}
Let \( A \in \abstractdomain \) and let
\( L \subseteq M \subseteq \locations \).
\begin{prooftable}
TS  & \( \prev(A, L) \subseteq \prev(A, M) \) \\
\hline
H0  & \( L \subseteq M \) \\
H1  & \reflemma{Duality of extended prev and post},
      the duality of extended prev and post. \\
H2  & \reflemma{monotonicity of the extended post 2},
      the monotonicity of extended post 2. \\
\hline
D0  & \( \prev(A, L) = \post\bigl( \transpose(A), L \bigr) \) & (H1) \\
D1  & \( \prev(A, l) = \post\bigl( \transpose(A), M \bigr) \) & (H1) \\
D2  & \( \post\bigl( \transpose(A), L \bigr) \subseteq
         \post\bigl( \transpose(A), M \bigr) \) & (H0, H2) \\
\proofleaf
    & \( \prev(A, L) \subseteq \prev(B, M) \) & (D2, D1, D0) \\
\end{prooftable}
\end{proof}

\begin{lemma}
\lemmasummary{Location closure}
Let \( A \in \abstractdomain \) and \( l \in \locations \); then
\[
    \post(A, l) \neq \emptyset \implies
    l \in \prev\bigl( A, \post(A, l) \bigr).
\]
\end{lemma}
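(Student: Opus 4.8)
The plan is to prove the statement by a direct unfolding of the definitions of $\post$, $\prev$ and their extensions to sets of locations, using the non-emptiness hypothesis only to produce a witness. Fix $A \in \abstractdomain$ and $l \in \locations$ and assume $\post(A, l) \neq \emptyset$. First I would choose an arbitrary $m \in \post(A, l)$, which is possible precisely because the set is non-empty; by \refdefinition{post function} this choice gives $(l, m) \in A$.

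Next I would apply \refdefinition{prev function} to the pair $(l, m) \in A$ to conclude $l \in \prev(A, m)$. Then, since $m \in \post(A, l)$, the definition of the extended $\prev$ function (\refdefinition{extended prev and post functions}) yields $\prev(A, m) \subseteq \bigunion \bigl\{\, \prev(A, m') \bigm| m' \in \post(A, l) \,\bigr\} = \prev\bigl(A, \post(A, l)\bigr)$. Combining the two, $l \in \prev\bigl(A, \post(A, l)\bigr)$, which is the thesis.

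I would present this as a short proof table with hypotheses recording the three relevant definitions and the existence of a witness $m \in \post(A, l)$, followed by the deduction chain $(l,m) \in A$, then $l \in \prev(A, m)$, then $l \in \prev\bigl(A, \post(A,l)\bigr)$. I do not anticipate any real obstacle here: the only point requiring a little care is to remember that the hypothesis $\post(A, l) \neq \emptyset$ is genuinely needed, since it is what licenses extracting the witness $m$ over which the union defining the extended $\prev$ is taken — without it the union could be empty and the conclusion would fail. (I would also read the extended-$\prev$ clause with the evident intent, namely the union ranging over $l \in L$ rather than over the mistyped $l \in A$.)
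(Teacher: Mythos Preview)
Your proposal is correct and follows essentially the same route as the paper: pick a witness $m \in \post(A,l)$, deduce $(l,m) \in A$, hence $l \in \prev(A,m)$, and then pass to $\prev\bigl(A,\post(A,l)\bigr)$. The only cosmetic difference is that the paper phrases the last step via the monotonicity lemma for the extended $\prev$ (from $\{m\} \subseteq \post(A,l)$) rather than unfolding the union directly, and your parenthetical about the typo in \refdefinition{extended prev and post functions} is apt.
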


\begin{proof}
Let \( A \in \abstractdomain \) and let \( l \in \locations \) such that
\( \post(A, l) \neq \emptyset \). Let then \( m \in \post(A, l) \).
\begin{prooftable}
TS  & \( l \in \prev\bigl( A, \post(A, l) \bigr) \) \\
\hline
H0  & \( m \in \post(A, l) \) \\
H1  & \reflemma{Monotonicity of the extended prev 2},
      monotonicity of extended prev 2. \\
H2  & \refdefinition{prev function}, the prev function. \\
H3  & \refdefinition{post function}, the post function. \\
\hline
D0  & \( \prev\bigl( A, \{ m \} \bigr) \subseteq
         \prev\bigl( A, \post(A, l) \bigr) \) & (H0, H1) \\
D1  & \( (l, m) \in A \)  & (H0, H3) \\
D2  & \( l \in \prev\bigl( A, \{ m \} \bigr) \) & (D1, H2) \\
\proofleaf
    & \( l \in \prev\bigl( A, \post(A, l) \bigr)  \) & (D2, D0) \\
\end{prooftable}
\end{proof}

\begin{lemma}
\lemmasummary{Extended location closure}
Let \( A \in \abstractdomain \) and \( L \subseteq \locations \); then
\[
    \concretization(A) \neq \emptyset \implies
    L \subseteq \prev\bigl( A, \post(A, L) \bigr).
\]
\end{lemma}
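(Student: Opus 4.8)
The plan is to establish the inclusion pointwise: I fix an arbitrary \( l \in L \) and show that \( l \in \prev\bigl(A, \post(A, L)\bigr) \). The crucial use of the hypothesis \( \concretization(A) \neq \emptyset \) is to guarantee that \( \post(A, l) \neq \emptyset \) for every \( l \in \locations \). Indeed, \( \concretization(A) \neq \emptyset \) means there is a total function \( C \in \concretedomain \) with \( C \subseteq A \), and totality of \( C \) forces \( \bigl(l, C(l)\bigr) \in C \subseteq A \), so that \( C(l) \in \post(A, l) \); equivalently, one may apply \reflemma{Eval cardinality on the abstract domain} to the expression \( \indirection l \), noting that \( \eval(A, \indirection l) = \post(A, l) \), to get \( \cardinality \post(A, l) > 0 \).

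With \( \post(A, l) \neq \emptyset \) in hand, the argument chains two earlier results. First, \reflemma{Location closure} gives immediately \( l \in \prev\bigl(A, \post(A, l)\bigr) \). Second, since \( l \in L \), the definition of the extended \( \post \) function (\refdefinition{extended prev and post functions}) yields \( \post(A, l) \subseteq \post(A, L) \), whence \reflemma{Monotonicity of the extended prev 2} lifts this to \( \prev\bigl(A, \post(A, l)\bigr) \subseteq \prev\bigl(A, \post(A, L)\bigr) \). Combining the two gives \( l \in \prev\bigl(A, \post(A, L)\bigr) \), and since \( l \) was an arbitrary element of \( L \) we conclude \( L \subseteq \prev\bigl(A, \post(A, L)\bigr) \).

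I do not anticipate any genuine difficulty: this lemma is essentially the ``set version'' of \reflemma{Location closure}, and every monotonicity fact about \( \prev \) it needs has already been proved. The one point worth making explicit in the write-up is why the nonemptiness hypothesis cannot be dropped --- if some \( l \in L \) had \( \post(A, l) = \emptyset \), then \( (l, m) \in A \) would be impossible for every \( m \), hence in particular for every \( m \in \post(A, L) \), so \( l \) would lie outside \( \prev\bigl(A, \post(A, L)\bigr) \). Thus the hypothesis is used precisely to exclude this degenerate situation, and the resulting proof is short: one appeal to the concretization/eval-cardinality fact, one to \reflemma{Location closure}, and one to \reflemma{Monotonicity of the extended prev 2}.
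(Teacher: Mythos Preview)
Your proposal is correct and follows essentially the same route as the paper: fix \( l \in L \), use \( \concretization(A) \neq \emptyset \) to obtain \( \post(A, l) \neq \emptyset \), apply \reflemma{Location closure}, then use monotonicity of the extended \( \post \) and \( \prev \) to lift from \( \{l\} \) to \( L \). The paper derives \( \post(A, l) \neq \emptyset \) directly from the concretization definition rather than via \reflemma{Eval cardinality on the abstract domain}, but this is a cosmetic difference.
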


\begin{proof}
Let \( A \in \abstractdomain \)
such that \( \concretization(L) \neq \emptyset \)
and let \( L \subseteq \locations \).
If \( L = \emptyset \) then the thesis is trivially verified.
Otherwise, let \( l \in L \).
\begin{prooftable}
TS  & \( l \in \prev\bigl( A, \post(A, L) \bigr)  \) \\
\hline
H0  & \( l \in L \) \\
H1  & \( \concretization(A) \neq \emptyset \) \\
H2  & \refdefinition{concretization function},
      concretization function. \\
H3  & \reflemma{Location closure}, location closure. \\
H4  & \reflemma{monotonicity of the extended post 2},
      monotonicity of post 2. \\
H5  & \reflemma{Monotonicity of the extended prev 2},
      monotonicity of prev 2. \\
\hline
D0  & \( \post(A, l) \neq \emptyset \) & (H1, H2) \\
D1  & \( l \in \prev\Bigl( A, \post\bigr(A, \{ l \} \bigl) \Bigr) \)
    & (D0, H3) \\
D2  & \( \post\bigl( A, \{ l \} \bigr) \subseteq \post(A, L) \)
    & (H0, H4) \\
D3  & \( \prev\Bigl( A, \post\bigr(A, \{ l \} \bigl) \Bigr) \subseteq
         \prev\bigl( A, \post(A, L) \bigr) \)
    & (D2, H5) \\
\proofleaf
    & \( l \in \prev\bigl( A, \post(A, L) \bigr) \) & (D1, D3) \\
\end{prooftable}
\end{proof}

\begin{lemma}
\lemmasummary{Monotonicity of extended eval 3}
Let \( A \in \abstractdomain \), \( e \in \expressions \) and \( i \in
\naturals \); then
\[
    \concretization(A) \neq \emptyset
    \implies
    \eval(A, e, i + 1) \subseteq
    \prev\bigl( A, \eval(A, e, i) \bigr).
\]
\end{lemma}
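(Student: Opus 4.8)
The plan is to argue by structural induction on the expression $e$ (\refdefinition{expressions}), keeping the index $i$ universally quantified so that the inductive hypothesis on a subexpression is available for \emph{every} natural number; since the abstraction $A$ is fixed throughout, the side condition $\concretization(A) \neq \emptyset$ is simply carried along. The base case $e = l \in \locations$ is immediate: \refdefinition{extended eval function} gives $\eval(A, l, i+1) = \emptyset$, which is contained in everything, so $\eval(A, l, i+1) \subseteq \prev\bigl(A, \eval(A, l, i)\bigr)$ holds trivially, without even using the hypothesis.

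For the inductive step $e = \indirection f$ with $f \in \expressions$, I would distinguish whether $i$ is zero. If $i = 0$, then by \refdefinition{extended eval function} and \refdefinition{eval function} we have $\eval(A, \indirection f, 1) = \eval(A, f, 0) = \eval(A, f)$ and $\eval(A, \indirection f, 0) = \eval(A, \indirection f) = \post\bigl(A, \eval(A, f)\bigr)$, so the claim becomes $\eval(A, f) \subseteq \prev\bigl(A, \post(A, \eval(A, f))\bigr)$; this is exactly \reflemma{Extended location closure} instantiated with $L \defeq \eval(A, f)$, and it is the only place where $\concretization(A) \neq \emptyset$ is needed. If $i = j + 1$, then \refdefinition{extended eval function} gives $\eval(A, \indirection f, j+2) = \eval(A, f, j+1)$ and $\eval(A, \indirection f, j+1) = \eval(A, f, j)$, so the claim reduces to $\eval(A, f, j+1) \subseteq \prev\bigl(A, \eval(A, f, j)\bigr)$, which is precisely the inductive hypothesis applied to the subexpression $f$ at the smaller index $j$.

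I do not expect a real obstacle here: the argument merely peels off one dereference (and one unit of the step counter) at a time, bottoming out either in the trivial empty-set case or in the already-established \reflemma{Extended location closure}. The only thing that needs care is the bookkeeping --- formulating the inductive hypothesis on $f$ with the index universally quantified, since the $i = j+1$ case invokes it at the strictly smaller index $j$, and observing that the hypothesis $\concretization(A) \neq \emptyset$ serves solely to enable the appeal to the location-closure lemma.
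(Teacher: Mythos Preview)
Your proposal is correct and follows essentially the same approach as the paper: structural induction on $e$ with the base case $e = l$ handled via $\eval(A,l,i+1)=\emptyset$, the case $e=\indirection f$ with $i=0$ reduced to \reflemma{Extended location closure} (the one place where $\concretization(A)\neq\emptyset$ is used), and the case $e=\indirection f$ with $i=j+1$ reduced to the inductive hypothesis on $f$ at index $j$. The paper phrases this as ``induction on the definition of the extended eval function,'' but the case structure is identical to yours.
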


\begin{proof}
Let \( A \in \abstractdomain \) such that
\( \concretization(A) \neq \emptyset \),
let \( e \in \expressions \) and let \( i \in \naturals \).
We proceed by induction on the definition of the extended eval function.
\begin{prooftable}
TS  & \( \eval(A, e, i + 1) \subseteq \prev\bigl( A, \eval(A, e, i) \bigr) \) \\
\hline
H0  & \refdefinition{extended eval function}, the extended eval function. \\
H1  & \( \concretization(A) \neq \emptyset \) \\
\end{prooftable}
For every \( i \) let \( e = l \in \locations \).
\begin{prooftable}
TS  & \( \eval(A, l, i + 1) \subseteq \prev\bigl( A, \eval(A, l, i) \bigr) \) \\
\hline
D0  & \( \eval(A, l, i + 1) = \emptyset \) & (H0) \\
\proofleaf
    & \( \eval(A, l, i + 1) \subseteq \prev\bigl( A, \eval(A, l, i) \bigr) \)
    & (D0) \\
\end{prooftable}
Let \( e = \indirection f \) with \( f \in \expressions \).
For \( i = 0 \)
\begin{prooftable}
TS  & \( \eval(A, \indirection f, 1) \subseteq \prev\bigl( A, \eval(A, \indirection f, 0) \bigr) \) \\
\hline
H2  & \refdefinition{eval function}, the eval function. \\
H3  & \reflemma{Extended location closure},
      the extended location closure.\\
\hline
D0  & \( \eval(A, \indirection f, 1) = \eval(A, f) \) & (H0) \\
D1  & \( \eval(A, \indirection f, 0) = \eval(A, \indirection f) \) & (H0) \\
D2  & \( \eval(A, \indirection f) = \post\bigl( A, \eval(A, f) \bigr) \)
    & (H2) \\
D3  & \( \eval(A, \indirection f, 0) = \post\bigl( A, \eval(A, f) \bigr) \)
    & (D1, D2) \\
D4  & \( \eval(A, f) \subseteq
         \prev\Bigr( A, \post\bigl( A, \eval(A, f) \bigr) \Bigr)\)
    & (H3, H1) \\
D5  & \( \eval(A, f) \subseteq
         \prev\bigl( A, \eval(A, \indirection f, 0) \bigr)\)
    & (D3, D4) \\
\proofleaf
    & \( \eval(A, \indirection f, 1) \subseteq \prev\bigl( A, \eval(A, \indirection f, 0) \bigr) \)
    & (D5, D0) \\
\end{prooftable}
For \( i > 0 \) for convenience of notation let \( i = j + 1 \).
\begin{prooftable}
TS  & \( \eval(A, \indirection f, j + 2) \subseteq
         \prev\bigl( A, \eval(A, \indirection f, j + 1) \bigr) \) \\
\hline
H2  & \( \eval(A, f, j + 1) \subseteq
         \prev\bigl( A, \eval(A, f, j) \bigr) \)
    & (ind.\ hyp.) \\
\hline
D0  & \( \eval(A, \indirection f, j + 2) = \eval(A, f, j + 1) \) & (H0) \\
D1  & \( \eval(A, \indirection f, j + 1) = \eval(A, f, j) \) & (H0) \\
\proofleaf
    & \( \eval(A, \indirection f, j + 2) \subseteq
         \prev\bigl( A, \eval(A, \indirection f, j + 1) \bigr) \)
    & (D1, D0, H2) \\
\end{prooftable}
\end{proof}

\begin{lemma}
\lemmasummary{Monotonicity of extended eval 3b}
Let \( A \in \abstractdomain \), \( e \in \expressions \) and \( i \in
\naturals \); then
\[
    \concretization(A) \neq \emptyset
    \implies
    \post\bigl( A, \eval(A, e, i + 1) \bigr) \subseteq
    \eval(A, e, i).
\]
\end{lemma}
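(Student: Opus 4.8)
The plan is to prove the inclusion by induction on the recursive structure of the extended evaluation function (\refdefinition{extended eval function}), exactly as was done for \reflemma{Monotonicity of extended eval 3}: a simultaneous induction on the index \( i \) and on the syntactic shape of \( e \) (\refdefinition{expressions}). The hypothesis \( \concretization(A) \neq \emptyset \) will be carried through the induction but, as it turns out, plays no essential role; each case reduces to an unfolding of the defining clauses of \( \eval(\cdot,\cdot,\cdot) \) together with \refdefinition{eval function}.

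First I would dispatch the base cases. If \( e = l \in \locations \), then for every \( i \) we have \( \eval(A, l, i+1) = \emptyset \), hence \( \post(A, \emptyset) = \emptyset \subseteq \eval(A, l, i) \), and the inclusion is immediate. The second base case is \( e = \indirection f \) with \( i = 0 \); here I would simply unfold both sides: \( \eval(A, \indirection f, 1) = \eval(A, f, 0) = \eval(A, f) \), while \( \eval(A, \indirection f, 0) = \eval(A, \indirection f) = \post\bigl( A, \eval(A, f) \bigr) \). Therefore \( \post\bigl( A, \eval(A, \indirection f, 1) \bigr) = \post\bigl( A, \eval(A, f) \bigr) = \eval(A, \indirection f, 0) \): the two sides actually coincide, which in particular yields the required inclusion. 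This step, where the clause for \( \eval(\cdot,\cdot,0) \), the clause for \( \eval(\indirection\cdot,\cdot,1) \), and the definition of \( \eval(\indirection\cdot) \) lock together into an equality, is the only place where anything mildly interesting happens.

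For the inductive step, let \( e = \indirection f \) and \( i = j + 1 \) with \( j \in \naturals \). I would unfold \( \eval(A, \indirection f, j+2) = \eval(A, f, j+1) \) and \( \eval(A, \indirection f, j+1) = \eval(A, f, j) \), so that the goal becomes \( \post\bigl( A, \eval(A, f, j+1) \bigr) \subseteq \eval(A, f, j) \), which is precisely the inductive hypothesis applied to \( f \) and \( j \) (the assumption \( \concretization(A) \neq \emptyset \) being available as given). This closes the induction.

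There is essentially no obstacle here: the only care needed is to set the double induction up in agreement with the three defining clauses of \refdefinition{extended eval function}, after which every case is a rewrite. I would also note in passing that this lemma is the ``\( \post \)-side'' companion of \reflemma{Monotonicity of extended eval 3}, but it cannot be derived from it merely by applying \( \post(A, \cdot) \) and \reflemma{monotonicity of the extended post 2}: that route would require \( \post\bigl( A, \prev(A, L) \bigr) \subseteq L \), which fails for a general relation \( A \) and holds only when \( A \) is functional, i.e.\ on \( \concretedomain \). Hence the direct structural induction above is the right approach.
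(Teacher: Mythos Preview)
Your proposal is correct and follows exactly the same structural induction as the paper's proof, with the same three cases (location base case, \( \indirection f \) at \( i=0 \), and the step \( i=j+1 \)) handled by the same unfoldings. Your side observation that the hypothesis \( \concretization(A) \neq \emptyset \) is never actually used is also accurate: the paper lists it among the hypotheses but invokes it in none of the deduction steps.
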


\begin{proof}
Let \( A \in \abstractdomain \) such that
\( \concretization(A) \neq \emptyset \),
let \( e \in \expressions \) and let \( i \in \naturals \).
We proceed again by induction on the definition of the extended eval
function.
\begin{prooftable}
TS  & \( \post\bigl( A, \eval(A, e, i + 1) \bigr) \subseteq \eval(A, e, i) \) \\
\hline
H0  & \refdefinition{extended eval function}, the extended eval function. \\
H1  & \( \concretization(A) \neq \emptyset \) \\
\end{prooftable}
For every \( i \) let \( e = l \in \locations \).
\begin{prooftable}
TS  & \( \post\bigl( A, \eval(A, l, i + 1) \bigr) \subseteq \eval(A, l, i) \) \\
\hline
H2  & \refdefinition{post function}, post function. \\
\hline
D0  & \( \eval(A, l, i + 1) = \emptyset \) & (H0) \\
D1  & \( \post\bigl( A, \eval(A, l, i + 1) \bigr) = \emptyset \) & (D0, H2) \\
\proofleaf
    & \( \post\bigl( A, \eval(A, l, i + 1) \bigr) \subseteq \eval(A, l, i) \)
    & (D1) \\
\end{prooftable}
Let \( e = \indirection f \) with \( f \in \expressions \).
For \( i = 0 \)
\begin{prooftable}
TS  & \( \post\bigl( A, \eval(A, \indirection f, 1) \bigr) \subseteq \eval(A, \indirection f, 0) \) \\
\hline
H2  & \refdefinition{eval function}, the eval function. \\
H3  & \reflemma{Extended location closure},
      the extended location closure.\\
\hline
D0  & \( \eval(A, \indirection f, 1) = \eval(A, f) \) & (H0) \\
D1  & \( \post\bigl( A, \eval(A, \indirection f, 1) \bigr) =
         \post\bigl( A, \eval(A, f) \bigr) \)
    & (D0) \\
D2  & \( \post\bigl( A, \eval(A, \indirection f, 1) \bigr) = \eval(A, \indirection f) \)
    & (D1, H2) \\
\proofleaf
    & \( \post\bigl( A, \eval(A, \indirection f, 1) \bigr) = \eval(A, \indirection f, 0) \)
    & (D2, H0) \\
\end{prooftable}
For \( i > 0 \) for convenience of notation let \( i = j + 1 \).
\begin{prooftable}
TS  & \( \post\bigl( A, \eval(A, \indirection f, j + 2) \bigr) \subseteq
         \eval(A, \indirection f, j + 1) \) \\
\hline
H2  & \( \post\bigl( A, \eval(A, f, j + 1) \bigr) \subseteq
         \eval(A, f, j) \)
    & (ind.\ hyp.) \\
\hline
D0  & \( \eval(A, \indirection f, j + 2) = \eval(A, f, j + 1) \) & (H0) \\
D1  & \( \eval(A, \indirection f, j + 1) = \eval(A, f, j) \) & (H0) \\
\proofleaf
    & \( \post\bigl( A, \eval(A, \indirection f, j + 2) \bigr) \subseteq
         \eval(A, \indirection f, j + 1) \)
    & (D1, D0, H2) \\
\end{prooftable}
\end{proof}

\begin{lemma}
\lemmasummary{Monotonicity of target}
Let \( A, B \in \abstractdomain \), \( e \in \expressions \), \( L
\subseteq \locations \) and \( i, j \in \naturals \); then
\begin{multline*}
    \bigl(
      A \subseteq B \land i \leq j \land \concretization(A) \neq \emptyset
    \bigr)
    \implies
    \\
    \bigl(
      \eval(A, e, i) \subseteq \target(B, L, e, i)
      \implies
      \eval(A, e, j) \subseteq \target(B, L, e, j)
    \bigr)
\end{multline*}
\end{lemma}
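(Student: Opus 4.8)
The plan is to reduce the statement to a \emph{one-step} propagation fact and then iterate it. Concretely, I would first isolate the following claim: under the hypotheses $A \subseteq B$ and $\concretization(A) \neq \emptyset$, for every $k \in \naturals$ one has
\[
  \eval(A, e, k) \subseteq \target(B, L, e, k)
  \implies
  \eval(A, e, k + 1) \subseteq \target(B, L, e, k + 1).
\]
Granted this, the lemma follows by a routine induction on $j$ with $i$ (and $A$, $B$, $L$, $e$) held fixed: the base case $j = i$ is the identity implication, and in the inductive step from $j$ to $j + 1$ I assume the seed inclusion $\eval(A, e, i) \subseteq \target(B, L, e, i)$, obtain $\eval(A, e, j) \subseteq \target(B, L, e, j)$ from the inductive hypothesis, and then apply the one-step claim at $k = j$. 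The condition $i \leq j$ is exactly what guarantees the inductive hypothesis is available at the index $j$ where the one-step claim is invoked.

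For the one-step claim I would unfold $\target$ at $k + 1$ (\refdefinition{Target function}),
\[
  \target(B, L, e, k + 1)
    = \eval(B, e, k + 1) \intersection \prev\bigl( B, \target(B, L, e, k) \bigr),
\]
so it suffices to establish the two inclusions $\eval(A, e, k + 1) \subseteq \eval(B, e, k + 1)$ and $\eval(A, e, k + 1) \subseteq \prev\bigl( B, \target(B, L, e, k) \bigr)$. The first is immediate from $A \subseteq B$ by \reflemma{Monotonicity of the extended eval 1}. For the second I would chain three monotonicities: \reflemma{Monotonicity of extended eval 3} gives $\eval(A, e, k + 1) \subseteq \prev\bigl( A, \eval(A, e, k) \bigr)$ --- this is the only step that uses $\concretization(A) \neq \emptyset$; then \reflemma{Monotonicity of the extended prev 1} with $A \subseteq B$ gives $\prev\bigl( A, \eval(A, e, k) \bigr) \subseteq \prev\bigl( B, \eval(A, e, k) \bigr)$; and finally \reflemma{Monotonicity of the extended prev 2} applied to the seed inclusion $\eval(A, e, k) \subseteq \target(B, L, e, k)$ gives $\prev\bigl( B, \eval(A, e, k) \bigr) \subseteq \prev\bigl( B, \target(B, L, e, k) \bigr)$. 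Composing the three yields $\eval(A, e, k + 1) \subseteq \prev\bigl( B, \target(B, L, e, k) \bigr)$, and intersecting with the first inclusion closes the step.

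I do not expect a genuine obstacle here: the argument is essentially bookkeeping over already-proved monotonicity lemmas, and the only point where real content enters is the invocation of \reflemma{Monotonicity of extended eval 3}, which explains why the hypothesis $\concretization(A) \neq \emptyset$ cannot be dropped. The mild care required is (i) to read the recursive call in \refdefinition{Target function} with its middle argument restored, i.e.\ as $\target(A, M, e, i)$, since it is elided in the displayed recurrence, and (ii) to keep the seed index and the domains fixed throughout the outer induction so that the one-step claim is always applied at the correct $k$.
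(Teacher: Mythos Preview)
Your proposal is correct and follows essentially the same approach as the paper: reduce to the one-step propagation claim, then chain \reflemma{Monotonicity of the extended eval 1}, \reflemma{Monotonicity of extended eval 3}, \reflemma{Monotonicity of the extended prev 1}, and \reflemma{Monotonicity of the extended prev 2} against the unfolded definition of $\target$ at level $k+1$, with the outer induction handling $i \leq j$. The paper's proof is step-for-step the same, and your remark about restoring the elided middle argument in the recursive call of \refdefinition{Target function} is exactly the right reading.
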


\begin{proof}
Note that if \( i = j \) then the consequent of the implication in the
above statement is always true thus the thesis is trivially verified.
For the case \( i < j \) we will prove that
\begin{multline*}
    \bigl(
      A \subseteq B \land \concretization(A) \neq \emptyset
    \bigr) \implies
    \\
    \bigl(
      \eval(A, e, i) \subseteq \target(B, L, e, i)
      \implies
      \eval(A, e, i + 1) \subseteq \target(B, L, e, i + 1)
    \bigr)
\end{multline*}
as this implies, by a trivial induction on \( i \), the original result.
Let \( A, B \in \abstractdomain \) such that \( A \subseteq B \) and
\( \concretization(A) \neq \emptyset \), let \( i \in \naturals \),
\( e \in \expressions \) and \( L \subseteq \locations \).
\begin{prooftable}
TS  & \( \eval(A, e, i + 1) \subseteq \target(B, L, e, i + 1) \) \\
\hline
H0  & \( A \subseteq B \) \\
H1  & \( \eval(A, e, i) \subseteq \target(B, L, e, i) \) \\
H2  & \( \concretization(A) \neq \emptyset \) \\
H3  & \refdefinition{Target function}, the target function. \\
H4  & \reflemma{Monotonicity of extended eval 3},
      monotonicity of the extended eval 3. \\
H5  & \reflemma{Monotonicity of the extended prev 1},
      monotonicity of the extended prev 1. \\
H6  & \reflemma{Monotonicity of the extended prev 2},
      monotonicity of the extended prev 2. \\
H7  & \reflemma{Monotonicity of the extended eval 1},
      monotonicity of the extended eval 1. \\
\hline
D0  & \( \target(B, L, e, i + 1) \) \\
    & \( \quad = \eval(B, e, i + 1)
         \intersection \prev\bigl( B, \target(B, L, e, i) \bigr) \)
    & (H3) \\
D1  & \( \eval(A, e, i + 1) \subseteq \eval(B, e, i + 1) \) & (H7) \\
D2  & \( \eval(A, e, i + 1) \subseteq
         \prev\bigl( A, \eval(A, e, i) \bigr) \)  & (H2, H4) \\
D3  & \( \eval(A, e, i + 1) \subseteq
         \prev\bigl( B, \eval(A, e, i) \bigr) \)  & (D2, H0, H5) \\
D4  & \( \eval(A, e, i + 1) \subseteq
         \prev\bigl( B, \target(B, L, e, i) \bigr) \)  & (D3, H1, H6) \\
D5  & \( \eval(A, e, i + 1) \) \\
    & \( \quad \subseteq \eval(B, e, i + 1)
         \intersection \prev\bigl( B, \target(B, L, e, i) \bigr) \)
    & (D1, D4) \\
\proofleaf
    & \( \eval(A, e, i + 1) \subseteq \target(B, L, e, i + 1) \)
    & (D5, D0) \\
\end{prooftable}
\end{proof}

\begin{lemma}
\lemmasummary{Generalized correctness of filter 2}
Let \( A, B \in \abstractdomain \), \( L \subseteq \locations \) and
\( e \in \expressions \); then
\[
    \bigl(
      A \subseteq B \land
      \concretization(A) \neq \emptyset \land
      \eval(A, e) \subseteq L
    \bigr)
    \implies
    A \subseteq \filter(B, L, e).
\]
\end{lemma}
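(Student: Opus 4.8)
The plan is to argue by cases on the form of $e$ (\refdefinition{expressions}); the base case is immediate and only the dereference case needs work, where I will run an auxiliary induction on the level index. If $e = l \in \locations$, then $\eval(A, l) = \{ l \}$, so the hypothesis $\eval(A, e) \subseteq L$ forces $l \in L$; hence \refdefinition{Filter 2} gives $\filter(B, L, l) = B$, and $A \subseteq B$ is a hypothesis, so the thesis holds.

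For $e = \indirection f$ with $f \in \expressions$, \refdefinition{Filter 2} gives $\filter(B, L, \indirection f) = \bigintersection_{i \in \naturals} \filter(B, L, \indirection f, i)$, so it suffices to prove $A \subseteq \filter(B, L, \indirection f, i)$ for every $i \in \naturals$. Before the induction on $i$ I would establish the invariant
\[
  \eval(A, \indirection f, j) \subseteq \target(B, L, \indirection f, j)
  \qquad \text{for all } j \in \naturals .
\]
For $j = 0$ this unfolds to $\eval(A, \indirection f) \subseteq \eval(B, \indirection f) \intersection L$, which is \reflemma{Monotonicity of the eval function} (for the $\eval(B, \cdot)$ part) together with the hypothesis $\eval(A, e) \subseteq L$ (for the $L$ part); then \reflemma{Monotonicity of target}, instantiated with lower index $0$ and using $A \subseteq B$ and $\concretization(A) \neq \emptyset$, propagates the inclusion to every $j$.

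Now I would induct on $i$, writing $x = \langle B, L, \indirection f \rangle$ for brevity. For $i = 0$, \refdefinition{Filter 1} gives $\filter(x, 0) = B \supseteq A$. For the step, put $T = \target(x, i + 1)$; by \refdefinition{Filter 1} one has $\filter(x, i + 1) = \filter(x, i) \setminus Z$, where $Z = \{ t \} \times \bigl( \locations \setminus \target(x, i) \bigr)$ if $T = \{ t \}$ is a singleton, and $Z = \emptyset$ otherwise. In the non-singleton case $\filter(x, i + 1) = \filter(x, i) \supseteq A$ by the inductive hypothesis, and we are done. In the singleton case it remains only to show $A \intersection Z = \emptyset$, i.e.\ that every arc of $A$ leaving $t$ lands inside $\target(x, i)$; granting this, $A \subseteq \filter(x, i) \setminus Z$ follows again from the inductive hypothesis, and hence $A \subseteq \bigintersection_{i \in \naturals} \filter(x, i) = \filter(B, L, \indirection f)$.

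The crux --- and the step I expect to be the main obstacle --- is proving $\post(A, t) \subseteq \target(x, i)$ in the singleton case $T = \{ t \}$. The chain I have in mind is: since $T = \target(x, i + 1) \subseteq \eval(B, \indirection f, i + 1) = \eval(B, f, i)$ (by \refdefinition{Target function} and \refdefinition{extended eval function}) and $\cardinality T = 1$, we get $\cardinality \eval(B, f, i) > 0$; then \reflemma{Extended eval cardinality on the abstract domain}, using $A \subseteq B$ and $\concretization(A) \neq \emptyset$, gives $\eval(A, f, i) \neq \emptyset$; combined with the invariant at $j = i + 1$, which reads $\eval(A, f, i) = \eval(A, \indirection f, i + 1) \subseteq T = \{ t \}$, this forces $\eval(A, f, i) = \{ t \}$; now \reflemma{Monotonicity of extended eval 3b} (applicable because $\concretization(A) \neq \emptyset$) gives $\post(A, t) = \post\bigl( A, \eval(A, \indirection f, i + 1) \bigr) \subseteq \eval(A, \indirection f, i)$; and the invariant at $j = i$ gives $\eval(A, \indirection f, i) \subseteq \target(x, i)$. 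Chaining these inclusions closes the induction. The delicate points are keeping the extended-eval index bookkeeping consistent and noticing that the hypothesis $\concretization(A) \neq \emptyset$ is exactly what guarantees $\eval(A, f, i) \neq \emptyset$, so that the singleton $T$ is genuinely ``hit'' by $A$; without it the filter could legitimately delete an arc actually present in $A$, and the lemma would fail --- analogously to the remark following \reflemma{monotonicity of the assignment} about the corresponding hypothesis for the assignment.
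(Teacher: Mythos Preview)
Your proposal is correct and follows essentially the same route as the paper's proof: case split on the form of $e$, reduce the dereference case to the per-level inclusion $A \subseteq \filter(B,L,e,i)$, establish the invariant $\eval(A,e,j) \subseteq \target(B,L,e,j)$ via \reflemma{Monotonicity of target}, and in the singleton-target case use \reflemma{Extended eval cardinality on the abstract domain} together with \reflemma{Monotonicity of extended eval 3b} to conclude that the removed arcs miss $A$. Your organization is slightly tidier than the paper's---you hoist the invariant out of the induction and phrase the singleton case directly as $\post(A,t) \subseteq \target(x,i)$ rather than fixing a pair $(l,m)$ and splitting on $l \in T$---but the underlying argument and the lemmas invoked are identical.
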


\begin{proof}
Let \( A, B \in \abstractdomain \),
let \( L \subseteq \locations \) and let \( e \in \expressions \).
We distinguish two cases
\begin{gather*}
  e = l \in \locations;
    \tag{C1} \\
  e = \indirection f \in \expressions \setminus \locations.
    \tag{C2}
\end{gather*}
For the first case (C1) let \( l \in \locations \). We have
\begin{prooftable}
TS & \( A \subseteq \filter(B, L, l) \) \\
\hline
H0  & \refdefinition{Filter 2}, filter 2. \\
H1  & \( \eval(A, l) \subseteq L \) \\
H2  & \refdefinition{eval function}, the eval function. \\
H3  & \( A \subseteq B \) \\
\hline
D0  & \( \eval(A, l) = \{ l \} \) & (H2) \\
D1  & \( l \in L \) & (D0, H1) \\
D2  & \( \eval(B, l) = \{ l \} \) & (H2) \\
D3  & \( \eval(B, l) \subseteq L \) & (D1, D2) \\
D4  & \( \filter(B, L, l) = B \) & (H0, D3) \\
\proofleaf
   & \( A \subseteq \filter(B, L, l) \) & (D4, H3) \\
\end{prooftable}
Now the second case (C2).
\begin{prooftable}
TS  & \( A \subseteq \filter(B, L, e) \) \\
\hline
H0  & \( \eval(A, e) \subseteq L \) \\
H1  & \( A \subseteq B \) \\
H2  & \( \concretization(A) \neq \emptyset \) \\
H3  & \refdefinition{Filter 2}, filter 2. \\
H4  & \refdefinition{Filter 1}, filter 1. \\
\hline
T0  & \( A \subseteq \bigintersection_{ i \in \naturals }
         \filter(B, L, e, i) \)
    & (TS, H3) \\
T1  & \( \forall i \in \naturals \itc A \subseteq \filter(B, L, e, i) \)
    & (T0) \\
T2  & \( \forall (l, m) \in A \itc
         \forall i \in \naturals \itc
         (l, m) \in \filter(B, L, e, i) \)
    & (T1) \\
\end{prooftable}
To prove TS we will prove the equivalent result T2.
Let \( (l, m) \in A \) and let \( i \in \naturals \).
\begin{prooftable}
TS  & \( (l, m) \in \filter(B, L, e, i) \) \\
\hline
H5  & \( (l, m) \in A \) \\
\end{prooftable}
We proceed by induction on \( i \).
For the base case let \( i = 0 \).
\begin{prooftable}
TS  & \( (l, m) \in \filter(B, L, e, 0) \) \\
\hline
D0  & \( \filter(B, L, e, 0) = B \) & (H4) \\
D1  & \( (l, m) \in B \)  & (H5, H1) \\
\proofleaf
    & \( (l, m) \in \filter(B, L, e, 0) \) & (D1, D0) \\
\end{prooftable}
For the inductive case let \( i > 0 \). For convenience of notation
let \( j \in \naturals \) such that \( i = j + 1 \)
\begin{prooftable}
TS  & \( (l, m) \in \filter(B, L, e, j + 1) \) \\
\hline
H6  & \( (l, m) \in \filter(B, L, e, j) \) & (hyp.\ ind.) \\
\end{prooftable}
We distinguish two cases depending on the cardinality of the target set
\begin{gather*}
  \cardinality \target(B, L, e, j+1) \neq 1; \tag{C2.1} \\
  \cardinality \target(B, L, e, j+1) = 1.    \tag{C2.2}
\end{gather*}
For the first case (C2.1), we have
\begin{prooftable}
H7  & \( \cardinality \target(B, L, e, j + 1) \neq 1 \) & (C2.1) \\
\hline
D0  & \( \filter(B, L, e, j + 1) = \filter(B, L, e, j) \)
    & (H7, H4) \\
\proofleaf
    & \( (l, m) \in \filter(B, L, e, j + 1) \) & (D0, H6) \\
\end{prooftable}
For the second case (C2.2), assume that
\begin{prooftable}
H7  & \( \cardinality \target(B, L, e, j + 1) = 1 \) & (C2.2) \\
\hline
D0  & \( \filter(B, L, e, j + 1) = \filter(B, L, e, j) \) \\
    & \( \quad \setminus \Bigl( \target(B, L, e, j + 1) \times
         \bigl( \locations \setminus \target(B, L, e, j) \bigr) \Bigr) \)
    & (H7, H4) \\
\end{prooftable}
We distinguish two sub-cases
\begin{gather*}
  l \not \in \target(B, L, e, j + 1); \tag{C2.2.1} \\
  l \in \target(B, L, e, j + 1).      \tag{C2.2.2}
\end{gather*}
For the first sub-case (C2.2.1) assume that
\begin{prooftable}
H8  & \( l \not \in \target(B, L, e, j + 1) \) & (C2.2.1) \\
\hline
D1  & \( (l, m) \not \in \target(B, L, e, j + 1) \times
         \bigl( \locations \setminus \target(B, L, e, j) \bigr) \)
    & (H8) \\
\proofleaf
    & \( (l, m) \in \filter(B, L, e, j + 1) \) \\
    & (D1, D0, H6) \\
\end{prooftable}
For the second sub-case (C2.2.2) we have
\begin{prooftable}
H8  & \( l \in \target(B, L, e, j + 1) \) & (C2.2.2) \\
\hline
H9  & \reflemma{Monotonicity of the eval function}, monotonicity of eval. \\
H10  & \refdefinition{Target function}, the target function. \\
H11 & \refdefinition{extended eval function}, the extended eval function. \\
H12 & \reflemma{Monotonicity of target}, monotonicity of target. \\
H13 & \reflemma{Extended eval cardinality on the abstract domain},
      ext.\ eval cardinality. \\
H14 & \refdefinition{post function}, the post function. \\
H15 & \reflemma{Monotonicity of extended eval 3b},
      monotonicity of ext.\ eval 3b. \\
\hline
D1  & \( \eval(A, e) \subseteq \eval(B, e) \) & (H1, H9) \\
D2  & \( \eval(A, e) \subseteq L \intersection \eval(B, e) \) & (D1, H0) \\
D3  & \( \target(B, L, e, 0) = L \intersection \eval(B, e) \) & (H10) \\
D4  & \( \eval(A, e) \subseteq \target(B, L, e, 0) \) & (D2, D3) \\
D5  & \( \eval(A, e, 0) \subseteq \target(B, L, e, 0) \) & (D4, H11) \\
D6  & \( \eval(A, e, j + 1) \subseteq \target(B, L, e, j + 1) \)
    & (D5, H1, H2, H12) \\
D7  & \( \eval(A, e, j) \subseteq \target(B, L, e, j) \)
    & (D5, H1, H2, H12) \\
D8  & \( \target(B, L, e, j + 1) = \eval(B, L, e, j + 1) \) \\
    & \( \qquad \intersection \prev\bigl( B, \target(B, L, e, j) \bigr) \)
    & (H10) \\
D9  & \( \target(B, L, e, j + 1) \subseteq \eval(B, L, e, j + 1) \)
    & (D8) \\
D10 & \( l \in \eval(B, L, e, j + 1) \) & (H8, D9) \\
D11 & \( \cardinality \eval(B, L, e, j + 1) > 0 \) & (D10) \\
D12 & \( \cardinality \eval(A, L, e, j + 1) > 0 \) & (H13, H1, H2, D11) \\
D13 & \( \{ l \} = \target(B, L, e, j + 1) \) & (H7, H8) \\
D14 & \( \{ l \} = \eval(A, e, j + 1) \) & (D13, D12, D6) \\
D15 & \( m \in \post(A, l) \) & (H5, H14) \\
D16 & \( m \in \post\bigl( A, \eval(A, e, j + 1) \bigr) \) & (D15, D14) \\
D17 & \( \post\bigl( A, \eval(A, e, j + 1) \bigr) \subseteq \eval(A, e, j) \)
    & (H2, H15) \\
D18 & \( m \in \eval(A, e, j) \) & (D16, D17) \\
D19 & \( m \in \target(B, L, e, j) \) & (D7, D18) \\
D20 & \( m \not \in \locations \setminus \target(B, L, e, j) \) & (D19) \\
D21 & \( (l, m) \not \in \target(B, L, e, j + 1) \) \\
    & \( \qquad \times \bigl( \locations \setminus \target(B, L, e, j) \bigr) \)
    & (D20) \\
\proofleaf
    & \( (l, m) \in \filter(B, L, e, j + 1) \)
    & (D21, H6, D0) \\
\end{prooftable}
\end{proof}

\begin{lemma}
\lemmasummary{Correctness of filter 2}
Let \( A \in \abstractdomain \), \( L \subseteq \locations \) and \( e \in
\expressions \); then
\[
  \forall C \in \concretization(A) \itc
    \eval(C, e) \subseteq L
    \implies
    C \in \concretization\bigl( \filter(A, L, e) \bigr).
\]
\end{lemma}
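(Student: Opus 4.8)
The plan is to obtain this statement as an immediate corollary of \reflemma{Generalized correctness of filter 2}. Fix \( A \in \abstractdomain \), \( L \subseteq \locations \) and \( e \in \expressions \), and let \( C \in \concretization(A) \) be such that \( \eval(C, e) \subseteq L \); I must show \( C \in \concretization\bigl( \filter(A, L, e) \bigr) \), which by \refdefinition{concretization function} unfolds to the conjunction of \( C \in \concretedomain \) and \( C \subseteq \filter(A, L, e) \).

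First I would read off from \( C \in \concretization(A) \), again through \refdefinition{concretization function}, the two facts \( C \in \concretedomain \) and \( C \subseteq A \). The first already discharges one half of the thesis. Moreover, since \( C \in \concretedomain \) and \( C \subseteq C \), the same definition gives \( C \in \concretization(C) \), hence \( \concretization(C) \neq \emptyset \).

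Next I would apply \reflemma{Generalized correctness of filter 2} instantiating its ``\( A \)'' with the concrete map \( C \) and its ``\( B \)'' with the abstract relation \( A \). Its three premises are then exactly the facts just collected: \( C \subseteq A \), \( \concretization(C) \neq \emptyset \), and \( \eval(C, e) \subseteq L \) (this last being precisely the hypothesis of the present lemma). The conclusion of the generalized lemma is \( C \subseteq \filter(A, L, e) \). Combining this with \( C \in \concretedomain \) and \refdefinition{concretization function} yields \( C \in \concretization\bigl( \filter(A, L, e) \bigr) \), completing the argument.

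I do not expect any real obstacle: all of the substantive work — the induction on the iteration index \( i \) and the case analysis on the cardinality of \( \target(B, L, e, j+1) \) — has already been carried out inside \reflemma{Generalized correctness of filter 2}. The only point that calls for a moment's care is recognising that the generalized lemma must be used with \( C \) in its ``lower'' (subset) slot and \( A \) in its ``upper'' slot, so that the monotonicity hypothesis \( C \subseteq A \) and the nonemptiness hypothesis \( \concretization(C) \neq \emptyset \) are in the right places; once this alignment is made, the remaining side condition is literally the hypothesis \( \eval(C,e) \subseteq L \) we are handed.
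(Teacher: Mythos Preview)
Your proposal is correct and follows essentially the same approach as the paper's own proof: both treat the lemma as an immediate corollary of \reflemma{Generalized correctness of filter 2}, instantiating it with \(C\) in the lower slot and \(A\) in the upper slot, verifying the three premises \(C \subseteq A\), \(\concretization(C) \neq \emptyset\), and \(\eval(C,e) \subseteq L\), and then repackaging the resulting inclusion via \refdefinition{concretization function}.
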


\begin{proof}
This is a simple corollary of
\reflemma{Generalized correctness of filter 2}.
Let \( A \in \abstractdomain \) and let \( C \in \concretization(A) \).
Let \( e \in \expressions \) and let \( L \subseteq \locations \) such
that \( \eval(C, e) \subseteq L \).
\begin{prooftable}
TS  & \( C \in \concretization\bigl( \filter(A, L, e) \bigr) \) \\
\hline
H0  & \( \eval(C, e) \subseteq L \) \\
H1  & \( C \in \concretization(A) \) \\
H2  & \reflemma{Generalized correctness of filter 2},
      generalized correctness of filter. \\
H3  & \refdefinition{concretization function},
      concretization function. \\
\hline
D0  & \( C \in \concretization(C) \) & (H3) \\
D1  & \( \concretization(C) \neq \emptyset \) & (D0) \\
D2  & \( C \subseteq A \) & (H1, H3) \\
D3  & \( C \subseteq \filter(A, L, e) \) & (D2, D1, H0, H2) \\
\proofleaf
    & \( C \in \concretization\bigl( \filter(A, L, e) \bigr) \)
    & (D3, H3) \\
\end{prooftable}
\end{proof}

\begin{lemma}
\lemmasummary{Equality target}
Let \( A \in \abstractdomain \) and \( e, f \in \expressions \). For
convenience of notation let \( c \in \conditions \) be such that \( c =
(\equality, e, f) \). Finally, let \( C \in \filter\bigl( \concretization(A), c
\bigr) \). Then
\[
    \eval(C, e) \subseteq \eval(A, e) \intersection \eval(A, f).
\]
\end{lemma}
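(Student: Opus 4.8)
The plan is to unfold the hypothesis $C \in \filter\bigl(\concretization(A), c\bigr)$ through the relevant definitions and then invoke monotonicity of expression evaluation. First I would apply the definition of the concrete filter (\refdefinition{Concrete filter semantics}), which gives $\filter\bigl(\concretization(A), c\bigr) = \modelset(c) \intersection \concretization(A)$; hence $C$ belongs to both $\modelset(c)$ and $\concretization(A)$.

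From $C \in \modelset(c)$ with $c = (\equality, e, f)$, the definition of the value of conditions (\refdefinition{Value of conditions}) yields $C \models (\equality, e, f)$, that is $\eval(C, e) = \eval(C, f)$. From $C \in \concretization(A)$, the definition of the concretization function (\refdefinition{concretization function}) gives $C \subseteq A$ (and also $C \in \concretedomain$, though only the inclusion is needed here). Next I would invoke \reflemma{Monotonicity of the eval function} twice, once for $e$ and once for $f$: since $C \subseteq A$ we obtain $\eval(C, e) \subseteq \eval(A, e)$ and $\eval(C, f) \subseteq \eval(A, f)$. Combining these with the equality $\eval(C, e) = \eval(C, f)$, we get $\eval(C, e) \subseteq \eval(A, e)$ and $\eval(C, e) = \eval(C, f) \subseteq \eval(A, f)$, hence $\eval(C, e) \subseteq \eval(A, e) \intersection \eval(A, f)$, which is the thesis.

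There is no genuine obstacle in this argument: it is a short chain of definitional unpacking followed by two applications of an already-established monotonicity lemma. The only point requiring care is keeping the symmetric roles of $e$ and $f$ straight — the filtered condition being an \emph{equality} is precisely what allows the bound obtained for $\eval(C, f)$ to be transferred onto $\eval(C, e)$, and it is also the point at which the proof of the analogous statement for the inequality condition would have to diverge.
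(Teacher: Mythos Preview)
Your proposal is correct and follows essentially the same argument as the paper: unpack the concrete filter to obtain $C \in \concretization(A)$ and $C \models c$, use the latter to get $\eval(C,e)=\eval(C,f)$, use the former together with \reflemma{Monotonicity of the eval function} to bound both evaluations by their abstract counterparts, and combine. The paper's proof is laid out in its tabular deduction format but the steps and the lemmas invoked are identical to yours.
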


\begin{proof}
Let \( A \in \abstractdomain \), let \( (\equality, e, f) = c \in
\conditions \) and let \( C \in \concretedomain \) such that
\( C \in \filter\bigl( \concretization(A), c \bigr) \).
Note that from the definition of the concrete semantics of the filter
operation (\refdefinition{Concrete filter semantics}) we have
\[
  \filter\bigl( \concretization(A), c \bigr)
  = \concretization(A) \intersection \modelset(c).
\]
Thus, \( C \in \modelset(c) \) and \( C \in \concretization(A) \).
\begin{prooftable}
TS  & \( \eval(C, e) \subseteq \eval(A, e) \intersection \eval(A, f) \) \\
\hline
H0  & \( C \in \modelset(c) \) \\
H1  & \( C \in \concretization(A) \) \\
H2  & \refdefinition{Value of conditions},
      value of conditions. \\
H3  & \refdefinition{concretization function},
      concretization function. \\
H4  & \reflemma{Monotonicity of the eval function},
      monotonicity of the eval function. \\
\hline
D0  & \( C \models c \) & (H0) \\
D1  & \( \eval(C, e) = \eval(C, f) \) & (H2, D0) \\
D2  & \( C \subseteq A \) & (H1, H3) \\
D3  & \( \eval(C, e) \subseteq \eval(A, e) \) & (D2, H4) \\
D4  & \( \eval(C, f) \subseteq \eval(A, f) \) & (D2, H4) \\
D5  & \( \eval(C, e) \subseteq \eval(A, f) \) & (D4, D1) \\
\proofleaf
    & \( \eval(C, e) \subseteq
         \eval(A, e) \intersection \eval(A, f) \) & (D4, D5) \\
\end{prooftable}
\end{proof}

\begin{lemma}
\lemmasummary{Inequality target}
Let \( A \in \abstractdomain \) and \( e, f \in \expressions \).
For convenience of notation let \( c \in \conditions \) be such that \( c =
( \inequality, e, f) \). Let \( C \in \filter\bigl(\concretization(A),
c\bigr) \) and let
\begin{gather*}
  I = \eval(A, e) \intersection \eval(A, f), \\
  E = \eval(A, e) \setminus \eval(A, f), \\
  F = \eval(A, f) \setminus \eval(A, e).
\end{gather*}
Then
\[
  \cardinality I = 1 \implies
    \eval(C, e) \subseteq E \lor \eval(C, f) \subseteq F.
\]
\end{lemma}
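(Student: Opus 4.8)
The plan is to reduce everything to the fact that on the concrete domain the evaluation function is single-valued. First I would unfold the hypothesis $C \in \filter\bigl(\concretization(A), c\bigr)$: by \refdefinition{Concrete filter semantics} this means $C \in \concretization(A)$ and $C \in \modelset(c)$, i.e.\ $C \models (\inequality, e, f)$, which by \refdefinition{Value of conditions} says $\eval(C, e) \neq \eval(C, f)$. Since $C \in \concretedomain$, \reflemma{Eval cardinality on the concrete domain} gives $\cardinality \eval(C, e) = \cardinality \eval(C, f) = 1$, so I can fix locations $l$ and $m$ with $\eval(C, e) = \{ l \}$, $\eval(C, f) = \{ m \}$ and $l \neq m$.

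Next I would locate $l$ and $m$ inside the decomposition induced by $A$. From $C \in \concretization(A)$ we get $C \subseteq A$ (\refdefinition{concretization function}), hence by \reflemma{Monotonicity of the eval function} (equivalently \reftheorem{Correctness of expression evaluation}) $\{ l \} = \eval(C, e) \subseteq \eval(A, e)$ and $\{ m \} = \eval(C, f) \subseteq \eval(A, f)$. Now observe the disjoint decompositions $\eval(A, e) = I \union E$ with $I \intersection E = \emptyset$, and $\eval(A, f) = I \union F$ with $I \intersection F = \emptyset$. Consequently exactly one of $l \in E$, $l \in I$ holds, and exactly one of $m \in F$, $m \in I$ holds.

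Then I would perform the case split. If $l \in E$ then $\eval(C, e) = \{ l \} \subseteq E$, discharging the first disjunct; if $m \in F$ then $\eval(C, f) = \{ m \} \subseteq F$, discharging the second disjunct. The only remaining case is $l \in I$ and $m \in I$; here I invoke the hypothesis $\cardinality I = 1$, which forces $I = \{ l \} = \{ m \}$, hence $l = m$, contradicting $l \neq m$. So that case is vacuous and the disjunction $\eval(C, e) \subseteq E \lor \eval(C, f) \subseteq F$ holds in all cases.

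I do not expect any genuine obstacle here: the argument is a short three-way case analysis, and the hypothesis $\cardinality I = 1$ is used only to rule out the ``both images land in the intersection'' case. The points that require a little care when laying this out in the proof-table format used elsewhere are making the disjointness of $\eval(A, e) = I \union E$ (and of $\eval(A, f) = I \union F$) explicit as deduction steps, and remembering that the thesis is a disjunction, so each surviving branch must be reduced to one of its two disjuncts rather than to the disjunction as a whole.
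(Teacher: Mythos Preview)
Your proposal is correct and uses the same ingredients as the paper's proof: singleton evaluation on the concrete domain, monotonicity of \( \eval \), and the hypothesis \( \cardinality I = 1 \). The only difference is organizational: the paper argues asymmetrically by assuming \( \eval(C,e) \not\subseteq E \) and then deriving \( \eval(C,f) \subseteq F \) through set-inclusion manipulations, whereas you name the elements \( l, m \) and do a symmetric three-way case split with the ``both in \( I \)'' case dispatched by contradiction; your packaging is arguably cleaner, but the mathematical content is the same.
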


\begin{proof}
Let \( A \in \abstractdomain \), let \( (\inequality, e, f) = c \in
\conditions \) and let \( C \in \concretedomain \).
To show the thesis we assume that \( \cardinality I = 1\) and
\( \eval(C e) \not \subseteq E \) and then we show that \( \eval(C, f)
\subseteq F \).
\begin{prooftable}
TS  & \( \eval(C, f) \subseteq \eval(A, f) \setminus \eval(A, e) \) \\
\hline
H0  & \( C \in \concretization(A) \) \\
H1  & \( C \models c \) \\
H2  & \( \cardinality
         \bigl( \eval(A, e) \intersection \eval(A, f) \bigr) = 1 \) \\
H3  & \( \neg \bigl( \eval(C, e) \subseteq \eval(A, e) \setminus
         \eval(A, f) \bigr) \) \\
H4  & \refdefinition{concretization function}, the concretization function.\\
H5  & \reflemma{Monotonicity of the eval function}, monotonicity of eval.\\
H6  & \refdefinition{Value of conditions}, value of conditions.\\
H7  & \reflemma{Eval cardinality on the concrete domain},
      eval cardinality on \( \concretedomain \). \\
\hline
D0  & \( \neg \bigl( \eval(C, e) \subseteq \eval(A, e) \land
         \eval(C, e) \not \subseteq \eval(A, f) \bigr) \) & (H3) \\
D1  & \( \eval(C, e) \not \subseteq \eval(A, e) \lor
         \eval(C, e) \subseteq \eval(A, f) \) & (D0) \\
D2  & \( C \subseteq A \) & (H0, H4) \\
D3  & \( \eval(C, e) \subseteq \eval(A, e) \) & (D2, H5) \\
D4  & \( \eval(C, e) \subseteq \eval(A, e) \land
         \eval(C, e) \subseteq \eval(A, f) \) & (D3, D1) \\
D5  & \( \eval(C, e) \subseteq
         \eval(A, e) \intersection \eval(A, f) \) & (D4) \\
D6  & \( \cardinality \eval(C, e) = 1 \)  & (H7) \\
D7  & \( \eval(C, e) = \eval(A, e) \intersection \eval(A, f) \)
    & (H2, D6, D5) \\
D8  & \( \eval(C, e) \neq \eval(C, f) \) & (H1, H6) \\
D9  & \( \eval(C, f) \neq \eval(A, e) \intersection \eval(A, f) \)
    & (D6, D5, H2) \\
D10 & \( \cardinality \eval(C, f) = 1 \)  & (H7) \\
D11 & \( \eval(C, f) \not \subseteq
         \eval(A, e) \intersection \eval(A, f) \) & (D10, D9) \\
D12 & \( \eval(C, f) \not \subseteq \eval(A, e) \lor
         \eval(C, f) \not \subseteq \eval(A, f) \) & (D11) \\
D13 & \( \eval(C, f) \subseteq \eval(A, f) \) & (D2, H5) \\
D14 & \( \eval(C, f) \subseteq \eval(A, f) \land
         \eval(C, f) \not \subseteq \eval(A, e) \)
    & (D13, D12) \\
\proofleaf
    & \( \eval(C, f) \subseteq \eval(A, f) \setminus \eval(A, e) \)
    & (D14) \\
\end{prooftable}
\end{proof}

\begin{proof}
\summary{Correctness of the filter, \reftheorem{Correctness of the filter}}
Let \( A \in \abstractdomain \), let \( c \in \conditions \) and let
\( C \in \concretedomain \).
For convenience of notation let
\( I = \eval(A, e) \intersection \eval(A, f) \).
\begin{prooftable}
TS  & \( C \in \concretization\bigl( \filter(A, c) \bigr) \) \\
\hline
H0  & \( C \models c \) \\
H1  & \( C \in \concretization(A) \) \\
H2  & \reflemma{Correctness of filter 2}, correctness of filter 2. \\
H3  & \refdefinition{Filter 3}, filter 3. \\
H4  & \refdefinition{concretization function}, concretization function. \\
\end{prooftable}
We distinguish two cases
\begin{gather*}
  c = (\equality, e, f);   \tag{C1} \\
  c = (\inequality, e, f). \tag{C2}
\end{gather*}
For the first case (C1) let \( c = ( \equality, e, f ) \).
\begin{prooftable}
H5  & \reflemma{Equality target}, the equality target. \\
\hline
D0  & \( \eval(C, e) \subseteq I \) & (H5, H1, H0) \\
D1  & \( \eval(C, f) \subseteq I \) & (H5, H1, H0) \\
D2  & \( C \in \concretization\bigl( \filter(A, I, e) \bigr) \)
    & (D0, H1, H2) \\
D3  & \( C \in \concretization\bigl( \filter(A, I, f) \bigr) \)
    & (D1, H1, H2) \\
D4  & \( C \subseteq \filter(A, I, e) \)  & (D2, H4) \\
D5  & \( C \subseteq \filter(A, I, f) \)  & (D3, H4) \\
D6  & \( \filter(A, c) = \filter(A, I, e) \intersection \filter(A, I, f) \)
    & (H3) \\
D7  & \( C \subseteq \filter(A, I, e) \intersection \filter(A, I, f) \)
    & (D5, D4) \\
D8  & \( C \subseteq \filter(A, c) \)
    & (D7, D6) \\
\proofleaf
    & \( C \in \concretization\bigl( \filter(A, c) \bigr) \)
    & (D8, H4) \\
\end{prooftable}
For the second case (C2) let \( c = ( \inequality, e, f ) \).
We distinguish two sub-cases.
\begin{gather*}
  \cardinality I \neq 1; \tag{C2.1} \\
  \cardinality I = 1.    \tag{C2.2}
\end{gather*}
For the first sub-case (C2.1)
\begin{prooftable}
H5  & \( \cardinality I \neq 1 \) & (C2.1) \\
\hline
D0  & \( \filter(A, c) = A \) & (H3, H5) \\
\proofleaf
    & \( C \in \concretization\bigl( \filter(A, c) \bigr) \)
    & (D0, H1) \\
\end{prooftable}
In the second sub-case (C2.2) for convenience of notation let
\( E, F \subseteq \locations \) be defined as
\begin{gather*}
  E = \eval(A, e) \setminus \eval(A, f), \\
  F = \eval(A, f) \setminus \eval(A, e).
\end{gather*}
\begin{prooftable}
H5  & \( \cardinality I = 1 \)  & (C2.1) \\
H6  & \reflemma{Inequality target}, the inequality target. \\
\hline
D0  & \( \filter(A, c) = \filter(A, E, e) \union \filter(A, F, f) \)
    & (H3, H5) \\
D1  & \( \eval(C, e) \subseteq E \lor \eval(C, f) \subseteq F \)
    & (H6, H5, H1, H0) \\
D2  & \( \eval(C, e) \subseteq E \implies
         C \in \concretization\bigl( \filter(A, E, e) \bigr) \)
    & (H2, H1) \\
D3  & \( \eval(C, f) \subseteq F \implies
         C \in \concretization\bigl( \filter(A, F, f) \bigr) \)
    & (H2, H1) \\
D4  & \( C \in \concretization\bigl( \filter(A, E, e) \bigr) \lor
         C \in \concretization\bigl( \filter(A, F, f) \bigr) \)
    & (D1, D2, D3) \\
D5  & \( C \subseteq \filter(A, E, e) \lor C \subseteq \filter(A, F, f) \)
    & (D4, H4) \\
D6  & \( C \subseteq \filter(A, E, e) \union \filter(A, F, f) \)
    & (D5) \\
D7  & \( C \subseteq \filter(A, c) \)
    & (D6, D0) \\
\proofleaf
    & \( C \in \concretization\bigl( \filter(A, c) \bigr) \)
    & (D7, H4) \\
\end{prooftable}
\end{proof}

\section{Precision Limits}
\label{section:completeness}
This section presents some considerations about the precision of the
analysis; starting from questions that regard the points-to
\emph{representation}, that is, common to all points-to methods; to
questions about the specific method presented.

\subsection{Precision of the Points-To Representation}
\label{section:Precision limits points-to representation}
Reconsider now the correctness results presented in
\reftheorem{Correctness of the assignment} and \ref{theorem:Correctness of
the filter}.
Let \( A \in \abstractdomain \)
and \( a \in \assignments \), the correctness of the assignment
\[
  \concretization\bigl( \assign(A, a) \bigr) \supseteq
  \assign\bigl( \concretization(A), a \bigr),
\]
using the definition of the abstraction function
(\refdefinition{concretization function})
and \reflemma{Abstraction effect}, implies that
\begin{align*}
  \assign(A, a)
  & \supseteq
    \abstraction\Bigl( \assign\bigl( \concretization(A), c \bigr) \Bigr) \\
  & =
    \bigunion
      \Bigl\{\,
        C \Bigm|
        C \in \assign\bigl( \concretization(A), c \bigr)
      \,\Bigr\} \\
  & =
    \bigunion
      \bigl\{\,
        \assign(C,a)
      \bigm|
        C \in \concretization(A)
      \,\bigr\}.
\end{align*}
Moreover, given \( c \in \conditions \), the correctness of the filter
\[
  \concretization\bigl( \filter(A), c \bigr) \supseteq
  \filter\bigl( \concretization(A), c \bigr),
\]
using \reflemma{Abstraction effect}, implies that
\begin{align*}
  \filter(A, c)
  & \supseteq
    \abstraction\Bigl( \filter\bigl( \concretization(A), c \bigr) \Bigr) \\
  & = \bigunion
      \Bigl\{\,
        C
      \Bigm|
        C \in \filter\bigl( \concretization(A), c \bigr)
      \,\Bigr\} \\
  & = \bigunion
    \bigl\{\,
      C
    \bigm|
      C \in \concretization(A) \intersection \modelset(c)
    \,\bigr\}.
\end{align*}
Expressed in this form, the correctness results highlight the attribute
independent nature of the points-to abstract domain;
in this sense these results
provide a limit to the precision attainable.  Note that these limits,
\begin{gather*}
  \bigunion
    \bigl\{\,
      \assign(C,a)
    \bigm|
      C \in \concretization(A)
    \,\bigr\},
  \\
  \bigunion
    \bigl\{\,
      C
    \bigm|
      C \in \concretization(A) \intersection \modelset(c)
    \,\bigr\},
\end{gather*}
do not depend in any way on the definition of the abstract
operations but only on the characteristics of the abstract and concrete
domains (\refdefinition{Abstract and concrete domains}), their semantics
(\refdefinition{concretization function}) and on the concrete semantics of
the operations (\refdefinition{Concrete assignment operation} and
\ref{definition:Concrete filter semantics}).  In other words, these
are limitations of the \emph{points-to representation} and are thus
common to any method based on it. In \refsection{Precision limits of the alias query representation}
we have presented an example of the limitations of the \emph{alias query
representation}; now we show some examples of the limitations of the
\emph{points-to} representation, which is strictly less powerful.

\codecaption
{in this example two executions are possible. However, in both of them at
\refline 7 the
pointers \Variable{p} and \Variable{q} point to the same location.}
{store based abstraction limitations 1}
\begin{codesnippet}
int a, b, *p, *q;
if (...) p = &a;
else     p = &b;
// \( \eval(*p) = \{ a, b \} \)
q = p;
// \( \eval(*p) = \eval(*q) = \{ a, b \} \)
...
\end{codesnippet}

\begin{figure}
\begin{center}
\input{figures/common_style}
\def\drawThePicture[#1,#2,#3,#4]{
  \abstractLocation a (1,1)
  \abstractLocation b (1,0)
  \abstractLocation p (0,0.5)
  \abstractLocation q (2,0.5)
  \node[anchor=north] at (1, -0.75) (table) {
    \begin{tabular}{*{4}{>{\(}c<{\)}}}
      \hline \hline
      #1   & \Code{a} & \Code{b} & \Code{*q} \\
      \hline
      \Code{*p}   & #2 \\
      \Code{*q}   & #3 \\
      \hline \hline
    \end{tabular}
  };
  \figureBackground
      (table.south -| table.west)
      (a.north -| table.east)
  \node [description label] at (table.north east) {#4};
}
\begin{tikzpicture}
  \begin{scope}
    \drawThePicture[
      \aliasquery_{ \executionEnvironment_0 },
      1 & 0 & \cellMark{1},
      1 & 0 &  ,
      {
        Below, an extract of the concrete alias query \(
        \aliasquery_{ m_0 } \) induced by the concrete memory description
        \( m_0 \in \memories \).  Above, a graphical representation
        of the points-to information
        \( C_0 \in \concretedomain \)
        associated to the same memory.
      }]
    \path \edgePointsToL(p,a)
          \edgePointsToR(q,a);
  \end{scope}
  \begin{scope}[yshift=-5cm]
    \drawThePicture[
      \aliasquery_{ \executionEnvironment_1 },
      0 & 1 & \cellMark{1},
      0 & 1 &  ,
      {
        As above, on the concrete memory description \( m_1 \in \memories \).
      }]
    \path \edgePointsToR(p,b)
          \edgePointsToL(q,b);
  \end{scope}
  \begin{scope}[yshift=-10cm]
    \drawThePicture[
      \absaliasquery,
      \top  & \top  & \cellMark{1},
      \top  & \top  &     ,
      {
        Above, a graphical representation of the points-to abstraction
        \( A \)
        \[
          \abstraction\bigl( \{ C_0, C_1 \} \bigr) =
            C_0 \union C_1 = A \in \abstractdomain.
        \]
        Below, an extract of the abstract alias query
        \begin{gather*}
          \abstraction
            \bigl( \{ \aliasquery_{m_0}, \aliasquery_{m_1} \} \bigr) \\
          \quad = \aliasquery_{ m_0 } \sqcup \aliasquery_{ m_1 } \\
          \quad = \absaliasquery \in \absAliasQueryDomain.
        \end{gather*}
        Note that \( \absaliasquery(\Code{*p}, \Code{*q}) = 1 \), that is
        the abstract alias query is able to represent that \Variable{p} and
        \Variable{q} always point to the same location.
      }]
    \path \edgePointsToL(p,a)
          \edgePointsToR(p,b)
          \edgePointsToR(q,a)
          \edgePointsToL(q,b);
  \end{scope}
\end{tikzpicture}
\caption
  {a representation of the points-to and alias information associated to the code in
  \refcodesnippet{store based abstraction limitations 1}.}
\label{figure:store based abstraction limitations 1}
\end{center}
\end{figure}

\begin{figure}
\begin{center}
\input{figures/common_style}
\def\drawThePicture[#1,#2,#3,#4]{
  \abstractLocation a (1,1)
  \abstractLocation b (1,0)
  \abstractLocation p (0,0.5)
  \abstractLocation q (2,0.5)
  \node[anchor=north] at (1, -0.75) (table) {
    \begin{tabular}{*{4}{>{\(}c<{\)}}}
      \hline \hline
      #1   & \Code{a} & \Code{b} & \Code{*q} \\
      \hline
      \Code{*p}   & #2 \\
      \Code{*q}   & #3 \\
      \hline \hline
    \end{tabular}
  };
  \figureBackground
      (table.south -| table.west)
      (a.north -| table.east)
  \node [description label] at (table.north east) {#4};
}
\begin{tikzpicture}
  \begin{scope}
    \drawThePicture[
      A,
      \top  & \top  & \cellMark{\top},
      \top  & \top  &,
      {
        Here both the table and the graph represent the points-to
        information \( A \).
        Note that
        \[
          \cardinality \bigl( \eval(A, \indirection p) > 1;
        \]
        that is (\refdefinition{Induced alias relation})
        \( A(\indirection p, \indirection q) = \top \), i.e.,
        the points-to representation is unable to express that
        \Variable{p} and \Variable{q} will definitely point to the same
        location.
      }]
    \path \edgePointsToL(p,a)
          \edgePointsToR(p,b)
          \edgePointsToR(q,a)
          \edgePointsToL(q,b);
  \end{scope}
  \begin{scope}[yshift=-5cm]
    \drawThePicture[
      C_2,
      1 & 0 & \cellMark{0},
      0 & 1 &  ,
      {
        The concrete points-to information \( C_2 \in \concretedomain \)
        is a spurious element of \( \concretization(A) \).  Note that
        \[
            \aliasquery_{ C_2 }
          \notin
            \concretization( \absaliasquery );
        \]
        that is the concrete alias relation \( \aliasquery_{ C_2 } \)
        induced by \( C_2 \) would not be generated using the alias
        representation.
      }]
    \path \edgePointsToL(p,a)
          \edgePointsToL(q,b);
  \end{scope}
  \begin{scope}[yshift=-10cm]
    \drawThePicture[
      C_3,
      0 & 1 & \cellMark{0},
      1 & 0 &  ,
      {
        The concrete points-to information \( C_3 \in \concretedomain \) is
        another spurious element of \(\concretization(A) \).
        Note that:
        \begin{gather*}
            \concretization(A) = \{ C_0, C_1, C_2, C_3 \}; \\
            \concretization(\absaliasquery) =
            \{ \aliasquery_{C_0}, \aliasquery_{C_1} \}.
        \end{gather*}
      }]
    \path \edgePointsToR(p,b)
          \edgePointsToR(q,a);
  \end{scope}
\end{tikzpicture}
\caption
  {continuation of \reffigure{store based abstraction limitations 1}.}
\label{figure:store based abstraction limitations 1b}
\end{center}
\end{figure}

\begin{example}
\label{example:points to limitation 1}
An abstract alias query is able to correctly represent when two pointers
point to the same location, also when the pointed location is not known.
The points-to representation is unable to do it,
as illustrated in \refcodesnippet{store based abstraction limitations 1}:
at \refline 7 the abstract alias query that approximates the program
is able express that in all of the possible executions, the expressions
\QuotedCode{*p} and
\QuotedCode{*q} are aliases, that is the variables \Variable{p} and \Variable{q}
point to the same location.  On the other hand, the most precise
points-to approximation cannot capture this fact.
Let \( A \in \abstractdomain \) where
\begin{gather*}
  \locations = \{ p, q, a, b \}, \\
  A = \bigl\{ (p, a), (p, b), (q, a), (q, b) \bigr\}.
\end{gather*}
\begin{center}
\tikzstyle{absLoc}=[
    rectangle,
    join=round,
    draw=LocationBorderColor,
    line width=1pt,
    fill=LocationFillColor,
    inner sep=0pt,
    minimum size=5mm]
\tikzstyle{pointsTo}=[
    ->,
    shorten >=1pt,
    >=stealth',
    thick,
    black!60]
\tikzstyle{alias table style}=[
  fill=FloatingBackgroundColor,
  rounded corners]
\tikzstyle{abstract location style}=[
    rectangle,
    join=round,
    draw=LocationBorderColor,
    line width=1pt,
    rounded corners,
    fill=blue!15,
    inner sep=2pt,
    outer sep=0pt,
    minimum size=5mm]
\tikzstyle{selected abstract location style}=[
    circle,
    join=round,
    draw=orange,
    line width=1pt,
    fill=yellow,
    inner sep=2pt,
    outer sep=0pt,
    minimum size=7mm]
\tikzstyle{points--to style}=[
    ->,
    shorten >=1pt,
    shorten <=2pt,
    >=stealth',
    thick,
    black!60]
\tikzstyle{picture background style}=[
    line width=5mm,
    join=round,
    FloatingBackgroundColor]
\tikzstyle{target background style}=[
    line width=5mm,
    join=round,
    yellow,
    draw=orange]
\tikzstyle{floatingBackgroundStyle}=[
    line width=4mm,
    join=round,
    FloatingBackgroundColor]
\tikzstyle{description label style}=[
  rounded corners,
  fill=DescriptionLabelColor,
  ultra thick,
  draw=black!5]
\tikzstyle{description label}=[
  description label style,
  text width=6.5cm,
  right=1cm, anchor=west]
%
%
\newcommand*{\cellMark}[1]
  {\fcolorbox{LocationBorderColor}{LocationFillColor}{\(#1\)}}
\def \abstractLocation #1 (#2,#3){
  \node[abstract location style] (#1) at (#2, #3) {\Code{#1}};
}
\def \selAbstractLocation #1 (#2,#3){
  \node[selected abstract location style] (#1) at (#2, #3) {\Code{#1}};
}
\def \edgePointsTo(#1,#2){
  (#1) edge [points--to style] (#2)
}
\def \edgePointsToWithStyle(#1,#2, #3){
  (#1) edge [points--to style, #3] (#2)
}
\def \edgePointsToL(#1,#2){
  (#1) edge [points--to style, bend left] (#2)
}
\def \edgePointsToR(#1,#2){
  (#1) edge [points--to style, bend right] (#2)
}
\def \figureBackground (#1) (#2){
  \begin{pgfonlayer}{background}
    \path (#1)+(-0.2, -0.2) node (xxxx) {};
    \path (#2)+(0.2, 0.2) node (yyyy) {};
    \path[rounded corners,
          fill=orange!20,
          ultra thick,
          draw=orange!10]
      (xxxx) rectangle (yyyy);
  \end{pgfonlayer}
}
\def \targetBackground (#1) (#2){
  \begin{pgfonlayer}{background}
    \path (#1)+(0,-0.1) node (xxxx) {};
    \path (#2)+(0,+0.1) node (yyyy) {};
    \path[fill=blue!3,rounded corners, draw=black!20, ultra thick]
      (xxxx) rectangle (yyyy);
  \end{pgfonlayer}
}

\def\drawThePicture{
  \abstractLocation a (1,1)
  \abstractLocation b (1,0)
  \abstractLocation p (0,0.5)
  \abstractLocation q (2,0.5)
  \figureBackground
      (b.south -| p.west)
      (a.north -| q.east)
  \path
    \edgePointsToL(p,a)
    \edgePointsToR(p,b)
    \edgePointsToR(q,a)
    \edgePointsToL(q,b)
    ;
}
\begin{tikzpicture}
  \begin{scope}
    \drawThePicture
  \end{scope}
\end{tikzpicture}

\end{center}
We have \( \concretization(A) = \{ C_0, C_1, C_2, C_3 \} \)
where
\begin{gather*}
  C_0 = \bigl\{ (p, a), (q, a) \bigr\}, \\
  C_1 = \bigl\{ (p, b), (q, b) \bigr\}, \\
  C_2 = \bigl\{ (p, b), (q, a) \bigr\}, \\
  C_3 = \bigl\{ (p, a), (q, b) \bigr\}.
\end{gather*}
Consider the condition
\( (\equality, \indirection p, \indirection q) = c \in \conditions \);
we have
\[
  \filter\bigl(\concretization(A), c\bigr)
    = \concretization(A) \intersection \modelset(c)
    = \{ C_0, C_1 \};
\]
but the abstraction yields
\[
  \abstraction\Bigl( \filter\bigl( \concretization(A), c \bigr) \Bigr) =
  \abstraction\bigl( \{ C_0, C_1 \} \bigr) =
  C_0 \union C_1 =
  \bigl\{ (p, a), (q, a) , (p, b), (q, b) \bigr\} = A.
\]
The \( \abstraction\bigl( \{ C_0, C_1 \} \bigr) \) is the most precise
points-to abstraction that approximates both \( C_0 \) and \( C_1 \);
however, it also approximates \( C_2 \) and \( C_3 \), which are not models
of the condition \( c \).  Again, this is due
to the fact that the points-to representation is attribute independent: in the
above example we are unable to record that when a concrete element \( C \)
is such that \( C \models c \) and \( (p, a) \in C \) then also \( (q, a)
\in C \).
This situation is also is illustrated
in Figures~\ref{figure:store based abstraction limitations 1} and
\ref{figure:store based abstraction limitations 1b}.
\end{example}

In other words, this example shows that it is not possible to define the
filter operation such that it always filters away all the
concrete points-to descriptions that are not model of the supplied condition \( c
\). In symbols:
\[
  \neg \Bigl(
  \forall A \in \abstractdomain \itc
  \forall c \in \conditions \itc
  \concretization\bigl( \filter(A, c) \bigr)
    \subseteq \modelset(c)
  \Bigr).
\]

\codecaption
{in this example two executions are possible. However, in both of them at
\refline{10} the expression
\Variable{**r} is an alias of \Variable{b}.  }
{store based abstraction limitations 2}
\begin{codesnippet}
int a, b, c, *p, *q, **r;
p = &a;
// \( \eval(*p) = \{ a \} \)
q = &c;
// \( \eval(*q) = \{ c \} \)
if (...) r = &p;
else     r = &q;
// \( \eval(*r) = \{ p, q \} \)
*r = &b;
// \( \eval(*p) = \{ a, b \} \)
// \( \eval(*q) = \{ b, c \} \)
\end{codesnippet}

\begin{figure}
\begin{center}
\input{figures/01}
\caption
  {a representation of the points-to information before and after the
  execution of \refline{9} in \refcodesnippet{store based abstraction
  limitations 2}.}
\label{figure:store based abstraction limitations 2}
\end{center}
\end{figure}

\begin{figure}
\begin{center}
\input{figures/common_style}
\def \drawThePicture[#1,#2,#3,#4,#5,#6] {
  \abstractLocation a (2, 0)
  \abstractLocation b (2, 1)
  \abstractLocation c (2, 2)
  \abstractLocation r (0, 1)
  \abstractLocation p (1, 0)
  \abstractLocation q (1, 2)
  \node[anchor=north] at (1, -0.5) (table) {
    \begin{tabular}{*{4}{>{\(}c<{\)}}}
      \hline \hline
             #1   &  \Code{a} & \Code{b} & \Code{c} \\
      \hline
      \Code{*p}   & #2 \\
      \Code{*q}   & #3 \\
      \Code{**r}  & #4 \\
      \hline \hline
                  &  \Code{p} & \Code{q} & \\
      \hline
      \Code{*r}   & #5 \\
      \hline \hline
    \end{tabular}
  };
  \figureBackground
    (table.south -| table.west)
    (c.north -| table.east)
  \node [description label] at (table.north east) {#6};
}
\begin{tikzpicture}
  \begin{scope}
    \drawThePicture[
      \aliasquery_{ \executionEnvironment_0 },
      1 & 0 & 0,
      0 & 1 & 0,
      0 & \cellMark{1} & 0,
      0 & 1 &,
      {
        Below, an extract of the concrete alias query \(
        \aliasquery_{ m_0 } \) induced by the concrete memory description
        \( m_0 \in \memories \).  Above, a graphical representation
        of the points-to information
        \( C_0 \in \concretedomain \)
        associated to the same memory.
      }]
    \path \edgePointsTo(p,a)
          \edgePointsTo(q,b)
          \edgePointsToL(r,q);
  \end{scope}
  \begin{scope}[yshift=-7.5cm]
    \drawThePicture[
      \aliasquery_{ \executionEnvironment_1 },
      0 & 1 & 0,
      0 & 0 & 1,
      0 & \cellMark{1} & 0,
      1 & 0 &  ,
      {
        As above, on the concrete memory description \(
        \executionEnvironment_1 \).
      }]
    \path \edgePointsTo(p,b)
          \edgePointsTo(q,c)
          \edgePointsToR(r,p);
  \end{scope}
\end{tikzpicture}
\caption
  {this example shows how, in the code of \refcodesnippet{store based
  abstraction limitations 2}, the points-to representation fails to
  describe
  that \Variable{**r} is alias of \Variable{b} on all of the possible
  executions.}
\label{figure:store based abstraction limitations 2b}
\end{center}
\end{figure}

\begin{figure}
\begin{center}
\input{figures/common_style}
\def \drawThePicture[#1,#2,#3,#4,#5,#6] {
  \abstractLocation a (2, 0)
  \abstractLocation b (2, 1)
  \abstractLocation c (2, 2)
  \abstractLocation r (0, 1)
  \abstractLocation p (1, 0)
  \abstractLocation q (1, 2)
  \node[anchor=north] at (1, -0.5) (table) {
    \begin{tabular}{*{4}{>{\(}c<{\)}}}
      \hline \hline
             #1   &  \Code{a} & \Code{b} & \Code{c} \\
      \hline
      \Code{*p}   & #2 \\
      \Code{*q}   & #3 \\
      \Code{**r}  & #4 \\
      \hline \hline
                  &  \Code{p} & \Code{q} & \\
      \hline
      \Code{*r}   & #5 \\
      \hline \hline
    \end{tabular}
  };
  \figureBackground
    (table.south -| table.west)
    (c.north -| table.east)
  \node [description label] at (table.north east) {#6};
}
\begin{tikzpicture}
  \begin{scope}
    \drawThePicture[
      \absaliasquery,
      \top & \top & 0,
      0 & \top & \top,
      0 & \cellMark{1} & 0,
      \top & \top &  ,
      {
        Above, a graphical representation of the most precise points-to
        abstraction \( A \)
        \[
          \abstraction\bigl( \{ C_0, C_1 \} \bigr) =
            C_0 \union C_1 = A \in \abstractdomain.
        \]
        Below, an extract of the abstract alias query
        \begin{gather*}
          \abstraction
            \bigl( \{ \aliasquery_{m_0}, \aliasquery_{m_1} \} \bigr) \\
          \quad = \aliasquery_{ m_0 } \sqcup \aliasquery_{ m_1 } \\
          \quad = \absaliasquery \in \absAliasQueryDomain.
        \end{gather*}
        Note that \( \absaliasquery(\indirection \indirection r, b) = 1 \),
        that is
        the abstract alias query is able to represent that the expressions
        \(\indirection \indirection r\) and \( b \)
        are definitely aliases.
      }]
    \path \edgePointsTo(p,a)
          \edgePointsTo(p,b)
          \edgePointsTo(q,b)
          \edgePointsTo(q,c)
          \edgePointsToL(r,q)
          \edgePointsToR(r,p);
  \end{scope}
  \begin{scope}[yshift=-7.5cm]
    \drawThePicture[
      A,
      \top & \top & 0,
      0 & \top & \top,
      \top & \cellMark{\top} & \top,
      \top & \top &  ,
      {
        Here both the table and the graph represent the points-to
        information \( A \).
        Note that
        \[
          \eval(A, \indirection \indirection r) \neq \eval(A, b),
        \]
        that is (\refdefinition{Induced alias relation})
        \( A(\indirection \indirection r, b) = \top \), i.e.,
        the points-to representation is unable to express that
        \( \indirection \indirection r \) and \( a \) will be definitely aliases.
      }]
    \path \edgePointsTo(p,a)
          \edgePointsTo(p,b)
          \edgePointsTo(q,b)
          \edgePointsTo(q,c)
          \edgePointsToL(r,q)
          \edgePointsToR(r,p);
  \end{scope}
\end{tikzpicture}
\caption
  {continuation of \reffigure{store based abstraction limitations 2b}}
\label{figure:store based abstraction limitations 2bb}
\end{center}
\end{figure}

\begin{figure}
\begin{center}

\def \drawThePicture[#1,#2,#3,#4,#5,#6] {
  \abstractLocation a (2, 0)
  \abstractLocation b (2, 1)
  \abstractLocation c (2, 2)
  \abstractLocation r (0, 1)
  \abstractLocation p (1, 0)
  \abstractLocation q (1, 2)
  \node[anchor=north] at (1, -0.5) (table) {
    \begin{tabular}{*{4}{>{\(}c<{\)}}}
      \hline \hline
             #1   &  \Code{a} & \Code{b} & \Code{c} \\
      \hline
      \Code{*p}   & #2 \\
      \Code{*q}   & #3 \\
      \Code{**r}  & #4 \\
      \hline \hline
                  &  \Code{p} & \Code{q} & \\
      \hline
      \Code{*r}   & #5 \\
      \hline \hline
    \end{tabular}
  };
  \figureBackground
    (table.south -| table.west)
    (c.north -| table.east)
  \node [description label] at (table.north east) {#6};
}
\begin{tikzpicture}
  \begin{scope}
    \drawThePicture[
      C_2,
      1 & 0 & 0,
      0 & 0 & 1,
      1 & \cellMark{0} & 0,
      1 & 0 &  ,
      {
        The concrete points-to information \( C_2 \in \concretedomain \)
        is a spurious element of \( \concretization(A) \).  Note that
        \[
            \aliasquery_{ C_2 }
          \notin
            \concretization( \absaliasquery );
        \]
        that is, the concrete alias relation \( \aliasquery_{ C_2 } \)
        induced by \( C_2 \) would not be generated using the alias
        representation.
      }]
    \path \edgePointsTo(p,a)
          \edgePointsTo(q,c)
          \edgePointsToR(r,p);
  \end{scope}
\end{tikzpicture}

\caption
  {continuation of \reffigure{store based abstraction limitations 2bb}}
\label{figure:store based abstraction limitations 2bc}
\end{center}
\end{figure}

\begin{example}
The points-to representation keeps track only of the relations between
pointers and pointed objects that span exactly one level of indirection.
For example, in \refcodesnippet{store based
abstraction limitations 2}, the points-to representation is unable to
\emph{natively} express that \Variable{**r} is an alias of \Variable{b},
this information ---though present in the \emph{complete} alias relation---
is \emph{inferred} from the points-to pairs explicitly memorized
by applying the transitive
property:  it is known that \Variable{r} points to
\Variable{p} and that \Variable{p} points to \Variable{b};
then it can be deduced
that \Variable{*r} points to \Variable{b}.  But this step causes a loss of
accuracy when there are more intermediate variables (\reffigure{store
based abstraction limitations 2}). The alias query representation
is able describe that after the execution of \refline{9}, the expression
\QuotedCode{**r} is \emph{definitely} an alias of \QuotedCode{b}, whereas
the points-to representation fails to do it.
Let \( A \in \abstractdomain \) such that
\begin{gather*}
  \locations = \{ p, q, r, a, b, c \}, \\
  A = \bigl\{ (r, p), (r, q), (p, a), (q, c) \bigr\}.
\end{gather*}
We have that \( \{ C_0, C_1 \} = \concretization(A) \)
where
\begin{gather*}
  C_0 = \bigl\{ (r, p), (p, a), (q, c) \bigr\}, \\
  C_1 = \bigl\{ (r, q), (p, a), (q, c) \bigr\}.
\end{gather*}
Let \( (\indirection r, b) = x \in \assignments \).
Performing the assignment on the elements found in the concretization of
\( A \) we obtain
\begin{gather*}
  \assign(C_0, x) = \bigl\{ (r, p), (p, b), (q, c) \bigr\}, \\
  \assign(C_1, x) = \bigl\{ (r, q), (p, a), (q, b) \bigr\}.
\end{gather*}
Computing the abstraction of the result of the concrete operation we find
\begin{align*}
  \abstraction\Bigl( \assign\bigl( \concretization(A), x \bigr) \Bigr)
  & = \abstraction\Bigl( \bigl\{ \assign(C_0, x), \assign(C_1, x) \bigr\} \Bigr) \\
  & = \assign(C_0, x) \union \assign(C_1, x) \\
  & = A \union \bigl\{ (p, b), (q, b) \bigr\}.
\end{align*}
Let
\[
  C_3 = \bigl\{ (r, p), (p, a), (q, b) \bigr\}
  \subseteq
  \abstraction\Bigl( \assign\bigl( \concretization(A), x \bigr) \Bigr);
\]
note that
\begin{prooftable}
H0  & \reftheorem{Correctness of the assignment}, correctness of the assignment. \\
H1  & \reflemma{Monotonicity of the concretization function}, monotonicity of the concretization function. \\
H2  & \reflemma{Abstraction effect}, the abstraction effect. \\
H3  & \refdefinition{Abstraction function}, the abstraction function. \\
\hline
D0  & \( \assign\bigl(\concretization(A), x\bigr) \subseteq
         \concretization\bigl(\assign(A, x) \bigr) \) & (H0) \\
D1  & \( \abstraction\Bigl( \assign\bigl(\concretization(A), x\bigr) \Bigr) \subseteq
         \abstraction\Bigl( \concretization\bigl(\assign(A, x) \bigr) \Bigr) \) & (D0, H3) \\
D2  & \( \abstraction\Bigl( \concretization\bigl(\assign(A, x) \bigr) \Bigr)
         \subseteq \assign(A, x) \) & (H2) \\
D3  & \( \abstraction\Bigl( \assign\bigl(\concretization(A), x\bigr) \Bigr)
         \subseteq \assign(A, x) \) & (D1, D2) \\
\proofleaf
    & \( \concretization\biggl(\abstraction\Bigl(\assign\bigl(\concretization(A),a\bigr) \Bigr) \biggr)
         \subseteq \concretization\bigl(\assign(A, x)\bigr) \) & (D3, H1) \\
\end{prooftable}
then
\begin{gather*}
  C_3 \in \concretization\bigl( \assign(A, a) \bigr); \\
\intertext{but}
  C_3 \neq \assign(C_0, a); \\
  C_3 \neq \assign(C_1, a);
\end{gather*}
that is, there exist no concrete elements \( C \in
\concretization(A) \) such that \( C_3 = \assign(C, a) \). Again this
inaccuracy is due to the lack of relational information in the points-to
representation: in this example, given a concrete element \( C \in
\assign(A, a) \), we are unable to tell that if \( (r, p) \in C \) then
\( (p, b) \in C \) and \( (q, b) \not \in C \).
The situation just described is illustrated in
Figures~\ref{figure:store based abstraction limitations 2b},
\ref{figure:store based abstraction limitations 2bb} and
\ref{figure:store based abstraction limitations 2bc}.
\end{example}

In other words, this example shows that it is not possible to formulate the
assignment operation in such a way that each concrete element approximated
by \( \assign(A, a) \) can be expressed as the result of the
concrete assignment performed on one
of the elements of \( \concretization(A) \).
In symbols
\begin{multline*}
  \neg \Bigl(
  \forall A \in \abstractdomain \itc
  \forall a \in \assignments \itc \\
    \forall C \in \concretization\bigl( \assign(A, a) \bigr) \itc
    \exists D \in \concretization(A) \suchthat
      C = \assign(D, a)
  \Bigr).
\end{multline*}

\subsection{Precision of the Presented Method}
The two examples introduced above present a limitation of the form --- all points-to
based methods are \emph{not enough precise} to capture \emph{this} fact.
In terms of the partial order of the domain
this can be seen as a \emph{lower limit} to the precision
attainable with points-to based methods.
On the other hand it is also interesting to find out what are the
precision \emph{upper limits} of the proposed method, i.e., statements of
the form --- the given points-to based method \emph{is enough
precise} to capture \emph{that} fact.
In particular, we want to analyze the situation of the presented method with
respect to the limitations of the points-to representation, that is
whether or not the inclusions in
\reftheorem{Correctness of the assignment} and \ref{theorem:Correctness of
the filter} are also equalities, i.e., if it holds that, for all \( A \in
\abstractdomain \), \( e \in \expressions \), \( a \in \assignments \) and
\( c \in \conditions \)
\begin{gather*}
    \bigunion
      \bigl\{\,
        \eval(C, e)
      \bigm|
        C \in \concretization(A)
      \,\bigr\}
    \supseteq \eval(A, e);
  \\
    \assign\bigl( \concretization(A), a \bigr)
    \supseteq
    \concretization\bigl(\assign(A, a)\bigr);
  \\
    \filter\bigl( \concretization(A), c \bigr)
    \supseteq \concretization\bigl(\filter(A, c)\bigr).
\end{gather*}
From the characterization presented in \refsection{Precision limits
points-to representation} these can be rewritten to stress the
attribute independent nature of the points-to representation, i.e., by focusing on
the single \emph{arcs} instead of the whole points-to relation. Let \( A \in
\abstractdomain \) such that
\( \concretization(A) \neq \emptyset \),\footnote{Note that the additional hypothesis, \(
\concretization(A) \neq \emptyset \), is required by \reflemma{Abstraction
effect} to prove the opposite of the inclusions used for the correctness
results.} then we have
\begin{gather*}
  \forall l \in \eval(A, e) \itc
  \exists C \in \concretization(A) \suchthat
    \eval(C, e) = \{ l \},
  \\
    \forall (l, m) \in \assign(A, a) \itc
    \exists C \in \concretization(A) \suchthat
      (l, m) \in \assign(C, a),
  \\
  \forall (l, m) \in \filter(A, c) \itc
  \exists C \in \concretization(A) \suchthat
    C \in \modelset(c) \land (l, m) \in C,
\end{gather*}
respectively.
Unfortunately, for all these cases there exists a counterexample.

\subsubsection{The Abstract Evaluation Is Not Optimal}
The following example highlights that the abstract evaluation function
(\refdefinition{eval function}) is not optimal with respect to the
points-to representation, i.e., there exists \( A \in \abstractdomain \),
\( \concretization(A) \neq \emptyset \) and \( e \in \expressions \)
such that
\[
    \eval(A, e) \setminus
    \bigunion
      \bigl\{\,
        \eval(C, e)
      \bigm|
        C \in \concretization(A)
      \,\bigr\} \neq \emptyset.
\]

\begin{example}
\label{example:eval is not optimal}
Let \( A \in \abstractdomain \) such that
\begin{gather*}
  \locations = \{ a, b, c \}, \\
  A = \bigl\{ (a, a), (a, b), (b, c) \bigr\}.
\end{gather*}
We have that \( \{ C_1, C_2 \} = \concretization(A) \) where
\begin{gather*}
  C_1 = \bigl\{ (a, a), (b, c) \bigr\}, \\
  C_2 = \bigl\{ (a, b), (b, c) \bigr\}.
\end{gather*}
Consider the expression \( e = \indirection \indirection a \).
Performing the evaluation of \( e \)
as described in \refdefinition{eval function} we obtain
\begin{filtertable}
\( i \)
  & \( \eval(C_1, e, i) \)
  & \( \eval(C_2, e, i) \)
  & \( \eval(A, e, i) \) \\
\hline
2 & \( \{ a \} \) & \( \{ a \} \) & \( \{ a \} \) \\
1 & \( \{ a \} \) & \( \{ b \} \) & \( \{ a, b \} \) \\
0 & \( \{ a \} \) & \( \{ c \} \) & \( \{ a, b, c \} \) \\
\end{filtertable}
Note that \( b \in \eval(A, e) \) but there exist no
\( C \in \concretization(A) \) such that \( \{ b \} = \eval(C, e) \),
indeed
\[
  \bigunion_{ C \in \concretization(A) } \eval(C, e) = \{ a, c \}.
\]
The spurious location \( b \) in the result of the evaluation of the
expression \( e \) in \( A \) is due to the fact that the formulation of
the abstract evaluation does not exploit that in a concrete points-to
description a location can point to only one location; in this case there
exists no \( C \in \concretization(A) \) such that \( \bigl\{ (a, a), (a,
b) \bigr\} \subseteq C \).
A graphical representation of this example is reported in Figures~\ref{figure:eval
is not optimal} and \ref{figure:eval is not optimal 2}.
\end{example}

\begin{figure}
\begin{center}
\input{figures/common_style}
\begin{tikzpicture}
  \def\drawThePicture[#1]{
    \abstractLocation a (0, 1)
    \abstractLocation b (1, 1)
    \abstractLocation c (1, 0)
    \figureBackground
        (c.south -| a.west) (b.north -| b.east)
    \node [description label] at (1, 0.5) {#1};
  }
  \begin{scope}
    \drawThePicture[
      {
        The abstract memory \( A \),
        \[ \concretization(A) = \{ C_1, C_2 \}. \]
      }]
    \path \edgePointsToWithStyle(a,a, loop left)
          \edgePointsTo(a,b)
          \edgePointsTo(b,c);
  \end{scope}
  \begin{scope}[yshift=-2.5cm]
    \drawThePicture[
      {
        The concrete memory description \( C_1 \).
      }]
    \path \edgePointsTo(a,b)
          \edgePointsTo(b,c);
  \end{scope}
  \begin{scope}[yshift=-5cm]
    \drawThePicture[
      {
        The concrete memory description \( C_2 \).
      }]
    \path (a) edge [points--to style, loop left] (a)
          \edgePointsTo(b,c);
  \end{scope}
\end{tikzpicture}
\caption
  {the abstract evaluation function is not optimal.}
\label{figure:eval is not optimal}
\end{center}
\end{figure}

\begin{figure}
\begin{center}
\input{figures/16}
\caption
  {the evaluation process of the expression \( \indirection \indirection a
  \) on the memory \( A \) of \refexample{eval is not optimal}. An optimal
  evaluation function would not follow the arc \( (a, b) \) between \( i = 1 \)
  and \( i = 0 \) (the dashed arc in the figure).}
\label{figure:eval is not optimal 2}
\end{center}
\end{figure}

\subsubsection{The Abstract Assignment Is Not Optimal}
We present another example that highlights how the abstract assignment
operation formulated in \refdefinition{Assignment evaluation} is not
optimal for the points-to representation, i.e.,
there exists \( A \in \abstractdomain \),
\( \concretization(A) \neq \emptyset \), \( a \in \assignments \) such that
\[
    \assign(A, a) \setminus
    \assign\bigl( \concretization(A), a\bigr) \neq \emptyset.
\]
Note that this limitation is still true also assuming to have an optimal
abstract evaluation function.

\begin{example}
\label{example:filter incompleteness}
Let \( A \in \abstractdomain \) such that
\begin{gather*}
  \locations = \{ a, b, c \}, \\
  A = \bigl\{ (a, b), (a, c) \bigr\}.
\end{gather*}
We have that \( \{ C_1, C_2 \} = \concretization(A) \) where
\begin{gather*}
  C_1 = \bigl\{ (a, b) \bigr\}, \\
  C_2 = \bigl\{ (a, c) \bigr\}.
\end{gather*}
Let \( (*a, *a) = x \in \assignments \).
Performing the assignment \( x \) on the elements
of \( \concretization(A) \) we obtain
\begin{gather*}
  \assign(C_1, x) = \bigl\{ (a, b), (b, b) \bigr\}, \\
  \assign(C_2, x) = \bigl\{ (a, c), (c, c) \bigr\}.
\end{gather*}
Computing the abstraction of the result of the concrete operation we find
\begin{align*}
  \abstraction\Bigl( \assign\bigl( \concretization(A), x \bigr) \Bigr)
  & = \abstraction\Bigl( \bigl\{ \assign(C_1, x), \assign(C_2, x) \bigr\} \Bigr) \\
  & = \assign(C_1, x) \union \assign(C_2, x) \\
  & = A \union \bigl\{ (b, b), (c, c) \bigr\}.
\end{align*}
Note that
performing the abstract evaluation of the lhs and the rhs of the
assignment as described in \refdefinition{eval function} yields
\(
  \eval(A, \indirection a) = \{ b, c \},
\)
which is the most precise result possible for the abstract
evaluation of the expression \( \indirection a \), indeed
\begin{align*}
  \bigunion
    \bigl\{\,
      \eval(C, a)
    \bigm|
      C \in \concretization(A)
    \,\bigr\} & =
  \eval(C_1, \indirection a) \union \eval(C_2, \indirection a) \\
    & = \{ b \} \union \{ c \} = \{ b, c \} \\
    & = \eval(A, \indirection a).
\end{align*}
In this case, the abstract assignment (\refdefinition{Assignment evaluation}) yields
\begin{align*}
  \assign(A, x)
  & = A \union \eval(A, \indirection a) \times \eval(A, \indirection a) \\
  & = A \union \{ b, c \} \times \{ b, c \} \\
  & = A \union \bigl\{ (b, b), (b, c), (c, b), (c, c) \bigr\}.
\end{align*}
Note that
\[
  \assign(A, x) \setminus \assign\bigl( \concretization(A), x \bigr)
  = \bigl\{ (b, c), (c, b) \bigr\}.
\]
The arcs \( \bigl\{ (b, c), (c, b) \bigr\} \) do not correspond to
any concrete assignment: they are artifacts of this abstraction.
But note
that in this case the inaccuracy cannot be ascribed to the abstract
evaluation of the expressions that, in this case, exposes an optimal
behaviour.
The problem is that the evaluation of the rhs and the lhs for the
assignment are not related each other: this way it becomes possible that
the lhs evaluates to \QuotedCode{b} and the rhs evaluates to \QuotedCode{c}
---thus generating the spurious arc \( (b,c) \)--- also when the rhs and
the lhs are the \emph{same} expression.
This example is illustrated in
\reffigure{assignment is not optimal 2}.
\end{example}

\begin{figure}
\begin{center}
\input{figures/common_style}
\begin{tikzpicture}
  \def\drawThePicture[#1]{
    \abstractLocation a (1, 1)
    \abstractLocation b (0, 1)
    \abstractLocation c (1, 0)
    \figureBackground
        (b.west |- c.south) (a.north east)
    \node [description label] at (1, 0.5) {#1};
  }
  \begin{scope}
    \drawThePicture[
      {
        The abstraction \( A \) before the execution of the assignment
        \( x = (\indirection a, \indirection a) \).
        \[ \concretization(A) = \{ C_1, C_2 \}. \]
      }]
    \path \edgePointsTo(a,b)
          \edgePointsTo(a,c);
  \end{scope}
  \begin{scope}[yshift=-2.5cm]
    \drawThePicture[
      {
        The concrete memory description \( C_1 \).
      }]
    \path \edgePointsTo(a,b);
  \end{scope}
  \begin{scope}[yshift=-5cm]
    \drawThePicture[
      {
        The concrete memory description
        \[
          \assign(C_1, x).
        \]
      }]
    \path \edgePointsTo(a,b)
          (b) edge [points--to style, loop left] (b);
  \end{scope}
  \begin{scope}[yshift=-7.5cm]
    \drawThePicture[
      {
        The concrete memory description \( C_2 \).
      }]
    \path \edgePointsTo(a, c);
  \end{scope}
  \begin{scope}[yshift=-10cm]
    \drawThePicture[
      {
        The concrete memory description
        \[
          \assign(C_2, x).
        \]
      }]
    \path \edgePointsTo(a,c)
          (c) edge [points--to style, loop right] (c);
  \end{scope}
  \begin{scope}[yshift=-12.5cm]
    \drawThePicture[
      {
        The abstract memory
        \begin{multline*}
          \abstraction\Bigl( \bigl\{ \assign(C_1, x), \assign(C_2, x) \bigr\}
          \Bigr) \\
            = \assign(C_1, x) \union \assign(C_2, x).
        \end{multline*}
      }]
    \path \edgePointsTo(a,b)
          \edgePointsTo(a,c)
          (b) edge [points--to style, loop left] (b)
          (c) edge [points--to style, loop right] (c);
  \end{scope}
  \begin{scope}[yshift=-15cm]
    \drawThePicture[
      {
        The abstraction \( \assign(A, x) \) resulting from the
        execution of the assignment. The spurious arcs are
        \( \bigr\{ (b, c), (c, b) \bigl\} \).
      }]
    \path \edgePointsTo(a,b)
          \edgePointsTo(a,c)
          (b) edge [points--to style, loop left] (b)
          (c) edge [points--to style, loop right] (c)
          (b) edge [points--to style, red, dashed] (c)
          (c) edge [points--to style, red, dashed, bend left] (b);
  \end{scope}
\end{tikzpicture}
\caption
  {the abstract assignment formulation is not optimal.}
\label{figure:assignment is not optimal 2}
\end{center}
\end{figure}

\subsubsection{The Abstract Filter Is Not Optimal}
Finally, we report an example that shows the same inaccuracy in the filter operation,
i.e., there exists \( A \in \abstractdomain \), \( \concretization(A) \neq
\emptyset \), and \( c \in \conditions \) such that
\[
  \filter(A, c) \setminus
  \filter\bigl( \concretization(A), c\bigr) \neq \emptyset.
\]

\begin{example}
\label{example:filter is not optimal}
Let \( A \in \abstractdomain \) such that
\begin{gather*}
  \locations = \{ a, b \}, \\
  A = \bigl\{ (a, a), (a, b), (b, b) \bigr\}.
\end{gather*}
We have that \( \{ C_1, C_2 \} = \concretization(A) \) where
\begin{gather*}
  C_1 = \bigl\{ (a, a), (b, b) \bigr\}, \\
  C_2 = \bigl\{ (a, b), (b, b) \bigr\}.
\end{gather*}
Consider now the condition \( (**a, b) = c \in \conditions \).  Since
\begin{gather*}
  \eval(C_1, \indirection \indirection a) = \{ a \}, \\
  \eval(C_2, \indirection \indirection a) = \{ b \},
\end{gather*}
only \( C_2 \) satisfies \( c \), i.e.
\[
  \filter\bigl( \concretization(A), c \bigl) =
  \concretization(A) \intersection \modelset(c) = \{ C_2 \}.
\]
Performing the filter operation as described in \refdefinition{Filter 3} on \( A \)
we do not improve the precision, that is \( \filter(A, c) = A \).
\begin{filtertable}
\( i \)
  & \( \eval(A, \indirection \indirection a, i) \),
  & \( \target(A, \indirection \indirection a, i) \),
  & Removed arcs \\
\hline
2 & \( \{ a \} \) & \( \{ a \} \)  & \( \emptyset \) \\
1 & \( \{ a, b \} \) & \( \{ a, b \} \) & \( \emptyset \) \\
0 & \( \{ a, b \} \) & \( \{ b \} \)  & \( \emptyset \) \\
\end{filtertable}
Then note that
\[
  \filter(A, c) \setminus
  \abstraction\Bigl( \filter\bigl( \concretization(A), c\bigr) \Bigr)
  = A \setminus C_2
  = \bigl\{ (a, a) \bigr\}.
\]
This means that the filter is unable to remove the spurious arc \( (a, a)
\).
A graphical representation of this situation is presented in
\reffigure{filter is not optimal}, while \reffigure{filter is not optimal
b} present a graphical representation of the filter computation.
\end{example}

\begin{figure}
\begin{center}
\input{figures/common_style}
\begin{tikzpicture}
  \def\drawThePicture[#1]{
    \abstractLocation a (0, 0)
    \abstractLocation b (1, 0)
    \figureBackground
        (a.west |- a.south) (b.north -| b.east)
    \node [description label] at (b.east) {#1};
  }
  \begin{scope}
    \drawThePicture[
      {
        The abstraction \( A \).
        \[ \concretization(A) = \{ C_1, C_2 \}. \]
      }]
    \path \edgePointsTo(a,b)
          (a) edge [points--to style, loop below] (a)
          (b) edge [points--to style, loop below] (b);
  \end{scope}
  \begin{scope}[yshift=-2.5cm]
    \drawThePicture[
      {
        The concrete memory description \( C_1 \).
        This is not a model of
        \[
          c = (\indirection \indirection a, b).
        \]
      }]
    \path (a) edge [points--to style, loop below] (a)
          (b) edge [points--to style, loop below] (b);
  \end{scope}
  \begin{scope}[yshift=-5cm]
    \drawThePicture[
      {
        The concrete memory description \( C_2 \).
        This is a model of \( c \);
      }]
    \path \edgePointsTo(a, b)
          (b) edge [points--to style, loop below] (b);
  \end{scope}
  \begin{scope}[yshift=-7.5cm]
    \drawThePicture[
      {
        The abstraction \( \filter(A, c) = A\).
        The spurious arc is \( \bigr\{ (a, a) \bigl\} \).
      }]
    \path \edgePointsTo(a,b)
          (a) edge [points--to style, loop below, red, dashed] (a)
          (b) edge [points--to style, loop below] (b);
  \end{scope}
\end{tikzpicture}
\caption
  {the abstract filter formulation is not optimal.}
\label{figure:filter is not optimal}
\end{center}
\end{figure}

\begin{figure}
\begin{center}
\input{figures/15}
\caption
  {in \refexample{filter is not optimal} the filter is unable to remove the spurious
  arc \( (a, a) \).}
\label{figure:filter is not optimal b}
\end{center}
\end{figure}

Though the current formulation of the filter
operation is not optimal, in the next example we show that
iterating the application of the filter on the same condition it is
possible to refine the points-to approximation.

\begin{example}
\label{example:iterating filter 1}
Let \( A \in \abstractdomain \) such that
\begin{gather*}
  \locations = \{ a, b, c \}, \\
  A = \bigl\{ (a, a), (a, b), (a, c), (b, c), (c, a) \bigr\}.
\end{gather*}
\begin{center}

\def \drawThePicture[#1]{
  \abstractLocation a  (0, 0)
  \abstractLocation c  (2, 0)
  \abstractLocation b  (1, 1.5)
  \figureBackground
    (a.south west)
    (b.north -| c.east)

  \node at (1,-1) {#1};
  \path
    \edgePointsToL(a,b)
    \edgePointsToL(b,c)
    (c) edge [points--to style, bend left] (a)
    ;
}
\begin{tikzpicture}
  \begin{scope}
    \drawThePicture[Initial.]
    \path
      (a) edge [points--to style, loop] (a)
      (a) edge [points--to style, bend left] (c)
      ;
  \end{scope}
  \begin{scope}[xshift=4cm]
    \drawThePicture[Iteration One.]
    \path
      (a) edge [points--to style, loop] (a)
      ;
  \end{scope}
  \begin{scope}[xshift=8cm]
    \drawThePicture[Iteration Two.]
  \end{scope}
\end{tikzpicture}

\end{center}
Consider the condition \( x = ( \equality, \indirection \indirection a, c )
\in \conditions \).
From the definition of the evaluation funtion (\refdefinition{eval
function}), we have
\begin{gather*}
  \eval(A, \indirection \indirection a) = \{ a, b, c \}, \\
  \eval(A, c) = \{ c \}.
\end{gather*}
From the filter definition (\refdefinition{Filter 3}) we have
\begin{gather*}
  I = \eval(A, \indirection \indirection a) \intersection \eval(A, c)
    = \{ c \}, \\
  \filter(A, x)
  = \filter(A, I, \indirection \indirection a) \intersection
    \filter(A, I, c)
  = \filter\bigl( A, \{ c \}, \indirection \indirection a \bigr)
    \intersection
    \filter\bigl(A, \{ c \}, c \bigr).
\end{gather*}
We consider only the lhs
\( \filter\bigl( A, \{ c \}, \indirection \indirection a \bigr) \)
as, from the definition of the filter 2, it is clear that filtering on the
rhs does not improve the precision of the
approximation, that is, \( \filter\bigl(A, \{ c \}, c \bigr) = A \).
We have
\begin{filtertable}
\( i \) & \( \eval(A, \indirection \indirection a) \)
        & \( \target\bigl(A, \{ c \}, \indirection \indirection a, i \bigr)\)
        & \( \filter\bigl(A, \{ c\}, \indirection \indirection a, i\bigr) \) \\
\hline
2 & \( \{ a \} \)       & \( \{ a \} \)
  & \( A \setminus \bigl\{ (a, c) \bigr\} \) \\
1 & \( \{ a, b, c \} \) & \( \{ a, b \} \) & \( A \) \\
0 & \( \{ a, b, c \} \) & \( \{ c \} \)    & \( A \) \\
\end{filtertable}
That is, from the first application of the filter we can remove the spurious
arc \( (a, c) \).
Now we proceed applying the filter again.
Let \( B = \filter\bigl(A, \{ c\}, \indirection \indirection a, i\bigr)
= A \setminus \bigl\{ (a, c) \bigr\} \).
We have
\begin{filtertable}
\( i \) & \( \eval(B, \indirection \indirection a) \)
        & \( \target\bigl(B, \{ c \}, \indirection \indirection a, i \bigr)\)
        & \( \filter\bigl(B, \{ c\}, \indirection \indirection a, i\bigr) \) \\
\hline
2 & \( \{ a \} \)       & \( \{ a \} \)
  & \( B \setminus \bigl\{ (a, a) \bigr\} \) \\
1 & \( \{ a, b \} \) & \( \{ b \} \) & \( B \) \\
0 & \( \{ a, b, c \} \) & \( \{ c \} \)    & \( B \) \\
\end{filtertable}
Note that in the second application of the filter we are able to remove
another arc, \( (a, a) \), that it was not removed during the first iteration.
\begin{center}
\input{figures/common_style}
\begin{tikzpicture}
  \begin{scope}
    \node[abstract location style] (a) at (0, 3) {\Code{a}};
    \node[abstract location style] (ai) at (2, 3) {\Code{a}};
    \node[abstract location style] (bi) at (2, 2) {\Code{b}};
    \node[abstract location style] (ci) at (2, 1) {\Code{c}};
    \node[abstract location style] (aii) at (4, 3) {\Code{a}};
    \node[abstract location style] (bii) at (4, 2) {\Code{b}};
    \node[abstract location style] (cii) at (4, 1) {\Code{c}};
    \node (ev2) at (0, 4) { \( \eval(2) \) };
    \node (ev1) at (2, 4) { \( \eval(1) \) };
    \node (ev0) at (4, 4) { \( \eval(0) \) };
    \node (tr2) at (0, 2.5) { \( \target(2) \) };
    \node (tr1) at (2, 2.5) { \( \target(1) \) };
    \node (tr0) at (4, 0.5) { \( \target(0) \) };
    \figureBackground
      (tr2.west |- tr0.south)
      (ev0.north east)

    \targetBackground
      (tr2.south west)
      (tr2.east |- a.north)

    \targetBackground
      (bi.south -| tr1.west)
      (ai.north -| tr1.east)

    \targetBackground
      (tr0.south west)
      (tr0.east |- cii.north)
    \path
      \edgePointsTo(a,ai)
      \edgePointsTo(a,bi)
      (a) edge [points--to style, red, dashed] (ci)
      \edgePointsTo(ai,aii)
      \edgePointsTo(ai,bii)
      \edgePointsTo(ai,cii)
      \edgePointsTo(bi,cii)
      \edgePointsTo(ci,aii)
      ;
    \node at (2,-0.5) {First iteration.};
  \end{scope}

  \begin{scope}[xshift=6.5cm]
    \node[abstract location style] (a) at (0, 3) {\Code{a}};
    \node[abstract location style] (ai) at (2, 3) {\Code{a}};
    \node[abstract location style] (bi) at (2, 2) {\Code{b}};
    \node[abstract location style] (aii) at (4, 3) {\Code{a}};
    \node[abstract location style] (bii) at (4, 2) {\Code{b}};
    \node[abstract location style] (cii) at (4, 1) {\Code{c}};
    \node (ev2) at (0, 4) { \( \eval(2) \) };
    \node (ev1) at (2, 4) { \( \eval(1) \) };
    \node (ev0) at (4, 4) { \( \eval(0) \) };
    \node (tr2) at (0, 2.5) { \( \target(2) \) };
    \node (tr1) at (2, 1.5) { \( \target(1) \) };
    \node (tr0) at (4, 0.5) { \( \target(0) \) };
    \figureBackground
      (tr2.west |- tr0.south)
      (ev0.north east)

    \targetBackground
      (tr2.south west)
      (tr2.east |- a.north)

    \targetBackground
      (tr1.south west)
      (bi.north -| tr1.east)

    \targetBackground
      (tr0.south west)
      (tr0.east |- cii.north)
    \path
      (a) edge [points--to style, red, dashed] (ai)
      \edgePointsTo(a,bi)
      \edgePointsTo(ai,aii)
      \edgePointsTo(ai,bii)
      \edgePointsTo(bi,cii)
      ;
    \node at (2,-0.5) {Second iteration.};
  \end{scope}
\end{tikzpicture}
\end{center}
\end{example}

\subsubsection{Another Consideration on the Precision of the Filter
Operation}
It is possible to show that the formulation of the abstract filter
operation does not generate spurious memory descriptions not already
present in the initial approximation, i.e., for all \( A \in
\abstractdomain \) and \( c \in \conditions \)
\[
  \concretization\bigl( \filter(A, c) \bigr)
    \subseteq \concretization(A).
\]
Note that
by composing this result with the result of correctness for the filter
(\reftheorem{Correctness of the filter}) it is possible to write
\[
  \concretization(A) \intersection \modelset(c)
    \subseteq \concretization\bigl( \filter(A, c) \bigr)
    \subseteq \concretization(A).
\]
Basically, the filter never adds new arcs then
it is not possible to obtain a worse approximation of that given in input.
Though the idea is quite simple, for completeness we report a formal proof.

\begin{lemma}
\lemmasummary{Filter upper bound 1}
Let \( A \in \abstractdomain \), \( M \subseteq \locations \), \( e \in
\expressions \) and \( n \in \naturals \); then
\[
    \filter(A, M, e, n) \subseteq A.
\]
\end{lemma}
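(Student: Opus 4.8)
The plan is to prove this by a routine induction on \( n \), simply unwinding the recursive definition of \(\filter\) from \refdefinition{Filter 1}. Adopting the notation \( x = \langle A, M, e \rangle \) used there, the goal becomes \(\filter(x, n) \subseteq A\) for all \( n \in \naturals \).

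First I would dispatch the base case \( n = 0 \): by definition \(\filter(x, 0) \defeq A\), so the inclusion holds with equality.

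For the inductive step, I would assume \(\filter(x, i) \subseteq A\) as inductive hypothesis and show \(\filter(x, i + 1) \subseteq A\). The definition expresses \(\filter(x, i + 1)\) as \(\filter(x, i)\) with a set of arcs removed, namely
\[
  \filter(x, i + 1) = \filter(x, i) \setminus
  \begin{cases}
    \target(x, i + 1) \times \bigl( \locations \setminus \target(x, i) \bigr),
      & \text{if } \cardinality \target(x, i+1) = 1; \\
    \emptyset,
      & \text{otherwise}.
  \end{cases}
\]
In both branches of the case split the operation applied to \(\filter(x,i)\) is a set difference, hence \(\filter(x, i + 1) \subseteq \filter(x, i)\); chaining this with the inductive hypothesis gives \(\filter(x, i + 1) \subseteq A\), which closes the induction.

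The step I would expect to be ``hardest'' is really the only non-mechanical observation here: the quantity subtracted in \refdefinition{Filter 1} is presented by a case distinction on \(\cardinality \target(x, i+1)\), so one must note that \emph{either} branch merely discards a subset of \(\filter(x, i)\), after which monotonicity of \(\setminus\) with respect to \(\subseteq\) handles both cases uniformly. In particular, no properties of \(\target\), \(\prev\), \(\post\) or \(\eval\) enter the argument — the lemma is purely structural in the recursion defining \(\filter\), and it will serve downstream as the bookkeeping fact that the filter never enlarges its argument (en route to \(\concretization(\filter(A,c)) \subseteq \concretization(A)\)).
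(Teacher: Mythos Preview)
Your proposal is correct and matches the paper's own proof essentially step for step: both argue by induction on \(n\), handle the base case via \(\filter(x,0)=A\), and in the inductive step observe that \(\filter(x,i+1)\) is obtained from \(\filter(x,i)\) by a set difference, so \(\filter(x,i+1)\subseteq\filter(x,i)\subseteq A\). Your remark that the case split on \(\cardinality\target(x,i+1)\) is immaterial and that no properties of \(\target\), \(\prev\), \(\post\) or \(\eval\) are needed is exactly the paper's implicit point when it writes ``\(\filter(A,M,e,n+1)=\filter(A,M,e,n)\setminus\ldots\)'' without expanding the subtracted set.
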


\begin{proof}
Let \( A \in \abstractdomain \), \( M \subseteq \locations \), \( e \in
\expressions \) and \( n \in \naturals \).
\begin{prooftable}
TS  & \( \filter(A, M, e, n) \subseteq A \) \\
\hline
H0  & \refdefinition{Filter 1}, filter 1. \\
\end{prooftable}
We proceed inductively on \( n \).
For the first case we assume \( n = 0 \).
\begin{prooftable}
D0  & \( \filter(A, M, e, 0) = A \) & (H0) \\
\proofleaf
    & \( \filter(A, M, e, 0) \subseteq A \) & (D0) \\
\end{prooftable}
Now the inductive case.
\begin{prooftable}
H1  & \( \filter(A, M, e, n) \subseteq A \) & (ind.\ hyp.) \\
\hline
D0  & \( \filter(A, M, e, n + 1) = \filter(A, M, e, n) \setminus \ldots \)
    & (H0) \\
D1  & \( \filter(A, M, e, n + 1) \subseteq \filter(A, M, e, n) \) & (D0) \\
\proofleaf
    & \( \filter(A, M, e, n + 1) \subseteq A \) & (D1, H1) \\
\end{prooftable}
\end{proof}

\begin{lemma}
\lemmasummary{Filter upper bound 2}
Let \( A \in \abstractdomain \), \( M \subseteq \locations \), \( e \in
\expressions \); then
\[
    \filter(A, M, e) \subseteq A.
\]
\end{lemma}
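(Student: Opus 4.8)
The plan is to prove the statement by a straightforward case analysis on the syntactic form of $e$ (\refdefinition{expressions}), unfolding \refdefinition{Filter 2}; no induction on $e$ is needed, since the recursive machinery has already been packaged into \reflemma{Filter upper bound 1}. Fix $A \in \abstractdomain$, $M \subseteq \locations$ and $e \in \expressions$.

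For the base case $e = l \in \locations$, \refdefinition{Filter 2} gives $\filter(A, M, l) = A$ when $l \in M$ and $\filter(A, M, l) = \bot$ otherwise. In the first sub-case the inclusion $\filter(A, M, l) \subseteq A$ is trivial; in the second it holds because $\bot = \emptyset \subseteq A$. So $\filter(A, M, l) \subseteq A$ in either sub-case. For the remaining case $e = \indirection f$ with $f \in \expressions$, \refdefinition{Filter 2} gives
\[
  \filter(A, M, \indirection f) =
    \bigintersection_{ i \in \naturals } \filter(A, M, \indirection f, i).
\]
Here I would invoke \reflemma{Filter upper bound 1}, which yields $\filter(A, M, \indirection f, i) \subseteq A$ for every $i \in \naturals$; since $\naturals \neq \emptyset$, the intersection of this family is itself contained in $A$, which is the thesis. (Equivalently, one could simply note that the $i = 0$ term of the intersection equals $A$ by \refdefinition{Filter 1}, so the whole intersection is a subset of $A$.)

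I expect no genuine obstacle: the statement is essentially a corollary of \reflemma{Filter upper bound 1} plus an unfolding of \refdefinition{Filter 2}. The only points deserving a line of care are the $\bot$ branch of the base case (handled by $\bot = \emptyset$) and the observation that the indexed intersection defining $\filter(A, M, \indirection f)$ ranges over the nonempty set $\naturals$, so that ``intersection of a family each contained in $A$'' does give a subset of $A$ rather than something vacuous.
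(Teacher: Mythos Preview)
Your proposal is correct and follows essentially the same approach as the paper: a case split on whether $e$ is a location or of the form $\indirection f$, handling the first case directly from \refdefinition{Filter 2} and the second by unfolding the intersection and invoking \reflemma{Filter upper bound 1} on each term. The parenthetical shortcut via the $i=0$ term is a harmless alternative justification for the same step.
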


\begin{proof}
Let \( A \in \abstractdomain \), \( M \subseteq \locations \), \( e \in
\expressions \).
Following the definition of the filter 2 (\refdefinition{Filter 2}) we
consider separately two cases.
For the first case let \( e = l \in \locations \); if \( \eval(A, l) \in M
\) then we have \( \filter(A, M, l) = A \), otherwise \( \filter(A, M, l) =
\bot \). In both the cases we have the thesis.  For the second case let \(
e \in \expressions \setminus \locations \).
\begin{prooftable}
TS  & \( \filter(A, M, e) \subseteq A \) \\
\hline
H0  & \refdefinition{Filter 2}, filter 2. \\
H1  & \reflemma{Filter upper bound 1}, filter upper bound 1. \\
\hline
D0  & \( \filter(A, M, e) = \bigintersection_{ n \in \naturals }
         \filter(A, M, e, n) \) & (H0) \\
D1  & \( \forall n \in \naturals \itc \filter(A, M, e, n) \subseteq A \)
    & (H1) \\
D2  & \( \bigintersection_{ n \in \naturals } \filter(A, M, e, n)
         \subseteq A \) & (D1) \\
\proofleaf
    & \( \filter(A, M, e) \subseteq A \) & (D2, D0) \\
\end{prooftable}
\end{proof}

\begin{lemma}
\lemmasummary{Filter upper bound 3}
Let \( A \in \abstractdomain \) and \( c \in \conditions \); then
\[
    \concretization\bigl( \filter(A, c) \bigr)
    \subseteq \concretization(A).
\]
\end{lemma}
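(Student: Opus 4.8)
The plan is to first establish the purely set-theoretic inclusion $\filter(A, c) \subseteq A$ for every $A \in \abstractdomain$ and $c \in \conditions$, and then lift it to the concretizations. Once $\filter(A, c) \subseteq A$ is known, \reflemma{Monotonicity of the concretization function} immediately gives $\concretization\bigl( \filter(A, c) \bigr) \subseteq \concretization(A)$, which is exactly the thesis. So the whole argument reduces to the inclusion at the level of the abstract domain.

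To prove $\filter(A, c) \subseteq A$ I would unfold \refdefinition{Filter 3} and split on the two shapes a condition can have. If $c = (\equality, e, f)$, then $\filter(A, c) = \filter(A, I, e) \intersection \filter(A, I, f)$, where $I = \eval(A, e) \intersection \eval(A, f)$; by \reflemma{Filter upper bound 2} each operand is contained in $A$, hence so is their intersection. If $c = (\inequality, e, f)$, there are two sub-cases. When $\cardinality I \neq 1$ the definition gives $\filter(A, c) = A$ and the inclusion is trivial. When $\cardinality I = 1$ we have $\filter(A, c) = \filter(A, E, e) \union \filter(A, F, f)$ with $E = \eval(A, e) \setminus \eval(A, f)$ and $F = \eval(A, f) \setminus \eval(A, e)$; again \reflemma{Filter upper bound 2} bounds both operands by $A$, so their union is a subset of $A$ as well.

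Collecting the cases, $\filter(A, c) \subseteq A$ holds unconditionally, and monotonicity of $\concretization$ closes the proof. I do not expect any real obstacle: this lemma is essentially a bookkeeping corollary of \reflemma{Filter upper bound 2} (which is itself a corollary of \reflemma{Filter upper bound 1}), so the only point requiring minor care is exhausting the three branches of \refdefinition{Filter 3} --- equality, inequality with $\cardinality I = 1$, and inequality with $\cardinality I \neq 1$ --- and invoking the trivial facts that finite unions and intersections of subsets of $A$ remain subsets of $A$. I would present it with the usual proof-table format: a short common block citing \refdefinition{Filter 3}, \reflemma{Filter upper bound 2} and \reflemma{Monotonicity of the concretization function}, followed by the case analysis (C1) / (C2), with (C2) further split into (C2.1) and (C2.2).
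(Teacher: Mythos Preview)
Your proposal is correct and matches the paper's proof essentially step for step: both first establish $\filter(A, c) \subseteq A$ by the same three-branch case analysis on \refdefinition{Filter 3} via \reflemma{Filter upper bound 2}, then lift to concretizations. The only cosmetic difference is that you invoke \reflemma{Monotonicity of the concretization function} for the final step, whereas the paper cites \refdefinition{concretization function} directly; since the former is an immediate consequence of the latter, this is not a substantive distinction.
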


\begin{proof}
Let \( A \in \abstractdomain \) and let \( c \in \conditions \).
\begin{prooftable}
TS  & \( \filter(A, c) \subseteq A \) \\
\hline
H0  & \refdefinition{Filter 3}, filter 3. \\
H1  & \refdefinition{concretization function}, concretization function. \\
H2  & \reflemma{Filter upper bound 2}, filter upper bound 2. \\
\end{prooftable}
As in the definition of the filter (\refdefinition{Filter 3}) we
distinguish two cases
\begin{gather*}
  c = ( \equality, e, f);   \tag{C1} \\
  c = ( \inequality, e, f). \tag{C2}
\end{gather*}
For the first case (C1) we have
\begin{prooftable}
D0  & \( \filter\bigl(A, (\equality, e, f) \bigr) =
         \filter\bigl( A, \eval(A, e) \intersection \eval(A, f), e \bigr) \) \\
    & \( \qquad \intersection
         \filter\bigl(A, \eval(A, e) \intersection \eval(A, f), f \bigr) \)
    & (H0) \\
D1  & \( \filter\bigl( A, \eval(A, e) \intersection \eval(A, f), e \bigr)
         \subseteq A \)
    & (H2) \\
D2  & \( \filter\bigl(A, \eval(A, e) \intersection \eval(A, f), f \bigr)
         \subseteq A \)
    & (H2) \\
D3  & \( \filter\bigl( A, \eval(A, e) \intersection \eval(A, f), e \bigr) \) \\
    & \( \qquad \intersection
         \filter\bigl(A, \eval(A, e) \intersection \eval(A, f), f \bigr)
         \subseteq A \)
    & (D1, D2) \\
\proofleaf
    & \( \filter\bigl(A, (\equality, e, f) \bigr) \subseteq A \)
    & (D3, D0) \\
\end{prooftable}
Now the second case (C2).
If \( \cardinality \bigl( \eval(A, e) \intersection \eval(A, f) \bigr) \neq
1 \) from H0 we have that \( \filter\bigl(A, (\inequality, e, f) \bigr) = A
\) then the thesis is trivially verified.
Otherwise assume
\( \cardinality \bigl( \eval(A, e) \intersection \eval(A, f) \bigr) = 1 \).
Then we have
\begin{prooftable}
H3  & \( \cardinality \bigl( \eval(A, e) \intersection \eval(A, f) \bigr) =
          1 \) \\
\hline
D0  & \( \filter\bigl(A, (\inequality, e, f) \bigr) =
        \filter\bigl( A, \eval(A, e) \setminus \eval(A, f), e \bigr) \) \\
    & \( \qquad \union
        \filter\bigl( A, \eval(A, f) \setminus \eval(A, e), f \bigr) \)
    & (H3, H0) \\
D1  & \( \filter\bigl( A, \eval(A, e) \setminus \eval(A, f), e \bigr)
         \subseteq A \)
    & (H2) \\
D2  & \( \filter\bigl( A, \eval(A, f) \setminus \eval(A, e), f \bigr)
         \subseteq A \)
    & (H2) \\
D3  & \( \filter\bigl( A, \eval(A, e) \setminus \eval(A, f), e \bigr) \) \\
    & \( \qquad \union
          \filter\bigl( A, \eval(A, f) \setminus \eval(A, e), f \bigr)
         \subseteq A \)
    & (D1, D2) \\
\proofleaf
    & \( \filter\bigl(A, (\inequality, e, f) \bigr) \subseteq A \)
    & (D3, D0) \\
\end{prooftable}
From the definition of the concretization function H1 we have that
\[
  \filter(A, c) \subseteq A \implies
  \concretization\bigl(\filter(A, c)\bigr) \subseteq \concretization(A),
\]
Since we have just proved the antecedent of this implication, we have the
truth of the consequent, which is the thesis.
\end{proof}

\subsection{A Final Consideration}
As stated in the first few lines of this section, the presented model is
intentionally simplified to ease the presentation and the proofs. However, these concepts can
be generalized to treat more complex environments and languages. In
\refcodesnippet{filter incompleteness} we present an example\footnote{This
example comes from the test suite of our implementation of the algorithms.}
that shows a more realistic implementation of the situation presented in
\refexample{filter incompleteness}.
This example shows how using recursive data structures it is possible to
generate the points-to relations presented in the previous examples: in
particular loops and locations pointing to themselves, which are quite
uncommon to see using only basic types.
\codecaption
{An example of code that shows the incompleteness of the filter algorithm
  using a recursive data structure.}
{filter incompleteness}
\begin{codesnippet}
struct L {
  struct L * next;
  int value;
};

int main() {
  struct L a, b, c;

  if (...)    a.next = &a;
  else
    if (...)  a.next = &b;
    else      a.next = &c;

  b.next = &c;
  c.next = &a;

  if (a.next->next == &c) {
    ...
  }
}
\end{codesnippet}

\section{The Extended Abstract Memory Model}
\label{section:extensions}
With the aim of presenting a realistic points-to analysis, this section
discusses some extensions to the simplified model previously introduced.
More precisely, this section describes a more realistic \emph{memory model} by augmenting
the previously described domains with some details not directly related to
the points-to problem, which are however necessary for
the definition of a working memory.

\subsection{Abstract and Concrete Locations}
\label{section:Locations approximation}
One of the main limitations of the formal model presented in
\refsection{our method}
is due to the assumption that both the concrete and
the abstract domains share the same set of locations \( \locations \). Any
abstract domain that aims to be practically applicable cannot rely on this
assumption.  From the definitions in
\refsection{our method}
we have that
for every variable created in a concrete execution there must be a distinct
location in the abstract memory description. This is obviously a problem since,
with the use of recursion and dynamic allocation, the number of variables
created during a  concrete execution can be unbounded.  But also when the
number of variables is known statically it is usually unfeasible to use a
one-to-one approximation; consider for instance the case of arrays:
under this assumption an abstract memory would be required to represent every
element of an array with a distinct location.  Typically, real
implementations use one abstract location to approximate a \emph{set} of
concrete locations. For instance, a simple strategy is to approximate all
the elements of an array, independently from their number, with the same
abstract location.  Previously we have used the symbol \( \locations \) to
denote the set of \emph{locations}. From now on we denote with \(
\locations \) the set of the \emph{concrete locations} and with \(
\abslocations \) a set that we call the \emph{abstract location set}.  We
still formalize the concrete domain as the complete lattice generated by
the powerset of the total functions \( \locations \to \locations \).
However, we have to adapt the definition of the abstract domain as follows.

\begin{definition}
\definitionsummary{Extended abstract domain}
Let \( \abstractdomain \) the \emph{support set of the abstract domain} be
defined as
\begin{gather*}
  \funlocations \defeq
  \concretedomain \times \locations \rightarrowtail \abslocations;
  \\
  \abstractdomain \defeq
    \funlocations \times \abslocations \times \abslocations.
\end{gather*}
In words, an element \( A \in \abstractdomain \) is a pair \( \langle
f, P \rangle \) where \( f \in
\funlocations \) represents the abstraction function from the concrete to
the abstract locations and \( P \subseteq \abslocations \times
\abslocations \) is an abstract points-to relation.
We call \emph{abstract domain} the complete lattice
\[
  \big< \abstractdomain, \sqsubseteq, \sqcup, \sqcap, \bot, \top \big>,
\]
where, for all \( \langle f, P \rangle,  \langle g, Q \rangle \in
\abstractdomain \), holds that
\begin{gather*}
  \langle f, P \rangle \sqsubseteq \langle g, Q \rangle
    \quad \defiff \quad
    f = g \land P \subseteq Q;
  \\
  \langle f, P \rangle \sqcap \langle g, Q \rangle
  \defeq
  \begin{cases}
    \langle f, P \intersection Q \rangle, & \text{if } f = g; \\
    \bot,                                 & \text{otherwise;}
  \end{cases}
  \\
  \langle f, P \rangle \sqcup \langle g, Q \rangle
  \defeq
  \begin{cases}
    \langle f, P \union Q \rangle, & \text{if } f = g; \\
    \top,                          & \text{otherwise.}
  \end{cases}
\end{gather*}
and the bottom (\(\bot\)) and top \((\top)\) elements are defined ad-hoc to satisfy the
properties of the complete lattice.
\end{definition}

Informally, given an abstract element \( \langle f, P \rangle = A \in
\abstractdomain \), for every concrete element \( C \in \concretedomain \)
and every concrete location \( l \in \locations \), \( f(C, l) \)
is the abstract location that in \( C \) abstracts \( l \).  The semantics
of the abstract domain can thus be defined as follows.

\begin{definition}
\definitionsummary{Extended abstract domain semantics}
Let \( C \in \concretedomain \) and
\( \langle f, P \rangle = A \in \abstractdomain \).
We define
\[
  C \in \concretization(A)
  \quad \defiff \quad
  \Bigl\{\,
    \bigl( f(C, l), f(C, m) \bigr)
  \Bigm|
    (l, m) \in C
  \,\Bigr\} \subseteq P.
\]
\end{definition}

The initial definition of the concretization function
(\refdefinition{concretization function}) simply checks if all the pairs of
\( C \) are also in \( A \); now, to handle the concept of
\emph{abstract locations}, every concrete points-to pair \( (l, m) \in C
\) is abstracted, obtaining the pair \( \bigl( f(C, l), f(C, m) \bigr) \),
and then we check in this ``abstract pair'' is in \( A \).
But the distinction between concrete and abstract
locations introduces a new problem in the formalization of the abstract
analysis.

\codecaption{
the annotations resulting from the use of strong updates.}
{weak strong update 2}
\begin{codesnippet}
int a, b, c, d, *p;
p = &a;       // \( \eval(\Code{*p}) = \{ \Code{a} \} \)
p = &b;       // \( \eval(\Code{*p}) = \{ \Code{b} \} \)
p = &c;       // \( \eval(\Code{*p}) = \{ \Code{c} \} \)
p = &d;       // \( \eval(\Code{*p}) = \{ \Code{d} \} \)
\end{codesnippet}
\codecaption{
the annotations resulting from the use of weak updates.}
{weak strong update 3}
\begin{codesnippet}
int a, b, c, d, *p;
p = &a;       // \( \eval(\Code{*p}) = \{ \Code{a} \} \)
p = &b;       // \( \eval(\Code{*p}) = \{ \Code{a}, \Code{b} \} \)
p = &c;       // \( \eval(\Code{*p}) = \{ \Code{a}, \Code{b}, \Code{c} \} \)
p = &d;       // \( \eval(\Code{*p}) = \{ \Code{a}, \Code{b}, \Code{c}, \Code{d} \} \)
\end{codesnippet}

\codecaption{
an example where it is necessary to apply weak updates to obtain a safe
approximation.}
{weak strong update 4}
\begin{codesnippet}
int **pp, *p1, *p2, a, b, c;
if (...)  pp = &p1;
else      pp = &p2;
p1 = &a;
p2 = &c;
*pp = &b;
\end{codesnippet}

\subsection{Weak Updates and Strong Updates}
This section gives an insight of the distinction between
\emph{weak} and \emph{strong} updates. In the literature,
the term \emph{update} usually means an operation that acts on a memory,
concrete or abstract, modifying its state.  An update can be triggered by
any the of usual operations, e.g., as the assignment
(\refdefinition{Assignment evaluation}).  However, the distinction between
\emph{strong} and \emph{weak} updates pertains only to the formalization of
the \emph{abstract} domain.  A \emph{strong} update has the effect of
\emph{overwriting} the previous information with new data; instead, a
\emph{weak} update acts by \emph{merging} the original with the new data.
Listings~\ref{codesnippet:weak strong update 2} and \ref{codesnippet:weak strong
update 3} present the different results of the analysis performed on the
same program: in the first case using strong updates, whereas in the second
case weak updates are applied.
By using weak updates it is not possible to increase the precision of the
approximation --- each weak update yields
a new abstraction that \emph{subsumes}
the original information.
Note that in \refcodesnippet{weak strong update 3}, to illustrate the
difference between the two options, we have \emph{forced} the analysis to
use weak updates.
However, there are situations where the use of weak updates is necessary to
obtain a safe approximation. Consider the example in \refcodesnippet{weak
strong update 4}.
The abstract execution reaches the last line with the approximation
\begin{gather*}
  \eval(\Code{*p1}) = \{ \Code{a} \}, \\
  \eval(\Code{*p2}) = \{ \Code{c} \}, \\
  \eval(\Code{*pp}) = \{ \Code{p1}, \Code{p2} \}.
\end{gather*}
By applying the assignment as presented in \refdefinition{Assignment
evaluation} we obtain the description
\begin{gather*}
  \eval(\Code{*p1}) = \{ \Code{a}, \Code{b} \}, \\
  \eval(\Code{*p2}) = \{ \Code{b}, \Code{c} \}, \\
  \eval(\Code{*pp}) = \{ \Code{p1}, \Code{p2} \}.
\end{gather*}
In this case the abstract assignment algorithm has performed a weak update:
the old values of the variables \QuotedCode{p1} and \QuotedCode{p2}
are not overwritten.  By forcing a strong update we would obtain instead
\begin{gather*}
  \eval(\Code{*p1}) = \{ \Code{b} \}, \\
  \eval(\Code{*p2}) = \{ \Code{b} \}, \\
  \eval(\Code{*pp}) = \{ \Code{p1}, \Code{p2} \},
\end{gather*}
which is clearly a wrong approximation because there exists at least a
concrete execution such that, after the execution of the assignment
\QuotedCode{*p = \&b}, \( \eval(\Code{*pp}) = \{ \Code{p2} \} \) holds and
then \( \eval(\Code{*p1}) = \{ \Code{a} \} \).  Note that in the definition
of the abstract assignment (\refdefinition{Assignment
evaluation}), given \( (e, f) \in \assignments \), what triggers the use
of a \emph{strong} instead of a \emph{weak} update is the fact that the
lhs \( e \) evaluates to a single location:
\[
  K \defeq
    \begin{cases}
      \cdots,      & \text{if } \cardinality \eval(A, e) = 1; \\
      \emptyset,   & \text{otherwise}.
    \end{cases}
\]
where \( K \) denotes the set of the \emph{killed points-to pairs}.
The basic idea behind this approach is that when we have to update a set of
\emph{more than one} location
it is possible that there exists a concrete memory description approximated
by the current abstraction in which
only one of the locations of this
set will be modified while the others will retain their original value.
In the above example when \QuotedCode{pp} points to \QuotedCode{p1} then
\QuotedCode{p2} is left unchanged by the assignment \QuotedCode{*pp = \&b}.
Otherwise, when we are sure that the there is \emph{only one} possible
modified location we can afford that in none of the concrete memories \( C
\in \concretization\bigl( \operation(A, \cdots) \bigr) \) that location
will still have the old value.
However, by distinguishing between concrete and abstract locations, we
are no more able to discern when a strong update can be used.
Now, also when the lhs evaluates to a single
location, \( \eval(A, e) = \{ l^\sharp \} \),
we cannot safely apply a strong update as
it is possible that \( l^\sharp \) abstracts \emph{more that one}
concrete locations.  To overcome this problem we introduce the following
definition.

\begin{definition}
\definitionsummary{Singular locations}
Let
\[
  \singularloc \subseteq \abstractdomain \times \abslocations
\]
be defined as follows.
Let \( \langle f, P \rangle = A \in \abstractdomain \) and \( l^\sharp \in
\abslocations \). We say that the location \( l^\sharp \) is
\emph{singular} in the memory abstraction \( A \) when
\[
  (A, l^\sharp) \in \singularloc
  \quad \defiff \quad
  \forall C \in \concretization(A) \itc
    \cardinality
      \bigl\{\, l \in \locations \bigm| f(C, l) = l^\sharp \,\bigr\}
    \leq 1.
\]
\end{definition}

The above definition can be read as follows. We say that an abstract
location \( l^\sharp \) is singular with respect to the abstract memory
description \( A \in \abstractdomain \) if it does not exist any concrete
memory description  \( C \in \concretization(A) \) such that \( l^\sharp \) approximates more than
one of the locations of \( C \).
For convenience of notation we write \( \singularloc(A) \) to denote the
set of the singular locations of the memory \( A \), i.e,
\[
  \singularloc(A) \defeq
    \bigl\{\,
      l^\sharp
    \bigm|
      (A, l^\sharp) \in \singularloc
    \,\bigr\}
\]
The abstract assignment operation (\refdefinition{Assignment evaluation})
must be adapted in order to provide a safe approximation. In particular,
the definition of the \emph{kill} set needs to be rewritten as
\begin{gather*}
  E \defeq \eval(A, e); \\
  K \defeq
    \begin{cases}
      E \times \locations,
      & \text{if } \cardinality E = 1 \land E \subseteq \singularloc(A); \\
      \emptyset,
      & \text{otherwise}.
    \end{cases}
\end{gather*}
Also the definition of the filter operation (\refdefinition{Filter 1})
must be updated accordingly. Given \( x \in \abstractdomain \times
\partsof(\locations) \times \expressions \) and \( i \in \naturals
\); we have
\begin{gather*}
  K \defeq \locations \setminus \target(x, i);
  \\
  T \defeq \target(x, i + 1);
  \\
  \filter(x, i + 1) \defeq
  \filter(x, i)
  \setminus
  \begin{cases}
      K,
      & \text{if } \cardinality T = 1 \land
      T \subseteq \singularloc(A); \\
    \emptyset, & \text{otherwise}.
  \end{cases}
\end{gather*}
Also the definition of the filter for the `\(\inequality\)' operator
(\refdefinition{Filter 3}) needs to be updated accordingly. Given \( A \in
\abstractdomain \) and \( e, f \in \expressions \); let
\begin{gather*}
  I \defeq \eval(A, e) \intersection \eval(A, f); \\
  E \defeq \eval(A, e) \setminus \eval(A, f); \\
  F \defeq \eval(A, f) \setminus \eval(A, e);
\end{gather*}
then
\[
  \filter\bigl( A, (\inequality, e, f) \bigr) \defeq
      \begin{cases}
        \filter( A, E, e ) \union \filter( A, F, f ),
          & \text{if }
          \cardinality I = 1 \land I \subseteq \singularloc(A); \\
        A, & \text{otherwise}.
      \end{cases}
\]
Finally, also the definition of the alias relation induced by a points-to
abstraction must be adapted in the same way. From \refdefinition{Induced
alias relation}, for all \( A \in \abstractdomain \), we define
\( \concretization(A) \defeq \mathord{\absaliasquery} \) as follows.
For every \( e, f \in expressions \)
\begin{gather*}
  E \defeq \eval(A, e); \\
  F \defeq \eval(A, f); \\
  \absaliasquery(e, f) \defeq
    \begin{cases}
      0,  & \text{if } E \intersection F = \emptyset; \\
      1,  & \text{if } \cardinality E = 1 \land E = F \land E \subseteq \singularloc(A); \\
      \top, & \text{otherwise.}
    \end{cases}
\end{gather*}
With these modified definitions, assuming for instance to approximate all
the elements of an array with only one (non-singular) abstract location,
the analysis applied to the code in \refcodesnippet{singular locations}
produces the indicated annotations.

\codecaption
{this code shows the difference between singular and non-singular
locations. Remember that in the C language global variables are
zero-initialized. Assume that all the indices left unspecified are valid.}
{singular locations}
\begin{codesnippet}
int *p[10], *q;

int main() {
  int x, y, z;
                // \( \eval(\Code{*p}) = \{ \Code{null} \} \)
  p[...] = &x;  // \( \eval(\Code{*p}) = \{ \Code{null}, \Code{x} \} \)
  p[...] = &y;  // \( \eval(\Code{*p}) = \{ \Code{null}, \Code{x}, \Code{y} \} \)
  p[...] = &x;  // \( \eval(\Code{*p}) = \{ \Code{null}, \Code{x}, \Code{y} \} \)
  p[...] = &z;  // \( \eval(\Code{*p}) = \{ \Code{null}, \Code{x}, \Code{y}, \Code{z} \} \)

                // \( \eval(\Code{*q}) = \{ \Code{null} \} \)
  q = &x;       // \( \eval(\Code{*q}) = \{ \Code{x} \} \)
  q = &y;       // \( \eval(\Code{*q}) = \{ \Code{y} \} \)
  q = &x;       // \( \eval(\Code{*q}) = \{ \Code{x} \} \)
  q = &z;       // \( \eval(\Code{*q}) = \{ \Code{z} \} \)
}
\end{codesnippet}

\subsection{Notation}
\label{section:extended memory model}
In the following description we use more than once the concept of
\emph{sequence}. With \emph{sequence} we mean a set \( S \) whose elements
are enumerated, thus they can be identified and compared against their
\emph{position} inside the sequence. With \emph{position} we mean an
index ranging from \( 0 \) up to \( n \) where \( n + 1 \) is the number of
elements\footnote{The concept of \emph{position} is not defined for the
empty sequence.} of \( S \).  For convenience of notation we write `\(
S.\size \)' to denote the number of elements of the sequence \( S \); we
write \( S_i \) or \( S(i) \) to denote the element of \( S \) with index
\(i\) and \( \domain(S) \) as an abbreviation of the set of the indices
of \( S \), i.e.,
\(
  \domain(S) =
    \{\, n \in \naturals \mid 0 \leq n < S.\size \,\}.
\)
To explicitly represents the elements of the sequence we write
\( S = [ S_0, \cdots, S_n ] \).
When we are not interested
in the definition of any particular order among the elements of \( S \),
we use the concept of \emph{labelled set}.
A labelled set can be defined as the triple
\( \langle F, L, S \rangle, \)
where \( S \) is the set of the \emph{labelled} elements, \( L \) is a
set of \emph{labels} and
\( \parfunctiondef{ F }{ L }{ S } \)
is a partial surjective \emph{labelling}
function that gives a unique name, or label, to all the elements of \( S \).
For convenience of notation,
when \( F \) and \( L \) are clear from the context,
we write only \( S \) to refer to the labelled set \(
\langle F, L, S \rangle \); we write \( S_l \) or \( S(l) \) as an
abbreviation of \( F(l) \) and \( \domain(S) \) as a shortcut for \(
\domain(F) \).
To explicitly represent the elements of \( S \) we write \( S = \{ S_0,
\cdots, S_n \} \).\footnote{That is, at the only extent of denoting the
elements of \( S \), we \emph{enumerate} it.}
Note that this definition of labelled set is a
generalization of the concept of sequence where \( L = \naturals \) ---
hence the following definitions given for labelled sets can be applied also
to sequences.  We use also the concept of \emph{attribute}.
Given two labelled sets \( S \) and \( A \),
we say that the pair
\( \langle S, A \rangle \)
is a \emph{labelled set with attributes set \( A \)}.
Again, when the attribute set \( A \) is clear from the
context, we write \( S \) to mean the pair \( \langle S, A \rangle \);
we say that \( S \) has the \emph{attribute} \( X \) to mean that \( X
\in \domain(A) \) and we write `\( S.X \)' as a shortcut for `\( A(X) \)'.

\subsection{The Concept of Memory Shape}
The abstract memory model that we want to describe is parametric with
respect to the underlying abstract domain, e.g., the points-to
domain or some numerical domain.  In other words, the analysis can be seen
as the coupling of a chosen abstract domain and some additional
`structural' information, concerning for instance the memory model of the
target language/machine.  With the concept of \emph{shape} we want to formalize this
`structural' information.  Recalling
the definition of the extended abstract domain \refdefinition{Extended abstract
domain}, this information is needed to identify the function \( f \in
\funlocations \), that is, how concrete locations are mapped to abstract
locations.

\begin{definition}
\definitionsummary{Shape of a labelled set}
Let \( \langle F, L, S \rangle \) be a given labelled set.  We define the
shape of \( \langle F, L, S \rangle \) as the other labelled set
\[
  \shape\bigl( \langle F, L, S \rangle \bigr) \defeq
  \langle G, L, T \rangle,
\]
where
\[
  T \defeq \bigl\{\, \shape(e) \bigm| e \in S \,\bigr\},
\]
and \( \parfunctiondef{ G }{ L }{ T } \)
is such that \( \domain(G) = \domain(F) \) and is defined, for all \( l \in
\locations \), as
\[
    G(l) \defeq \shape\bigl( F(l) \bigr).
\]
Now let \( \langle S, A \rangle \) be a labelled set with the attribute
set \( A \).
We define its shape as
\[
  \shape\bigl( \langle S, A \rangle \bigr) \defeq
    \bigl\langle \shape(S), A \bigl\rangle.
\]
\end{definition}

Note that, as a consequence of this definition, the shape of a sequence
\(
  S = [S_0, \cdots, S_n]
\)
is the sequence of the shapes
\[
  \shape\bigl([S_0, \cdots, S_n]\bigr) \defeq
    \bigl[ \shape(S_0), \cdots, \shape(S_n) \bigr].
\]
Note also that the \(\shape\) function does not change
the attributes of a labelled set.

\subsection{Common Concepts}
The following sections will describe the structure of both the concrete
and abstract memory models.
Before proceeding we need to introduce some common concepts.  We refer to
\cite{BagnaraHPZ07TR}
for a rigorous formalization of some of the ideas that
we present only informally.
\begin{description}
\item[Location.]
  The basic unit for describing the structure of the memory is the concept of
  \emph{location}. Each location has a `\( \type \)' attribute.
\item[Allocation.]
  We use the concept of \emph{allocation block} to describe the unit of
  allocation of the memory. An allocation block is a \emph{sequence of
  locations}, it has a `\( \type \)' attribute and it is
  the base case of the inductive definition of the
  concept of shape (\refdefinition{Shape of a labelled set}).  We define
  the shape of an allocation block \( A \) as its `type'
  attribute\footnote{That is, the shape of an allocation \emph{is} its
  `type' attribute and \emph{not} the sequence of the shapes of its
  locations.}
  \[
    \shape(A) \defeq A.\type.
  \]
  The type attribute of an allocation block uniquely determines the
  shape of the sequence of its locations; the details of this aspect will be clarified
  later.  Informally, we say that each variable definition in the analyzed
  program has the effect of creating an allocation block in the memory, or,
  if speaking of an abstract memory, updating an already existing allocation
  block.  In the next, when clear from the context, we call an
  allocation block simply \emph{allocation}.
\end{description}
To describe the structure of the concrete \emph{stack} and its
abstraction we introduce these definitions.
\begin{description}
\item[Block.]
  We use the term \emph{deallocation block} to mean a sequence of
  allocations.  The deallocation block is the unit for the deallocation of
  stack allocated memory.  Ideally, the deallocation block
  is intended to represent the concept of \emph{block} of
  declarations as it is defined by the C language.  The order of the
  allocations inside a deallocation block reflects the order of creation
  of the variables.  For conformance with the C Standard, when clear from
  the context we will refer to a deallocation block simply as a
  \emph{block}.  A block can also be described as the portion of the
  stack between two subsequent block marks \cite{BagnaraHPZ07TR}.
\item[Frame.]
  With \emph{frame} we mean a sequence of blocks. In the concrete memory
  model, a frame can also be characterized as the portion of the stack
  segment between two subsequent link marks \cite{BagnaraHPZ07TR}.  Each
  link mark uniquely identifies the call statement   that has generated
  the link mark.  To identify the call statements of the program under
  analysis we use the concept of \emph{call site} --- each call statement
  in the program is uniquely identified by a call site.\footnote{A
  reasonable choice to implement the \emph{call site} concept is to use the
  \emph{program point}
  associated to the call statement. However, for clarity we want to
  keep separate the concept of call site and program point.} Each frame
  has a `\( \callsite \)' attribute. The value of this attribute is equal
  to the call site of the link mark that closes the frame --- with this
  definition, from the program source code, the call site of a frame
  uniquely determines the shape of the whole frame.
\end{description}

\begin{example}
\codecaption
{the call site completely identifies the shape of the frame.}
{frame shape}
\begin{codesnippet}
void g();

void f(int p) {
  int a, b;
  if (...) {
    int c;
    ...
    g();      // Call site 1
  } else {
    int d, e;
    ...
    g();      // Call site 2
  }
}
\end{codesnippet}
Consider the code in \refcodesnippet{frame shape}. The frame
identified by the call site 1, that corresponds to the call statement at
\refline 8, can be described as
\[
  \bigl[ [\Code{int p}], [\Code{int a}, \Code{int b}], [\Code{int c}] \bigr].
\]
Instead, the frame
identified by the call site 2, that corresponds to the call statement at
\refline{12}, can be described as
\[
  \bigl[ [\Code{int p}], [\Code{int a}, \Code{int b}], [\Code{int d},
  \Code{int e}] \bigr].
\]
\end{example}
In the next we apply to the concepts just introduced the qualifiers
\emph{concrete} and \emph{abstract}. If \( X \) is a labelled set of \( Y
\) objects, then with \emph{concrete} \( X \) we mean a labelled set of
\emph{concrete} \( Y \) objects; with \emph{abstract} \( X \) we mean a
labelled set of \emph{abstract} \( Y \) objects.  For example we call
`abstract frame' a sequence of abstract blocks; with `concrete allocation'
we mean a sequence of concrete locations.  When the qualifier
abstract/concrete is not specified, the context will clarify the intended
one or if the statement is applicable to both cases.

\subsection{The Concrete Memory Model}
The concrete memory is organized as a labelled set of \emph{segments}.
\begin{description}
\item[Text.]
  The \emph{text} segment is a labelled set of allocations used to
  represent the set of the possible targets of function pointers:
  basically there is one allocation for each function declared in the
  analyzed program.  Each allocation is identified by the program point
  associated to the function declaration.\footnote{In case the same
  function is defined multiple times, then obvious disambiguation methods
  are necessary; for example, as considering only the first occurrence of the
  declaration.}
\item[Heap.]
  The \emph{heap} segment is a labelled set of allocations used to
  represent the objects created using the functions of the
  \QuotedCode{malloc} family. In this segment each allocation is labelled
  by an address\footnote{At this level we are not interested in the details
  of the addressing schema of the concrete execution model. We simply
  require that each heap allocation can be identified inside the segment
  by a tag or \emph{address}.}
  and has the attribute `\( \programpoint \)' (program point) that
  uniquely identifies the statement that has caused the allocation.  Note
  that once fixed the analyzed program, the program point of the
  allocating statement identifies the type attribute of the allocation,
  that is, the shape of the allocation.  As a consequence, given two heap
  allocations with the same program point attribute we know that these
  allocations have also the same shape.
\item[Global.]
  The \emph{global} segment is a sequence of allocations that represent
  the global variables of the analyzed program. Note that the order of the
  allocations inside the global segment is not specified by the C
  Standard; thus, this detail is left to the particular execution model
  implemented; for instance, this order may be influenced by
  the particular combination of
  architecture/compiler chosen as target the for the analysis.
\item[Stack frames.]
  The \emph{stack frames} segment is a sequence of frames. The sequence is
  organized such that the frame of index 0 represents the topmost
  frame\footnote{With \emph{topmost frame} we mean the most recent frame on
  the stack, that is the frame below the topmost link mark.}  on the stack
  and the frame of index \( n \) ---where \( n+1 \) is the size of this
  segment--- represents the oldest frame.\footnote{For instance, in
  the analysis of a complete program, the oldest frame, if present, is
  generated by one of the call statements contained in the
  \QuotedCode{main()} function.}
\item[Stack top.]
  The \emph{stack top} segment represents the locations above the topmost
  link mark.  It is a sequence of blocks (not contained in any frame),
  followed by a sequence of allocations (not contained in any block.).
\end{description}
Then we define a concrete memory \( a \in \memories \) as a labelled set of
the form
\[
  a = \{ \textseg, \heap, \globalseg, \stackframes, \stacktop \}.
\]
For convenience of notation we use the notation `\( a.X \)' to refer to the
\( X \) segment of the memory \( a \). For example, we write `\( a.\textseg
\)' to denote the text segment of \( a \).  Before describing how the
\emph{type} attribute of a concrete allocation determines the shape of the
sequence of its locations, we need to introduce some notation.

\begin{definition}
\definitionsummary{Concatenation of sequences}
Let \( A = [ A_0, \cdots, A_n ] \) and \( B = [ B_0, \cdots, B_m ] \)
be two sequences;
then we define \( A \concat B \) as the concatenation of the two sequences
\[
  A \concat B \defeq [ A_0, \cdots, A_n, B_0, \cdots, B_m ].
\]
\end{definition}

\begin{definition}
\definitionsummary{Concrete allocations}
We define the `\( \allocation \)' function by structural induction on the
set of types `\( \types \)'.
Let \( t \in \types \).  If \( t \) is a scalar type or a function type
then\footnote{For the definition of the concept of \emph{type} we refer to
the C Standard \cite[6.2.5.21]{ISO-C-1999}: \emph{arithmetic} types and
\emph{pointer} types are collectively called \emph{scalar} types. Array and
structure types are collectively called \emph{aggregate} types.}
\[
  \allocation(t) \defeq [ t ].
\]
If \( t \) is an
array of \( n \in \naturals \) elements of type \( t_0 \) then
\[
  \allocation(t) \defeq
    \underbrace{\allocation(t_0) \concat \cdots \concat \allocation(t_0)}_{
    n + 1 \text{ times}}.
\]
If \( t \) is a structure type with fields:
\( t_0 \text{ field}_0; \cdots; t_n \text{ field}_n; \)
we define
\[
  \allocation(t) \defeq
    \allocation(t_0) \concat \cdots \concat \allocation(t_n).
\]
\end{definition}

\codecaption
{the definition of an aggregate type.}
{aggregate type definition}
\begin{codesnippet}
struct A {
  int a[4];
  float b;
};

struct B {
  double x;
  struct A a;
  char y;
};
\end{codesnippet}

\begin{example}
Consider \refcodesnippet{aggregate type definition}; then we have
\begin{gather*}
  \allocation(\Code{int[4]})
  = [\underbrace{\Code{int}, \Code{int}, \Code{int},
  \Code{int}, \Code{int}}_{ 5 \Code{ times}}];
  \\
  \allocation(\Code{struct A})
  = [\Code{int}, \Code{int}, \Code{int},
  \Code{int}, \Code{int}, \Code{float}]; \\
  \allocation(\Code{struct B}) =
    [ \Code{double},
      \Code{int}, \Code{int}, \Code{int}, \Code{int},
      \Code{int}, \Code{int},
      \Code{float}, \Code{char}].
\end{gather*}
\end{example}

\subsection{The Abstract Memory Model}
Now, having introduced these basic ingredients, we can describe the
organization of the abstract memory that, as the concrete memory
model, is composed by different \emph{segments}.
\begin{description}
\item[Text.]
  The \emph{text} segment is a labelled set of abstract allocations that
  used as targets for function pointers.  The definition of the abstract
  text segment is the same of the concrete case:  there is one text
  location for each function declared in the analyzed program and
  each location is labelled by the program point associated to the function
  declaration.
\item[Heap.]
  The \emph{heap} segment is a labelled set of allocations used to abstract
  all the possible heap-allocated objects. Each heap allocation has as
  attribute the program point of the statement that has caused the
  allocation which is also used as label to identify the allocation inside the
  segment.  This means that the abstract heap segment contains only one
  allocation for each allocating statement of the analyzed program.
\item[Global.]
  The \emph{global} segment is a sequence of allocations that represents
  the global variables of the analyzed program. The order of the
  allocations inside the \emph{abstract} global segment is chosen to reflect the layout
  of the \emph{concrete} global segment.
\end{description}
To represent the abstraction of the concrete stack we use three distinct
segments.
\begin{description}
\item[Stack top.]
  The \emph{stack top} segment represents the portion of the stack
  above the topmost link mark. As in the concrete case,
  the stack top is formalized as a sequence of blocks (not
  contained in any frame), followed by a sequence of allocations (not
  contained in any block.)
\item[Stack head.]
  The \emph{stack head} segment is a sequence of frames.
\item[Stack tail.]
  The \emph{stack tail} segment is a labelled set of frames where each
  frame is labelled by its `call site' attribute.  This means that
  the stack tail contains at most one frame for each of the possible call
  sites of the analyzed program.
\end{description}
Finally, we define an abstract memory \( a \in \absmemories \) as a labelled
set
\[
  a = \{ \textseg, \heap, \globalseg, \stacktail, \stackhead, \stacktop \}.
\]
As for the concrete memory, for convenience of notation
we write `\( a.X \)' to refer to the \( X \) segment of the
abstract memory \( a \); for example we write  `\( a.\textseg \)' to denote
the text segment of the abstract memory \( a \).
Now we present how the type of an abstract allocation determines the shape
of the sequence of its locations.

\begin{definition}
\definitionsummary{Abstract allocations}
Let \( t \in \types \).  If \( t \) is a scalar type or a function type, then
\[
  \absallocation(t) \defeq [ t ],
\]
If \( t \) is an
array of \( n \in \naturals \) elements of type \( t_0 \) then
\[
  \absallocation(t) \defeq
  \absallocation(t_0) \concat \absallocation(t_0) \concat \absallocation(t_0).
\]
If \( t \) is a structure type with fields:
\( t_0 \text{ field}_0; \cdots; t_n \text{ field}_n; \)
we define
\[
  \absallocation(t) \defeq
    \absallocation(t_0) \concat \cdots \concat \absallocation(t_n).
\]
\end{definition}

\begin{example}
Consider again \refcodesnippet{aggregate type definition}; this time we have
\begin{gather*}
  \absallocation(\Code{int[4]})
  = [\underbrace{\Code{int}, \Code{int}, \Code{int}}_{ 3 \Code{ times}}];
  \\
  \absallocation(\Code{struct A})
  = [\Code{int}, \Code{int}, \Code{int}, \Code{float}]; \\
  \absallocation(\Code{struct B}) =
    [ \Code{double},
      \Code{int}, \Code{int}, \Code{int},
      \Code{float}, \Code{char}].
\end{gather*}
\end{example}

Note that we approximate arrays using three parts.  In \refsection{pointer
arithmetics} we show how these parts can be used by the analysis.

\subsection{The Lattice Structure}
As in \refsection{our method}, we formalize the concrete domain as the
complete lattice generated by the powerset of the concrete memories \(
\memories \).  Our next step is to introduce the missing elements required
to complete the structure of complete lattice for the abstract domain.  The
bottom (\(\bot\)) and the top (\(\top\)) elements are defined ad-hoc.  Now
we introduce the two binary operations of meet (\( \memmeet \)) and join
(\( \memjoin \)) and the partial order (\( \memsubsumedby \)).  In our
analysis the operations of join and meet, as well as the query on the
partial order, always occur between abstractions having a \emph{similar
structure}; these are the cases that we consider ``interesting'' and on which
we define the operations. However, since the formalization requires
\emph{total} operations, we will extend the definition to
``non-interesting''
cases in a trivial way, that is when asked to compute the join or the meet,
we will simply answer \(\top\) and \(\bot\), respectively.  Note that this
is a specialization of the behaviour described in \refdefinition{Extended
abstract domain}. In this sense, when we say that two elements of \(
\absmemories \), say \( \langle f, P \rangle, \langle g, Q \rangle \),
share a similar structure we mean that \( f = g \).  To formalize the
concept of \emph{similar structure} we introduce the relation
`\(\mathord{\iscompatible}\)'.

\begin{definition}
\definitionsummary{Compatibility between abstract memories}
Let
\[
  \mathord{\iscompatible} \subseteq \absmemories \times \absmemories
\]
be defined as follows. Let \( A, B \in \absmemories \);
then we say that \( (A, B) \in \iscompatible \) when the following
conditions hold:
\begin{gather*}
  \shape(A.\textseg) = \shape(B.\textseg); \\
  \shape(A.\heap) = \shape(B.\heap); \\
  \shape(A.\globalseg) = \shape(B.\globalseg); \\
  \shape(A.\stacktop) = \shape(B.\stacktop); \\
  \shape(A.\stackhead) = \shape(B.\stackhead).
\end{gather*}
\end{definition}

Note that in the definition of the `\(\iscompatible\)' relation,
no constraints are specified on the shape of the stack tail segment.

\begin{definition}
\definitionsummary{Abstract domain partial order}
Let \( A \) and \( B \) be two labelled sets.\footnote{As said above
this definition is valid also if \( A \) and \( B \) are \emph{sequences},
as the \emph{sequence} is a particular case of \emph{labelled set}.}
Let
\[
    A \subsumedby B
  \quad \defiff \quad
    \domain(A) \subseteq \domain(B) \land
    \forall l \in \domain(A) \itc A(l) \subsumedby B(l).
\]
Let \( A, B \in \absmemories \).  We say that
\[
  A \memsubsumedby B
  \quad \defiff \quad
    (A, B) \in \iscompatible \land A \subsumedby B.
\]
\end{definition}

Note that this definition proceeds inductively on the structure of the
abstract memory. The base case of this induction are \emph{locations}.  On
locations, the definition of the partial order
`\(\mathord{\memsubsumedby}\)', of the operations `\(\mathord{\memjoin}\)'
and `\(\mathord{\memmeet}\)', depends on the particular abstract domain
adopted.

\begin{example}
With \emph{location address} we mean an information that allow to identify
a location inside a memory.  If the abstract memory is based on a
points-to domain, locations are formalized as \emph{sets of location
addresses} --- a set of location addresses is used to represent the set of the possibly pointed
locations.  In this case, the partial order on locations is simply the
relation of containment `\(\mathord{\subseteq}\)' between sets of location
addresses.
\end{example}

\begin{definition}
\definitionsummary{Abstract domain join operation}
Let \( A \) and \( B \) be labelled sets.
We define \( A \join B \) such that
\( \domain(A) \union \domain(B) = \domain(C) \) and, for all \( l \in
\locations \),
\[
  (A \join B)(l) \defeq
  \begin{cases}
    A(l), & \text{if } l \in \domain(A) \setminus \domain(B); \\
    B(l), & \text{if } l \in \domain(B) \setminus \domain(A); \\
    A(l) \join B(l),
          & \text{otherwise.}
  \end{cases}
\]
Let \( A, B \in \absmemories \).  We define
\[
  A \memjoin B \defeq
  \begin{cases}
    A \join B, & \text{if } (A, B) \in \iscompatible; \\
    \top,      & \text{otherwise.}
  \end{cases}
\]
\end{definition}

\begin{definition}
\definitionsummary{Abstract memory meet operation}
Let \( A \) and \( B \) be labelled sets. We define
\( A \meet B \) such that
\( \domain(A) \intersection \domain(B) = \domain(C) \) and, for all \( l
\in \locations \),
\[
  (A \meet B)(l) \defeq A(l) \meet B(l).
\]
Let \( A, B \in \absmemories \).  We define
\[
  A \memmeet B \defeq
  \begin{cases}
    A \meet B, & \text{if } (A, B) \in \iscompatible; \\
    \bot,      & \text{otherwise.}
  \end{cases}
\]
\end{definition}

In the computation of the meet operation it is possible to reach the bottom
on some of the locations.\footnote{Locations represents elements of the
underlying abstract domain. Computing the meet between two locations, it is
possible to reach the bottom of the abstract domain.} Depending on the
position of the locations inside the abstract memory, this bottom can be
propagated.  If the bottom is reached on a location contained in the stack
tail, then the bottom can be propagated to the frame that contains the
location: this is equivalent to removing the frame from the stack tail. If
the bottom location is in any other segment then the bottom can be extended
to the whole memory.  The reason of this will be clarified by the
definition of the semantics of the abstract memory.

\subsection{Concretization Function of the Abstract Memory}
This section presents the concretization function for the
abstract memory model \( \absmemories \).  The definition proceed
by structural induction on the definition of abstract memory.  The first step is to
find a mapping between the shape of the concrete memory and the shape of
the abstract memory. Note that at this point we are not interested in
dealing with the \emph{value} of the memory ---which is defined by the
value of the locations--- but only in describing a relation about the
\emph{shape}.  In other words, given an abstract element \( \langle f, P
\rangle \in \absmemories \) and a \( m \in \memories \),
we are now trying to identify the function \( f \in \funlocations \) is
defined on \( m \) (\refdefinition{Extended abstract domain}).
As already done for the definition of the operations of meet and join, we
first formulate a \emph{compatibility} relation to express the requirements
on the \emph{structure} of the concrete and abstract memories.

\begin{definition}
\definitionsummary{Compatibility between concrete and abstract memories}
Let
\[
  \mathord{ \iscompatible }
  \subseteq \memories \times \absmemories.
\]
Let \( A \in \absmemories \) and \( C \in \memories \); then we say that \(
(C, A) \in \iscompatible \) when hold the following conditions
\begin{gather*}
  \shape(C.\textseg) = \shape(A.\textseg); \tag{1} \\
  \forall l \in \domain(C.\heap) \itc
    C.\heap(l).\programpoint \in \domain(A.\heap); \tag{2} \\
  \shape(C.\globalseg) = \shape(A.\globalseg); \tag{3} \\
  \shape(A.\stacktop) = \shape(C.\stacktop); \tag{4} \\
  A.\stackhead.\size \leq C.\stackframes.\size; \tag{5} \\
  \forall i \in \{ 0, \cdots, A.\stackhead.\size - 1 \} \itc \\
  \quad
    \shape\bigl(A.\stackhead(i)\bigr)
      = \shape\bigl(C.\stackframes(i)\bigr); \tag{6} \\
  \forall i \in \{ A.\stackhead.\size, \cdots, C.\stackframes.\size - 1 \}
  \itc \\
  \quad C.\stackframes(i).\programpoint \in \domain(A.\stacktail). \tag{7}
\end{gather*}
\end{definition}

In words, a concrete memory \( C \in \memories \) and an abstract memory \(
C \in \absmemories \) are compatible when holds the following conditions.
\begin{enumerate}
\item
  The shapes of the text segments must be the equal. From the definition,
  both the concrete and the abstract segment contain an abstract allocation
  for every declared function.  Hence, as long as \( A \) and \( C \)
  refer to the same program, this is always true.
\item
  Recall that, within the concrete heap segment, allocations are identified
  by \emph{addresses}; whereas, in the abstract heap segment, allocations
  are identified by \emph{program points}.  For the heap segment we require
  that to each concrete heap allocation there corresponds an abstract heap
  allocation identified by the program point of the concrete allocation.
\item
  The shapes of the global segments must be equal.
  from the definition of shape, this implies that the global segments must
  contain the same number of allocations and that each concrete allocation
  corresponds to an abstract allocation with the same type. Again,
  as long as \( A \) and \( C \) refer to the same program this property is
  always true.
\item
  The \emph{stack top} segments must have the same shape; that is,
  the parts of the stack above the topmost link mark must have the same shape.
\item
  The stack frames segment of \( C \) does not contain less
  frames than the stack head segment of \( A \).
\item
  The shape of stack head segment of \( A \)
  must be a prefix of the shape of the \( \stackframes \) segment of \( C \).
\item
  The remaining part of the stack frames segment of \( C \) must be
  compatible with the stack tail segment of \( A \).  Recall that in the
  stack tail the frames are identified by their call site; thus, this means
  that to every frame of \( C.\stackframes \) corresponds in \(
  A.\stacktail \) a frame with the same call site.
\end{enumerate}

Given a concrete memory \( m \in \memories \) and an abstract memory \(
m^\sharp = \langle f, P \rangle \in \absmemories \), once we know that the
\( m \) is compatible with \( m^\sharp \), we ask how the locations of \( m
\) map onto the locations of \( m^\sharp \), that is, \emph{how} the
function \( \functiondef{ f(m, \cdot) }{ \locations }{ \abslocations } \)
is defined, as this is required
in order to complete the definition of the semantics of the abstract domain
(\refdefinition{Extended abstract domain semantics}).
Before going into the details we introduce the idea behind
the approach.  By looking at the definitions of the concrete and abstract
memories, note that these objects can be seen as \emph{trees} --- every
labelled set is a node with its elements as children.
If \emph{memories are trees}, then we can characterize \emph{locations} as
the \emph{leaves}. In other words, a location can be uniquely
identified within a memory by the \emph{path} that connects the root of the tree
to the corresponding leaf node.  Under these assumptions we can identify
\emph{concrete location addresses} as the paths inside the concrete memories
and the \emph{abstract location addresses} as the paths inside the abstract memories.
We
can now restate our initial problem as the problem of determining a mapping
from paths on a concrete tree to paths on an abstract tree.  To do this we
exploit the recursive structure of trees --- for each subtree \( S^\sharp
\) of \( m^\sharp \) we have to identify the set of subtrees \( S_0,
\cdots, S_n \) of \( m \) that are mapped into \( S^\sharp \);  being the
leaves the limit case of subtrees, we will end up having a
map from the leaves of \( m \) to the leaves of \( m^\sharp \). To
formalize this mapping we use triples of the form \[ \bigl< S^\sharp, \{
S_0, \cdots, S_n \}, M \bigr>, \] where \( S^\sharp \) is a subtree of \(
m^\sharp \), \( S_0, \cdots, S_n \) are the subtrees of \( m \)
mapped into \( S^\sharp \) and \( M \) is a map that defines how the
children of \( S_0, \cdots, S_n \) are mapped to the children of \(
S^\sharp \).

\begin{definition}
\definitionsummary{Concretization of allocations}
We define the \( \mathord{\absmap} \) function by structural induction on
the set \( \types \).  Let \( t \in \types \); then
\[
  \absmap(t) \defeq
  \begin{cases}
    \Bigl\{ \bigl< 0, \{ 0 \}, \emptyset \bigr> \Bigr\},
      \\ \qquad \text{if \( t \) is scalar or function type;}
    \\
    \Bigl\{
      \bigl< 0, \{ 0 \}, \absmap(t_0) \bigr>, \\
      \quad \bigl< 1, \{ 1, \cdots, n - 1 \}, \absmap(t_0) \bigr>, \\
      \quad \bigl< 2, \{ n \}, \absmap(t_0) \bigr>
    \Bigr\},
    \\ \qquad \text{if \( t \) is an array of size \( n \) of type } t_0;
    \\
    \Bigl\{\,
      \bigl< i, \{ i \}, \absmap(t_i) \bigr>
    \Bigm|
      i \in \{ 0, \cdots, n \}
    \,\Bigr\},
      \\ \qquad \text{if } t = \text{struct}
      : t_0 \text{ field}_0, \cdots, t_n \text{ field}_n;
  \end{cases}
\]
Let \( a_0, \cdots, a_n, a^\sharp \)
be allocations such that
\[
  \forall i \in \{ 0, \cdots, n \} \itc
    a^{ \sharp }.\type = a_i.\type.
\]
Then we define
\[
  \absmap\bigl( a^\sharp, \{ a_0, \cdots, a_n \} \bigr) \defeq
    \bigl< a^\sharp, \{ a_0, \cdots, a_n \}, \absmap(a_0.\type) \bigr>.
\]
\end{definition}

Recall that allocations are the base case of the definition of
\(\mathord{\shape}\): the shape of an allocation is its type attribute.
This means that the above condition on the types of the allocations is
equivalent to say that \( a_0, \cdots, a_n, a^\sharp \) must have the
same shape.

\begin{definition}
\definitionsummary{Concretization of labelled sets}
Let \( S_0, \cdots S_n, S^\sharp \)
be labelled sets such that
\[
  \forall i \in \{ 0, \cdots, n \} \itc \shape(S^\sharp) = \shape(S_i).
\]
Then we define
\( \absmap\bigl(S^\sharp, \{ S_0, \cdots, S_m \} \bigr) \) as the set
\[
    \Biggl<
      S^\sharp,
      \{ S_0, \cdots, S_m \},
      \biggl\{\,
        \absmap\Bigl( S^\sharp(l), \bigl\{ S_0(l), \cdots, S_n(l) \bigr\} \Bigr)
      \biggm|
        l \in \domain(S_0)
      \,\biggr\}
    \Biggr>.
\]
\end{definition}

Note that from the definiton of labelled set, if \( S_0, \cdots, S_n,
S^\sharp \) have the same shape, then they have also the same domain
(\refdefinition{Shape of a labelled set}); that is, the definition is well
formed.

\begin{definition}
\definitionsummary{Concretization of memories}
Let \( m \in \memories \) and \( m^\sharp \in \absmemories \) such that
\begin{gather*}
  m = \{ \textseg, \heap, \globalseg, \stackframes, \stacktop \}, \\
  m^\sharp =
    \{ \textseg^\sharp,
      \heap^\sharp,
      \globalseg^\sharp,
      \stacktail,
      \stackhead,
      \stacktop^\sharp \}.
\end{gather*}
If \( (m, m^\sharp) \in \iscompatible  \) we define
\( \absmap(m, m^\sharp) \) as the set
\begin{gather*}
    \Bigl\{
      \absmap\bigl(\textseg^\sharp, \{ \textseg \} \bigr),
      \absmap\bigl(\globalseg^\sharp, \{ \globalseg \} \bigr),
      \absmap\bigl(\stacktop^\sharp, \{ \stacktop \} \bigr)
    \Bigr\}
    \\
    \union
    \biggl\{\,
      \absmap\Bigl(a^\sharp, \bigl\{\, a \in \heap \bigm|
        a.\programpoint = a^\sharp.\programpoint\,\bigr\}\Bigr)
    \biggm|
      a^\sharp \in \heap^\sharp
    \,\biggr\}
    \\
    \union
    \biggl\{\,
      \absmap\Bigl(
        f^\sharp,
        \bigl\{\,
          \stackframes(i)
        \bigm|
          \stackframes(i).\callsite = f^\sharp.\callsite,
          \\ \qquad
          i \in \{ \stackhead.\size, \cdots, \stackframes.\size -1 \}
        \,\bigr\}
      \Bigr)
    \biggm|
      f^\sharp \in \stacktail
    \,\biggr\}
    \\
    \union
    \biggl\{\,
      \absmap\Bigl(\stacktail(i), \bigl\{ \stackframes(i) \bigr\}
        \Bigr)
    \biggm|
      i \in \{ 0, \cdots, \stackhead.\size - 1 \}
    \,\biggr\}.
\end{gather*}
\end{definition}

Note that the requirement of compatibility between the concrete memory \( m
\) and the abstraction \( m^\sharp \) ensures that the function
\(\mathord{\absmap}\) is well defined.
Once completed the definition of the function \( f \in \funlocations \),
the semantics of the abstraction can be completed following the idea
described in \refdefinition{Extended abstract domain semantics}.
Alternatively, using the approach informally presented in the introduction
(\refsection{general store based methods}), the concretization function can
be expressed in terms of approximation between locations, thus relying on
the definition of the concretization function for the elements of the
underlying abstract domain. Let \( m \in \memories \) and \( m^\sharp =
\langle f, P \rangle \in
\absmemories \) and let \( f \in \funlocations \) the location abstraction
function of \( m^\sharp \), then we say that
\( m \in \concretization(m^\sharp) \) when
\[
    \forall l \in \locations \itc
    f(m, l) \text{ is defined} \implies
    m[l] \in \concretization\Bigl( m^\sharp\bigl[f(m, l)\bigr] \Bigr);
\]
that for a points-to domain can also be written as
\[
  \forall l \in \locations \itc
  f(m, l) \text{ is defined} \implies
  f\bigl(m, \post(m, l)\bigr) \in
  \post\bigl(m^\sharp, f(m, l)\bigr).
\]

\subsubsection{Singular Locations}
The definition of \emph{singular location} introduced in
\refdefinition{Singular locations} is not applicable in a practical
implementation as it would require to explicitly check the existence of an
\( m \) in the concretization of \( m^\sharp \) with certain properties.
As a consequence we need a safe approximation of the set of
singular locations of an abstract memory. From the above definitions it can
be easily seen that every abstract location that represents the middle part
of an array of size not less that three is certainly non-singular.
The same holds also for stack tail segment:
each frame in this segment can represent \emph{more} concrete frames;
then, during the analysis we assume that all the locations contained
in the stack tail are non-singular.  Analogously for heap allocations;
it is impossible to tell for a given allocating statement
if it can be executed \emph{at most} one time;
in other words, it is impossible to tell if there exist a
\( m \in \concretization(m^\sharp) \) such that a given abstract heap
allocation abstracts \emph{more} concrete heap allocations.
As a consequence, we \emph{safely} assume that all
abstract heap allocations are non-singular.

\subsection{Abstract Operations}
Thus section presents some informal considerations about the remaining
operations required in order to complete the description of the execution
model.
We have already described the problem of formalizing operations
on the memory model in \refsection{the execution model and its operations}:
some operations are necessary to formulate the concrete
execution model \( \memories \); these are then generalized to the
concrete domain \( \partsof( \concretedomain ) \) and an approximation on
\( \abstractdomain \) is provided. Consider for instance the assignment
operation. Other operations are not required by the concrete execution
model, but are useful for the analysis; these operations are directly
formulated on the concrete domain \( \partsof( \concretedomain ) \) and, as
usual, an abstract counterpart is formulated on \( \abstractdomain \).
Consider
for instance the \emph{filter}, the \emph{merge} and \emph{meet} operations.
A more rigorous description of some of these is presented in
\cite{BagnaraHPZ07TR}.

\subsubsection{Notation}
Before proceeding we introduce some notation. Let \( A
\) be a non empty sequence. We write \( A = [H \mid T] \) to mean with \( H
\) the first element of \( A \), also called the \emph{head} element of \( A \);
and with \( T \) the remaining part of \( A \), also called the tail of the
sequence \( A \). We denote with `\([]\)' the empty sequence.

\subsubsection{The Mark Operation.}
This operation has the effect of closing the current block. In our memory
model we have modeled the stack top segment a sequence of blocks `\(
\text{Bs} \)' not contained in any frame, followed by a sequence of
allocations  `\( \text{As} \)' not contained in any block.  Let \( m
\in \memories \) be such that
\[
  m.\stacktop = [ \text{As}, \text{Bs} ].
\]
Then we have
\[
  \acsmark(m) = m_0 \in \memories
\]
such that
\[
  m_0.\stacktop = \bigl[ [], [\text{As} \mid \text{Bs} ] \bigr],
\]
while the rest of the memory is left unchanged.  In words, the allocations
`\(\text{As}\)' present in the stack top segment are moved in a block at
the head of the sequence of blocks `\(\text{Bs}\)'.  The abstract mark
operation is defined in the same way.

\subsubsection{The Link Operation}
This operation has the effect of creating a new frame on the stack.
Let \( m \in \memories \) be such that
\begin{gather*}
  m.\stacktop    = \bigl[ [], [ B \mid \text{Bs} ] \bigr], \\
  m.\stackframes = \text{Fs}.
\end{gather*}
Let
\[
  \acslink(m) = m_0 \in \memories,
\]
then we have
\begin{gather*}
  m_0.\stacktop    = \bigl[ [], [ B ] \bigr], \\
  m_0.\stackframes = [ \text{Bs} \mid \text{Fs} ],
\end{gather*}
and the rest of the memory is left unchanged.  The block denoted above as
\( B \) is intended to represent the arguments and the return value of the
function call that has triggered the link operation. To emulate the
arguments passing from the callee to the called context,
the allocations of the
block \( B \) are left in the stack top segment.
The abstract operation is formulated
similarly, the only difference is that the
`\(\stackhead\)' segment is used instead of the stack frames segment.
Let \( m^\sharp \in \absmemories \) be such that
\begin{gather*}
  m^\sharp.\stacktop  = \bigl[ [], [ B \mid \text{Bs} ] \bigr], \\
  m^\sharp.\stackhead = \text{Fs},
\end{gather*}
Let
\[
  \acsabslink(m^\sharp) = m^\sharp_0 \in \memories,
\]
then we have
\begin{gather*}
  m^\sharp_0.\stacktop  = \bigl[ [], [ B ] \bigr], \\
  m^\sharp_0.\stackhead = [ \text{Bs} \mid \text{Fs} ].
\end{gather*}

\subsubsection{The New Variable Operation}
This operation is required to populate the stack.
Ideally this operation can be split in two parts: first, the
creation of the new allocation; second the initialization of its locations.
Since the initialization can be treated a sequence of assignments,
here we consider
only the creation of the new locations.
Let \( m \in \memories \) be such that
\[
  m.\stacktop = [ \text{As}, \text{Bs} ].
\]
Let \( t \in \types \) be the type of the allocated object and let
\[
  \acsnewvar(m, t) = m_0 \in \memories.
\]
We have
\[
  m_0.\stacktop = \bigl[ [ A \mid \text{As} ], \text{Bs} \bigr],
\]
where (\refdefinition{Concrete allocations})
\[
  A = \allocation(t).
\]
Again, the abstract operation is defined in the same way, except that the
new allocation \( A \) is defined as (\refdefinition{Abstract allocations})
\(
  A = \absallocation(t).
\)

\subsubsection{The Unlink Operation}
This operation can be thought as the inverse of the link operation --- if
the link emulates the effects of a call statement then the unlink emulates
the effects of a return statement.
Let \( m \in \memories \) be such that
\begin{gather*}
  m.\stackframes = [ F \mid \text{Fs} ], \\
  m.\stacktop    = \bigl[ [], [ B ] \bigr].
\end{gather*}
The block \( B \) contains the arguments and the return value of the
called function that are returned to the caller context.
In particular we assume that the stack top contains only one block and that
the stack frames segment contains at least one frame --- in words, this
requires that every return statement must be preceded by a call statement.
Let
\[
  \acsunlink(m) = m_0 \in \memories.
\]
We have
\begin{gather*}
  m_0.\stackframes = \text{Fs}, \\
  m_0.\stacktop    = \bigl[ [], [ B \mid F ] \bigr],
\end{gather*}
while the rest of the of the memory is left unchanged.  Note that the
topmost frame \( F \) of the stack frames segment of \( m \) has
been moved in \( m_0 \) to the stack top segment and the block \( B \) has
been appended to it.  Basically, the abstract operation is defined in the
same way; the only difference is that instead of using the
`\(\stackframes\)' segment the `\(\stackhead\)' segment is used.

\subsubsection{The Unmark Operation}
This operation can be thought as the inverse of the mark operation --- if
the mark operation creates a new block gathering all the ungrouped
allocations of the stack top, then the unmark operation deletes these
allocations and replaces them with the allocations contained in the topmost
block.  Let \( m \in \memories \)  be such that
\[
  m.\stacktop = \bigl[ \text{As}, [ B \mid \text{Bs} ] \bigr].
\]
Let
\[
  \acsunmark(m) = m_0 \in \memories.
\]
We have
\[
  m.\stacktop = [ B, \text{Bs} ],
\]
while the rest of the memory is left unchanged.  Note that the sequence of
allocations `\(\text{As}\)' has been removed and in its place we now find
the allocations of the block \( B \).  The abstract operation is defined in
the same way.  It is worth stressing that the implementation of this
abstract operation probably
requires an additional step to
notify the remaining locations that the locations in `\(\text{As}\)'
no more exist; for instance, this is required for a pointer that was pointing to
one of the deallocated locations (\(\text{As}\)). In this case, depending
on the concrete execution model adopted, this pointer can be marked as
undefined.

In our model we use the following operations to set the degree of
context-sensitivity of the analysis and to approximate recursive function
calls. Both these operations have no effects on the concrete domain, that
is, for all \( m \in \memories \) we have \( \operation(m) = m \). In terms
of the approximation this means that for all \( m^\sharp \in \memories \) we
have that \( \concretization(m^\sharp) \subseteq \concretization\bigl(
\operation(m^\sharp) \bigr) \).

\subsubsection{The Stack Tail Push Operation}
This operation has the effect of moving the oldest frame of the stack head
segment (from now the `pushed frame') to the stack tail. Recall that the
stack tail segment is a labelled set of frames where each frame is
identified
by a call site and that the call site uniquely identifies the shape of the
frame. This means that for each call site the stack tail can contain only
one frame.  Thus, if it already contains a frame with the same
call site of the pushed frame then the pushed frame will be merged into the
corresponding stack tail frame. Otherwise, if no frames with the same call
site are already present, the frame will be simply added to the stack tail.
Let \( m^\sharp \in \absmemories \)  be such that
\begin{gather*}
  m^\sharp.\stackhead = \text{Fs} \concat [ F ], \\
  m^\sharp.\stacktail = \{ F_0, \cdots, F_n \},
\end{gather*}
where \( F \) denotes the last element of the non-empty stack head segment;
`\( \text{Fs} \)' denotes the remaining part of the
same sequence and \( n = m^\sharp.\stacktail.\size \in \naturals \).  Let
\[
  \acsabstailpush(m^\sharp) = m^\sharp_0 \in \absmemories.
\]
then we have
\begin{gather*}
  m^\sharp_0.\stackhead = \text{Fs}, \\
  m^\sharp_0.\stacktail =
    \begin{cases}
      \{ F_0 \sqcap F, \cdots, F_n \},
        & \text{if } F_0.\callsite = F.\callsite; \\
      \{ F_0, \cdots, F_n, F \},
        & \text{otherwise.}
    \end{cases}
\end{gather*}
Note that the stack tail segment is a labelled set, thus the order
indicated above, \( F_0, \cdots, F_n \), among its frames is completely
artificial and introduced for notational convenience --- writing \(
F_0.\callsite = F.\callsite \) we mean that there exists a frame in the stack
tail with the same call site of \( F \).

\subsubsection{The Stack Tail Pop Operation}
This operation is the inverse of the stack tail push --- it moves a frame
from the stack tail back into the stack head segment.
To do this we have to specify which frame to restore, that is the stack
tail pop operation requires a call site. Let
\[
  \functiondef
    { \acsabstailpop }
    { \absmemories \times \callsites }
    { \absmemories }
\]
Given \( c \in \callsites \) and \( m^\sharp \in \absmemories \), if the
stack tail segment of \( m^\sharp \) does not contain any frame labelled \(
c \) then the operation results in the \( \bot \) element.
Otherwise let
\begin{gather*}
  m^\sharp.\stackhead = \text{Fs}, \\
  m^\sharp.\stacktail = \{ F_0, \cdots, F_n \}
\end{gather*}
be such that \( F_0.\callsite = c \). Then calling
\[
  \acsabstailpop(m^\sharp) = m^\sharp_0 \in \absmemories,
\]
we have
\[
  m^\sharp_0.\stackhead = \text{Fs} \concat [ F ], \\
\]
while the rest of the memory, also the stack tail segment, remains
unchanged.

\subsection{Approximating the Stack}
The concept of \emph{stack tail} is introduced precisely to handle
\emph{recursion}. In presence of recursive function calls, the number of
frames on the concrete stack cannot be limited by any finite bound. Beyond
these theoretical considerations, just from a practical perspective it is
unfeasible to keep an arbitrary number of distinct abstract frames.  The
idea of our abstraction to address this problem is to represent `precisely'
the variables of the local environment, approximated by the \emph{stack
top} segment, and global variables, represented by the \emph{global}
segment. Also the topmost \( k \) frames of the concrete stack are
abstracted `precisely' by the \emph{stack head} segment.  However, we
approximate more roughly in the \emph{stack tail} segment, the content of
the concrete stack below the first \( k \) frames.  Frames in the stack
tail are identified by their \emph{call site}; this means that the concrete
frames labelled by the same call site \( c \) that are below the \( k \)-th
topmost frame, are all approximated by the same abstract frame, which is
contained in the stack tail and it is identified by \( c \).

\subsection{Pointer Arithmetic}
\label{section:pointer arithmetics}
This section presents a prototype for handling pointer arithmetic.
Complex approaches to this problem are already present in the literature;
for example, \emph{string cleanness} techniques associate an integer quantity to every possible
target of a pointer, to represent the distance between the
beginning of the pointed object and the pointed address. These integer
quantities are then approximated by the analysis using a some numerical
abstraction;
with the availability of \emph{relational} numerical domains,
these methods can be precise but costly \cite{Franchi06th}.
The method that we present now
is \emph{attribute independent} and it is completely handled by the points-to
domain; the presence of an external numeric domain is assumed only to query
for the value of integer expressions during the evaluation of the
pointer arithmetic.
Let \( m^\sharp \in \absmemories \) and consider the expression \( p + i
\) where
\begin{itemize}
\item
  the expression \( p \) is of pointer type and its abstract evaluation
  results in a location that is part of an array.  We assume to know the
  type of the elements of the array and the size of the array itself.
\item
  The expression \( i \) is of integer type and it represents the added
  offset.
\end{itemize}
To represent the possible errors that can arise from the concrete
evaluation of the expression \(p + i\), we use the set
\[
  \rtserrors \defeq \{ \Em, \Ep \},
\]
where with `\( \Em \)' we denote the \emph{array underflow error} and with
`\( \Ep \)' we denote the \emph{array overflow error}.
To formalize the concrete evaluation of a
pointer arithmetic expressions, let
\[
  \parfunctiondef
    { \pointerarith }
    { \memories \times \expressions \times \expressions }
    { \locations \union \rtserrors }
\]
be a partial function defined for every pair of expressions \( p, i \in
\expressions \) where \( p \) is of pointer type and \( i \) is of integer
type.\footnote{We assume that the two sets \( \locations \) and \(
\rtserrors \) have disjoint representations.}
Let
\[
  \parfunctiondef
    { \pointerarith }
    { \partsof(\memories) \times \expressions \times \expressions }
    { \partsof(\locations \union \rtserrors ) }
\]
be its extension to sets of concrete memories defined, for all
\( M \subseteq \memories \), as
\[
  \pointerarith(M, p, i) \defeq
  \bigunion
    \bigl\{\,
      \pointerarith(m, p, i)
    \bigm|
      m \in M
    \,\bigr\}.
\]
A rigorous definition of \( \pointerarith(m, p, i) \) would require a
rigorous definition concrete execution model \cite{BagnaraHPZ07TR}; an
informal presentation of the concrete semantics used here is later
discussed in \refsection{derivation of the tables}.
To denote the approximation for the concrete operation
\(\mathord{\pointerarith}\) we introduce the function\footnote{Also in this
case we assume that \( \abslocations \) and \( \rtserrors \) have disjoint
representations.}
\[
  \parfunctiondef
    { \pointerarith }
    { \absmemories \times \expressions \times \expressions }
    { \partsof(\abslocations \union \rtserrors) }.
\]
Generally, in an abstract memory description \( m^\sharp \), the
evaluation of a pointer expression results in a \emph{set} of abstract
locations.  It is however convenient to define the abstract semantics of
the \( \mathord{\pointerarith} \) function by working on one abstract
location at a time.
Thus, to ease the presentation we introduce the helper function
\[
  \functiondef
    { \pointerarith }
    { \absmemories \times \absLocations \times \expressions }
    { \partsof(\abslocations \union \rtserrors ) }
\]
where, given \( m^\sharp \in \absmemories \), \( l \in \absLocations \) and
the integer expression \( i \in \expressions \),
with \( \pointerarith(m^\sharp, l, i) \)
we represent the set of the possible abstract locations resulting
from the addition of the value of \( i \) to the location \( l \) in the memory
\( m^\sharp \).  Let again \( p, i \in \expressions \);
then we define
\[
  \pointerarith(m^\sharp, p, i) \defeq
  \bigunion
    \bigl\{\,
      \pointerarith(m^\sharp, l, i)
    \bigm|
      l \in \eval(m^\sharp, p)
    \,\bigr\}.
\]
To query the numerical domain about the value of the integer expression \(
i \) we assume the existence of a function
\[
  \functiondef
    { \evalInt }
    { \absmemories \times \expressions }
    { \partsof(\integers) }
\]
with the following semantics
\[
  \evalInt(m^\sharp, i) \defeq
  \Bigl\{\,
    z \in \integers
  \Bigm|
    \exists m \in \concretization(m^\sharp) \suchthat
    m\bigl[\eval(m, i)\bigr] = z
  \,\Bigr\}.
\]
In words, the function \( \mathord{\evalInt} \) returns the set of the
possible values that the integer expression \( i \) can assume in the
concrete memories \( m \) approximated by \( m^\sharp \).

The function \( \mathord{\pointerarith} \) is defined as follows.
We first introduce some notation. Let \( S \in \naturals \setminus \{ 0 \}
\) be the size of the array on which we are performing pointer
arithmetic.

\begin{center}
\begin{tabular}{lll}
\toprule
  Symbol        & Description & Concrete range \\
\hline
\( \Em \)   & Array underflow error.     & \( (-\infty, -1] \) \\
\( \Head \) & Array head location.       & \( [0, 1) \) \\
\( \Tail \) & Array tail location.       & \( [1, S) \) \\
\( \Off \)  & Array off-by-one location. & \( [S, S+1) \) \\
\( \Ep \)   & Array overflow error.      & \( [S+1, +\infty) \) \\
\bottomrule
\end{tabular}
\end{center}
The abstract memory model described in \refsection{extended memory model}
approximates array variables using three distinct abstract locations here
denoted with `\( \Head \)',
`\( \Tail \)' and `\( \Off \)'; we use the symbols `\(
\Ep \)' and `\( \Em \)' to denote the possible exceptional outcome of the
arithmetic operation due to the exceeding of the array bounds.
Let \( S \in \naturals \setminus \{ 0 \} \) be the size of the considered array;
we distinguish four possible cases: \( S = 1, S
= 2, S = 3, S \geq 4 \).
Each of these cases is described by one of the below tables.
In each of this tables, the first column contains a set of intervals of \(
\integers \) that forms a partition of \( \integers \) itself.
The first row of these tables represents instead the three possibility for
the abstract locations \( l \) supplied to the function
\( \mathord{\pointerarith} \).
Let \( D = D(S) \) be the table corresponding to the location \( l
\).  We denote as `\( D.\text{rows} \)'
the number of rows of the table \( D \).  For each \( n \in \{ 1, \ldots,
D.\text{rows} \} \) we denote as `\( D.\text{row}(n) \)' the \(n\)-th row of the
table \( D \).  Given a row \( R \) of \( D \) we denote as `\(
R.\text{interval} \)' the interval of \( \integers \) associated to \( R \),
which is located in the first column. With `\( R.\text{loc}(l) \)' we
denote the cell at the intersection of the row \( R \) and the column
associated to the location \( l \) --- the second column if \( l \)
represents the \emph{head} location \( \Head \) of the array, the third
column if \( l \) represents the \emph{tail} location \( \Tail \), or the
fourth column if \( l \) represents the \emph{off-by-one} location \( \Off
\).  With this notation, the function \( \mathord{\pointerarith} \) can be
defined as
\begin{gather*}
  \pointerarith(m^\sharp, l, i) \defeq
  \bigunion
    \bigl\{\,
      L(S, i)
    \bigm|
      i \in \{ 1, \cdots, D.\text{rows} \}
    \,\bigr\};
  \\
  \intertext{where}
  R(S, n) \defeq D(S).\text{row}(n);
  \\
  L(S, n, l) \defeq
  \begin{cases}
    R(S, n).\text{loc}(l),
      & \text{if } R(S, n).\text{interval} \intersection
        \evalInt(m^\sharp, i) \neq \emptyset; \\
    \emptyset, & \text{otherwise}.
  \end{cases}
\end{gather*}

\begin{table}[ht]
\centering
\subfloat{
\begin{pointerarithmetictable2}
$ S=1         $ & $ \Head $ & $ \Tail $ & $ \Off  $ \\ \hline
$ (-\infty, -2] $ & $ \Em   $ & $       $ & $ \Em   $ \\
$ -1          $ & $ \Em   $ & $       $ & $ \Head $ \\
$ 0           $ & $ \Head $ & $       $ & $ \Off  $ \\
$ 1           $ & $ \Off  $ & $       $ & $ \Ep   $ \\
$ [2, \infty)   $ & $ \Ep   $ & $       $ & $ \Ep   $ \\
\end{pointerarithmetictable2}
} \quad
\subfloat{
\begin{pointerarithmetictable2}
$ S=2         $ & $ \Head $ & $ \Tail $ & $ \Off  $ \\ \hline
$ (-\infty, -3] $ & $ \Em   $ & $ \Em   $ & $ \Em   $ \\
$ -2          $ & $ \Em   $ & $ \Em   $ & $ \Head $ \\
$ -1          $ & $ \Em   $ & $ \Head $ & $ \Tail $ \\
$ 0           $ & $ \Head $ & $ \Tail $ & $ \Off  $ \\
$ 1           $ & $ \Tail $ & $ \Off  $ & $ \Ep   $ \\
$ 2           $ & $ \Off  $ & $ \Ep   $ & $ \Ep   $ \\
$ [3, \infty)   $ & $ \Ep   $ & $ \Ep   $ & $ \Ep   $ \\
\end{pointerarithmetictable2}
}

\subfloat{
\begin{pointerarithmetictable2}
$ S = 3       $ & $ \Head $ & $ \Tail        $ & $ \Off  $ \\ \hline
$ (-\infty, -4] $ & $ \Em   $ & $ \Em          $ & $ \Em   $ \\
$ -3          $ & $ \Em   $ & $ \Em          $ & $ \Head $ \\
$ -2          $ & $ \Em   $ & $ \Em, \Head   $ & $ \Tail $ \\
$ -1          $ & $ \Em   $ & $ \Head, \Tail $ & $ \Tail $ \\
$ 0           $ & $ \Head $ & $ \Tail        $ & $ \Off  $ \\
$ 1           $ & $ \Tail $ & $ \Tail, \Off  $ & $ \Ep   $ \\
$ 2           $ & $ \Tail $ & $ \Off, \Ep    $ & $ \Ep   $ \\
$ 3           $ & $ \Off  $ & $ \Ep          $ & $ \Ep   $ \\
$ [4, \infty)   $ & $ \Ep   $ & $ \Ep          $ & $ \Ep   $ \\
\end{pointerarithmetictable2}
}
\quad
\subfloat{
\begin{pointerarithmetictable2}
$ S \geq 4      $ & $ \Head $ & $ \Tail             $ & $ \Off  $ \\
\hline
$ (-\infty, -S-1] $ & $ \Em   $ & $ \Em               $ & $ \Em   $ \\
$ -S            $ & $ \Em   $ & $ \Em               $ & $ \Head $ \\
$ 1-S           $ & $ \Em   $ & $ \Em, \Head        $ & $ \Tail $ \\
$ [2-S, -2]     $ & $ \Em   $ & $ \Em, \Head, \Tail $ & $ \Tail $ \\
$ -1            $ & $ \Em   $ & $ \Head, \Tail      $ & $ \Tail $ \\
$ 0             $ & $ \Head $ & $ \Tail             $ & $ \Off  $ \\
$ 1             $ & $ \Tail $ & $ \Tail, \Off       $ & $ \Ep   $ \\
$ [2, S-2]      $ & $ \Tail $ & $ \Tail, \Off, \Ep  $ & $ \Ep   $ \\
$ S-1           $ & $ \Tail $ & $ \Off, \Ep         $ & $ \Ep   $ \\
$ S             $ & $ \Off  $ & $ \Ep               $ & $ \Ep   $ \\
$ [S+1, \infty)   $ & $ \Ep   $ & $ \Ep               $ & $ \Ep   $ \\
\end{pointerarithmetictable2}
}
\end{table}

Since the C language provides various mechanism to create arrays whose size
is computed at run-time, we ought to consider the case of handling pointer
arithmetic on arrays of unknown size.\footnote{Or of \emph{partially}
unknown size.  For example, the analysis could be able to determine some
approximation of the value used to specify the size the
array during its allocation.}
To handle the case of arrays of unknown size we
compute a merge of the above cases, obtaining the following
table.

\begin{pointerarithmetictable}
$ S > 0       $ & $ \Head            $ & $ \Tail             $ & $ \Off              $ \\ \hline
$ (-\infty, -2] $ & $ \Em              $ & $ \Em, \Head, \Tail $ & $ \Em, \Head, \Tail $ \\
$ -1          $ & $ \Em              $ & $ \Head, \Tail      $ & $ \Head, \Tail      $ \\
$ 0           $ & $ \Head            $ & $ \Tail             $ & $ \Off              $ \\
$ 1           $ & $ \Tail, \Off      $ & $ \Tail, \Off       $ & $ \Ep               $ \\
$ [2, \infty)   $ & $ \Tail, \Off, \Ep $ & $ \Tail, \Off, \Ep  $ & $ \Ep               $ \\
\end{pointerarithmetictable}

\subsubsection{Examples}
The following examples illustrate the described method
applied to \refcodesnippet{pointer arithmetics}.
For convenience of notation we
represent the steps of the computation using a table with two columns: the first
column shows the program point currently executed and the second column
shows the abstract value of the variable \QuotedCode{first}; note indeed
that the value of the pointer variable \QuotedCode{last} is never changed
by the execution of the function \QuotedCode{foo}.  Since the
\QuotedCode{foo}' function contains a loop, the
abstract computation terminates when a fix-point is reached;
to separate the different iterations of the loop analysis we use horizontal
lines.
In the last row of the table we
will show the result of the merge of all the exit states of the loop.  For
simplicity of presentation we assume that the array \QuotedCode{a} declared
at \refline 2 contains at  least four elements.

\codecaption
{an example of a simple loop that depends on the pointer arithmetic
computation.}
{pointer arithmetics}
\begin{codesnippet}
const unsigned int N = ...;
int a[N];

void foo(const T* first, const T* last) {
  // PP0
  while (true) {
      // PP1
      if (first == last) break;
      // PP2
      ... = *first;
      // PP3
      ++first;
  }
  // PP5
}
\end{codesnippet}

\begin{example}
Consider the call \QuotedCode{foo(a, a + N)}.
In the concrete domain the expression \QuotedCode{a + N} evaluates to the
address one-past-the-end of the array \QuotedCode{a}, that in the abstract
domain corresponds to the off-by-one abstract location \( \Off \). During
all the execution of the \QuotedCode{foo} function we have \( \eval(
m^\sharp, \Code{last}) = \{ \Off \} \).
Instead, the expression \QuotedCode{a} evaluates to the
address of the begin of the array \QuotedCode{a}, that in the abstract
domain corresponds to the head abstract location \( \Head \).
Thus, at the entry point of \QuotedCode{foo}, the expression
\QuotedCode{first} evaluates to \( \Head \).
These are the steps of the execution
\begin{examplearithtbl}
PP & \( \eval( m^\sharp, \Code{first}) \) \\
\hline
  0 & \( \Head \) \\
\hline
  1 & \( \Head \) \quad (1st) \\
  5 & \( \bot \) \\
  2 & \( \Head \) \\
  3 & \( \Head \) \\
\hline
  1 & \( \Tail \) \quad (2nd) \\
  5 & \( \bot \) \\
  2 & \( \Tail \) \\
  3 & \( \Tail \) \\
\hline
  1 & \( \Tail, \Off \) \quad (3rd) Fixpoint \\
  5 & \( \Off \) \\
  2 & \( \Tail \) \\
  3 & \( \Tail \) \\
\hline
  5 & \( \Off \) \\
\end{examplearithtbl}
Note that the filter on the guard condition of the loop
\QuotedCode{first == last} at \refline 8, is able to split the points-to
information
\[
  \bigl\{
    \langle \Code{first}, \Code{a}.\Tail \rangle,
    \langle \Code{first}, \Code{a}.\Off \rangle
  \bigr\}
\]
into \( \bigl\{ \langle \Code{first}, \Code{a}.\Tail \rangle \bigr\} \) for
the \Code{else} branch --that represents the continuation of the loop-- and
into \( \bigl\{ \langle \Code{first}, \Code{a}.\Off \rangle \bigr\} \) for
the \Code{then} branch, that represents the execution paths that exit from
the loop.
In this case the analysis finds the fixpoint of the loop without signalling
any error due to the pointer arithmetic; that is, it is able to prove the
absence of errors in the execution of the loop.
\end{example}

\begin{example}
Consider the call \QuotedCode{foo(a, a)}.  During all the execution we have
\[
  \eval( m^\sharp, \Code{last}) = \{ \Head \}.
\]
These are the steps of the execution
\begin{examplearithtbl}
PP & \( \eval( m^\sharp, \Code{first}) \) \\
\hline
  0 & \( \Head \) \\
\hline
  1 & \( \Head \) \quad (1st) Fixpoint \\
  5 & \( \Head \) \\
  2 & \( \bot \) (unreachable) \\
\hline
  5 & \( \Head \) \\
\end{examplearithtbl}
In this case the analysis is able to prove that the execution exits
immediately from the loop without modifying the value of
\QuotedCode{first} and without any error.
\end{example}

\begin{example}
Consider the call \QuotedCode{foo(a + N, a + N)}.  This case is very
similar to the previous one.  During the execution we have \( \eval(
m^\sharp, \Code{last}) = \{ \Off \} \).  These are the steps of the
execution
\begin{examplearithtbl}
PP & \( \eval( m^\sharp, \Code{first}) \) \\
\hline
  0 & \( \Off \) \\
\hline
  1 & \( \Off \) \quad (1st) Fixpoint \\
  5 & \( \Off \) \\
  2 & \( \bot \) (unreachable) \\
\hline
  5 & \( \Off \) \\
\end{examplearithtbl}
Note that at the first iteration of the loop the filter is able to prove
that \QuotedCode{first} and \QuotedCode{last} are \emph{definitely} aliases.
Also in this case the analysis is able to prove that the execution exits
immediately from the loop without modifying the value of
\QuotedCode{first} and without any error.
\end{example}

\begin{example}
Consider the call \QuotedCode{foo(a + N, a)}.
During the execution we have
\[
  \eval( m^\sharp, \Code{last}) = \{ \Head \}.
\]
These are the steps of the execution
\begin{examplearithtbl}
PP & \( \eval( m^\sharp, \Code{first}) \) \\
\hline
  0 & \( \Off \) \\
\hline
  1 & \( \Off \) \quad (1st) \\
  5 & \( \bot \) \\
  2 & \( \Off \) \\
  3 & \( \Off \) (+ Dereference Warning) \\
\hline
  1 & \( \Ep, \bot \) \quad (2nd) Fixpoint \\
  \ldots \\
\hline
  5 & \( \bot \) \\
\end{examplearithtbl}
In this case the analysis is able to detect that in the first iteration of
the loop at program point 3 an off-by-one location is dereferenced.
Depending on the concrete execution model adopted, the analyzer may assume
that the concrete execution terminates or not.  In the last case the
analysis is able to prove that during the next iteration of the loop, the
pointer \QuotedCode{first} is incremented beyond the legal bounds of the
array.
\end{example}

\begin{example}
Consider the calls \QuotedCode{foo(a + 4, a + 6)}, \QuotedCode{foo(a + 5, a
+ 5)} and \QuotedCode{foo(a + 6, a + 4)}, which have the same abstraction.
Indeed the expressions \QuotedCode{a + 4}, \QuotedCode{a + 5},
\QuotedCode{a + 6} --and more generally the expressions `\( \Code{a} + i\)'
with \( i \in \{ 1, \cdots, \Code{N}-1 \} \)-- all evaluate in the abstract memory to
the tail location \( \Tail \) of the array \QuotedCode{a}.
These are the steps of the execution
\begin{examplearithtbl}
PP & \( \eval( m^\sharp, \Code{first}) \) \\
\hline
  0 & \( \Tail \) \\
\hline
  1 & \( \Tail \) \quad (1st) \\
  5 & \( \Tail \) \\
  2 & \( \Tail \) \\
  3 & \( \Tail \) \\
\hline
  1 & \( \Tail, \Off \quad \fixpoint \) \quad (2nd) Fixpoint \\
  5 & \( \Tail \) \\
  2 & \( \Tail, \Off \) \\
  3 & \( \Tail \) (+ Dereference warning)\\
\hline
  5 & \( \Tail \) \\
\end{examplearithtbl}
Note that \(\Code{a}.\Tail\) is not singular; thus, the filter at
the guard of the loop cannot remove the arc \( \langle \Code{first},
\Code{a}.\Tail \rangle \) from the else branch.
Then at program point 2 we still find \( \{ \Tail, \Off \} \).
The above table represents the case in which the execution model forbids to
dereference pointers to the off-by-one location of an array. In this case,
when the abstract execution reaches program point 3 in the last iteration of
the loop the analyzer filters away the off-by-one locations from the
possible targets of \QuotedCode{first} and raises a warning.  In this case
the analysis successfully detects the possibility of an error, indeed there
exist at least one concrete execution in which the off-by-one location is
dereferenced.
Otherwise, if the analyzer accepts as valid the dereference of the
off-by-one location at \refline 3 we would obtain
\begin{examplearithtbl}
  \ldots \\
  3 & \( \Tail, \Off \) \\
  1 & \( \Tail, \Off \quad \fixpoint \) (\(+ \Ep \)) \\
\end{examplearithtbl}
That is the analysis detects that the increment of \QuotedCode{first} at
\refline{12} can produce an error due to the exceeding of the array bounds.
\end{example}

Note that this model is symmetrical with respect to the direction of the
increasing indices --- the only difference is that the off-by-one
location cannot be dereferenced, while the head location \(\Head\) can.

\subsubsection{Derivation of the Rules}
\label{section:derivation of the tables}
This section provides the reader with
a justification of the presented rules for the handling of pointer arithmetic.
However, in this case the concepts are intuitive and the
additional burden required to introduce a rigorous model to
describe the rules does worth the effort. Therefore, we limit the
presentation to an informal justification of some of the cases with
the conviction that the remaining cases can be deduced similarly. 
Consider the case of an array whose elements are of scalar type \( t \in
\types \) which contains at least four elements, that is, \( S \geq 4 \).
Under these assumptions, the concrete allocation block generated by \( t \)
is
\[
  \allocation\bigl(t[S]\bigr) = [ l_0, l_1, \cdots, l_{S-1}, l_{S} ];
\]
where the last location of the sequence \( l_S \) represents the off-by-one
location of the array.
To this concrete allocation block corresponds the following abstract
allocation block
\[
  \absallocation\bigl(t[S]\bigr) = [ \Head, \Tail, \Off ].
\]
In this sense we can say that
\begin{gather*}
  \concretization(\Head)  = [ l_0 ], \\
  \concretization(\Tail)  = [ l_1, \cdots, l_{S-1} ], \\
  \concretization(\Off) = [ l_S ].
\end{gather*}
Let \( p \in \expressions \), \( A \in \abstractdomain \) and \( C \in
\concretization(A) \).
\begin{itemize}
\item
  Consider the case \( \eval(C, p) = \{ l_0 \} \). Since \( C \) is
  approximated by \( A \) and \( \concretization(\Head) = [ l_0 ] \) we
  have that \( \Head \in \eval(A, p) \).
  In the concrete model if we move below the location \(
  l_0 \) we cross the boundaries of the array triggering an undefined
  behaviour. In the
  abstract model we approximate this with \( \Em \) to mean the array
  underflow.  If we move above the location \( l_0 \) of \( n \in \naturals
  \) positions, with \( n \leq S \), we reach the concrete location \( l_n
  \).  In the abstract model, staring from the head abstract location
  \(\Head\) and adding \( n \), with \( n \in [1, S-1] \), we reach the tail
  location \( \Tail \); otherwise, for \( n = S \) the off-by-one location
  \( \Off \) is reached.  If we move above the location \( l_0 \) of \( n
  \in \naturals \) positions, with \( n > S \), we trespass the boundaries
  of the array producing an error, that we abstract with \( \Ep \).
  Summing up we have, for the concrete model
  \begin{center}
  \begin{tabular}{ll}
  \hline \hline
   Offset   & \( l_0 \) + Offset \\
  \hline
  \( (-\infty, 0) \)  & Error: array underflow. \\
  \( 0 \) & \( l_0 \) \\
  \( 1 \) & \( l_1 \) \\
  \( \cdots \) \\
  \( S-1 \) & \( l_{S-1} \) \\
  \( S \) & \( l_S \) \\
  \( [S+1, +\infty) \) & Error: array overflow.  \\
  \hline \hline
  \end{tabular}
  \end{center}
  and its abstraction is
  \begin{center}
  \begin{tabular}{ll}
  \hline \hline
   Offset     & \( \Head \) + Offset \\
  \hline
    \( (-\infty, 0) \)  & \( \Em \)   \\
    \( 0 \)           & \( \Head \) \\
    \( [1, S) \)      & \( \Tail \) \\
    \( S \)           & \( \Off \)  \\
    \( [S+1, +\infty) \) & \( \Ep \)  \\
  \hline \hline
  \end{tabular}
  \end{center}
\item
  In case we start from the off-by-one location \( l_S \), that is \(
  \eval(C, p) = \{ l_S \} \), in the abstract model we have \( \Off \in
  \eval(A, p) \). This case is quite symmetrical to the case of starting on
  the head location.
  \begin{center}
  \begin{tabular}{ll}
  \hline \hline
   Offset   & \( l_S \) + Offset \\
  \hline
  \( (-\infty, -S) \)  & Error: array underflow. \\
  \( -S \) & \( l_0 \) \\
  \( 1-S \) & \( l_1 \) \\
  \( \cdots \) \\
  \( -1 \) & \( l_{S-1} \) \\
  \( 0 \) & \( l_S \) \\
  \( [1, +\infty) \) & Error: array overflow.  \\
  \hline \hline
  \end{tabular}
  \end{center}
  and its abstraction is
  \begin{center}
  \begin{tabular}{ll}
  \hline \hline
   Offset     & \( \Off \) + Offset \\
  \hline
    \( (-\infty, -S) \)  & \( \Em \)   \\
    \( -S \)           & \( \Head \) \\
    \( [1-S, 0) \)      & \( \Tail \) \\
    \( 0 \)    & \( \Off \)  \\
    \( [1, +\infty) \) & \( \Ep \)  \\
  \hline \hline
  \end{tabular}
  \end{center}
\item
  We now consider all the cases \( \eval(C, p) = \{ l_n \} \) with \( n \in
  [1, S-1] \) as these cases have the same abstraction. All the concrete
  locations \( l_1, \cdots, l_{S-1} \) are indeed abstracted by the same
  abstract location \( \Tail \).
  The difference with respect to the two previous cases is that when we
  perform pointer arithmetic on the tail of an array we do not know on
  which concrete location we are working: there is indeed a \emph{set} of
  possible locations. This means for instance that if we move from the \(
  l_1 \) by an offset of 1 we reach \( l_2 \), which is still in the tail;
  but starting from \( l_{S-1} \) we obtain \( l_S \), which is in the
  off-by-one location \( \Off \).
  From this reasoning it can be easily derived the result presented in the
  following tables.
  \begin{center}
  \begin{tabular}{llll}
  \hline \hline
  Offset   & \( l_1 \) + Offset & \ldots & \( l_{S-1} \) + Offset \\
  \hline
  $ (-\infty, -S] $ & underflow   & underflow & underflow \\
  $ 1-S         $ & underflow   & \ldots   & $ l_0 $ \\
  $ 2-S         $ & underflow   & \ldots   & $ l_1 $ \\
  \ldots \\
  $ -2          $ & underflow   & \ldots   & $ l_{S-3} $ \\
  $ -1          $ & $ l_0 $     & \ldots   & $ l_{S-2} $ \\
  $ 0           $ & $ l_1 $     & \ldots   & $ l_{S-1} $ \\
  $ 1           $ & $ l_2 $     & \ldots   & $ l_S $ \\
  $ 2           $ & $ l_3 $     & \ldots   & overflow \\
  \ldots \\
  $ S-2         $ & $ l_{S-1} $ & \ldots   & overflow \\
  $ S-1         $ & $ l_S $     & overflow & overflow \\
  $ [S, \infty)   $ & overflow    & overflow & overflow \\
  \hline \hline
  \end{tabular}
  \end{center}
  and its abstraction is
  \begin{center}
  \begin{tabular}{ll}
  \hline \hline
   Offset     & \( \Off \) + Offset \\
  \hline
  $ (-\infty, -S]   $ & $ \Em               $ \\
  $ 1-S           $ & $ \Em, \Head        $ \\
  $ [2-S, -2]     $ & $ \Em, \Head, \Tail $ \\
  $ -1            $ & $ \Head, \Tail      $ \\
  $ 0             $ & $ \Tail             $ \\
  $ 1             $ & $ \Tail, \Off       $ \\
  $ [2, S-2]      $ & $ \Tail, \Off, \Ep  $ \\
  $ S-1           $ & $ \Off, \Ep         $ \\
  $ [S, \infty)     $ & $ \Ep               $ \\
  \hline \hline
  \end{tabular}
  \end{center}
\end{itemize}
Composing these three cases we obtain the complete table for the case \( S
\geq 4 \) for the abstract pointer arithmetic rules.

\subsection{Relational Operators}
\label{section:relational operators}
Just not cited above for simplicity of notation, we describe here one of
the possible extensions to the filter operation that in some sense is bound
to the handling of pointer arithmetic.  In particular now we want to
consider the use of relational operators ---the \QuotedCode{<=},
\QuotedCode{<} and their symmetric--- and their interaction with the
points-to problem.  We report here the statement of the C standard about
the use of relational operators between pointers
\cite[6.5.8.5]{ISO-C-1999}:
\begin{quote}
If the objects pointed to are members of the same aggregate object, pointers to
structure members declared later compare greater than pointers to members
declared earlier in the structure, and pointers to array elements with larger
subscript values compare greater than pointers to elements of the same array
with lower subscript values. [\ldots] If the expression P points to an element
of an array object and the expression Q points to the last element of the same
array object, the pointer expression Q+1 compares greater than P. In all other
cases, the behavior is undefined.
\end{quote}
Recalling the simplified model introduced in \refsection{our method}, we
need to extend the set of the possible operators \( \{ \equality,
\inequality \} \) to comprehend the additional operators of interest. Once
augmented the set \( \conditions \) with the new conditions we have to
define a proper concrete semantics for the new elements.  Formally, this
requires the definition of a partial order on the set of location addresses.
This partial order should satisfy the requirements of the C Standard
reported above. Using the terminology of the extended memory model
presented in \refsection{extended memory model} we can say that this order
is required to be defined only between the locations that belong to the
same allocation block.  In this model we have defined the concept of
allocation block as a sequence of locations and the order of the locations
within the allocation in such a way to reflect the actual memory layout.
Under these assumptions
it is possible to define the required partial order as the order
specified by the allocations; this way we are able to correctly
describe the semantics of the C Standard not only for pointers to arrays
but also for pointers to structure members.

Now, using the notation introduced in \refsection{our method},
assume to have already defined the needed \emph{strict partial order},
denoted as `\(\mathord{<}\)', on the set of locations
\(
  \mathord{<} \subseteq \locations \times \locations.
\)
Consider the following extension of the concrete execution model.  From
\refdefinition{Conditions} we extend the set of conditions
`\(\conditions\)' by adding to the set of the possible operators the
element `\(\less\)', as to represent the `less-than' operator of the C
language.
\[
  \conditions \defeq
    \{ \equality, \inequality, \less \}
    \times \expressions \times \expressions.
\]
We also need to extend \refdefinition{Value of conditions}, to comprehend
the newly added elements of `\(\conditions\)'. Let
\(
  \trueconditions \subseteq \concretedomain \times \conditions
\)
be extended, for all \( C \in \concretedomain \) and \( e, f \in
\expressions \), as
\[
    \bigl(C, (\less, e, f)\bigr) \in \trueconditions
    \quad \defiff \quad
    \eval(C, e) < \eval(C, f).
\]
Now we present
a possible extension of the abstract filter operation
(\refdefinition{Filter 3}) for handling the relational operator
`\(\less\)'.\footnote{For simplicity of exposition we treat explicitly only
the operator `\(\less\)' and we omit other relational
operators whose formalization can be deduced from the formalization of
`\(\less\)' by symmetry and by composition with the equality operator.}

\begin{definition}
\definitionsummary{Filter on the less-than operator}
Let
\[
  \functiondef
    { \filter }
    { \abstractdomain \times \conditions }
    { \abstractdomain }
\]
be defined as follows.
Let \( A \in \abstractdomain \) and \( e, f \in \expressions \). Let
\begin{gather*}
  E  \defeq
  \bigl\{\,
    l \in \eval(A, e)
  \bigm|
    \exists m \in \eval(A, f) \suchthat l < m
  \,\bigr\};
  \\
  F  \defeq
  \bigl\{\,
    m \in \eval(A, f)
  \bigm|
    \exists l \in \eval(A, e) \suchthat l < m
  \,\bigr\};
\end{gather*}
then
\[
    \filter\bigl(A, (\less, e, f)\bigr) \defeq
      \filter( A, E, e )
      \intersection
      \filter( A, F, f ).
\]
\end{definition}

But note that we have to consider separately the possible exceptional
outcomes due to the comparison between incompatible locations --- as
reported above, the C Standard states that the order \QuotedCode{<} is
defined only between addresses of the same object, or using our
nomenclature, between locations of the same allocation block;
in all other cases the behaviour is undefined.
Listings~\ref{codesnippet:pointer arithmetics and filter} and
\ref{codesnippet:pointer arithmetics and filter 2} are two examples of the
application of the filter on the relational operator `less-than'.

\subsubsection{Justification of the Definition}
Now we want to provide an
intuitive description of the motivations behind the presented definition
of the filter for the `less than' operator. Let again
\( (\less, e, f) \in \conditions \),
\( A \in \abstractdomain \) and let
\[
  C \in \concretization(A) \intersection
    \modelset\bigl((\less, e, f)\bigr)
    = \filter\bigl(\concretization(A), (\less, e, f)\bigr).
\]
We know, from our definition of the concrete semantics of the operator
`\(\less\)' that
\[
  C \in \modelset\bigl( (\less, e, f) \bigr) \implies
    \eval(C, e) < \eval(C, f).
\]
Basically,
since \( A \) is an abstraction of \( C \) then we have that the value of
\(f\) in \( C \) is approximated by the value of \( f \) in \( A \). The
same holds for the expression \( e \).  This means that the sets \( E \)
and \( F \) contain the value of \( e \) and \( f \) in \( C \),
respectively, then \( C \) is also approximated by \(
\filter\bigl(A, (\less, e, f)\bigr) \).
For completeness we also report a formal proof of the correctness of the
above definition.
First we prove
an analogue of \reflemma{Equality target} for the `less than' operator; then we extend the proof of
\reftheorem{Correctness of the filter} to the `\(\less\)' operator.

\begin{lemma}
\lemmasummary{Less-than target}
Let \( A \in \abstractdomain \) and \( e, f \in \expressions \); let
\begin{gather*}
  E \defeq
  \bigl\{\,
    l \in \eval(A, e)
  \bigm|
    \exists m \in \eval(A, f) \suchthat l < m
  \,\bigr\};
  \\
  F \defeq
  \bigl\{\,
    m \in \eval(A, f)
  \bigm|
    \exists l \in \eval(A, e) \suchthat l < m
  \,\bigr\};
\end{gather*}
then, for all \( C \in \filter\bigl( \concretization(A), c \bigr) \), holds
that
\[
  \eval(C, e) \subseteq E \land \eval(C, f) \subseteq F.
\]
\end{lemma}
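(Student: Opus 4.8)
The plan is to mirror, almost verbatim, the proof of \reflemma{Equality target}, replacing the equality clause by the order clause. First I would unfold the concrete filter operation (\refdefinition{Concrete filter semantics}): since
\[
  \filter\bigl( \concretization(A), c \bigr)
    = \concretization(A) \intersection \modelset(c),
\]
every $C \in \filter\bigl( \concretization(A), c \bigr)$ satisfies $C \in \concretization(A)$ and $C \in \modelset(c)$, i.e.\ $C \models c$. By the extended \refdefinition{Value of conditions} for the `\less' operator, $C \models (\less, e, f)$ means $\eval(C, e) < \eval(C, f)$.

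Next I would invoke \reflemma{Eval cardinality on the concrete domain} to write $\eval(C, e) = \{ l \}$ and $\eval(C, f) = \{ m \}$ for suitable $l, m \in \locations$; this singleton representation is exactly what makes the statement $\eval(C, e) < \eval(C, f)$ meaningful (it is read through the singleton-valued convention of the notation section), and it delivers $l < m$. Then, since $C \in \concretization(A)$ gives $C \subseteq A$ by \refdefinition{concretization function}, \reflemma{Monotonicity of the eval function} yields $\{ l \} = \eval(C, e) \subseteq \eval(A, e)$ and $\{ m \} = \eval(C, f) \subseteq \eval(A, f)$, hence $l \in \eval(A, e)$ and $m \in \eval(A, f)$. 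Finally I would check membership: from $l \in \eval(A, e)$, $m \in \eval(A, f)$ and $l < m$ it follows that $l \in E$ (with $m$ witnessing the existential in the definition of $E$) and $m \in F$ (with $l$ witnessing it), so $\eval(C, e) = \{ l \} \subseteq E$ and $\eval(C, f) = \{ m \} \subseteq F$, which is the thesis.

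I do not expect a genuine obstacle here; the lemma is the `\less'-analogue of \reflemma{Equality target} and the only delicate point is making sure the relation $\eval(C, e) < \eval(C, f)$ is first rewritten through the singleton convention before one can name the two witnesses $l$ and $m$. The two conclusions about $E$ and $F$ come out together, since the single pair $(l, m)$ witnesses both existentials simultaneously. With this lemma in hand, extending the proof of \reftheorem{Correctness of the filter} to the `\less' operator will go exactly along case (C1) of that proof, applying \reflemma{Correctness of filter 2} to the two sub-filters $\filter(A, E, e)$ and $\filter(A, F, f)$ and then intersecting, using \refdefinition{Filter on the less-than operator}.
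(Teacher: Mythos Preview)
Your proposal is correct and follows essentially the same approach as the paper's own proof: unpack the concrete filter to obtain $C \in \concretization(A)$ and $C \models c$, name the singleton evaluations $\eval(C,e)=\{l_0\}$ and $\eval(C,f)=\{m_0\}$ so that $l_0<m_0$, then use $C\subseteq A$ and monotonicity of $\eval$ to place $l_0\in\eval(A,e)$ and $m_0\in\eval(A,f)$, and finally read off the existentials defining $E$ and $F$. Your remark that the extension of \reftheorem{Correctness of the filter} then proceeds exactly as in the equality case is also what the paper does.
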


\begin{proof}
Let
\( c = (\less, e, f ) \in \conditions \),
\( A \in \abstractdomain \) and let
\( C \in \concretization(A) \intersection \modelset(c) \).
Let \( E \) and \( F \) be defined as in the statement of this lemma.
Recall that from the concrete semantics of the operator `\(\less\)'
described above we have that \( C \in \modelset(c) \) implies that
\[
  \eval(C, e) < \eval(C, f).
\]
Then we have
\begin{prooftable}
TS  & \( \eval(C, e) \subseteq E \land \eval(C, f) \subseteq F \) \\
\hline
H0  & \( C \in \concretization(A) \) \\
H1  & \( \eval(C, e) = \{ l_0 \} \) \\
H2  & \( \eval(C, f) = \{ m_0 \} \) \\
H3  & \( l_0 < m_0 \) \\
H4  & \reflemma{Monotonicity of the eval function}, monotonicity of eval.\\
H5  & \refdefinition{concretization function}, the concretization function. \\
\hline
D0  & \( C \subseteq A \) & (H0, H5) \\
D1  & \( \eval(C, e) \subseteq \eval(A, e) \) & (D0, H4) \\
D2  & \( \eval(C, f) \subseteq \eval(A, f) \) & (D0, H4) \\
D3  & \( \exists l \in \eval(A, e) \suchthat l < m_0 \) & (H3, H1, D1) \\
D4  & \( \exists m \in \eval(A, f) \suchthat l_0 < m \) & (H3, H2, D2) \\
D5  & \( \eval(C, e) \subseteq E \) & (D4, H1) \\
D6  & \( \eval(C, f) \subseteq F \) & (D3, H2) \\
\proofleaf
    & \( \eval(C, e) \subseteq E \land \eval(C, f) \subseteq F \)
    & (D5, D6) \\
\end{prooftable}
\end{proof}

\begin{proof}
\summary{Correctness of the filter on the less-than}
Let \( A \in \abstractdomain \), let \( c = (\less, e, f) \in \conditions \) and let
\( C \in \concretedomain \).
Let \( E \) and \( F \) be defined as in \refdefinition{Filter on the
less-than operator}.
\begin{prooftable}
TS  & \( C \in \concretization\bigl( \filter(A, c) \bigr) \) \\
\hline
H0  & \( C \models c \) \\
H1  & \( C \in \concretization(A) \) \\
H2  & \reflemma{Correctness of filter 2}, correctness of filter 2. \\
H3  & \refdefinition{Filter on the less-than operator},
      filter on the less-than. \\
H4  & \refdefinition{concretization function}, concretization function. \\
H5  & \reflemma{Less-than target}, the less-than target. \\
\hline
D0  & \( \eval(C, e) \subseteq E \) & (H5, H1, H0) \\
D1  & \( \eval(C, f) \subseteq F \) & (H5, H1, H0) \\
D2  & \( C \in \concretization\bigl( \filter(A, E, e) \bigr) \)
    & (D0, H1, H2) \\
D3  & \( C \in \concretization\bigl( \filter(A, F, f) \bigr) \)
    & (D1, H1, H2) \\
D4  & \( C \subseteq \filter(A, E, e) \)  & (D2, H4) \\
D5  & \( C \subseteq \filter(A, F, f) \)  & (D3, H4) \\
D6  & \( \filter(A, c) = \filter(A, E, e) \intersection \filter(A, F, f) \)
    & (H3) \\
D7  & \( C \subseteq \filter(A, E, e) \intersection \filter(A, F, f) \)
    & (D5, D4) \\
D8  & \( C \subseteq \filter(A, c) \)
    & (D7, D6) \\
\proofleaf
    & \( C \in \concretization\bigl( \filter(A, c) \bigr) \)
    & (D8, H4) \\
\end{prooftable}
\end{proof}

Note that the structure of the proof for the correctness of the filter on
the less-that operator is very similar to the structure of the proof for
the equality case: actually the only difference is the
definition of the target sets \( E \) and \( F \).

\lstset{morecomment=[l]{//}}
\codecaption
{an example of analysis of a program involving pointer arithmetic and
filtering on relational operators.}
{pointer arithmetics and filter}
\begin{codesnippet}
int a[10], b[20], c[30], d[40], *p, *q;

if (...) {
  if (...) {
    if(...) { p = a; q = a + 5; }
    else    { p = c; q = d + 20; }
  } else    { p = b; q = b + 10; }

  // \( \eval(\indirection p) = \{ {a}.\Head, {b}.\Head, {c}.\Head \} \)
  // \( \eval(\indirection q) = \{ {a}.\Tail, {b}.\Tail, {d}.\Tail \} \)

  if (p < q) {
    // \( \eval(\indirection p) = \{ {a}.\Head, {b}.\Head \} \)
    // \( \eval(\indirection q) = \{ {a}.\Tail, {b}.\Tail \} \)
  } else { /* Unreachable */ }
else {
  if (...) {
    if (...) { p = a + 5; q = a + 10; }
    else     { p = c + 20; q = d; }
  } else     { p = b + 10; q = b + 20; }

  // \( \eval(\indirection p) = \{ {a}.\Tail, {b}.\Tail, {c}.\Tail \} \)
  // \( \eval(\indirection q) = \{ {a}.\Off, {b}.\Off, {c}.\Head \} \)

  if (p < q) {
    // \( \eval(\indirection p) = \{ {a}.\Tail, {b}.\Tail \} \)
    // \( \eval(\indirection q) = \{ {a}.\Off, {b}.\Off   \} \)
  } else { /* Unreachable */ }
}

// \( \eval(\indirection p) = \{ {a}.\Head, {a}.\Tail, {b}.\Head, {b}.\Tail \} \)
// \( \eval(\indirection q) = \{ {a}.\Tail, {a}.\Off, {b}.\Tail, {b}.\Off \} \)

if (p < q) { /* The same */}
else       { /* The same */}
\end{codesnippet}

\codecaption
{an example of analysis of a program involving pointer arithmetic and
filtering on relational operators.}
{pointer arithmetics and filter 2}
\begin{codesnippet}
struct T { int a, b, c; } t0, t1, t2, t3;
int *p, *q;

if (...) {
  if (...) {
    if(...) { p = &t0.a; q = &t0.b; }
    else    { p = &t2.a; q = &t3.b; }
  } else    { p = &t1.b; q = &t1.b; }

  // \( \eval(\indirection p) = \{ {t0.a}, {t1.b}, {t2.a} \} \)
  // \( \eval(\indirection q) = \{ {t0.b}, {t1.b}, {t3.b} \} \)

  if (p < q) {
    // \( \eval(\indirection p) = \{ {t0.a} \} \)
    // \( \eval(\indirection q) = \{ {t0.b} \} \)
  } else {
    // \( \eval(\indirection p) = \eval(\indirection q) =\{ {t1.b} \} \)
  }
} else {
  if (...) {
    if (...) { p = &t0.b; q = &t0.c; }
    else     { p = &t2.b; q = &t3.c; }
  } else     { p = &t1.a; q = &t1.c; }

  // \( \eval(\indirection p) = \{ {t0.b}, {t1.a}, {t2.b} \} \)
  // \( \eval(\indirection q) = \{ {t0.c}, {t1.b}, {t3.c} \} \)

  if (p < q) {
    // \( \eval(\indirection p) = \{ {t0.b}, {t1.a} \} \)
    // \( \eval(\indirection q) = \{ {t0.c}, {t1.b} \} \)
  } else { /* Unreachable */ }
}

// \( \eval(\indirection p) = \{ {t0.a}, {t0.b}, {t1.a}, {t1.b} \} \)
// \( \eval(\indirection q) = \{ {t0.b}, {t0.c}, {t1.b} \} \)

if (p < q) {
  // \( \eval(\indirection p) = \{ {t0.a}, {t0.b}, {t1.a} \} \)
  // \( \eval(\indirection q) = \{ {t0.b}, {t0.c}, {t1.b} \} \)
} else {
  // \( \eval(\indirection p) = \eval(\indirection q) = \{ t0.b, t1.b \} \)
}
\end{codesnippet}

\subsection{Special Locations}
One of the simplifications introduced in the model of
\refsection{our method}
is that all locations are treated in the same way.  In
particular, in the definition of the abstract evaluation function
(\refdefinition{eval function}) and of the assignment operation
(\refdefinition{Assignment evaluation}) there are no limitations on the
locations that can be dereferenced or modified. However, a realistic memory
model should provide a way to limit, on some locations, the possible
operations. For instance, consider a \emph{null pointer}. The C Standard
specifies that dereferencing a null pointer produces an undefined
behaviour.  From
\cite[6.5.3.2.4]{ISO-C-1999}
\begin{quote}
The unary * operator denotes indirection.  [\ldots] If an invalid value has
been assigned to the pointer, the behavior of the unary * operator is
undefined. [\ldots] Among the invalid values for dereferencing a pointer by
the unary * operator are a null pointer, an address inappropriately aligned
for the type of object pointed to, and the address of an object after the
end of its lifetime.  \end{quote} In other languages, like Java,
dereferencing a null reference throws an exception. Besides of the different
responses that each language exposes, it is quite common that a
language has its own set of configurations that are considered
\emph{exceptional} and treated in an ad-hoc way.
Consider for instance the case of \emph{uninitialized} variables; it
would be possible to formalize a concrete semantics where uninitialized
variables, or pointers pointing to a deallocated memory area,
cannot be \emph{evaluated} and then not \emph{copied}.
Though this kind of conformance is
uncommon in ``every-day'' programs, there exist application areas that require
these restrictions \cite[Rule 9.1]{MISRA-C-2004} \cite{JSF-2005}.
Note that the
general idea is to capture some classes of exceptional behaviours; though
the specific definition of what is exceptional can vary, also inside the
same language.  This section presents a possible extension of the
model presented in \refsection{our method}
that can be used to represent
the described concrete semantics.  We introduce two sets of locations.
\begin{itemize}
\item
  Let \( \noneval \subseteq \locations \) be the set of
  \emph{non-evaluable
  locations}. Informally, we say that
  trying to evaluate a non-evaluable location results in an error.
\item
  Let \( \nonderef \subseteq \locations \) be the set of
  \emph{non-dereference-able locations}. Informally, trying to apply the
  dereference operator to a location of this set will result in an error.
\end{itemize}
To represent the possible run-time errors we use the set
\[
  \rtserrors \defeq \{ \dereferror, \evalerror \}.
\]
The concrete behaviour can be
described by defining an extended version of the evaluation function
(\refdefinition{eval function}).  Let\footnote{Here we assume that the two
sets \( \locations \) and \( \rtserrors \) have disjoint representations.}
\[
  \functiondef
    { \evalp }
    { \concretedomain \times \expressions }
    { \locations \union \rtserrors }
\]
be the total function defined,
for all \( C \in \concretedomain \), \( l \in \locations \) and \( e \in
\expressions \), as
\begin{gather*}
    \evalp(C, l) \defeq
    \begin{cases}
      \evalerror, & \text{if } l \in \noneval; \\
      l,          & \text{otherwise;}
    \end{cases}
    \\
    T \defeq \evalp(C, e);
    \\
    \evalp(C, \indirection e) \defeq
    \begin{cases}
      \evalp(C, e),
        & \text{if } T \in \rtserrors; \\
      \dereferror,
        & \text{if } T \in \nonderef; \\
      \evalerror,
        & \text{if } \post(C, T) \in \noneval; \\
      \post(C, T),
        & \text{otherwise.}
    \end{cases}
\end{gather*}
Note that we have formalized the new evaluation function by tagging the
exceptional paths with the elements of the set `\( \rtserrors \)'.  An
implementation of the execution model here proposed will handle
these exceptional cases by signalling an error and terminating the
execution, by raising an exception and modifying the execution mode or
whatever else is considered appropriate.  This operation can be generalized
to sets as follows.  For every element in the result of the concrete
evaluation, we want to track the corresponding concrete memory description.
Also, we want to explicitly separate the exceptional and the normal
component.  Let
\[
  \functiondef
    { \evalp }
    { \partsof(\concretedomain) \times \expressions }
    { \partsof\bigl( \locations \times \concretedomain \bigr) \times
      \partsof\bigl( \rtserrors \times \concretedomain \bigr) }
\]
be a total function defined, for all \( C \in \concretedomain \) and \( e
\in \expressions \), as
\begin{align*}
  \evalp(C, e) \defeq
  \Bigl<
    & \bigl\{\, (l, D) \bigm| D \in C, \eval(D, e) = l \in \locations \,\bigr\}, \\
    & \bigl\{\, (x, D) \bigm| D \in C, \eval(D, e) = x \in \rtserrors \,\bigr\}
  \Bigr>.
\end{align*}
The abstract counterpart of the operation can thus be defined as
\[
  \functiondef
    { \absevalp }
    { \abstractdomain \times \expressions }
    { \partsof(\locations) \times \abstractdomain
      \times \partsof(\rtserrors) \times \abstractdomain }.
\]
Note that we are simplifying a little --- indeed we assume to approximate
elements of \( \partsof( \locations \times \concretedomain) \) with
elements of the product \( \partsof(\locations) \times \abstractdomain \)
and \( \partsof( \rtserrors \times \concretedomain) \) with elements of \(
\partsof(\rtserrors) \times \abstractdomain \); this is not completely
general, however is sufficient for our goals.  Given \( A \in
\abstractdomain \) and \( e \in \expressions \) we write
\[
  \absevalp(A, e) = \langle L, B, E, C \rangle;
\]
where \( L \subseteq \locations \), \( B, C \in \abstractdomain \) and
\( E \subseteq \rtserrors \) to mean that the abstract evaluation of the
expression \( e \) results in the set of abstract locations \( L \)
and the set of errors \( E \); \( B \) is an approximation of the
abstract memory that generates \( L \) and \( C \) is an approximation of the
abstract memory that generates \( E \).
The requirements for the soundness of the of the abstract operation are the
following. Let, for all \( A \in \abstractdomain \) and \( e \in
\expressions \),
\begin{gather*}
  \absevalp(A, e) = \langle L, B, E, C \rangle; \\
  \evalp\bigl( \concretization(A), e \bigr) = \langle R_0, R_1 \rangle;
\end{gather*}
then, to be sound, the abstract operation must satisfy the following requirements
\begin{gather*}
  \bigl\{\, l \bigm | (l, b) \in R_0 \,\bigr\} \subseteq L; \\
  \bigl\{\, b \bigm | (l, b) \in R_0 \,\bigr\} \subseteq \concretization(B); \\
  \bigl\{\, e \bigm | (e, c) \in R_1 \,\bigr\} \subseteq E; \\
  \bigl\{\, c \bigm | (e, c) \in R_1 \,\bigr\} \subseteq \concretization(C).
\end{gather*}
Let \( A \in \abstractdomain \) and \( l \in \locations \);
then, for the base case, let
\[
  \absevalp(A, l) \defeq
  \begin{cases}
    \bigl< \emptyset, \bot, \{ \evalerror \}, A \bigr>,
      & \text{if } l \in \noneval; \\
    \bigl< \{ l \}, A, \emptyset, \bot \bigr>,
      & \text{otherwise.} \\
  \end{cases}
\]
For the inductive case, let \( e \in \expressions \) and
\begin{gather*}
  \absevalp(A, e) = \langle L_0, A_0, E_0, B_0 \rangle, \\
  L_{ 0, X } = L_0 \intersection \nonderef; \\
  L_{ 0, N } = L_0 \setminus \nonderef; \\
  A_{ 0, N } = \filter(A_0, e, L_{ 0, N } ); \\
  A_{ 0, X } = \filter(A_0, e, L_{ 0, X } ); \\
  L_1        = \post(A_{ 0, N }, L_{ 0, N } ); \\
  L_{ 1, X } = L_1 \intersection \noneval; \\
  L_{ 1, N } = L_1 \setminus \noneval; \\
  A_{ 1, X } = \filter(A_{ 0, N }, \indirection e, L_{ 1, X } ); \\
  A_{ 1, N } = \filter(A_{ 0, N }, \indirection e, L_{ 1, N } ); \\
\end{gather*}
and
\begin{align*}
  E & \defeq E_0 \\
    & \union
    \begin{cases}
    \{ \dereferror \}, & \text{if } L_{ 0, X } \neq \emptyset; \\
    \emptyset,         & \text{otherwise;}
    \end{cases}
    \\
    &
    \union
    \begin{cases}
    \{ \evalerror \}, & \text{if } L_{ 1, X } \neq \emptyset; \\
    \emptyset,         & \text{otherwise.}
    \end{cases}
\end{align*}
Finally,
\[
  \absevalp(A, \indirection e) \defeq
    \langle
      L_{ 1, N },
      A_{ 1, N },
      E,
      B_0 \sqcup A_{ 0, X } \sqcup A_{ 1, X }
    \rangle.
\]
In words, to evaluate \( \indirection e \) we
\begin{enumerate}
\item evaluate \( e \),
\item filter away the non-dereference-able locations,
\item perform the actual dereference,
\item filter away the non-evaluable locations.
\end{enumerate}
We can have an error if the evaluation of \( e \) produces an error (\( B_0
\)), or if we obtain a non-dereference-able location (\( A_{ 0, X }\)) or
if in the last step we obtain a non-evaluable location (\(A_{ X, 1 }\)).
We have a location if all this steps are error free (\( A_{ 1, N }
\sqsubseteq A_{ 0, N } \sqsubseteq A_0 \sqsubseteq A \)).
Note that in the computation of \( A_{ 0, N } = \filter(A_0, e, L_{0,N}) \)
the filter cannot always remove all the non-dereference-able locations from
the result of \( \eval(A_0, e) \).
Note however that we compute the result of the dereference operator, \( L_1
= \post(A_{ 0, N, }, L_{ 0, N } ) \), on the set \( L_{0,N} \) that by
definition does not contain non-dereference-able locations.

Also the formulation of the assignment operator can have its own class of
special locations. For instance it is possible to define a set of
non-modifiable (read-only) locations. Finding a read-only location in the result of
the evaluation of the rhs, the analysis reacts by removing that
location and signalling an error.
For example, in our analyzer we have introduced two special locations.
\begin{itemize}
\item
  The \emph{null location} that represents the concrete \QuotedCode{NULL}
  address described by the C Standard. This location can be
  \emph{evaluated} but it cannot be
  \emph{dereferenced} nor \emph{modified}, i.e,
  \( \text{null} \in \nonderef \), \( \text{null} \not \in \noneval \)
  and it is read-only.\footnote{Limiting our
  view to the points-to analysis, non-dereference-able locations may be
  seen as locations that cannot be read. On the other side, the read-only
  locations proposed for the assignment operation cannot be written.
  In this sense the value of the null location can not be read or written:
  the null location can only be used as target for pointers.}
\item
  The \emph{undefined location} to be used as a target for all
  undefined pointers and for pointers pointing to deallocated memory.
  We have modelled the undefined location as a non-evaluable location.
\end{itemize}
\begin{example}
\codecaption
{an example of dereferentiation of a null pointer.}
{null pointer dereference}
\begin{codesnippet}
int *q, a;

int** f() {
  if (...)  return &q;
  else      return 0;
}

...
q = &a;
// \( \eval(*q) = \{ a \} \)
int **p = f();
// \( \eval(*p) = \{ q, null \} \)
int **p2 = p;
// Null can be evaluated, thus copied.
// \( \eval(*p2) = \{ q, null \} \)
if (...) {
  ... = *p;
  // Null cannot be dereferenced.
  // \( \eval(**p) = \{ a \} \)
} else {
  *p = ...;
  // Null cannot be written.
  // \( \eval(*p) = \{ q \} \)
}
\end{codesnippet}
Consider the code in \refcodesnippet{null pointer dereference}.
From the analysis point of view, the function \QuotedCode{f}
possibly returns null pointers, i.e.,
at \refline 12 we have \( \eval(\Code{*p}) = \{ \Code q,
\Code{null} \} \). The evaluation of the expression \( \indirection p \) at
\refline{21} and the evaluation of the expression \(\indirection
\indirection p \) at \refline{17} produces the following sequence of
steps\footnote{Recall that the syntax of the simplified language formalized in
\refsection{our method} is slightly different from the syntax of
the C language. Indeed we do not distinguish between \emph{expressions} and
\emph{lvalues}, then for example, the C-expression \QuotedCode{p} occurring
as the rhs of an assignment corresponds to \( \indirection p \) in our
language, the C-expression \QuotedCode{*p} as the rhs of
an assignment corresponds to \( \indirection \indirection p \),
while \QuotedCode{*p} as the lhs of an assignment remains the same.}
\begin{pointerarithmetictable}
i
& \( \eval(\indirection p, i) \)
& \( \eval(\indirection \indirection p, i)\) \\
\hline
2 & \( \emptyset \)     & \( \{ p \} \)       \\
1 & \( \{ p \} \)       & \( \{ \text{null}, q \} \) \\
0 & \( \{ \text{null}, q \} \) & \( \{ a, \dereferror \} \) \\
\end{pointerarithmetictable}
In this case, at \refline{17},
the analysis warns about the possibility of a dereference of a
null pointer and it continues the abstract execution assuming that
\QuotedCode{p} is not null.  Instead, at \refline{21}, the evaluation of
the expression \QuotedCode{*p} as the lhs of an assignment does not
raise any error and returns the set \( \{ \Code{null}, \Code{q} \} \).
However at this point the assignment operation detects that the program is
trying to modify the \Code{null} location and it triggers an error since we
have modeled the null location as read-only.
See \reffigure{null pointer dereference} for a graphical representation of
this example.
\begin{figure}
\begin{center}
\input{figures/common_style}
\begin{tikzpicture}
  \def\drawThePicture[#1]{
    \abstractLocation p (0, 0)
    \abstractLocation q (1, 0)
    \abstractLocation a (2, 0)
    \node[abstract location style] (n) at (1, 1) {\Code{null}};
    \figureBackground
        (p.south -| p.west) (n.north -| a.east)
    \node [description label] at (2, 0.5) {#1};
    \path \edgePointsTo(q,a);
  }
  \begin{scope}
    \drawThePicture[
      {
        The points-to abstraction before the evaluation of
        the expression \QuotedCode{*p} at \refline{17};
      }]
    \path \edgePointsTo(p,q)
          \edgePointsTo(p,n);
  \end{scope}
  \begin{scope}[yshift=-3cm]
    \drawThePicture[
      {
        The exceptional component resulting from filtering the
        points-to abstraction shown above.
      }]
    \path \edgePointsTo(p,n);
  \end{scope}
  \begin{scope}[yshift=-6cm]
    \drawThePicture[
      {
        The normal component resulting from filtering the
        points-to abstraction shown above. The normal execution will continue on
        this refined information.
      }]
    \path \edgePointsTo(p,q);
  \end{scope}
\end{tikzpicture}
\caption
  {an example of analysis involving the special location \Code{null}.}
\label{figure:null pointer dereference}
\end{center}
\end{figure}
\end{example}

\begin{example}
\codecaption
{an example of the evaluation of an undefined pointer due to an
uninitialized variable.}
{undefined pointer}
\begin{codesnippet}
int *p, *q, **pp, a;

p = &a;
if (...) pp = &p;
// \( \eval(\indirection pp) = \{ p, \text{undef} \} \)
if (...) {
  ... = pp;
  // The undefined location cannot be evaluated.
  // \( \eval(\indirection pp) = \{ p \} \)
} else {
  pp = &q;
  // Assign \( pp \) without evaluating its value.
  // \( \eval(\indirection pp) = \{ q \} \)
}
\end{codesnippet}
Consider \refcodesnippet{undefined pointer}.  At \refline 5
the points-to information is \\ \( \eval(\Code{*pp}) = \{ p, \text{undef} \}
\) and the abstract evaluation of the expression \QuotedCode{*pp}
produces the following sequence of steps
\footnote{Again, using the formalization of the assignment
presented in \refsection{our method} the C-expression \QuotedCode{pp}
occurring as the rhs of an assignment
corresponds to \QuotedCode{*pp} in our formalization.}
\begin{pointerarithmetictable}
i & \( \evalp(\Code{*pp}, i) \) \\
\hline
2 & \( \{ pp \} \) \\
1 & \( \{ p, \evalerror \} \) \\
0 & \( \{ a \} \) \\
\end{pointerarithmetictable}
In the step \( i = 1 \) of the evaluation, the algorithm detects the presence of
the non-evaluable location `undef' and it proceeds by
removing it from the result of the evaluation and by filtering the
memory state against the condition \( (\inequality, \indirection pp, \text{undef}) \).
As result, the analysis is able to infer that after the
execution of \refline 7 holds that \( \eval(\Code{*pp}) = \{ \Code p \} \).
\reffigure{undefined pointer} shows a graphical representation of this
situation.
Instead at \refline{11}, the variable \QuotedCode{pp} is reassigned without
evaluating the undefined location, then without producing any error.
\begin{figure}
\begin{center}
\input{figures/common_style}
\begin{tikzpicture}
  \def\drawThePicture[#1]{
    \abstractLocation pp (0, 1)
    \abstractLocation p (1, 0)
    \abstractLocation a (2, 0)
    \path (1,2) node (q) {};
    \node[abstract location style, anchor=west] (u) at (p.west |- q) {\Code{undef}};
    \figureBackground
        (p.south -| pp.west) (u.north -| a.east)
    \node [description label] at (2.5, 1) {#1};
    \path \edgePointsTo(p,a);
  }
  \begin{scope}
    \drawThePicture[
      {
        The points-to abstraction before the evaluation of the expression
        \QuotedCode{pp} at \refline{7}.
      }]
    \path
      \edgePointsTo(pp,p)
      \edgePointsTo(pp,u)
      ;
  \end{scope}
  \begin{scope}[yshift=-3.5cm]
    \drawThePicture[
      {
        The exceptional component resulting from the filtering of the
        points-to abstraction shown above.
      }]
    \path \edgePointsTo(pp,u);
  \end{scope}
  \begin{scope}[yshift=-7cm]
    \drawThePicture[
      {
        The normal component resulting from the filtering of the
        points-to abstraction shown above. The execution will continue on
        this refined information.
      }]
    \path \edgePointsTo(pp,p);
  \end{scope}
\end{tikzpicture}
\caption
  {an example of analysis involving the special location \Code{undefined}.}
\label{figure:undefined pointer}
\end{center}
\end{figure}
\end{example}

\begin{example}
\codecaption
{an example of the evaluation of an undefined pointer, this time due to a
memory deallocation.}
{undefined pointer 2}
\begin{codesnippet}
int **pp, *p, *q, a;

q = &a;
// \( \eval(*q) = \{ a \} \)

if (...) pp = &p;
else     pp = &q;
// \( \eval(*pp) = \{ q, p \} \)

if (...) {
  int x;
  p = &x;
  ...
} else {
  p = &a;
}
// \( \eval(*p) = \{ a, \text{undef} \} \)

... = *pp;
// Undef cannot be evaluated.
// However, the filter cannot improve the precision.
\end{codesnippet}
Consider the example in \refcodesnippet{undefined pointer 2}.  At
\refline{18} the points-to information \( m^\sharp \) is
\begin{gather*}
  \eval(\indirection pp) = \{ p, q \}, \\
  \eval(\indirection p) = \{ a, \text{undef} \}, \\
  \eval(\indirection q) = \{ a \}.
\end{gather*}
At this point the abstract evaluation of the expression \QuotedCode{*pp}
produces the following sequence of steps
\footnote{Again, using our formalization of the assignment operation the
C-expression \QuotedCode{*pp} corresponds to \( \indirection \indirection p
\).}
\begin{pointerarithmetictable}
i & \( \evalp(\indirection p, i) \) \\
\hline
2 & \( \{ pp \} \) \\
1 & \( \{ p, q \} \) \\
0 & \( \{ a, \evalerror \} \) \\
\end{pointerarithmetictable}
In the last step of the evaluation the algorithm detects the presence of
the non-evaluable location `undef' and it proceeds by
removing this location from the result of the evaluation.
However, in this case the filter is unable to divide the exceptional
from the normal component, as illustrated in \reffigure{undefined pointer 2}.
\begin{figure}
\begin{center}
\input{figures/common_style}
\begin{tikzpicture}
  \def\drawThePicture[#1]{
    \abstractLocation pp (0, 1)
    \abstractLocation p (1, 0)
    \abstractLocation q (1, 2)
    \abstractLocation a (2, 2)
    \node[abstract location style, anchor=west] (u) at (a.west |- p) {\Code{undef}};
    \figureBackground
        (p.south -| pp.west) (a.north -| u.east)
    \node [description label] at (2.5, 1) {#1};
    \path \edgePointsTo(q,a);
  }
  \begin{scope}
    \drawThePicture[
      {
        The points-to abstraction \( m^\sharp \) before the evaluation of
        the expression \QuotedCode{*pp} at \refline{19};
      }]
    \path
      \edgePointsToL(pp,q)
      \edgePointsToR(pp,p)
      \edgePointsTo(p,u)
      \edgePointsTo(p,a)
      ;
  \end{scope}
  \begin{scope}[yshift=-4cm]
    \drawThePicture[
      {
        A concrete memory description model \( m_0 \) of the condition
        \[
          c = ( \inequality, \indirection \indirection pp, \text{undef})
        \]
        approximated by \( m^\sharp \).
      }]
    \path
      \edgePointsToL(pp,q)
      \edgePointsTo(p,u)
      ;
  \end{scope}
  \begin{scope}[yshift=-8cm]
    \drawThePicture[
      {
        Another concrete memory description model \( m_1 \) of the condition
        \( c \) approximated by \( m^\sharp \).  Note however that
        \[
          \abstraction\bigl( \{ m_0, m_1 \} \bigr) = m^\sharp,
        \]
        that is the filter cannot remove any arc.
      }]
    \path
      \edgePointsToR(pp,p)
      \edgePointsTo(p,a)
      ;
  \end{scope}
\end{tikzpicture}
\caption
  {an representation of the situation of \refcodesnippet{undefined pointer 2}.}
\label{figure:undefined pointer 2}
\end{center}
\end{figure}
\end{example}

The idea of filtering away the exceptional component is formalized in
\cite{ConwayDNB08}. Removing from the abstract execution state those
exceptional configurations already signalled prevents that the same error
is propagated by the analysis from the first point to all the subsequent
program points with the result of soiling the results of the analysis.
Note that also other semantics are possible. For instance, it would be
possible to model the \emph{undefined location} as a non-dereference-able
location instead as of a non-evaluable location. Under this assumptions
uninitialized pointers and pointers pointing to deallocated memory can be
evaluated and thus copied, however it is still treated as an error their
dereference. In the above formalization we explicitly keep track of an
approximation of the exceptional execution paths; however, in many
situations this is too expensive and useless. In these
cases the implementation can simply skip the collection of the exceptional
states and gather only the signalled memory errors.

\subsection{Logical Operators}
The model described in \refsection{our method}
presents a very simplified definition of \emph{boolean condition};
for example, it does not consider \emph{logical operators}: \emph{and}
(\Code{\&\&}), \emph{or} (\Code{||}) and the \emph{not} (\Code{!}).
The first step necessary in order to handle these operators,
is to extend the set of \emph{conditions}.

\begin{definition}
\definitionsummary{Extended conditions}
Let `\( \extconditions \)' be the set defined as the language generated by the
grammar
\[
  e ::= c \mid (\Cnot e_0) \mid (e_0 \Cor e_1) \mid (e_0 \Cand e_1)
\]
where \( c \in \conditions \) is an atomic condition
and \( e_0, e_1 \in \extconditions \) are two extended conditions.
\end{definition}

The next step is to define the \emph{value} of the new conditions.

\begin{definition}
\definitionsummary{Concrete semantics of the extended conditions}
Let \( C \in \concretedomain \) and \( c_0, c_1 \in \extconditions \); then
\begin{gather*}
  C \models (\Cnot c_0)     \defiff C \not \models c_0; \\
  C \models (c_0 \Cand c_1) \defiff C \models c_0 \land C \models c_1; \\
  C \models (c_0 \Cor c_1)  \defiff C \models c_0 \lor C \models c_1.
\end{gather*}
\end{definition}

The definition of concrete filter do not need to be updated as it is
expressed in terms of the \emph{value} of the conditions.
Finally, we update the definition of the abstract filter as to handle
the new conditions.

\begin{definition}
\definitionsummary{Extended filter}
Let
\begin{gather*}
  \functiondef
    { \filter }
    { \abstractdomain \times \extconditions }
    { \abstractdomain \times \abstractdomain };
  \\
  \functiondef
    { \filter }
    { \abstractdomain \times \abstractdomain \times \extconditions }
    { \abstractdomain \times \abstractdomain };
\end{gather*}
be defined, for all \( A, B \in \abstractdomain \) and \( e,f \in \expressions
\), as
\begin{gather*}
  \filter\bigl(A, B, (\equality, e, f) \bigr) \defeq
    \Bigl<
      \filter\bigl(A, (\equality, e, f) \bigr),
      \filter\bigl(B, (\inequality, e, f) \bigr)
    \Bigr>;
 \\
  \filter\bigl(A, B, (\inequality, e, f) \bigr) \defeq
    \Bigl<
      \filter\bigl(A, (\inequality, e, f) \bigr),
      \filter\bigl(B, (\equality, e, f) \bigr)
    \Bigr>;
\end{gather*}
for all \( c_0, c_1 \in \extconditions \), as
\begin{gather*}
  \filter(A, B, \Cnot c_0)       \defeq \filter(B, A, c_0); \\
  \filter(A, B, c_0 \Cor c_1)  \defeq
  \filter\Bigl(A, B, \Cnot \bigl((\Cnot c_0) \Cand (\Cnot c_1)\bigr) \Bigr); \\
  \filter(A, B, c_0 \Cand c_1) \defeq
  \bigl< A_0 \sqcap A_1, B_0 \sqcup (A_0 \sqcap B_1) \bigr>;
\end{gather*}
where
\begin{gather*}
  \filter(A, B, c_0) = \langle A_0, B_0 \rangle; \\
  \filter(A, B, c_1) = \langle A_1, B_1 \rangle.
\end{gather*}
Finally, for all \( A \in \abstractdomain \) and \( c \in \extconditions
\), we define
\[
    \filter(A, c) \defeq \filter(A, A, c).
\]
\end{definition}

In this formalization of the filter operation, \( \filter(A,
c) \) returns a pair of abstract memories: the first component is an
approximation of the states of \( A \) in which the condition \( c \) is
\emph{true}; the second is an approximation of the states of \( A \) in which
the condition \( c \) is \emph{false}.
In this definition we have mentioned only the equality and inequality
operator; however, it can be easily extended to comprehend relational
operators (\refsection{relational operators}).

Note that as shown by \refsection{completeness}, the formulation
of the filter (\refdefinition{Filter 3}) is not optimal and
\refexample{iterating filter 1} shows that iterating the application
of the filter it is possible to improve the precision.
In \reffigure{iterating filter 2} we show that also having a
filter that is optimal on the atomic conditions,
iterating the application of the filter
can improve the precision.
Note indeed that on the atomic
conditions \QuotedCode{****p4 == \&a} and \QuotedCode{***q3 == \&b},
the filter operates optimally (\refdefinition{Filter 3}).

\begin{figure}
\begin{center}
\input{figures/common_style}
\begin{tikzpicture}
  \def\drawThePicture[#1,#2]{
    \abstractLocation p4  (0, 1)
    \abstractLocation p3  (1, 0)
    \abstractLocation p2  (2, 0)
    \abstractLocation p1  (3, 0)
    \abstractLocation a   (4, 0)
    \abstractLocation q3  (1, 2)
    \abstractLocation q2  (2, 2)
    \abstractLocation q1  (3, 2)
    \abstractLocation b   (4, 2)
    #2
    \figureBackground
        (p3.south -| p4.west) (b.north -| b.east)
    \node [description label] at (4, 1) {#1};
    \path
      \edgePointsToR(p4,p3)
      \edgePointsTo(p3,p2)
      \edgePointsTo(p2,p1)
      \edgePointsTo(p1,a)
      \edgePointsTo(q3,q2)
      \edgePointsTo(q2,q1)
      \edgePointsTo(q1,b)
      ;
  }
  \begin{scope}
    \drawThePicture[
      {
        A representation of the initial points-to information.
      },{}]
    \path
      \edgePointsToL(p4,q3)
      \edgePointsTo(q3,p2)
      \edgePointsTo(p2,q1)
      \edgePointsTo(q1,a)
      ;
  \end{scope}
  \begin{scope}[yshift=-3.5cm]
    \drawThePicture[
      {
        Filtering the points-to information against the expression
        \QuotedCode{**q3}, and the target set \( \{ \Code b \} \).
        The arc \( (\Code{q1}, \Code{a}) \) is removed.
      },{
        \selAbstractLocation q3 (1, 2)
        \selAbstractLocation b (4, 2)
      }]
    \path
      \edgePointsToL(p4,q3)
      \edgePointsTo(q3,p2)
      \edgePointsTo(p2,q1)
      (q1) edge [points--to style, dashed, red] (a)
      ;
  \end{scope}
  \begin{scope}[yshift=-7cm]
    \drawThePicture[
      {
        Filtering the points-to information against the expression
        \QuotedCode{***q4}, and the target set \( \{ \Code a \} \).
        The arc \( (\Code{p2}, \Code{q1}) \) is removed.
      },{
        \selAbstractLocation p4 (0, 1)
        \selAbstractLocation a (4, 0)
      }]
    \path
      \edgePointsToL(p4,q3)
      \edgePointsTo(q3,p2)
      (p2) edge [points--to style, dashed, red] (q1)
      ;
  \end{scope}
  \begin{scope}[yshift=-10.5cm]
    \drawThePicture[
      {
        Filtering the points-to information against the expression
        \QuotedCode{**q3}, and the target set \( \{ \Code b \} \).
        The arc \( (\Code{q3}, \Code{p2}) \) is removed.
      },{
        \selAbstractLocation q3 (1, 2)
        \selAbstractLocation b (4, 2)
      }]
    \path
      \edgePointsToL(p4,q3)
      (q3) edge [points--to style, dashed, red] (p2)
      ;
  \end{scope}
  \begin{scope}[yshift=-14cm]
    \drawThePicture[
      {
        Filtering the points-to information against the expression
        \QuotedCode{***q4}, and the target set \( \{ \Code a \} \).
        The arc \( (\Code{p4}, \Code{q3}) \) is removed.
      },{
        \selAbstractLocation p4 (0, 1)
        \selAbstractLocation a (4, 0)
      }]
    \path
      (p4) edge [points--to style, bend left, dashed, red] (q3)
      ;
  \end{scope}
\end{tikzpicture}
\caption
  {an example of that shows that iterating the filter on more conditions
  can improve the precision of the approximation.}
\label{figure:iterating filter 2}
\end{center}
\end{figure}

\section{Conclusions and Future Developments}
\label{section:conclusion}
Alias analysis is an important step in the process of static analysis of
programs.  Compiler oriented applications are the most common clients of
alias information. However, compilers stress the focus on \emph{fast}
analyses, whereas \emph{verifier} oriented applications require
\emph{precise} but slower techniques.  The present work, trying to address
verifier needs, discusses one of the most common method used to model the
aliasing problem: the \emph{points-to} representation.  Known results are
presented within a formal model; a novel operation of \emph{filter} is
described and finally a formal proof of correctness of the presented method
is reported.

A working prototype of the method has been implemented as part of
the ECLAIR system, which targets the analysis of mainstream languages by
building upon CLAIR, the `Combined Language and Abstract Interpretation
Resource', which was initially developed and used in a teaching context
(see \url{http://www.cs.unipr.it/clair/}).

However, many tasks have to be completed.  Some of the features of the C
language are still missing.  One of the questions not answered is how it is
possible to exploit the knowledge of the architecture/compiler target of
the analysis process.  For instance, the precise handling of \emph{unions}
and \emph{casts} requires the knowledge of the relative size of basic types, the
alignment issues and all the details that relate to the memory layout.

The memory model described in \refsection{extended memory model} and
implemented makes strong hypotheses about the correctness of the type
information.  For example, the described abstract memory does not allow to
precisely track pointers of type \Code{char*} resulting from casts of
pointer to objects of other types.  Though the literature contains some
proposals of how to avoid the necessity of relying on type informations
\cite{WilsonL95} and how to analyze union and casts \cite{Mine06a}, it is
unclear whether these can be applied to our situation.  On the other hand,
the memory model does not require any special information about the type of
variables. For instance, our analysis is able `out of the box' to track
pointer casted and assigned to integer.  Architecture-specific information
is also required in order to resolve the many \emph{implementation}-defined
behaviours present in the C Standard.  When the behaviour of the analyzed
programs depends on these rules of the language, the analyzer, if not
provided with additional information, can only warn and proceed with
a conservative approximation of the execution that very often in few steps
degenerates to the \emph{top} approximation.

Consider for instance \emph{off-by-one} locations.  Currently, the memory
model reserves an explicit abstract location address to represent
off-by-one locations only at the end of arrays; this means that scalar
variables do not have a corresponding off-by-one location. Hence, the
current implementation forbids pointer arithmetics on the address of a
scalar object, also when the increment is equal to 1, though the C Standard
allows it \cite[6.5.6.7]{ISO-C-1999}.  Moreover, in the presented
formulation, the handling of pointer arithmetic on arrays assumes that the
off-by-one address never overlaps with another valid location, though this
is allowed by the standard \cite[5.6.9.6]{ISO-C-1999}.

To increase the precision of the provided alias analysis it would be
possible to couple the points-to analysis with a \emph{shape analysis} that
would produce a more precise approximation of recursive data structures
\cite{Deutsch94}.

For the implementation it will be necessary to realize a complete
experimental evaluation of the proposed technique in order to produce
quantitative data for the comparison with other approaches.

\begin{acknowledgments}
I would like to express my gratitude to those who had supported and
assisted me in this work.

I wish to thank my advisor Roberto Bagnara and my co-advisor Enea
Zaffanella for the countless suggestions they gave me about the alias
problem and about the difficult task of writing a scientific paper.
My understanding of the static analysis problem and of the abstract
interpretation theory is due to their teachings. They helped
me in the initial phase of the formalization of the problem and they
indicated to me the path to follow for the proof of correctness.
They also showed me the possible approaches for the interaction with
other domains.  Finally, I must thank them for the careful proofreading
of this and earlier versions of this paper.

I must also thank Paolo Bolzoni, as I developed my initial
understanding of the alias problem by reading with him the MSc
Thesis of Maryam Emami [Ema93].  Moreover, the research described in this
paper begun with a joint work with Paolo and several ideas are the
results of this early effort.

\end{acknowledgments}

\bibliographystyle{amsalpha}
\bibliography{biblio/mybib,biblio/ppl}

\end{document}